\numberwithin{figure}{section}
\numberwithin{equation}{section}
\title[Ballot tilings and increasing trees]
{Ballot tilings and increasing trees}
\author[K.~Shigechi]{Keiichi~Shigechi}
\email{k1.shigechi AT gmail.com}
\date{\today}
\newcommand\tikzpic[2]{
\raisebox{#1\totalheight}{
\begin{tikzpicture}
#2
\end{tikzpicture}
}}
\newtheorem{theorem}[figure]{Theorem}%%[section]
\newtheorem{example}[figure]{Example}
\newtheorem{lemma}[figure]{Lemma}
\newtheorem{defn}[figure]{Definition}
\newtheorem{prop}[figure]{Proposition}
\newtheorem{cor}[figure]{Corollary}
\newtheorem{remark}[figure]{Remark}
\begin{document}
\begin{abstract}
We study enumerations of Dyck and ballot tilings, which are 
tilings of a region determined by two Dyck or ballot paths.
We give bijective proofs to two formulae of enumerations of Dyck tilings through 
Hermite histories. We show that one of the formulae is equal to a 
certain Kazhdan--Lusztig polynomial.
For a ballot tiling, we establish formulae which are analogues of formulae
for Dyck tilings. 
Especially, the generating functions have factorized expressions.
The key tool is a planted plane tree and its increasing labellings. 
We also introduce a generalized perfect matching which is bijective to 
an Hermite history for a ballot tiling. 
By combining these objects, we obtain various expressions of a generating 
function of ballot tilings with a fixed lower path.

\end{abstract}

\maketitle

\section{Introduction}
Cover-inclusive Dyck tilings were introduced to study Kazhdan--Lusztig 
polynomials for the Grassmannian permutations in \cite{SZJ12}. Independently, 
Dyck tilings reappeared in the study of the double-dimer model in \cite{KW11}.
One of the main purposes of this paper is to give bijective proofs of 
two conjectures by Kenyon and Wilson in \cite{KW11} (proven by Kim in \cite{K12}).
The first formula is associated with a Dyck tiling with a fixed lower path
and the second one is associated with a Dyck tiling with a fixed upper path.
In both formulae, an Hermite history which is bijective to a Dyck tiling
plays a central role.
We show that the first formula is equal to a certain Kazhdan--Lusztig polynomial 
by providing a bijection between a term in a Kazhdan--Lusztig polynomial 
and an Hermite history.

In the study of Kazhdan--Lusztig polynomials, 
one can study Hermitian symmetric pairs instead of Grassmannian permutations. 
This replacement corresponds to moving from type A to other types in the 
classification of Weyl groups.
Cover-inclusive ballot tilings appeared in the case of the Hermitian symmetric
pair $(B_{N},A_{N-1})$ in \cite{S14}.
A ballot tiling can be regarded as a generalization of a Dyck tiling 
from type A to type B.
Regarding ballot tilings, we have two types of tilings: type BI and type BIII.
These two types originate in the classification of Kazhdan--Lusztig bases 
for the Hermitian symmetric pair $(B_{N},A_{N-1})$ \cite{S14}.
The main purpose of this paper is to give proofs to generalizations 
of two formulae by Kenyon and Wilson in case of ballot tilings.
As in the case of Dyck tilings, two formulae for ballot tilings 
have factorized form (see Theorems \ref{thrm:BIII-fac}, \ref{thrm:BIII-Hhfac}
and \ref{thrm:BI-fac}).
In \cite{JVK16}, two formulae for a Dyck tiling are interpreted in terms of 
the weak order and the Bruhat order on permutations.
We give an analogue of these results for a ballot tiling (see Theorems 
\ref{thrm:BIII-2}, \ref{thrm:BIII-Hhfac} and \ref{thrm:BI-post}).
Not only factorized expressions, we obtain various expressions of 
the first formula for a ballot tiling (See Section \ref{sec:varigf}).
The key tool is a weakly increasing tree. 
We also derive several recurrence relations for a generating function 
of a ballot tiling in terms of trees and solve them explicitly.

The paper is organized as follows.
In Section \ref{sec:Dyck}, we briefly review cover-inclusive Dyck tilings 
and introduce two Theorems \ref{thrm-A1} and \ref{thrm-A2}.
Theorem \ref{thrm-A1} is a formula for Dyck tilings with a fixed lower path
and Theorem \ref{thrm-A2} is one for Dyck tilings with a fixed upper path.
We provide a simple bijective proof of Theorem \ref{thrm-A2} in 
Section \ref{sec:bijHer}.
The remaining parts of this section is devoted to the proof of 
Theorem \ref{thrm-A1}.
We start from the notion of planted plane trees in Section \ref{sec:tree}.
To compute Kazhdan--Lusztig polynomials, we introduce a labelling of 
a tree which we call type Lascoux--Sch\"utzenberger.
In Section \ref{sec:Hh}, we consider three types of Hermite histories 
which are bijective to a Dyck tiling.
To connect Kazhdan--Lusztig polynomial with an Hermite history bijectively, 
we introduce an integer array associated to an Hermite history.
This integer array plays a central role to give a bijective 
proof of Theorem \ref{thrm-A1}.
This bijection is realized by two maps constructed in Section \ref{sec:maps}.
We show that the first formula for a Dyck tiling is equal to a Kazhdan--Lusztig 
polynomial (Theorem \ref{thrm:DyckKL}).
Since this Kazhdan--Lusztig polynomial is written as a product of quantum integers,
we obtain an expression of the first formula in terms of quantum integers 
(Theorem \ref{thrm:Dyck-fac}), which is a different expression from Theorem \ref{thrm-A1}.

In Section \ref{sec:im}, we start with two types of incidence matrices $M$ and 
introduce two types of ballot tilings. 
The two types are type BI and type BIII. 
As shown in \cite{KW11}, the entries of the inverse matrix $M^{-1}$ correspond to 
a cover-inclusive ballot tiling.

In Section \ref{sec:BallotBIII}, we give bijective proofs of the first and second 
formulae for ballot tilings of type BIII.
Theorem \ref{thrm:BIII-fac} shows that the first formula for ballot tilings of 
type BIII factorizes into the product of the first formula for Dyck tilings 
and an extra factor. 
In Theorem \ref{thrm:BIII-2}, we prove an analogue of the equality between 
the second and third terms in Theorem \ref{thrm-A2}.
Here, we consider $(\underline{1}2,\underline{12})$-avoiding signed 
permutations instead of the Weyl group $\mathfrak{S}_{N}^{C}$ of type C.

Sections \ref{sec:pptgpm} to \ref{sec:varigf} are devoted to analysis 
of ballot tilings of type BI.
In section \ref{sec:ppt}, we start with the notion of planted plane trees
which is a generalization of plane trees in Section \ref{sec:tree}.
A tree introduced here appeared in \cite{Boe88,S14} as a binary tree to compute 
Kazhdan--Lusztig polynomials for the Hermitian symmetric pair $(B_{N},A_{N-1})$.
In Section \ref{sec:gpm}, we introduce a generalized perfect matching 
for ballot tilings to connect them to an Hermite history.
In Section \ref{sec:BTS}, we give a bijection from a labelling of 
a plane tree to a ballot tiling. 
We call this bijection ballot tile strip which is a generalization 
of Dyck tiling strip (DTS for short) studied in \cite{KMPW12}.
In Section \ref{sec:HhBT}, we introduce an Hermite history for a ballot 
tiling and show a bijection between an Hermite history and a generalized 
perfect matching.
paf
a fixed upper path. 
We provide a generalization of generating function of Dyck tilings 
with a fixed upper path considered in Section \ref{sec:bijHer} 
(Theorem \ref{thrm:gD-up}). 
Through a bijection between a generalized perfect matching and 
an Hermite history in Section \ref{sec:HhBT}, this generating 
function can be expressed in terms of generalized perfect matchings 
(Theorems \ref{thrm:gD-up-gpm} and \ref{thrm:bIgPM}).
In case of ballot tilings, we have a factorization of the generating 
function (Theorem \ref{thrm:BIII-Hhfac}).
This generating function is interpreted in terms of the Bruhat order
on down-up permutations.

In Section \ref{sec:BI-low}, we consider ballot tilings of type BI with 
a fixed lower path.
One of the main theorems in this paper is Theorem \ref{thrm:BI-fac}, which 
shows that a generating function is factorized.
Section \ref{sec:BalP} and \ref{sec:BI-fac} are devoted to the proof of 
Theorem \ref{thrm:BI-fac}.
In Section \ref{sec:Fac-trees}, 
we translate Theorem \ref{thrm:BI-fac} in terms of trees introduced in 
Section \ref{sec:ppt}.

In Section \ref{sec:varigf}, we provide several expressions of the generating 
function for ballot tilings with a fixed lower path.
In Section \ref{sec:ext1} and \ref{sec:ext23}, we show recurrence relations 
for the generating functions.
In Section \ref{sec:BI-gtree}, we prove three expressions for the generating 
function: Theorem \ref{thrm:treewt1}, Theorem \ref{thrm:treewt2} and Theorem \ref{thrm:GFLS}.
We consider trees without arrows in Section \ref{sec:twoarrow}.
The generating function can be expressed in a simple way: Theorem \ref{thrm:BI-post} 
and Theorem \ref{thrm:BI-tree-fac}.
These two theorems can be regarded as an analogue of Theorem \ref{thrm-A1}.
Further, we have three more different expressions: Theorems \ref{thrm:rev3}, \ref{thrm:2deg} 
and \ref{thrm:BI-hybrid}.
In Section \ref{sec:twoa-BTS}, we consider a tree $T$ without arrows such that an edge 
of $T$ does not have a parent edge with a dot (see Section \ref{sec:ppt} for detailed 
definitions). 
A generating function for a tree $T$ is expressed as a product of two
factors (Corollary \ref{cor:BTS}), one of which is the sum of weights for 
inverse pre-order words introduced in Section \ref{sec:pptgpm}.

\paragraph{Notation}
We introduce the quantum integer $[n]:=\sum_{i=0}^{n-1}q^{i}$, 
the quantum factorial $[n]!:=\prod_{i=1}^{n}[i]$, 
$[2m]!!:=\prod_{i=1}^{m}[2i]$, 
the $q$-analogue of the binomial coefficients
\begin{eqnarray*}
\genfrac{[}{]}{0pt}{}{n}{m}:=\frac{[n]!}{[n-m]![m]!}, 
\qquad
\genfrac{[}{]}{0pt}{}{n}{m}_{q^{2}}:=\frac{[2n]!!}{[2(n-m)]!!\cdot[2m]!!}, 
\end{eqnarray*}
and the multinomial coefficients
\begin{eqnarray*}
\genfrac{[}{]}{0pt}{}{n}{k_1,\ldots,k_r}:=\frac{[n]!}{\prod_{i=1}^{r}[k_{i}]!},
\qquad 
\genfrac{[}{]}{0pt}{}{n}{k_1,\ldots,k_r}_{q^2}:=\frac{[2n]!!}{\prod_{i=1}^{r}[2k_{i}]!!}.
\end{eqnarray*}
We denote $q,t$-integers by $[n]_{t}:=[n-1]+q^{n-2}t$.

\section{Dyck tiling}
\label{sec:Dyck}

We recall the definitions of Dyck tiling following \cite{JVK16,K12,SZJ12}.

A {\it Dyck path of length $2n$} is a lattice path from 
the origin $(0,0)$ to $(2n,0)$ with up (or ``U") steps $(1,1)$ 
and down (or ``D") steps $(1,-1)$, which does not go below the 
horizontal line $y=0$. 
A sequence of ``U" and ``D" corresponding to a Dyck path is called 
Dyck word.
The highest (resp. lowest) Dyck path of length $2n$ is 
$U\cdots UD\cdots U$ (resp. $UDUD\cdots UD$).

A Dyck path $\lambda$ of length $2n$ is identified with the 
Young diagram which is determined by the path $\lambda$, 
the line $y=x$ and the line $y=-x+2n$. 
Let $\lambda$ and $\mu$ be two Dyck paths of length $2n$.
If the skew shape $\lambda/\mu$ exists, we call $\lambda$ 
and $\mu$ the lower path and the upper path. 

A {\it Dyck tile} is a ribbon (a skew shape which does not 
contain a $2\times 2$ box) such that the centers of the boxes
form a Dyck path. 
A {\it Dyck tiling} is a tiling of a skew Young 
diagram $\lambda/\mu$ by Dyck tiles. 
A Dyck tiling $D$ is called {\it cover-inclusive} if 
we translate a Dyck tile of $D$ downward by $(0,-1)$, 
then it is strictly below $\lambda$ or contained in another 
Dyck tile.
We denote by $\mathcal{D}(\lambda/\mu)$ the set of 
cover-inclusive Dyck tilings of the skew shape $\lambda/\mu$.
We define 
\begin{eqnarray*}
\mathcal{D}(\lambda/\ast):=\bigcup_{\mu}\mathcal{D}(\lambda/\mu), \\
\mathcal{D}(\ast/\mu):=\bigcup_{\lambda}\mathcal{D}(\lambda/\mu).
\end{eqnarray*}

For $D\in\mathcal{D}(\lambda/\mu)$, we introduce two statistics 
called {\it area} and {\it tiles}: 
the area $\mathrm{area}(D)$ is the number of boxes in the skew 
shape $\lambda/\mu$ and the tiles $\mathrm{tiles}(D)$ is 
the number of Dyck tiles of $D$. 
The statistic {\it art} is defined as 
$\mathrm{art}(D)=(\mathrm{area}(D)+\mathrm{tiles}(D))/2$.

\begin{defn}
Let $\lambda$ be a Dyck path. 
We define 
\begin{eqnarray*}
P^{\mathrm{Dyck}}_{\lambda}:=\sum_{D\in\mathcal{D}(\lambda/\ast)}q^{\mathrm{art}(D)}
\end{eqnarray*}
\end{defn}

We introduce a {\it chord} of a Dyck path and its length and height 
(see {\it e.g.} Section 1 in \cite{K12}).
We make a pair of $U$ and $D$ which are next to each other.
Successively, we make a pair of $U$ and $D$ by ignoring paired $U$ and $D$.
A Dyck path of length $2n$ consists of $n$ pairs of $U$ and $D$.
We call this pair of $U$ and $D$ a {\it chord} of the Dyck path $\lambda$.
We denote $\mathrm{Chord}(\lambda)$ by the set of chords of a 
Dyck path $\lambda$.
We denote two statistics for a chord $c$ by the length $l(c)$ and the 
height $\mathrm{ht}(c)$.
For a chord $c\in\mathrm{Chord}(\lambda)$, $l(c)$ is one plus the number of 
chords in-between $U$ and $D$ in $c$.
The height $\mathrm{ht}(c)$ is defined to be $i$ if $c$ is between 
the lines $y=i-1$ and $y=i$.

We denote by $\mathfrak{S}_{n}$ the symmetric group of degree $n$.
The number of inversions of $\pi\in\mathfrak{S}_{n}$ is denoted 
by $\mathrm{inv}(\pi)$.

Let $\pi_0$ be a permutation associated to a Dyck path $\lambda$ 
(see \cite{JVK16} for a detailed definition).
\begin{theorem}[\cite{JVK16,KW11,K12}]
\label{thrm-A1}
We have 
\begin{eqnarray*}
P^{\mathrm{Dyck}}_{\lambda}
&=&\frac{[n]!}{\prod_{c\in\mathrm{Chord}(\lambda)}[l(c)]} \\
&=&\sum_{\pi\ge_{L}\pi_{0}}q^{\mathrm{inv}(\pi)-\mathrm{inv}(\pi_0)},
\end{eqnarray*}
where $\ge_{L}$ is the weak left order on $\mathfrak{S}_n$.
\end{theorem}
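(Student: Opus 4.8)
The plan is to establish the two equalities in Theorem~\ref{thrm-A1} separately, using the Hermite history as the common combinatorial bridge. First I would recall (or construct) the bijection between a cover-inclusive Dyck tiling $D \in \mathcal{D}(\lambda/\ast)$ and an Hermite history, under which $\mathrm{art}(D)$ becomes a natural sum over the arrows of the history. Fixing the lower path $\lambda$, the Hermite histories in $\mathcal{D}(\lambda/\ast)$ are indexed by a product of ``slots'': reading the Dyck word of $\lambda$ from left to right, each $D$ step (equivalently, each chord $c$) contributes an independent choice of how far its arrow travels, and the number of admissible choices for the arrow attached to chord $c$ is exactly $l(c)$, the length of the chord, with the travel distance recorded by a power of $q$ from $q^0$ up to $q^{l(c)-1}$. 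Summing the geometric series slot by slot gives $\prod_{c}\bigl(1+q+\cdots+q^{l(c)-1}\bigr) = \prod_{c}[l(c)]$ worth of configurations, but weighted so that the \emph{total} generating function telescopes to $[n]!/\prod_{c\in\mathrm{Chord}(\lambda)}[l(c)]$. The cleanest way to see the numerator $[n]!$ appear is to induct on $n$: removing the outermost chord (or a suitably chosen leaf chord) of $\lambda$ peels off a factor and matches the recursion $[n]! = [n]\cdot[n-1]!$ against the factorization of $P^{\mathrm{Dyck}}_\lambda$; I would set this up carefully so the extra $[l(c)]$ denominators bookkeep correctly.

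For the second equality, the plan is to exhibit a bijection between the permutations $\pi \geq_L \pi_0$ in the weak left order and the same Hermite histories (equivalently, the Dyck tilings with lower path $\lambda$), sending $\pi$ to a history with $\mathrm{art}(D) = \mathrm{inv}(\pi) - \mathrm{inv}(\pi_0)$. The permutation $\pi_0$ attached to $\lambda$ encodes $\lambda$ as a $312$-avoiding (or Grassmannian-type) permutation, and the weak-left-order interval $[\pi_0, w_0]$ decomposes, via the standard theory of the weak order on $\mathfrak{S}_n$, into independent increments that one can match against the chord slots above. Concretely, each cover relation going up from $\pi_0$ in $\geq_L$ either increases an inversion count or not, and tracking $\mathrm{inv}(\pi)-\mathrm{inv}(\pi_0)$ along a maximal chain reproduces the per-chord geometric sums; alternatively one invokes the known product formula $\sum_{\pi \geq_L \pi_0} q^{\mathrm{inv}(\pi)-\mathrm{inv}(\pi_0)} = \prod_c [l(c)]^{-1}[n]!$ for the weak order interval of a Grassmannian permutation, which is itself a classical fact about Poincaré polynomials of weak-order intervals, so that the second equality reduces to the first.

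I would organize the write-up as: (i) state the Hermite-history bijection and the statistic correspondence $\mathrm{art} \leftrightarrow$ arrow lengths; (ii) prove the factorization $P^{\mathrm{Dyck}}_\lambda = [n]!/\prod_c [l(c)]$ by the slot decomposition plus induction on the number of chords; (iii) identify $\pi_0$, describe the interval $\{\pi \geq_L \pi_0\}$, and give the inversion-preserving bijection to Hermite histories, thereby obtaining the third expression and closing the chain of equalities.

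The main obstacle I expect is step~(ii): making the ``independent slot'' picture rigorous. The subtlety is that the admissible arrow lengths for different chords are \emph{not} literally independent in the naive sense---the arrow of an inner chord interacts with the positions of outer chords in the Hermite history---so the clean product $\prod_c [l(c)]$ only emerges after a change of variables or a carefully chosen peeling order. Getting the induction to respect the cover-inclusive condition (so that removing a chord of $\lambda$ corresponds cleanly to removing the associated arrow from every history, with the right $q$-shift) is where the real work lies; the weak-order half, by contrast, is mostly a matter of citing and adapting the established description of Poincaré polynomials of lower intervals in the weak order.
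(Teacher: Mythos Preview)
Your plan is reasonable and in fact closer to the original proofs in \cite{K12,KMPW12,JVK16} than to what this paper does; but it differs substantially from the paper's own argument, and your step~(ii) has a real gap.

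\textbf{How the paper proceeds.} The paper does not attack the identity $P^{\mathrm{Dyck}}_\lambda=[n]!/\prod_c[l(c)]$ head-on. Instead it passes through Kazhdan--Lusztig polynomials. It introduces the planted plane tree $A(\lambda)$ and two families of integer arrays of the same shape: arrays of Lascoux--Sch\"utzenberger type $\mathcal{M}_{\mathrm{LS}}(\lambda)$ (which enumerate $P_{\lambda,\lambda_0}$, a KL polynomial) and arrays of Hermite-history type $\mathcal{M}_{\mathrm{Hh}}(\lambda)$ (which, via Proposition~\ref{prop:Hh}, enumerate $P^{\mathrm{Dyck}}_\lambda$). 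The heart of the proof is the construction of explicit weight-preserving maps $\omega,\sigma$ on $\mathcal{M}(\lambda)$ (Section~\ref{sec:maps}) and the verification that a suitable power of $\omega$ gives a bijection $\mathcal{M}_{\mathrm{Hh}}\cap(\mathcal{M}\setminus\mathcal{M}_{\mathrm{LS}})\leftrightarrow\mathcal{M}_{\mathrm{LS}}\cap(\mathcal{M}\setminus\mathcal{M}_{\mathrm{Hh}})$ (Theorem~\ref{thrm:omega-sigma}). This yields $P^{\mathrm{Dyck}}_\lambda=P_{\lambda,\lambda_0}$ (Theorem~\ref{thrm:DyckKL}), and only then is the product formula extracted by tree recursions (Lemmas~\ref{lemma:tritree}--\ref{lemma:treefac}, Theorem~\ref{thrm:Dyck-fac}). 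The weak-order equality is not reproved here; it is simply cited.

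\textbf{Comparison.} Your route is more direct and avoids the KL machinery entirely; the paper's route is longer but buys the identification of $P^{\mathrm{Dyck}}_\lambda$ with a Kazhdan--Lusztig polynomial, which is one of the paper's stated goals and the template for everything it does later for ballot tilings.

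\textbf{The gap in your step~(ii).} You correctly flag that the Hermite-history ``slots'' are \emph{not} independent: an arbitrary choice of label in $[0,n(u)]$ for each up step of $\lambda$ does not yield a valid cover-inclusive tiling (this is exactly the content of the conditions (Hh1)--(Hh3) in the paper). The way the cited literature resolves this is not by a naive peeling induction on chords, but via the Dyck tiling strip (DTS) bijection to \emph{increasing labellings of the tree $A(\lambda)$}; the product $[n]!/\prod_c[l(c)]$ then drops out as the tree hook-length formula. Your sketch gestures at an induction but never names this bijection or an equivalent mechanism, so as written it would stall precisely where you fear. If you want to carry your plan through, replace the vague ``slot decomposition plus induction'' with the DTS growth process and the resulting bijection to linear extensions of $A(\lambda)$; that is the missing idea.
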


Let $\rho(\mu)$ be a $132$-avoiding permutation associated to 
a Dyck path $\mu$ (see \cite{JVK16} for a detailed definition).
\begin{theorem}[\cite{JVK16,KW11,K12}]
\label{thrm-A2}
We have 
\begin{eqnarray*}
\sum_{D\in\mathcal{D}(\ast/\mu)}q^{\mathrm{tiles}(D)}
&=&\prod_{c\in\mathrm{Chord}(\mu)}[\mathrm{ht}(c)] \\
&=&\sum_{\pi\ge\rho(\mu)}q^{\mathrm{inv}(\pi)-\mathrm{inv}(\rho(\mu))}
\end{eqnarray*}
where $\ge$ is the Bruhat order on $\mathfrak{S}_{n}$.
\end{theorem}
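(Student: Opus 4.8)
The plan is to prove the two equalities separately, pivoting through an intermediate combinatorial model. For the first equality, I would set up a bijection that decomposes a cover-inclusive Dyck tiling with fixed upper path $\mu$ according to the chord structure of $\mu$. The key observation is that the chords of $\mu$ are nested or disjoint, and a cover-inclusive Dyck tiling of $\lambda/\mu$ can be built up one chord at a time: for a chord $c$ at height $\mathrm{ht}(c)=i$, there are exactly $i$ ``slots'' into which the local contribution of the tiles sitting over that chord can be inserted (heights $1,2,\dots,i$), and these choices are independent across chords once one processes them from the innermost outward. This gives a product $\prod_{c}[\,\mathrm{ht}(c)\,]$ where the exponent of $q$ tracks $\mathrm{tiles}(D)$. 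Concretely, I would invoke (or recall) the bijection between $\mathcal{D}(\ast/\mu)$ and Hermite histories on $\mu$ promised in the introduction (Section \ref{sec:bijHer}): an Hermite history assigns to each up-step (equivalently, each chord) a number of ``trajectory crossings'' bounded by its height, and $\mathrm{tiles}(D)$ equals the total number of such crossings. Summing the geometric-like contribution $1+q+\cdots+q^{\mathrm{ht}(c)-1}=[\mathrm{ht}(c)]$ over independent chords yields the product formula.

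For the second equality, the strategy is to identify the upper set $\{\pi : \pi \ge \rho(\mu)\}$ in the Bruhat order with the same product of quantum integers via the standard fact about $132$-avoiding (equivalently, dominant or fully-commutative-type) permutations. Since $\rho(\mu)$ is $132$-avoiding, its principal order ideal complement (the Bruhat upper set above it) is well understood: the rank generating function $\sum_{\pi \ge \rho(\mu)} q^{\mathrm{inv}(\pi)-\mathrm{inv}(\rho(\mu))}$ factors as a product over the ``right-to-left minima'' or the inversion-table entries of $\rho(\mu)$, and these entries are precisely the heights $\mathrm{ht}(c)$ of the chords of $\mu$. I would make this precise by recalling the correspondence (from \cite{JVK16}) between Dyck paths and $132$-avoiding permutations under which a chord of height $i$ and length $\ell$ contributes a value $i$ to the code of $\rho(\mu)$; then the Bruhat upper-set generating function of a $132$-avoiding permutation is the product of $[a_j+1]$ over its code entries $a_j$, which is exactly $\prod_c [\mathrm{ht}(c)]$. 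Alternatively, one can cite that for a $132$-avoiding (or any ``boolean-like'' in the relevant sense) permutation the Bruhat interval above it is a product of chains.

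The main obstacle I anticipate is the first equality, and specifically the claim that the chord contributions are genuinely independent — i.e., that choosing the height at which each strip of tiles over a given chord sits does not interfere with the cover-inclusive condition for other chords. Handling nested chords requires care: a tile over an inner chord may be forced to lie below tiles over an enclosing chord, and one must verify that the cover-inclusive constraint translates exactly into the bound ``height $\le \mathrm{ht}(c)$'' and nothing more. I would address this by a careful induction on the number of chords (peeling off an outermost chord, or an innermost one, depending on which makes the bookkeeping cleanest), at each stage checking that the remaining tiling is again cover-inclusive over a smaller upper path and that the number of valid positions for the removed strip is exactly the removed chord's height. The Hermite-history reformulation is what makes this induction transparent, since there the independence is essentially built into the definition (each up-step independently records a bounded nonnegative integer), so the real content is the bijection $\mathcal{D}(\ast/\mu) \leftrightarrow$ Hermite histories together with the identity $\mathrm{tiles}(D) = \sum (\text{crossings})$, which I would establish or cite from Section \ref{sec:bijHer}.
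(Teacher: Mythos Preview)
Your plan for the first equality is essentially the paper's approach: both pivot through what you call an Hermite history (the paper's intermediate object is the diagram $D'$, a choice of $i$ boxes, $0\le i\le n(u)$, attached to each up step $u$ of $\mu$). One clarification: Section~\ref{sec:bijHer} does not merely \emph{invoke} this bijection, it \emph{is} the bijection---the paper gives an explicit right-to-left, box-by-box algorithm turning $D'$ into a cover-inclusive tiling and observes that it is invertible with $\mathrm{tiles}(D)=\sum_u i_u$. Your worry about ``independence across chords'' is exactly what this explicit construction dissolves; the paper does not do a chord-by-chord induction but rather builds Dyck tiles greedily from the box data, so there is no separate independence lemma to prove.

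For the second equality, the paper gives no proof at all---it only establishes the first equality in Section~\ref{sec:bijHer} and leaves the Bruhat-order statement as a citation to \cite{JVK16,KW11,K12}. Your sketch (code of a $132$-avoiding permutation, product of chains in the Bruhat upper set) is a reasonable outline of what one finds in those references, but there is nothing in the present paper to compare it against.
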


In this section, we give a simple bijective proof of Theorem \ref{thrm-A2} 
and a bijective proof of an another expression of Theorem \ref{thrm-A1}.

\subsection{A bijective proof of Theorem \ref{thrm-A2}}
\label{sec:bijHer}

Recall that a chord $c$ of Dyck path $\mu$ is a matching 
pair of an up step $u$ and a down step.
The statistic $\mathrm{ht}(c)$ for a chord of $\mu$
is equal to the number of boxes in the southeast direction 
between $u$ and the lowest path $UDUD\ldots UD$ plus 1.
We denote by $n(u)$ this number of boxes, that is,  
$\mathrm{ht}(c)=n(u)+1$.
Thus, the right hand side of Theorem \ref{thrm-A2} is rewritten as 
$\prod_{u:\mathrm{up\ step\ of\ } \mu}[n(u)+1]$.

We denote by $\mu_{0}$ the lowest path for $\mathcal{D}(\ast/\mu)$, 
which is the zig-zag path (``$UD\cdots UD$").
Fix a path $\mu$ and an up step $u$ of $\mu$.
The number of boxes in $\mu_0/\mu$ in the southeast direction at  
$u$ is $n(u)$. 
The $q$-integer is expanded as $[\mathrm{ht}(c)]=1+q+\ldots+q^{n(u)}$.
When we take a term $q^{i}$ from $[\mathrm{ht}(c)]$, we place 
$i$ successive boxes in the region between $u$ and the zig-zag path such 
that the northwest edge of the northwest box is attached to the up 
step $u$.
Since a term in $\prod_{u}[n(u)+1]$ is a product of $q^{i}$ from $[n(u)+1]$,
we place boxes as above for each up step (see the left figure in Figure \ref{fig-1}).
We denote by $D'$ the obtained diagram consisting of boxes.

We will construct a Dyck tiling from $D'$. 
We enumerate the up steps of $\mu$ from right to left by $u_1, u_2, \ldots u_n$. 
We also enumerate the boxes attached to the up step $u_j$ from northwest to southeast by 
$1,2,\ldots,r_{j}$ where $r_{j}$ is the number of boxes. 
Fix a pair $(u_j,i)$ with $1\le i\le r_{j}$ and call the corresponding box 
$(u_j,i)$ box. 

If there is a box to the northeast of $(u_j,i)$ box and it forms a Dyck tile, 
move to the next pair $(u_j,i+1)$.
Otherwise, there is no box to the northeast of $(u_j,i)$ box.
Let $b$ be a box of $\mu_0/\mu$ right to the $(u_j,i)$ box such that 
the translation of $b$ by $(1,1)$ is either outside of the region $\mu_0/\mu$
or contained in a Dyck tile.
Then, there exists a unique Dyck tile $D$ such that it starts from $(u_j,i)$ box 
and ends at the box $b$ and a box to the north of a box of $D$ is either 
outside of the region $\mu_0/\mu$ or contained in a Dyck tile.
We move the boxes $(u_j,k)$ with $i+1\le k\le r_{j}$ to the southeast of the box $b$.
Then, we move to the next pair $(u_j,i+1)$ for $1\le i\le r_{j}-1$ or $(u_{j+1},1)$ 
for $i=r_{j}$ (see Figure \ref{fig-1}).

It is obvious that the obtained Dyck tile has the weight $q^{\mathrm{tiles}(D)}$. 
The above operation has an inverse, that is, we can construct a diagram $D'$ from 
a Dyck tiling. 
Thus, we obtain the first equality in Theorem \ref{thrm-A2}. 

\begin{figure}[ht]
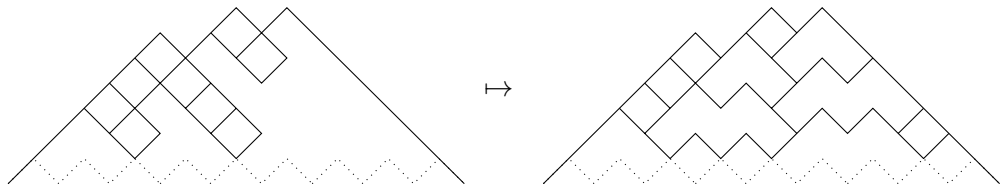

\tikzpic{-0.5}{
\draw (0,0)--(2,2)--(2+1/3,2-1/3)--(3,2+1/3)--(3+1/3,2)--(3+2/3,2+1/3)--(6,0);
\draw[dotted] (0,0)--(1/3,1/3)--(2/3,0)--(1,1/3)--(4/3,0)--(5/3,1/3)--(2,0)
(2,0)--(2+1/3,1/3)--(2+2/3,0)--(2+1,1/3)--(2+4/3,0)--(2+5/3,1/3)--(4,0)
(4,0)--(4+1/3,1/3)--(4+2/3,0)--(5,1/3)--(4+4/3,0)--(4+5/3,1/3)--(6,0);
\draw (2+2/3,2)--(3+1/3,2-2/3)--(3+2/3,2-1/3)--(3+1/3,2)--(3,2-1/3);
\draw (2-1/3,2-1/3)--(3,1/3)--(3+1/3,2/3)--(2+1/3,2-1/3)
      (2,2-2/3)--(2+1/3,2-1/3)(2+1/3,1)--(2+2/3,1+1/3)(2+2/3,2/3)--(3,1);
\draw (4/3,4/3)--(5/3,1)--(2,4/3);
\draw (1,1)--(5/3,1/3)--(2,2/3)--(5/3,1)--(4/3,2/3);
}
$\mapsto$
\tikzpic{-0.5}{
\draw (0,0)--(2,2)--(2+1/3,2-1/3)--(3,2+1/3)--(3+1/3,2)--(3+2/3,2+1/3)--(6,0);
\draw[dotted] (0,0)--(1/3,1/3)--(2/3,0)--(1,1/3)--(4/3,0)--(5/3,1/3)--(2,0)
(2,0)--(2+1/3,1/3)--(2+2/3,0)--(2+1,1/3)--(2+4/3,0)--(2+5/3,1/3)--(4,0)
(4,0)--(4+1/3,1/3)--(4+2/3,0)--(5,1/3)--(4+4/3,0)--(4+5/3,1/3)--(6,0);
\draw (1,1)--(5/3,1/3)--(2,2/3)--(7/3,1/3)--(8/3,2/3)--(3,1/3)--(11/3,1)
--(4,2/3)--(13/3,1)--(15/3,1/3)
(4/3,2/3)--(2,4/3)--(7/3,1)--(8/3,4/3)--(10/3,2/3)
(3,1)--(11/3,5/3)--(4,4/3)--(13/3,5/3)
(5,1/3)--(16/3,2/3)(14/3,2/3)--(5,1)
(4/3,4/3)--(5/3,3/3)(5/3,5/3)--(2,4/3)--(7/3,5/3)
(8/3,6/3)--(10/3,4/3)(9/3,5/3)--(10/3,6/3);
}
\caption{The bijection from a diagram $D'$ to a Dyck tiling. 
The statistics tiles of the Dyck tiling (right figure) is 
$\mathrm{tiles}(D)=9$.}
\label{fig-1}
\end{figure}

\subsection{Planted plane tree}
\label{sec:tree}
Let $\mathcal{Z}$ be a set of words consisting of $U$ and $D$ such that 
$\emptyset\in\mathcal{Z}$, if $z\in\mathcal{Z}$ then $UzD\in\mathcal{Z}$ 
and if $z_1,z_{2}\in\mathcal{Z}$ then the concatenation $z_1z_2\in\mathcal{Z}$.
In other words, the set $\mathcal{Z}$ is a set of Dyck words.

Let $\lambda_1$, $\lambda_2$ be two words of length $2n$ consisting of $U$ and $D$.
If $\lambda_1$ is above $\lambda_2$, we denote it by $\lambda_{1}\ge\lambda_{2}$.  
For a word $\lambda$, we denote by $||\lambda||$ the length of the word and 
by $||\lambda||_{\alpha}$, $\alpha=U$ or $D$, the number of $\alpha$ 
in the word $\lambda$.
Let $\lambda$ be a Dyck word of length $2n$ and $\lambda_0\ge\lambda$ be a word 
of length $2n$ consisting of $U$ and $D$. 
Here, $\lambda_0$ is not necessarily a Dyck word.
Suppose that $\lambda_{0}=\lambda'_{0}vw\lambda''_{0}$ and 
$\lambda=\lambda'UD\lambda''$
where $v,w\in\{U,D\}$ and $||\lambda'_{0}||=||\lambda'||$.
A {\it capacity} of the partial word corresponding to $UD$ in $\lambda$ 
is defined by 
\begin{eqnarray*}
\mathrm{cap}(UD):=||\lambda'_{0}v||_{U}-||\lambda'U||_{U}.
\end{eqnarray*}
The condition $\lambda\le\lambda_{0}$ implies a capacity of $\lambda$ 
is non-negative.

We define a tree $A(\lambda)$ for a Dyck word $\lambda$. 
A tree $A(\lambda)$ satisfies
\begin{enumerate}[($\Diamond$1)]
\item $A(\emptyset)$ is the empty tree.
\item $A(D\lambda')=A(\lambda')$.
\item $A(z\lambda'), z\in\mathcal{D}$, is obtained by attaching the tree 
for $A(z)$ and $A(\lambda')$ at their roots. 
\item $A(UzD)$, $z\in\mathcal{Z}$, is obtained by attaching an edge just 
above the tree $A(z)$.
\end{enumerate}

We put the capacities of $\lambda$ with respect to $\lambda_0$ on 
leaves of the plane tree $A(\lambda)$. 
We denote by $A(\lambda/\lambda_{0})$ a tree $A(\lambda)$ with 
capacities. 
A {\it labelling of Lascoux--Sch\"utzenberger type} (labelling 
of LS type for short) is a set of non-negative integers on the 
edges of $A(\lambda)$ satisfying 
\begin{enumerate}[(LS1)]
\item Integers on edges are non-increasing from leaves to the root.
\item An integer on an edge connecting to a leaf is less than or equal 
to its capacity.
\end{enumerate}
Examples of $A(\lambda/\lambda_{0})$ and its labelling of LS type 
are shown in Figure \ref{fig-2}.

For a labelling $\nu$ of LS type, we denote by $\sigma(\nu)$ the 
sum of integers in $\nu$. 
Suppose that $\lambda$ and $\lambda_{0}$ be a Dyck word.
Define a generating function 
\begin{eqnarray*}
P_{\lambda,\lambda_{0}}:=\sum_{\nu}q^{\sigma(\nu)},
\end{eqnarray*}
where the sum runs over all possible labellings of LS 
type of $A(\lambda/\lambda_{0})$.
Then, Lascoux and Sch\"utzenberger proved that 
\begin{theorem}[Lascoux and Sch\"utzenberger \cite{LS81}]
\label{thrm:KL}
The generating function $P_{\lambda,\lambda_{0}}$ is the Kazhdan--Lusztig 
polynomial for a Grassmannian permutation.
\end{theorem}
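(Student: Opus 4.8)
The statement to prove is Theorem~\ref{thrm:KL}, attributed to Lascoux and Sch\"utzenberger: the generating function $P_{\lambda,\lambda_0}$ over labellings of LS type equals the Kazhdan--Lusztig polynomial for the corresponding Grassmannian permutation. Since this is a known result, the natural approach is to reduce it to the classical characterization of Kazhdan--Lusztig polynomials for Grassmannians, rather than reproving KL theory from scratch.

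\medskip

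The plan is as follows. First I would recall the dictionary translating a pair of Dyck paths $(\lambda,\lambda_0)$ into a pair of Grassmannian permutations $w \le v$ in $\mathfrak{S}_n$ (equivalently, a pair of partitions contained in a staircase, or a pair of $01$-words), under which the skew shape $\lambda/\lambda_0$ encodes the Bruhat interval. Then I would appeal to the combinatorial formula of Lascoux--Sch\"utzenberger \cite{LS81} (see also Zelevinsky's resolution and Shigechi--Zinn-Justin), which expresses $P_{w,v}(q)$ as a sum over certain labellings of a binary/planted tree built from the $01$-word, with the weight being $q$ to the sum of labels and the constraints being exactly (i) labels weakly decreasing from leaves toward the root and (ii) each leaf-adjacent label bounded by a capacity computed from the relative position of $\lambda$ inside $\lambda_0$. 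The core of the proof is then to check that the tree $A(\lambda)$ defined recursively by ($\Diamond$1)--($\Diamond$4), together with the capacity function $\mathrm{cap}(UD)=\|\lambda_0'v\|_U-\|\lambda'U\|_U$, reproduces precisely the tree and capacities appearing in the Lascoux--Sch\"utzenberger combinatorial model.

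\medskip

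Concretely, I would proceed by induction on the length $2n$ of the Dyck word, mirroring the four defining clauses of $A(\lambda)$. The base case $A(\emptyset)$ is trivial since $P_{\emptyset,\emptyset}=1$. For the inductive step, clause ($\Diamond$2) (a leading $D$ in $\lambda$, hence necessarily a leading $D$ in $\lambda_0$ as well since $\lambda_0 \ge \lambda$) corresponds to removing a trivial box-free column and leaves both the tree and the KL polynomial unchanged; clause ($\Diamond$3) (concatenation of Dyck words $z\lambda'$) corresponds to the product decomposition of the KL polynomial for a parabolic-type factorization, so that $P_{z\lambda',\cdot}=P_{z,\cdot}\cdot P_{\lambda',\cdot}$ matches the fact that labellings of the tree obtained by joining $A(z)$ and $A(\lambda')$ at the root factor as a pair of independent labellings; clause ($\Diamond$4) ($\lambda=UzD$) corresponds to adding one new edge above $A(z)$, and the new leaf-or-edge constraint must be shown to carry exactly the capacity $\mathrm{cap}(UD)$ dictated by how much room $\lambda_0$ leaves above the matched $U\cdots D$. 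At each step I would verify that the capacity formula is stable under these operations and that the non-increasing condition along root-directed paths is the image of the usual KL recursion's monotonicity.

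\medskip

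The main obstacle I anticipate is the bookkeeping in clause ($\Diamond$4): one must show that when $\lambda = U z D$, the capacity $\|\lambda_0'v\|_U - \|\lambda' U\|_U$ attached to the newly created edge equals the maximal admissible label in the Lascoux--Sch\"utzenberger model, and that this remains consistent when this sub-word is later concatenated (clause ($\Diamond$3)) or prefixed by $D$'s (clause ($\Diamond$2)), since the capacity of a fixed $UD$ pair is defined relative to the global words $\lambda,\lambda_0$, not the sub-word. Handling this requires tracking how $\|\cdot\|_U$ counts shift under prefixing and concatenation, which is a routine but delicate computation. A secondary point to be careful about is that $\lambda_0$ is allowed to be a non-Dyck word in the general definition of $A(\lambda/\lambda_0)$, but the theorem hypothesis restricts to $\lambda_0$ a Dyck word; I would note that this is exactly the regime in which the pair $(\lambda,\lambda_0)$ corresponds to a genuine Bruhat interval of Grassmannian permutations, so that $P_{\lambda,\lambda_0}$ is literally a KL polynomial and the identification with \cite{LS81} applies verbatim.
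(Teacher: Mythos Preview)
The paper does not give its own proof of this statement: Theorem~\ref{thrm:KL} is simply attributed to Lascoux and Sch\"utzenberger \cite{LS81} and quoted as a known result, with no argument supplied. So there is nothing to compare your proposal against.

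Your sketch is a reasonable plan for verifying that the tree $A(\lambda/\lambda_0)$ and the capacity function defined here coincide with the combinatorial model in \cite{LS81}, which is exactly what one would do to justify the citation. One caution: your inductive treatment of clause ($\Diamond$4) is slightly off, since the new edge created by $UzD$ is attached \emph{above} the root of $A(z)$ and is not itself a leaf edge unless $z=\emptyset$; capacities are placed only on leaves, so the new edge acquires no capacity of its own and the (LS1) constraint on it is inherited from the minimum of its children. The capacity bookkeeping you worry about therefore lives entirely in how the existing leaf capacities shift when the ambient words $\lambda,\lambda_0$ are extended, not in a capacity on the new edge.
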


Besides a labelling of LS type, we will consider several labellings  
of a tree $A(\lambda)$. 
For this purpose, we introduce an array of integers whose shape is 
determined by a tree $A(\lambda)$.
If two edges $e_1$ and $e_2$ have the same parent edge and $e_2$ is right to 
$e_1$, we put an integer on $e_2$ right to an integer on $e_1$. 
If $e_1$ is a parent of $e_2$, then we put an integer on $e_2$ below  
an integer on $e_1$.
We call an array of integers associated to a labelling of LS type 
an array of integers of LS type for short.
An example of an array of integers of LS type
is shown in Figure \ref{fig-2}.

\begin{figure}[ht]
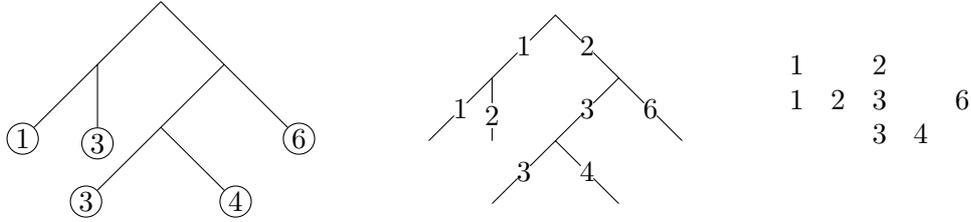

\tikzpic{-0.5}{
\draw (0,0)--(-2/1.2,-2/1.2)node[circle,inner sep=1pt,draw,anchor=north east]{$1$}
(-1/1.2,-1/1.2)--(-1/1.2,-2/1.2)node[circle,inner sep=1pt,draw,anchor=north]{$3$}
(0,0)--(2/1.2,-2/1.2)node[circle,inner sep=1pt,draw,anchor=north west]{$6$}
(1/1.2,-1/1.2)--(-1/1.2,-3/1.2)node[circle,inner sep=1pt,draw,anchor=north east]{$3$}
(0,-2/1.2)--(1/1.2,-3/1.2)node[circle,inner sep=1pt,draw,anchor=north west]{$4$}
;
}
%%%%
\hspace{24pt}
%%%%
\tikzpic{-0.5}{
\draw(0,0)--(-0.4/1.2,-0.4/1.2)(-0.5/1.2,-0.5/1.2)node{$1$}
(-0.6/1.2,-0.6/1.2)--(-1.4/1.2,-1.4/1.2)(-1.5/1.2,-1.5/1.2)node{$1$}
(-1.6/1.2,-1.6/1.2)--(-2/1.2,-2/1.2)
(-1/1.2,-1/1.2)--(-1/1.2,-1.4/1.2)(-1/1.2,-1.6/1.2)node{$2$}
(-1/1.2,-1.8/1.2)--(-1/1.2,-2/1.2)
(0,0)--(0.4/1.2,-0.4/1.2)(0.5/1.2,-0.5/1.2)node{$2$}
(0.6/1.2,-0.6/1.2)--(1.4/1.2,-1.4/1.2)(1.5/1.2,-1.5/1.2)node{$6$}
(1.6/1.2,-1.6/1.2)--(2/1.2,-2/1.2)
(1/1.2,-1/1.2)--(0.6/1.2,-1.4/1.2)(0.5/1.2,-1.5/1.2)node{$3$}
(0.4/1.2,-1.6/1.2)--(-0.4/1.2,-2.4/1.2)
(-0.5/1.2,-2.5/1.2)node{$3$}(-0.6/1.2,-2.6/1.2)--(-1/1.2,-3/1.2)
(0/1.2,-2/1.2)--(0.4/1.2,-2.4/1.2)(0.5/1.2,-2.5/1.2)node{$4$}
(0.6/1.2,-2.6/1.2)--(1/1.2,-3/1.2);
}
%%%
\hspace{24pt}
%%%
$
\begin{array}{ccccc}
1 &   & 2 &   & \\
1 & 2 & 3 &   & 6 \\
  &   & 3 & 4 &  \\ 
\end{array}
$
\caption{
The left picture is an example of a plane tree $A(\lambda/\lambda_0)$ with capacities $(1,3,3,4,6)$.
The middle picture shows an example of a labelling of LS type associated to the plane tree. 
The right picture is an array of integers of LS type associated to the middle picture.
}
\label{fig-2}
\end{figure}

\subsection{Hermite history}
\label{sec:Hh}
We introduce an Hermite history for a Dyck tiling following \cite[Section 6]{K12} 
and generalize it.
Fix a Dyck tiling $D$ in $\mathcal{D}(\lambda/\mu)$.
For an up step $u$ of a Dyck word $\lambda$, we denote by $n(u)$
the number of down steps left to the up step $u$ and by $h$ the 
height of the up step $u$. 
For example, the height of the third up step from left in a path $UUDUDD$ 
is two.
A {\it Hermite history of length $2n$} of type I (resp. type II) is 
a pair $(\mu,H)$ (resp. $(H,\lambda)$) of a Dyck path 
$\mu$ (resp. $\lambda$) and a labelling $H$ (resp. $H'$) 
of $\mu$ (resp. $\lambda$). 
A labelling $H$ (resp. $H'$) is a set of integers on the up steps 
of $\mu$ (resp. $\lambda$) such that a label of $u$ is an integer 
in $[0,h-1]$ (resp. $[0,n(u)]$).
An Hermite history of type III is a pair $(\mu,H'')$ of a Dyck path 
$\mu$ and a labelling $H''$ of $\mu$.
A labelling $H''$ is a set of integers on the down steps 
of $\mu$ such that if the height of an down step $d$ is $h'$ then
a label of $d$ is an integer in $[0,h'-1]$.

A bijection from a Dyck tiling $D$ to $(\mu,H)$ of type I is defined as follows.
For a Dyck tile $D$, we define the {\it entry} (resp. {\it exit}) 
northwest (resp. southeast) edge of the leftmost (resp. rightmost)
cell of $D$. 
We consider a path starting from an up step $u$ of $\mu$. 
If $u$ is not a entry of a Dyck tile, then the label of $u$ is zero. 
If $u$ is an entry of a Dyck tile $D$ , then extend a path from the entry to
the exit of $D$ until the path arrives at an edge which is not an entry.
The labelling of $u$ is a number of tiles which the path starting from $u$ 
travels.
We denote by $|H|$ the sum of labels on up steps of $\mu$. 
Then, we have $|H|=\mathrm{tiles}(H)$. 

A bijection from a Dyck tiling to $(\mu,H'')$ of type III is defined by 
replacing northwest with northeast, southeast with southwest in the 
definition of the entry and the exit, and replacing an up step with 
a down step in the definitions for type I.

A bijection from a Dyck tiling $D$ to $(H',\lambda)$ of type II is defined as follows.
We consider the same paths on $D$ as in the case of $(\mu,H)$.
A label of an up step $u$ of $\lambda$ is zero if the translation of $u$ by $(0,2k)$ with 
$k\ge 0$ is not an exit of a Dyck tile. If the translation of $u$ by $(0,2k)$ is an exit 
and its path passes through $r$ Dyck tiles of length $2n_1, 2n_2,\ldots, 2n_r$, 
the label of $u$ is $\sum_{1\le i\le r}(n_i+1)$.
We denote by $|H'|$ the sum of labels on up steps of $\lambda$.
Then, we have $|H'|=\mathrm{art}(D)$.
The left figure in Figure \ref{fig-3} is an example of Hermite histories of type I
and type II.

We associate an array of integers to the pair $(H',\lambda)$. 
The shape of the array is determined by the tree $A(\lambda)$
as in Section \ref{sec:tree}.
We put a label of $H'$ on the array in the up-right order: 
starting from the bottom row of the leftmost column to the first row, 
move to the bottom row of the second-leftmost column 
and ending at the first row of the rightmost column.
We call this array of integers the one of type Hermite 
history (type Hh for short). 
The right figure in Figure \ref{fig-3} is an example of integer array 
of type Hh.

\begin{figure}[ht]
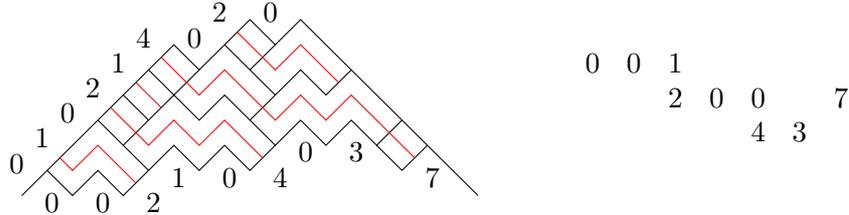

\tikzpic{-0.5}{
\draw (0,0)--(2,2)--(2+1/3,2-1/3)--(3,2+1/3)--(3+1/3,2)--(3+2/3,2+1/3)--(6,0);
\draw (1/3,1/3)--(2/3,0)--(1,1/3)--(4/3,0)--(5/3,1/3);
\draw (1,1)--(5/3,1/3)--(2,2/3)--(7/3,1/3)--(8/3,2/3)--(3,1/3)--(11/3,1)
--(4,2/3)--(13/3,1)--(15/3,1/3)
(4/3,2/3)--(2,4/3)--(7/3,1)--(8/3,4/3)--(10/3,2/3)
(3,1)--(11/3,5/3)--(4,4/3)--(13/3,5/3)
(5,1/3)--(16/3,2/3)(14/3,2/3)--(5,1)
(4/3,4/3)--(5/3,3/3)(5/3,5/3)--(2,4/3)--(7/3,5/3)
(8/3,6/3)--(10/3,4/3)(9/3,5/3)--(10/3,6/3);
\draw[red] (1.5/3,1.5/3)--(2/3,1/3)--(3/3,2/3)--(4.5/3,0.5/3)
(3.5/3,3.5/3)--(5/3,2/3)--(6/3,3/3)--(7/3,2/3)--(8/3,3/3)--(9.5/3,1.5/3)
(4.5/3,4.5/3)--(5.5/3,3.5/3)
(5.5/3,5.5/3)--(7/3,4/3)--(8/3,5/3)--(10/3,3/3)--(11/3,4/3)--(12/3,3/3)
--(13/3,4/3)--(15.5/3,1.5/3)
(8.5/3,6.5/3)--(10/3,5/3)--(11/3,6/3)--(12.5/3,4.5/3);
\draw(0.5/3,0.5/3)node[anchor=south east]{$0$}
(1.5/3,1.5/3)node[anchor=south east]{$1$}
(2.5/3,2.5/3)node[anchor=south east]{$0$}
(3.5/3,3.5/3)node[anchor=south east]{$2$}
(4.5/3,4.5/3)node[anchor=south east]{$1$}
(5.5/3,5.5/3)node[anchor=south east]{$4$}
(7.5/3,5.5/3)node[anchor=south east]{$0$}
(8.5/3,6.5/3)node[anchor=south east]{$2$}
(10.5/3,6.5/3)node[anchor=south east]{$0$};
\draw (0.5/3,0.5/3)node[anchor=north west]{$0$}
(2.5/3,0.5/3)node[anchor=north west]{$0$}
(4.5/3,0.5/3)node[anchor=north west]{$2$}
(5.5/3,1.5/3)node[anchor=north west]{$1$}
(7.5/3,1.5/3)node[anchor=north west]{$0$}
(9.5/3,1.5/3)node[anchor=north west]{$4$}
(10.5/3,2.5/3)node[anchor=north west]{$0$}
(12.5/3,2.5/3)node[anchor=north west]{$3$}
(15.5/3,1.5/3)node[anchor=north west]{$7$};
}
%%%%%%%
\hspace{24pt}
%%%%%%%
$
\begin{array}{ccccccc}
0 & 0 & 1 &   &   &   &  \\
  &   & 2 & 0 & 0 &   & 7 \\
  &   &   &   & 4 & 3 &    
\end{array}
$
\caption{The left picture is an example of Hermite histories 
of type I and II. 
A red path is a non-intersecting path connecting an entry 
and an exit of a Dyck tile. 
The upper path is $\mu$ and the lower path is $\lambda$. 
The labelling $H$ of $\mu$ is $(0,1,0,2,1,4,0,2,0)$ and 
the labelling $H'$ of $\lambda$ is $(0,0,2,1,0,4,0,3,7)$.
The right is an array of integers of type Hh associated to $(H',\lambda)$.
}
\label{fig-3}
\end{figure}

Given a Dyck path $\lambda$, recall that the shape of an array of integers 
$M$ is determined by $\lambda$. 
We denote by $\lambda_0$ a path which consists of only "U". 
Then, a capacity of $A(\lambda/\lambda_0)$ is equal to $n(u)$ where
$u$ is an up step which corresponds to the bottom row of $M$. 
We denote by $\mathcal{M}(\lambda)$ a set of arrays of integers 
such that the shape is determined by $\lambda$, integers in a column 
are weakly increasing from top to bottom and an integer of the bottom 
row in the column is equal to or less than the corresponding capacity 
of $A(\lambda/\lambda_0)$.
We denote by $\mathcal{M}_{\mathrm{LS}}(\lambda)$ the subset of 
$\mathcal{M}(\lambda)$ satisfying the conditions (LS1) and (LS2).

Let $M$ be the following partial array of integers:
\begin{eqnarray}
\label{ArrayM}
\begin{array}{ccc}
x_1 &  & \\
x_2 &  & y_1\\
\vdots & M' & \vdots \\
x_n & & y_m
\end{array},
\end{eqnarray}
where the $i$-th column is $(x_1,\ldots,x_n)^{T}$, 
the $j$-th ($i<j$) column is $(y_1,\ldots,y_m)^{T}$ 
and $M'$ is a partial array such that the non-empty entry 
of $M'$ is at the same height to or lower than $y_1$.  
We have no constraint on the height of $x_1$. 
Let $c_i$ be $i$-th the capacity of $\lambda$ with respect to $\lambda_0$.  
Let $p$ be the number of $x_k$ which is at the same height to or lower than
$y_1$ and $q$ be the number of non-empty entries in $M'$.

Given $x_k$ in the $i$-th column, we define a map
 $\varphi_{i}:\mathbb{N}_{\ge0}\rightarrow\mathbb{N}_{\ge0}$ by 
\begin{eqnarray*}
\varphi_{i}(x_k):=c_i-x_{k},
\end{eqnarray*}
where $c_i$ is the capacity for the $i$-th column.
We consider the following three conditions on $M$:
\begin{enumerate}[(Hh1)]
\item There exists an entry $z$ of $M$ such that 
$z\in M'\cup\{x_1,\ldots,x_{p}\}$ and $\varphi_{l}(z)\le\varphi_{j}(y_m)$ with 
$i\le l\le j-1$.
\item  $\varphi_{l}(z)>\varphi_{j}(y_m)$ with $i<l<j$ and 
$c_{i-1}\le \varphi_{j}(y_m)<c_{i}$, when $n=p$ and $z\in M'$.
\item $x_{p+1}+p+q\ge y_m$, when $n>p$, $\varphi_i(x_{p+1})\le\varphi_{j}(y_m)$ and 
$\varphi_{i}(x_k),\varphi_{l}(z)>\varphi_{j}(y_m)$ where 
$1\le k\le p$ and $i<l<j$.
\end{enumerate}
Let $\mathcal{M}_{\mathrm{Hh}}(\lambda)$ be the subset of $\mathcal{M}(\lambda)$ satisfying 
either (Hh1), (Hh2) or (Hh3).

\begin{prop}
\label{prop:Hh}
There exists a bijection between the pair $(H',\lambda)$ 
and $\mathcal{M}_{\mathrm{Hh}}(\lambda)$.
\end{prop}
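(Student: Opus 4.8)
Since the passage from a type-II Hermite history $(H',\lambda)$ to its array of type Hh merely relabels cells according to the tree $A(\lambda)$, the assignment $(H',\lambda)\mapsto M$ is automatically injective, and the content of the proposition is that its image is exactly $\mathcal{M}_{\mathrm{Hh}}(\lambda)$. The plan is therefore to prove two inclusions: every array obtained from some $(H',\lambda)$ lies in $\mathcal{M}(\lambda)$ and satisfies one of (Hh1), (Hh2), (Hh3); and conversely every array of $\mathcal{M}_{\mathrm{Hh}}(\lambda)$ is so obtained. I would run both through an induction on the number of chords of $\lambda$, using the recursive rules $(\Diamond 1)$--$(\Diamond 4)$ for $A(\lambda)$ to peel off an outermost column of the array, and the dictionary between arrays and tilings would be driven throughout by the family of non-crossing paths that connect the entries and exits of the Dyck tiles of the tiling $D$ attached to $(H',\lambda)$ in Section \ref{sec:Hh}.

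First I would verify membership in $\mathcal{M}(\lambda)$. The bottom-row bound is immediate: the bottom cell of a column is a leaf edge of $A(\lambda)$, for which the capacity with respect to $\lambda_0=U\cdots U$ equals $n(u)$ for the corresponding up step $u$, while a type-II label satisfies $H'(u)\in[0,n(u)]$ by definition. For weak monotonicity down a column — a column being a leftmost-descending chain of edges of $A(\lambda)$ — I would show that if $u$ is the up step of an edge and $u'$ that of its leftmost child then $H'(u)\le H'(u')$, which follows from the cover-inclusive condition together with the fact that the path issuing from $u'$ is forced through at least every Dyck tile visited by the one issuing from $u$.

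The heart of the argument is the trichotomy. Fix the local picture \eqref{ArrayM} with columns $i<j$ and the intermediate block $M'$. The foot $y_m$ of column $j$ is the label $H'(u_j)$ of some up step $u_j$, the value $\varphi_j(y_m)=c_j-y_m$ measures the free height above the path $P$ that issues from $u_j$, and the $x$-column together with $M'$ records the paths lying to the left of $P$ and not rising above the top of column $j$, with $p$ and $q$ counting exactly those among them that $P$ can meet. Tracing $P$ upward through $D$ and using that the paths are non-crossing and that $D$ is cover-inclusive, exactly one of three configurations occurs: $P$ runs into the exit edge of a tile whose entry sits at height at most $\varphi_j(y_m)$, recorded by a cell $z$ with $\varphi_l(z)\le\varphi_j(y_m)$ as in (Hh1); or $P$ is capped by the upper boundary of the region, which forces $c_{i-1}\le\varphi_j(y_m)<c_i$ as in (Hh2); or $P$ is absorbed by the tile above the up step recorded by $x_{p+1}$, whose length must then be large enough to reach the relevant cell, giving $x_{p+1}+p+q\ge y_m$ as in (Hh3). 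Each case is a direct trace of $P$ keeping count of the tiles it crosses and of the heights of their entries and exits. Surjectivity is the same computation read backwards, packaged into the induction: with the statement assumed for the array minus its outermost column, restoring the column amounts to prescribing a single new path, and (Hh1)--(Hh3) are precisely the conditions under which that path is insertable without crossing an existing one or violating cover-inclusiveness.

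The step I expect to be the genuine obstacle is this trichotomy. The conditions are phrased through the capacity-complementation maps $\varphi_i$ and through the auxiliary block $M'$ with its height constraint relative to $y_1$, and one must be sure that $p$ and $q$ count exactly the intermediate paths that the new path could collide with, so that the inequalities come out with the stated constants and not off by one. Identifying the three geometric configurations and checking that they are mutually exclusive and exhaustive is where the care lies; once the dictionary is fixed, the remaining verifications and the inductive bookkeeping are routine.
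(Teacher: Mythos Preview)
Your approach is essentially the same as the paper's: both arguments first check that the array lies in $\mathcal{M}(\lambda)$ (weak increase down columns from cover-inclusiveness, bottom-row bound from the definition of $H'$), and then argue that (Hh1)--(Hh3) are precisely the conditions guaranteeing that the leftmost box of each Dyck tile has the required southwest neighbour $b'$. The paper's proof is in fact considerably terser than yours --- it simply asserts that ``it is obvious that the conditions (Hh1), (Hh2) and (Hh3) are equivalent to the conditions for the existence of $b'$'' --- so your explicit tracing of the non-crossing paths, your identification of the three geometric configurations, and your inductive peeling of columns are a reasonable and correct fleshing-out of what the paper leaves implicit.
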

\begin{proof}
Fix a lower Dyck path $\lambda$, a Dyck tiling $\mathcal{D}$.
In an integer array $M$ associated to $\lambda$, integers in a column 
are weakly increasing from top to bottom since we have the following three
reasons: 1) the upper path of $\mathcal{D}$ is lower than $U\ldots UD\ldots D$,
2) a Dyck tile of length $2n$ contributes $n+1$ in the label $H'$ and 
3) a tiling $\mathcal{D}$ is cover-inclusive.

We enumerate all steps in $\lambda$ from left to right by $1,2\ldots$.
Suppose that the rightmost box of a Dyck tile of length $2n$ is
on the $i$-th up step in $\lambda$.
Then, the leftmost box $b$ of the Dyck tile is on the $(i-2n+2)$-th 
up step.
To have a Dyck tiling, the existence of a box $b'$ which is at south west of 
the box $b$ is required.
It is obvious that the conditions (Hh1), (Hh2) and (Hh3) are equivalent to the conditions for 
the existence of $b'$.

\end{proof}

\subsection{\texorpdfstring{Maps $\omega$ and $\sigma$}{Maps omega and sigma}}
\label{sec:maps}
Fix $\lambda_0$ be a path of length $2n$ consisting of only $U$. 
We will give a bijection between an array of integers of type LS  
associated to a plane tree $A(\lambda/\lambda_0)$ 
and an array of integers of type Hh.

Let $M$ be a partial array of integers in $\mathcal{M}(\lambda)$.  
The array $M$ of integers looks partially an array of integers shown 
in Eqn.(\ref{ArrayM}).
We consider the case where $x_1$ is strictly above $y_1$ and 
the entries of $M'$ are the same height to or lower than $y_1$.
Suppose that $x_k$ is an entry which corresponds to a parent of $y_1$ 
in the tree $A(\lambda)$, that is, $x_k$ is the lowest entry in
$i$-th column which is strictly above $y_1$.
An array of integers is said to have a {\it discrepancy of type LS} at $j$-th column 
when there exists the pair $(x_k,y_1)$ with $x_k>y_1$.
An array $M$ of integers is said to have a {\it discrepancy of type Hh} at $j$-th column
when $M$ violates the conditions from (Hh1) to (Hh3).
From the definitions of $\mathcal{M}(\lambda)$ and 
$\mathcal{M}_{\mathrm{LS}}(\lambda)$, if $M$ does not have a discrepancy of type LS, 
then $M$ satisfies $x_{k}<y_{1}$.
Similarly, a discrepancy of type Hh means the condition such that $x_{p+1}+p+q<y_{m}$.
Here, $p$ is the number of $x_l$ which is at the same height to or lower than
$y_1$ and $q$ is the number of non-empty entries in $M'$.
Let $a$ be the number such that $x_{a-1}\le y_1$ and $x_{a}> y_1$ and 
$b$ be the number such that $x_b<y_m-p-q$ and $x_{b+1}\ge y_{m}-p-q$.
We define two maps $\omega,\sigma:\mathcal{M}(\lambda)\rightarrow\mathcal{M}(\lambda)$ 
as follows.
The image $\omega(M)$ is obtained by replacing $(x_1,\ldots,x_{n})^{T}$
with 
$(x_1,\ldots,x_{a-1},y_1,x_{a}-1,\ldots,x_{a+p-1}-1,x_{a+p+1},\ldots,x_{n})^{T}$,
the entry $z$ with $z-1$ when $z\in M'$, and $(y_1,\ldots,y_m)^{T}$
with $(y_2,\ldots,y_m,x_{a+p}+p+q)^{T}$.
In the matrix notation, $\omega(M)$ is described as  
\begin{eqnarray}
\omega(M):=
\begin{array}{ccc}
x_1 & & \\
\vdots& & \\ 
x_{a-1}& & y_2 \\ 
y_1  & & \vdots\\
x_{a}-1 & M'-1 & y_{m} \\
\vdots & & x_{a+p}+p+q \\
x_{a+p-1}-1 & & \\
x_{a+p+1} & & \\
\vdots & & \\
x_{n} & & \\
\end{array}.
\end{eqnarray}
%%%%%
%%%%%
The image $\sigma(M)$ is obtained from $M$ by replacing $(x_{1},\ldots,x_n)^{T}$ 
with $(x_1,\ldots,x_{b-p-1},x_{b-p+1}+1,\ldots,x_{b}+1,y_{m}-p-q,x_{b+1},\ldots,x_n)^{T}$, 
the entry $z$ with $z+1$ when $z\in\mathcal{M'}$, and 
$(y_1,\ldots,y_m)$ with $(x_{b-p},y_1,y_2,\ldots ,y_{m-1})$. 
In the matrix notation, $\sigma(M)$ is described as 
\begin{eqnarray}
\sigma(M):=
\begin{array}{ccc}
x_1 & & \\
\vdots& & \\ 
x_{b-p-1}& & x_{b-p} \\ 
x_{b-p+1}+1  & & y_1 \\
\vdots & M'+1 & \vdots \\
x_{b}+1 & & y_{m-1} \\
y_{m}-p-q & & \\
x_{b+1} & & \\
\vdots & & \\
x_{n} & & \\
\end{array}.
\end{eqnarray}

Given an integer array $N$, we denote by $\mathrm{Dis}^{\mathrm{LS}}(N)$ 
(resp. $\mathrm{Dis}^{\mathrm{Hh}}(N)$) be the set of labels of columns in $N$ which have a 
discrepancy of type LS (resp. Hh).
Let $(y_{j,1},\ldots,y_{j,n_{j}})^{T}$ be the $j$-th column in $N$. 
Let $j_1$ be the biggest integer such that 
$j_1\in\mathrm{Dis}^{\mathrm{LS}}(N)$ and 
$\varphi_{j_1}(y_1)=\max\{\varphi_{i}(y_{i,1}) |\  i\in\mathrm{Dis}^{\mathrm{LS}}(N)\}$.
Similarly, let $j_2$ be the smallest integer such that 
$j_2\in\mathrm{Dis}^{\mathrm{Hh}}(N)$ and 
$\varphi_{j_2}(y_1)=\min\{\varphi_{i}(y_{i,n_{i}}) |\  i\in\mathrm{Dis}^{\mathrm{Hh}}(N)\}$.
We denote by $M$ the partial array of $N$ involving the $j_1$-th (resp. $j_2$-th) column.
The actions of $\omega$ (resp. $\sigma$) on $N$ is defined 
as $\omega(N):=(N\backslash M)\cup\omega(M)$ (resp. $\sigma(N):=(N\backslash M)\cup\sigma(M)$), {\it i.e.}, 
we change a partial matrix $M$ to $\omega(M)$ or $\sigma(M)$ and other 
entries are not changed.

\begin{lemma}
\label{lemma:omega1}
Suppose $M\in\mathcal{M}_{\mathrm{Hh}}(\lambda)\cap
(\mathcal{M}(\lambda)\setminus\mathcal{M}_{\mathrm{LS}}(\lambda))$. 
Then, $\omega(M)\not\in\mathcal{M}_{\mathrm{Hh}}(\lambda)$. 
\end{lemma}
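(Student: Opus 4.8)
The plan is to analyze the effect of the map $\omega$ on the partial array $M$ described in Eqn.~(\ref{ArrayM}) and track how the relevant statistics change. The hypothesis $M\in\mathcal{M}_{\mathrm{Hh}}(\lambda)$ means that $M$ satisfies one of (Hh1)--(Hh3) at the distinguished column $j$, while $M\notin\mathcal{M}_{\mathrm{LS}}(\lambda)$ means $M$ has a discrepancy of type LS, so $x_k>y_1$ for the parent entry $x_k$ of $y_1$; as noted in the text this forces $x_{a-1}\le y_1<x_a$ for the index $a$ defined just before the maps. The goal is to show that after applying $\omega$, the new array violates all three conditions (Hh1)--(Hh3), i.e.\ $\omega(M)$ has a discrepancy of type Hh at the appropriate column.

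First I would set up notation carefully: write out the columns of $\omega(M)$ explicitly from the matrix display, identifying which entry of $\omega(M)$ plays the role of the new $y_m$ (it is $x_{a+p}+p+q$, sitting at the bottom of the $i$-th column in $\omega(M)$, or rather the value that got inserted), and recomputing the quantities $p'$, $q'$ (the analogues of $p$ and $q$) for $\omega(M)$. The key point is that $\omega$ takes the discrepancy of type LS at column $j$ and ``pushes'' it leftward: the entry $y_1$ moves into the $i$-th column, $p$ entries there drop by $1$, the entries of $M'$ drop by $1$, and the bottom of the $i$-th column becomes $x_{a+p}+p+q$. I would then compute, for $\omega(M)$, the quantity $x'_{p'+1}+p'+q'$ that appears in (Hh3) (and the corresponding inequalities in (Hh1), (Hh2)) and show it is strictly less than the new $y_m$-value, which is what the text calls a discrepancy of type Hh ($x_{p+1}+p+q<y_m$). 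The arithmetic here is bookkeeping: each $-1$ shift contributes, and the reindexing by $a$ and $p$ has to be done precisely, using $x_{a-1}\le y_1<x_a$ and the weak-increasingness of columns.

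Concretely, the core computation is: in $M$ one of (Hh1)--(Hh3) holds, which (as the text observes, negating (Hh3)) is the statement that $M$ does \emph{not} have a discrepancy of type Hh, i.e.\ $x_{p+1}+p+q\ge y_m$ or the corresponding existence/bound conditions hold. After applying $\omega$, the entry at the bottom of the column involved becomes larger (it is $x_{a+p}+p+q$, a value built from $x_{a+p}$ which sits below the old $y_1$), while the candidate entries against which it must be compared have all been decremented by $1$ (those in $M'$ and $x_a,\dots,x_{a+p-1}$) or shifted, so their $\varphi$-values go up by $1$ and the balance $x'_{p'+1}+p'+q'$ decreases relative to the new $y_m$. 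I would verify this case by case according to which of (Hh1), (Hh2), (Hh3) was active in $M$, showing in each case the strict inequality $x'_{p'+1}+p'+q'<y'_{m'}$ forcing $\omega(M)\notin\mathcal{M}_{\mathrm{Hh}}(\lambda)$.

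The main obstacle I anticipate is the careful index arithmetic around the parameter $a$ versus $p$: the maps are defined with $y_1$ inserted right after $x_{a-1}$ and then $p$ of the following entries decremented, but $p$ was defined relative to the \emph{height} ordering (entries at the same height as or lower than $y_1$), not the same as the set $\{x_a,\dots\}$, so I must confirm that the set of decremented entries $\{x_a,\dots,x_{a+p-1}\}$ is exactly the set of $x_k$'s that lie weakly below $y_1$, and that $x_{a+p}$ is the first one strictly above — otherwise the recomputation of $p'$ and $q'$ for $\omega(M)$ goes wrong. I would also need to handle the boundary behavior when $M'$ is empty ($q=0$) or when $p$ exhausts the column ($n=p$, the case feeding into (Hh2)) separately, since the formula for the inserted bottom entry degenerates. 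Once the correspondence between the ``height-$p$'' entries and the index range $[a,a+p-1]$ is pinned down, the rest is a short monotone-inequality argument.
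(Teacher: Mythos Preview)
Your approach is essentially the paper's, but you overcomplicate it in two ways and omit one step.

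First, the case split on which of (Hh1)--(Hh3) held in $M$ is unnecessary. The paper uses the hypothesis $M\in\mathcal{M}_{\mathrm{Hh}}(\lambda)$ only once, to extract the single inequality $y_m\le x_{p+1}+p+q$, and this is used not for the Hh-violation argument but to verify that the $j$-th column of $\omega(M)$ is still weakly increasing (so that $\omega(M)\in\mathcal{M}(\lambda)$). This well-definedness check --- that the new $i$-th and $j$-th columns are weakly increasing and that $M'-1\ge 0$ --- is a step you do not mention, but it is part of the paper's proof and is needed before one can meaningfully test the Hh conditions on $\omega(M)$. Once that is done, the core Hh-violation is a single line: the $(p+1)$-th entry of the new $i$-th column is $x_p-1$, the bottom entry of the new $j$-th column is $x_{a+p}+p+q$, and weak increasingness $x_p\le x_{a+p}$ gives $(x_p-1)+p+q<x_{a+p}+p+q$ immediately.

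Second, your anticipated obstacle --- that the decremented set $\{x_a,\ldots,x_{a+p-1}\}$ must coincide with the ``height-$\le y_1$'' entries --- is not something you need to establish, and in general it does not hold. The index $a$ is defined by the value threshold $x_{a-1}\le y_1<x_a$, while $p$ is a positional count determined by the shape; these play independent roles in the definition of $\omega$, and the inequality above goes through without any alignment between them. In particular, the shape of the array is unchanged by $\omega$, so $p'=p$ and $q'=q$ automatically; there is nothing to recompute.
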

\begin{proof}
First we show that the non-empty entries of the $i$-th and $j$-th columns
of $\omega(M)$ are weakly increasing.
Since $x_a>y_1$ implies $x_{a}-1\ge y_1$, the $i$-th column 
of $\omega(M)$ is weakly increasing. 
Since the array $M$ is in $\mathcal{M}_{\mathrm{Hh}}(\lambda)$, 
we have a constraint: $y_{n}\le x_{p+1}+p+q$. 
The column $(x_1,\ldots,x_{m})^{T}$ is weakly increasing, 
which implies $y_n\le x_{p+1}+p+q\le x_{a+p}+p+q$. 
Therefore, the $j$-th column is weakly increasing.

Since we have a discrepancy of type LS at the $j$-th column, 
we do not have a discrepancy of type LS at the $l$-th column
with $i<l<j$. 
Thus, the integer $z\in M'$ satisfies $z\ge1$ and $M'-1$ is well-defined 
under the action of $\omega$. 
A column of the partial array $M'-1$ is weakly increasing since $M'$ is so.

To prove Lemma, it is enough to show that $\omega(M)$ is not in 
$\mathcal{M}_{\mathrm{Hh}}(\lambda)$.
The $(p+1)$-th entry of the $i$-th column of $\omega(M)$ is 
$x_{p}-1$.
The bottom entry of the $j$-th column of $\omega(M)$ is 
$x_{a+p}+p+q$. 
Since $x_{p}\le x_{a+p}$, we have 
$x_{p}+p+q-1<x_{a+p}+p+q$, which is a contradiction 
against (Hh1), (Hh2) and (Hh3). 
Therefore, $\omega(M)\not\in\mathcal{M}_{\mathrm{Hh}}(\lambda)$.
\end{proof}

Let $c$ be an integer such that the $(i,c)$-th entry of $M$ is
a parent of the $(j,1)$-th entry in the corresponding tree.

\begin{lemma}
\label{lemma:omega2}
Suppose $M\in\mathcal{M}_{\mathrm{Hh}}(\lambda)\cap
(\mathcal{M}(\lambda)\setminus\mathcal{M}_{\mathrm{LS}}(\lambda))$.
Then, we have $\omega^{r}(M)\not\in\mathcal{M}_{Hh}(\lambda)$ 
with a positive integer $r\le\min(x_c,c)$. 
\end{lemma}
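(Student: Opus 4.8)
The plan is to iterate Lemma~\ref{lemma:omega1}. That lemma already shows that a single application of $\omega$ takes $M$ out of $\mathcal{M}_{\mathrm{Hh}}(\lambda)$; here we must show this persists under $r$ further applications, as long as $r \le \min(x_c,c)$. The natural strategy is induction on $r$. The base case $r=1$ is exactly Lemma~\ref{lemma:omega1}. For the inductive step, the key observation is to track what $\omega$ does to the pair of columns $i$ and $j$ and to the parent chain of length $c$ joining them: each application of $\omega$ decrements the entries $x_a,\ldots,x_{a+p-1}$ and the entries of $M'$ by $1$, inserts $y_1$ into column $i$, and pushes $x_{a+p}+p+q$ to the bottom of column $j$. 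After one step, the bottom entry of column $j$ has grown, while the entry in column $i$ that will next play the r\^ole of ``$x_{p+1}$'' (i.e. the new parent-side entry at the relevant height) has shrunk by $1$. So the discrepancy inequality $x_{p+1}+p+q < y_m$ is preserved, and in fact the gap does not close, provided the entries being decremented stay non-negative and the tree-structure (who is a parent of whom) is not destroyed.

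First I would make precise the bookkeeping: set up notation for the partial array $M$ as in Eqn.~(\ref{ArrayM}), identify the column-$i$ entry $x_c$ at the height of the parent of the $(j,1)$-entry, and observe that the entries along the parent chain from that entry down to the leaf are weakly decreasing (by the LS-shape conventions and the fact that columns are weakly increasing top-to-bottom in $\mathcal{M}(\lambda)$). Each of the first $r$ applications of $\omega$ decrements a fixed block of these entries by $1$; the constraint $r \le x_c$ guarantees that the smallest relevant entry, which is $x_c$, stays $\ge 0$, so the arrays $\omega^s(M)$ remain in $\mathcal{M}(\lambda)$ for $s \le r$. The constraint $r \le c$ guarantees that we never exhaust the parent chain: at each step the ``active'' discrepancy still sits at a genuine parent–child pair of edges in $A(\lambda)$, so $\omega$ is still acting in the configuration to which Lemma~\ref{lemma:omega1} applies. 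Then I would re-run the estimate at the end of the proof of Lemma~\ref{lemma:omega1}: after $s$ steps the $(p+1)$-th entry of column $i$ is at most $x_c - 1$ while the bottom entry of column $j$ is at least $x_{a+p}+p+q \ge x_c + p + q$, so $x_{p+1}+p+q < y_m$ still holds, contradicting (Hh1)--(Hh3); hence $\omega^s(M)\notin\mathcal{M}_{\mathrm{Hh}}(\lambda)$.

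The main obstacle I anticipate is purely structural rather than arithmetic: one must verify that repeated application of $\omega$ does not relocate the ``discrepancy of type Hh'' to a different column, which would invalidate the assumption that the same partial array $M$ keeps being acted on. Here the definition of the action of $\omega$ on a full array $N$ via the distinguished column $j_1$ (the one maximizing $\varphi_{j_1}(y_1)$) is what has to be checked: after one step, is the column where the new discrepancy occurs still the one containing our block, and is $\varphi$ of its top entry still the maximum? This is where I would spend the real effort — showing that decrementing the parent-chain entries keeps $\varphi_{j}(y_1) = c_j - y_1$ (which involves $y_1$, decremented) strictly largest among discrepancy columns, using $r \le c$ to ensure the chain is long enough that no ``new'' shorter chain overtakes it. Once that stability is in hand, the contradiction with (Hh1)--(Hh3) is the same one-line inequality as in Lemma~\ref{lemma:omega1}, and the induction closes.
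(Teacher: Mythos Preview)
Your plan is essentially the same as the paper's: both track the bottom entry of column $j$ and the $(p+1)$-th entry of column $i$ after $r$ applications of $\omega$, use $r\le x_c$ (together with the fact that entries of $M'$ are $\ge x_c$, since the intermediate columns have no LS discrepancy) to keep everything non-negative, and then re-derive the same inequality $x_{p-s'}-1+p+q < x_{a+p+s}+p+q$ that violates (Hh1)--(Hh3). The paper does not frame it as induction but simply writes down the entry after $r$ steps directly; note in particular that your inductive step cannot literally re-invoke Lemma~\ref{lemma:omega1}, since $\omega^{s}(M)$ is no longer in $\mathcal{M}_{\mathrm{Hh}}$ for $s\ge 1$ --- you must, as you in fact say, re-run the estimate by hand.

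The structural obstacle you flag --- whether repeated application of $\omega$ continues to act on the same pair of columns $(i,j)$, i.e.\ whether the maximizing column $j_1$ in the definition of $\omega$ on the full array remains fixed --- is a genuine point, and the paper's proof does not address it explicitly: it simply tracks columns $i$ and $j$ as if they remain the active ones. So your concern is well placed, and resolving it carefully (as you propose) would make the argument more complete than what is written in the paper.
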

\begin{proof}
A proof is similar to the one of Lemma \ref{lemma:omega1}. 
The bottom entry of the $j$-th column of $\omega(M)$ 
is $x_{a+p+s(r)}+p+q$ with $s(r)\ge r-1$.
The integers $s(r')$ with $0\le r'\le r$ is a weakly increasing. 
Thus the entries of the $j$-th column is weakly increasing.
 
The $(p+1)$-th entry of the $i$-th column of $\omega(M)$
is $x_{p-s'}-1$ with $0\le s'\le r-1$.  
We have $x_{p-s'}+p+q-1<x_{p}+p+q\le x_{a+p+s}+p+q$, which 
implies the violation of the conditions (Hh1) to (Hh3).
Therefore, we have 
$\omega^{r}(M)\in\mathcal{M}(\lambda)\backslash\mathcal{M}_{\mathrm{Hh}}(\lambda)$.	

Since the $k$-th column with $i<k<j$ is not a discrepancy of type LS, a non-empty element 
in $M'$ is equal to or bigger than $x_c$. 
We have $x_c-r\ge0$.
By a similar argument to Lemma \ref{lemma:omega1}, we have $M'-r\ge0$. 
Thus the image of $\omega^{r}$ is in $\mathcal{M}(\lambda)$ but not in 
$\mathcal{M}_{Hh}(\lambda)$. This completes the proof.
\end{proof}

\begin{lemma}
\label{lemma:omega3}
Suppose that $M\in\mathcal{M}_{\mathrm{Hh}}(\lambda)\cap
(\mathcal{M}(\lambda)\setminus\mathcal{M}_{\mathrm{LS}}(\lambda))$.
Then, there exists a positive integer $r\le\min(x_c,c)$ such that 
$\omega^{r}(M)\in\mathcal{M}_{\mathrm{LS}}(\lambda)$.
\end{lemma}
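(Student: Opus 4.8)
The plan is to combine the two preceding lemmas into a termination argument. Lemma~\ref{lemma:omega1} and Lemma~\ref{lemma:omega2} together tell us that applying $\omega$ repeatedly to an array $M\in\mathcal{M}_{\mathrm{Hh}}(\lambda)\setminus\mathcal{M}_{\mathrm{LS}}(\lambda)$ keeps us inside $\mathcal{M}(\lambda)$ and outside $\mathcal{M}_{\mathrm{Hh}}(\lambda)$, at least as long as the number of iterations does not exceed $\min(x_c,c)$; the point of the present lemma is to show that \emph{somewhere} along this chain of length at most $\min(x_c,c)$ we actually land in $\mathcal{M}_{\mathrm{LS}}(\lambda)$. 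So the first step is to fix a notion of progress: I would introduce the statistic $\mathrm{dis}^{\mathrm{LS}}(M)$, say the maximum over all discrepancy columns $j$ of the excess $x_k-y_1$ at that column (or equivalently $\varphi_{j}(y_1)$ pushed as large as possible, matching the choice of $j_1$ in the definition of the action of $\omega$ on $N$), together with the number of discrepancy columns achieving that maximum. The claim will be that each application of $\omega$ strictly decreases this statistic in the lexicographic order, and that $M\in\mathcal{M}_{\mathrm{LS}}(\lambda)$ exactly when the statistic is zero (no discrepancy columns remain).

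Next I would unwind what $\omega$ does at the distinguished column $j_1$. By construction $\omega$ replaces $(y_1,\dots,y_m)^T$ by $(y_2,\dots,y_m,x_{a+p}+p+q)^T$, deletes $y_1$ from the $j_1$-th column, inserts $y_1$ into the $i$-th column between $x_{a-1}$ and $x_a$, and decrements each of $x_a,\dots,x_{a+p-1}$ and each entry of $M'$ by $1$. The key local computation is that after this move the pair $(x_k,y_1)$ that witnessed the discrepancy at $j_1$ is resolved: the new parent entry of the new top entry $y_2$ of column $j_1$ is $x_a-1$ (or one of the shifted entries), and since $y_1<y_2$ was forced by weak increase in column $j_1$ while $x_a-1\ge y_1$, one checks $x_a-1$ can only fail to dominate $y_2$ by a strictly smaller amount than $x_k-y_1$; simultaneously the insertion of $y_1$ into column $i$ creates no new discrepancy there because $y_1\le x_{a-1}$ by the choice of $a$, and the maximality in the definition of $j_1$ guarantees that no column to the right of $j_1$ had a larger excess to begin with. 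Handling the decremented entries of $M'$ and the $x$'s requires checking that we do not create fresh discrepancies of larger excess in the intermediate columns — this is where the hypothesis that columns $i<l<j_1$ are \emph{not} discrepancies of type LS (used already in Lemma~\ref{lemma:omega1}) does the work, since it forces the entries of $M'$ to be large enough to absorb the $-1$.

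Combining these, $\omega$ strictly decreases the lex statistic at each step while the iterate remains outside $\mathcal{M}_{\mathrm{Hh}}(\lambda)$, so the process must reach statistic zero, i.e.\ an array with no type-LS discrepancy; by the remark in Section~\ref{sec:maps} this means the array satisfies (LS1) (no parent entry is exceeded by a child) and (LS2) holds automatically because $\omega$ never increases a bottom-row entry beyond its capacity — in fact $\omega$ only decrements or relocates entries, and the relocated value $x_{a+p}+p+q$ sits in a column whose capacity is large enough by the bookkeeping of $p$ and $q$. Finally one bounds the number of steps: each step decrements the excess at the controlling column, and the total excess available is governed by $x_c$ (the parent entry) and the depth $c$ of the relevant position in the tree, giving $r\le\min(x_c,c)$, matching exactly the range in Lemmas~\ref{lemma:omega1} and \ref{lemma:omega2}. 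I expect the main obstacle to be the careful verification that the intermediate columns (those strictly between $i$ and $j_1$, whose $M'$-entries get decremented) do not spawn a discrepancy with excess as large as the one being removed — equivalently, that the "max then leftmost/rightmost" tie-breaking in the definition of the $\omega$-action on $N$ is compatible with monotone decrease of the chosen statistic; this is a finite but delicate case analysis on the relative heights of $x_k$, the entries of $M'$, and $y_1$.
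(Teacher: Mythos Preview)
Your approach is genuinely different from the paper's. You propose a termination argument via a decreasing lexicographic statistic on the LS-discrepancies, arguing that each application of $\omega$ strictly lowers it until no discrepancy remains. The paper makes no use of any such statistic: it goes straight to the endpoint $r=\min(x_c,c)$ and verifies directly that $\omega^{r}(M)$ satisfies (LS1) and (LS2). Concretely, after $r$ applications the $c$-th entry of column $i$ is bounded above by $y':=\min(y_r,\,x_{c-r}-r)$, while the top entry of column $j$ has become either $y_{r+1}$ or $x_{a+p+s(r)}+p+q$; a direct comparison shows $y'$ is dominated by either, so the discrepancy at column $j$ has vanished, and the intermediate columns are handled similarly. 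Existence of some $r\le\min(x_c,c)$ is then immediate. This endpoint check is much shorter than a full monotonicity argument and uses Lemma~\ref{lemma:omega2} only to guarantee that $\omega^{r}(M)$ remains in $\mathcal{M}(\lambda)$.

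Your proposal, as written, has a real gap. You yourself flag the crux --- that decrementing the entries of $M'$ and shifting the $x_k$'s could in principle spawn a fresh LS-discrepancy of excess at least as large as the one being removed --- and then defer it as ``a finite but delicate case analysis'' without carrying it out. Until that is done the monotone decrease is an assertion, not a proof. Separately, even granting strict decrease, the bound $r\le\min(x_c,c)$ does not fall out of your argument: you would need the initial value of your statistic to be at most $\min(x_c,c)$ and each step to decrease it by at least one, and you establish neither. The paper's direct approach avoids both problems, since it never asks about intermediate iterates and inherits the bound $\min(x_c,c)$ directly from the range in Lemma~\ref{lemma:omega2}.
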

\begin{proof}
From Lemma \ref{lemma:omega2}, $\omega^{r}(M)$ is in $\mathcal{M}(\lambda)$ with 
a positive integer $r\le\min(x_c,c)$.
This implies that the first entry of the $j$-th column of $\omega^{r}(M)$ is 
weakly increasing. 
When $r=\min(x_c,c)$, the first entry of the $j$-th column is $y_{r+1}$ or 
$x_{a+p+s(r)}+p+q$ with some integer $s(r)$. 
The $c$-th entry of the $i$-th column is equal to or smaller than $y':=\min(y_r,x_{c-r}-r)$.
Further, the value $y'$ is equal to or smaller than $y_{r}$ or $x_{a+p+s(r)}+p+q$. 
The image $\omega^{r}(M)$ satisfies the condition (LS1) at the $j$-th column.
The $k$-th column with $i<k<j$ also satisfies the condition (LS1).
It is obvious that $\omega^{r}(M)$ satisfies the condition (LS2).
Therefore, $\omega^{r}(M)$ is in $\mathcal{M}_{\mathrm{LS}}(\lambda)$.
\end{proof}

\begin{lemma}
Suppose $M\in\mathcal{M}_{\mathrm{LS}}(\lambda)\cap(\mathcal{M}(\lambda)\setminus\mathcal{M}_{\mathrm{Hh}}(\lambda))$.
Then, $\sigma^{r}(M)\in(\mathcal{M}(\lambda)\setminus\mathcal{M}_{\mathrm{LS}}(\lambda))$ with 
a positive integer $r$.
\end{lemma}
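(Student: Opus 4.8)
The plan is to mirror the structure of the three preceding lemmas about $\omega$, exploiting the symmetry between the two maps. The map $\sigma$ is essentially the inverse operation to $\omega$: where $\omega$ pushes the offending entry $y_1$ into the $i$-th column and shifts $M'$ down by one, $\sigma$ takes the bottom entry $y_m$, converts it into $y_m-p-q$ in the $i$-th column, shifts the $x$'s up by one, and promotes $x_{b-p}$ into the top of the $j$-th column while shifting $M'$ up by one. So the first step is to record this duality explicitly and observe that $\sigma$ undoes a discrepancy of type Hh at the $j_2$-th column by construction: the whole point of choosing $j_2$ minimal with $\varphi_{j_2}(y_{j_2,1})$ minimizing over $\mathrm{Dis}^{\mathrm{Hh}}(N)$ is that after the swap the violated inequality $x_{p+1}+p+q<y_m$ is repaired locally.

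Concretely, I would argue as follows. First, check that $\sigma(M)\in\mathcal{M}(\lambda)$, i.e.\ the columns of $\sigma(M)$ are still weakly increasing and the bottom-row entries respect their capacities. For the $i$-th column: the replacement $(x_{b-p+1}+1,\ldots,x_b+1,y_m-p-q)$ is weakly increasing because $x_{b-p+1}\le\cdots\le x_b$ in $M$ and, by the choice of $b$ (the index with $x_b<y_m-p-q\le x_{b+1}$), we have $x_b+1\le y_m-p-q$; and the splice with $x_{b-p}$ above and $x_{b+1}$ below is consistent for the same reason. For the $j$-th column, the new top entry $x_{b-p}$ sits above $y_1\le\cdots\le y_{m-1}$, and since $M$ has no discrepancy of type LS strictly between the $i$-th and $j$-th columns (as $M\in\mathcal{M}_{\mathrm{LS}}(\lambda)$ everywhere), the entry $z\in M'$ satisfies the bound needed to make $M'+1$ legal, so capacities are not exceeded there either. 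Then, to show $\sigma(M)$ leaves $\mathcal{M}_{\mathrm{LS}}(\lambda)$ for $r=1$, I would exhibit a new discrepancy of type LS: the parent $x_k$ of the new top entry $x_{b-p}$ of the $j$-th column — namely the lowest entry of the $i$-th column strictly above the $j$-th, which after the shift is $x_{b-p+1}+1$ (or an entry of $M'+1$) — now exceeds $x_{b-p}$ whenever $x_{b-p+1}\ge x_{b-p}$, which holds since the original $i$-th column was weakly increasing; one must check the degenerate cases where $b-p$ coincides with a column boundary or $M'$ is empty, but the inequality is forced by monotonicity of the $x$'s. Finally, iterate: as with $\omega^r$ in Lemma~\ref{lemma:omega2} and Lemma~\ref{lemma:omega3}, applying $\sigma$ repeatedly keeps producing discrepancies of type LS (so the image stays in $\mathcal{M}(\lambda)\setminus\mathcal{M}_{\mathrm{LS}}(\lambda)$) as long as $r$ stays below the bound forced by the relevant $x_c$ and column index $c$, because each application increments a block of $x$'s and the monotonicity obstruction persists.

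The step I expect to be the main obstacle is the bookkeeping of the degenerate and boundary cases in verifying that $\sigma^r(M)\notin\mathcal{M}_{\mathrm{LS}}(\lambda)$ — specifically, making sure that the entry which becomes the parent of the freshly-promoted top entry of the $j$-th column is genuinely strictly larger than it, rather than merely weakly larger, and handling what happens when the partial array $M'$ is empty or when $p=0$ (so that $j$ is the column immediately to the right of $i$). In those cases the ``parent'' relation in the tree $A(\lambda)$ has to be traced carefully, and the inequality that certifies the discrepancy may degenerate to an equality unless one uses the precise defining inequality for $b$. The cleanest way to dispatch this is probably to phrase everything in terms of the maps $\varphi_i$ and observe, exactly in parallel with the proof of Lemma~\ref{lemma:omega1}, that the relevant $\varphi$-values are forced into the wrong order by the weak monotonicity of $M$ together with the single strict inequality coming from the discrepancy of type Hh we started with; then the same argument iterates verbatim, replacing $x_c$ by the appropriate shifted entry, to give the statement for general positive $r$.
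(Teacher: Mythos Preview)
Your proposal is correct and follows essentially the same approach as the paper: exhibit an LS discrepancy at the $j$-th column of $\sigma(M)$ by comparing the new top entry $x_{b-p}$ of that column with its parent in the (shifted) $i$-th column, then iterate. The paper's argument is extremely terse---it records only the chain $x_{c+1}+1>x_{c+1}>x_c>x_{b-p}$ and dispatches the iteration with ``by a similar argument''---so your more careful bookkeeping of the column monotonicity, the capacity checks, and the degenerate cases ($M'=\emptyset$, $p=0$) is a reasonable expansion rather than a different route. One small point: your identification of the new parent entry as ``$x_{b-p+1}+1$'' is not quite the same index the paper uses ($x_{c+1}+1$, where $c$ is the fixed parent position in the tree $A(\lambda)$); the parent position in the $i$-th column is determined by the tree and does not move under $\sigma$, so you should track what lands at position $c$ after the shift rather than assume it is the first shifted entry. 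The inequality you need, however, is the same one and follows from weak monotonicity of the original $x$'s together with the $+1$ shift, exactly as you say.
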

\begin{proof}
Since $M\in\mathcal{M}_{\mathrm{LS}}\cap(\mathcal{M}\setminus\mathcal{M}_{Hh})$, 
we have $x_{b-p}\le y_1$.
From the definition of $\sigma$, we have $x_{c+1}+1>x_{c+1}>x_{c}>x_{b-p}$, 
which implies that $\sigma(M)\notin\mathcal{M}_{\mathrm{LS}}$.
By a similar argument, the image $\sigma^{r}(M)$ has a discrepancy of type LS
at the $j$-th column. This completes the proof.
\end{proof}

\begin{lemma}
Suppose $M\in\mathcal{M}_{\mathrm{LS}}(\lambda)\cap(\mathcal{M}(\lambda)\setminus\mathcal{M}_{\mathrm{Hh}}(\lambda))$.
Then, there exists a positive integer $r$ such that  
$\sigma^{r}(M)\in\mathcal{M}_{\mathrm{Hh}}(\lambda)$.
\end{lemma}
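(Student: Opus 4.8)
The plan is to argue in complete parallel with the proof of Lemma~\ref{lemma:omega3}, with the lemma just proved playing the role that Lemma~\ref{lemma:omega2} played there. By that lemma, every iterate $\sigma^{r}(M)$ with $r$ a positive integer still lies in $\mathcal{M}(\lambda)$ and has acquired a discrepancy of type LS at the $j$-th column; what is left is to locate the first $r$ at which the type-Hh discrepancy at the column selected by the $\sigma$-action (the column $j_2$ realizing $\min\{\varphi_{i}(y_{i,n_{i}})\mid i\in\mathrm{Dis}^{\mathrm{Hh}}(\cdot)\}$) is destroyed, and to check that no new type-Hh discrepancy is created. So I would set $r$ to be the least positive integer for which the $j$-th column of $\sigma^{r}(M)$ violates none of (Hh1)--(Hh3), and first bound $r$ from above by the quantity dual to the bound $\min(x_{c},c)$ of Lemmas~\ref{lemma:omega2} and \ref{lemma:omega3} --- here the relevant entries of the $i$-th column (those in positions $b-p+1,\dots,b$, including the parent entry $x_{c}$) are \emph{raised} by $1$ at each step rather than lowered, so the bound now comes from the capacity $c_{i}$ of that column and from the position $c$ of the parent, via the need to keep the column weakly increasing and its bottom entry below $c_{i}$; finiteness of $r$ follows.

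For that terminal $r$ I would then verify the three defining requirements of $\mathcal{M}_{\mathrm{Hh}}(\lambda)$ in turn. Membership in $\mathcal{M}(\lambda)$ --- weak increase down each affected column, and the capacity bound on its bottom entry --- goes exactly as in the lemma just proved: $\sigma$ inserts $y_{m}-p-q$ after $x_{b-p+1}+1,\dots,x_{b}+1$, and the inequalities $x_{b}<y_{m}-p-q\le x_{b+1}$ defining the index $b$, together with monotonicity of the original columns, give the needed order relations, while $y_{m}-p-q$ is bounded by the appropriate capacity because $y_{m}$ was. It then remains to see that the $j$-th column of $\sigma^{r}(M)$, and every intermediate column $k$ with $i<k<j$, satisfies (Hh1), (Hh2) or (Hh3). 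For this I would split into the cases $n=p$ and $n>p$ exactly as those conditions are phrased, using that at step $r$ the incremented parent entry has just reached the threshold $\varphi_{j}(y_{m})$ --- precisely the borderline separating a violation of (Hh3) (respectively of (Hh2), when $n=p$) from a non-violation.

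The main obstacle I expect is the same bookkeeping that makes the $\omega$-side lemmas delicate: showing that the terminal iterate genuinely belongs to $\mathcal{M}_{\mathrm{Hh}}(\lambda)$, not merely that it fails to have a discrepancy at the single column $j$. Because the action of $\sigma$ on an array $N$ touches only the partial array built around the column $j_2$ of \emph{smallest} $\varphi$-value among type-Hh discrepancies, no discrepancy can appear at a column strictly to the left of $j_2$ (its entries are unchanged), and the entries that are changed are only increased, which can only help conditions (Hh1)--(Hh3) at columns to the right; making this monotonicity of the discrepancy set precise, and checking it persists under iteration, is the crux. Once that is established, $\sigma^{r}(M)\in\mathcal{M}_{\mathrm{Hh}}(\lambda)$, and together with Lemmas~\ref{lemma:omega1}--\ref{lemma:omega3} and the lemma just proved this yields the bijection between $\mathcal{M}_{\mathrm{LS}}(\lambda)$ and $\mathcal{M}_{\mathrm{Hh}}(\lambda)$ promised at the start of Section~\ref{sec:maps}, with $\omega$ and $\sigma$ mutually inverse.
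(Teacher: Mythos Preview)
Your plan follows the same overall structure as the paper --- iterate $\sigma$ until the type-Hh discrepancy at column $j$ disappears, and check that no new one arises --- but your termination argument is different and somewhat less direct. You propose to bound $r$ by how far the entries in positions $b-p+1,\dots,b$ of column $i$ can be raised before exceeding the capacity $c_i$, and then to verify (Hh1)--(Hh3) at the terminal step by tracking when the incremented parent entry reaches the threshold $\varphi_j(y_m)$. The paper instead observes directly that under $\sigma$ the $j$-th column $(y_1,\dots,y_m)$ is replaced by $(x_{b-p},y_1,\dots,y_{m-1})$, so its \emph{bottom entry} drops from $y_m$ to $y_{m-1}$ and is therefore weakly decreasing at every application. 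Once that entry falls to at most $p+q$ (more generally, once (Hh3) holds), the discrepancy at column $j$ is gone; this single monotonicity observation delivers both the existence and finiteness of $r$ at once, with no detour through capacities or the position of the parent entry.

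Your route can probably be completed, but note that the capacity bound on column $i$ by itself does not show that the discrepancy at column $j$ is resolved before that bound is reached; you still need the threshold-crossing step you allude to, which is really the same monotonicity argument transported from column $j$ to column $i$. Your identification of the residual bookkeeping is correct and matches the paper's treatment: since $\sigma$ only increases the entries of $M'$ and of the affected segment of column $i$, the columns strictly between $i$ and $j$ continue to satisfy one of (Hh1)--(Hh3), and no new type-Hh discrepancy is introduced.
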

\begin{proof}
The integer array $M$ has a discrepancy of type Hh at the $j$-th column.
By the definition of $\sigma$, an application of $\sigma$ on a partial integer 
array $M''$ from the $i$-th column to the $(j-1)$-th column satisfies one of 
conditions from (Hh1) to (Hh3).
The map $\sigma$ maps $M'$ to $M'+1$ and some entries in the $i$-th column 
are increased. 
Thus, the application of $\sigma^{r}$ on
$M''$ also satisfies the conditions for $\mathcal{M}_{\mathrm{Hh}}$.
The bottom entry $e$ of the $j$-th column is weakly decreasing if 
we apply $\sigma$ on $M$.
Let $y(n,i)$ be the integer on $e$ after we apply $\sigma$ on $M$
$i$ times.
If $y(n,i)\le p+q$, then $\sigma^{i}(M)$ is in $\mathcal{M}_{Hh}$.
If $y(n,i)>p+q$, we have $y(n,i)\ge y(n,i+1)\ldots$. In this case,
there exits an integer $r$ such that $\sigma^{r}(M)$ satisfies 
the condition (Hh3).
This completes the proof.
\end{proof}

For $M\in\mathcal{M}(\lambda)$, we denote by $|M|$ the sum of integers 
in $M$. 
From the definitions of $\omega$ and $\sigma$, it is obvious that 
\begin{lemma}
\label{lemma:degomega}
We have $|M|=|\omega(M)|=|\sigma(M)|$.
\end{lemma}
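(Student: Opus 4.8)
The claim to prove is Lemma~\ref{lemma:degomega}: that the two maps $\omega$ and $\sigma$ on $\mathcal{M}(\lambda)$ preserve the sum $|M|$ of all entries of an integer array. The plan is to verify this by a direct bookkeeping of the local changes, since $\omega$ and $\sigma$ act only on a partial array $M$ consisting of two columns (the $i$-th and $j$-th) together with the intervening piece $M'$, leaving all other entries of $N$ untouched; hence it suffices to check the balance on that partial array alone.

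First I would treat $\omega$. Comparing the $i$-th column of $M$, namely $(x_1,\ldots,x_n)^{T}$, with that of $\omega(M)$, namely $(x_1,\ldots,x_{a-1},y_1,x_a-1,\ldots,x_{a+p-1}-1,x_{a+p+1},\ldots,x_n)^{T}$, the net change is: the entries $x_{a},\ldots,x_{a+p-1}$ each decrease by $1$ (a total of $-p$), the entry $x_{a+p}$ is deleted, and a new entry $y_1$ is inserted. So the $i$-th column changes by $y_1 - x_{a+p} - p$. The partial array $M'$ becomes $M'-1$, contributing $-q$ where $q$ is the number of non-empty entries of $M'$. The $j$-th column changes from $(y_1,\ldots,y_m)^{T}$ to $(y_2,\ldots,y_m,x_{a+p}+p+q)^{T}$: the entry $y_1$ is deleted and $x_{a+p}+p+q$ is inserted, for a change of $x_{a+p}+p+q - y_1$. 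Adding the three contributions gives $(y_1 - x_{a+p} - p) + (-q) + (x_{a+p}+p+q - y_1) = 0$, so $|\omega(M)| = |M|$.

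Next I would do the same for $\sigma$, which is essentially the inverse bookkeeping. The $i$-th column goes from $(x_1,\ldots,x_n)^{T}$ to $(x_1,\ldots,x_{b-p-1},x_{b-p+1}+1,\ldots,x_b+1,y_m-p-q,x_{b+1},\ldots,x_n)^{T}$: the entries $x_{b-p+1},\ldots,x_b$ each increase by $1$ (total $+p$), the entry $x_{b-p}$ is deleted, and $y_m-p-q$ is inserted, for a change of $(y_m-p-q) - x_{b-p} + p$. The partial array $M'$ becomes $M'+1$, contributing $+q$. The $j$-th column goes from $(y_1,\ldots,y_m)^{T}$ to $(x_{b-p},y_1,\ldots,y_{m-1})^{T}$: $y_m$ is deleted and $x_{b-p}$ is inserted, a change of $x_{b-p} - y_m$. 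The sum of the three changes is $\big((y_m-p-q) - x_{b-p} + p\big) + q + \big(x_{b-p} - y_m\big) = 0$, so $|\sigma(M)| = |M|$. Finally, since $\omega(N) = (N\setminus M)\cup\omega(M)$ and $\sigma(N) = (N\setminus M)\cup\sigma(M)$ modify only the entries lying in $M$, the identity on the partial array immediately gives $|N| = |\omega(N)| = |\sigma(N)|$ on the full array.

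I do not expect a genuine obstacle here; the only thing to be careful about is the bookkeeping of which entries are \emph{deleted} versus \emph{shifted}, in particular that under $\omega$ exactly one old entry ($x_{a+p}$) leaves the $i$-th column and reappears, transformed, at the bottom of the $j$-th column while $y_1$ travels in the opposite direction, and symmetrically for $\sigma$ with $x_{b-p}$ and $y_m$. The $\pm p$ from the block of shifted $x$'s and the $\pm q$ from $M'\mp 1$ cancel against the $p+q$ that is attached to (respectively subtracted from) the migrating entry, which is exactly why the design of the maps makes the count work. Once the two column-by-column computations are written out, the lemma follows.
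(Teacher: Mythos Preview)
Your proof is correct and follows exactly the approach the paper intends: the paper simply asserts that the lemma is ``obvious from the definitions of $\omega$ and $\sigma$'' and gives no further argument, so your explicit entry-by-entry bookkeeping is precisely the omitted verification. Nothing needs to be changed.
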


Suppose that $M\in\mathcal{M}_{\mathrm{Hh}}(\lambda)$ and 
$M\not\in\mathcal{M}_{\mathrm{LS}}(\lambda)$. 
From Lemma \ref{lemma:omega3}, there exits a unique smallest positive
integer $r_0$ such that $\omega^{r_0}(M)\in\mathcal{M}_{\mathrm{LS}}(\lambda)$.

\begin{theorem}
\label{thrm:omega-sigma}
A map $\omega^{r_0}$ is a bijection between 
$\mathcal{M}_{\mathrm{Hh}}\cap(\mathcal{M}\setminus\mathcal{M}_{\mathrm{LS}})$
and $\mathcal{M}_{\mathrm{LS}}\cap(\mathcal{M}\setminus\mathcal{M}_{\mathrm{Hh}})$.
The inverse is $\sigma^{r_0}$. 
\end{theorem}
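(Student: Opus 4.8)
The plan is to assemble Theorem \ref{thrm:omega-sigma} from the sequence of lemmas already proved, treating $\omega$ and $\sigma$ as the two ``directions'' of a single involution-like correspondence on $\mathcal{M}(\lambda)$. First I would fix $M\in\mathcal{M}_{\mathrm{Hh}}(\lambda)\cap(\mathcal{M}(\lambda)\setminus\mathcal{M}_{\mathrm{LS}}(\lambda))$ and invoke Lemma \ref{lemma:omega3} to obtain a positive integer $r\le\min(x_c,c)$ with $\omega^{r}(M)\in\mathcal{M}_{\mathrm{LS}}(\lambda)$; by Lemma \ref{lemma:omega2}, every intermediate $\omega^{r'}(M)$ with $0<r'<r$ lies in $\mathcal{M}(\lambda)\setminus\mathcal{M}_{\mathrm{Hh}}(\lambda)$, and in particular is not yet in $\mathcal{M}_{\mathrm{LS}}(\lambda)$ (since an element of $\mathcal{M}_{\mathrm{LS}}$ satisfying (LS1), (LS2) cannot simultaneously have the $j$-th-column discrepancy of type LS that persists until step $r$). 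Hence the integer $r_0$ defined just before the theorem --- the smallest $r$ with $\omega^{r}(M)\in\mathcal{M}_{\mathrm{LS}}(\lambda)$ --- equals this $r$, so $r_0$ is well-defined and $\omega^{r_0}$ maps $\mathcal{M}_{\mathrm{Hh}}\cap(\mathcal{M}\setminus\mathcal{M}_{\mathrm{LS}})$ into $\mathcal{M}_{\mathrm{LS}}$. Moreover Lemma \ref{lemma:omega1} (the $r=1$ case) guarantees $\omega^{r_0}(M)\notin\mathcal{M}_{\mathrm{Hh}}$ when $r_0\ge 1$, so the image in fact lands in $\mathcal{M}_{\mathrm{LS}}\cap(\mathcal{M}\setminus\mathcal{M}_{\mathrm{Hh}})$.

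Next I would run the symmetric argument for $\sigma$: for $N\in\mathcal{M}_{\mathrm{LS}}(\lambda)\cap(\mathcal{M}(\lambda)\setminus\mathcal{M}_{\mathrm{Hh}}(\lambda))$, the last two lemmas give a smallest positive integer $s_0$ with $\sigma^{s_0}(N)\in\mathcal{M}_{\mathrm{Hh}}(\lambda)$, with all intermediate iterates lying in $\mathcal{M}(\lambda)\setminus\mathcal{M}_{\mathrm{LS}}(\lambda)$, so $\sigma^{s_0}$ maps $\mathcal{M}_{\mathrm{LS}}\cap(\mathcal{M}\setminus\mathcal{M}_{\mathrm{Hh}})$ into $\mathcal{M}_{\mathrm{Hh}}\cap(\mathcal{M}\setminus\mathcal{M}_{\mathrm{LS}})$. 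The crux is then to verify that $\sigma$ and $\omega$ are mutually inverse as single steps: inspecting the explicit matrix descriptions, $\sigma$ undoes $\omega$ column-by-column --- where $\omega$ inserts $y_1$ into the $i$-th column at position $a$, decrements the tail $x_a,\dots,x_{a+p-1}$ and $M'$, and cyclically shifts the $j$-th column appending $x_{a+p}+p+q$, the map $\sigma$ extracts that bottom entry $y_m-p-q$ back, re-increments, and restores the cyclic shift. One must also check that the selection rules agree: the column index $j_1$ chosen for $\omega$ (largest $\mathrm{Dis}^{\mathrm{LS}}$ index maximizing $\varphi_{j_1}(y_1)$) is exactly the column index $j_2$ that $\sigma$ would select on the image (smallest $\mathrm{Dis}^{\mathrm{Hh}}$ index minimizing $\varphi_{j_2}(y_m)$), because applying $\omega$ removes precisely that type-LS discrepancy while creating a type-Hh discrepancy at the same column. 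Granting $\sigma\circ\omega=\mathrm{id}$ and $\omega\circ\sigma=\mathrm{id}$ on the relevant partial arrays, one gets $\sigma^{r_0}\circ\omega^{r_0}=\mathrm{id}$, and by the uniqueness of $r_0$ (and the corresponding $s_0$, which must then coincide with $r_0$) the two maps are inverse bijections between the stated sets.

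The main obstacle I anticipate is precisely the bookkeeping in the mutual-inverse verification: one must track how the parameters $a$, $b$, $p$, $q$, $c$ transform under a single application of $\omega$ so that the parameters governing $\sigma$ on the image are recognized as the ``reversed'' data, and confirm that the discrepancy-tracking invariants (the bottom-column inequalities $x_{p+1}+p+q<y_m$ of type Hh versus the parent inequality $x_k>y_1$ of type LS) toggle in the expected way at each step. A secondary subtlety is confirming that the iteration counts match --- i.e.\ that $\omega^{r_0}$ reaching $\mathcal{M}_{\mathrm{LS}}$ for the first time corresponds under the inverse to $\sigma^{r_0}$ reaching $\mathcal{M}_{\mathrm{Hh}}$ for the first time --- which follows because each single $\sigma$ step strictly decreases the same potential (the bottom entry $y(n,i)$ of the $j$-th column) that each $\omega$ step strictly increases, so the two first-passage times are forced to agree. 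Finally, Lemma \ref{lemma:degomega} ensures the bijection preserves $|M|$, which is what makes it useful for equating generating functions downstream, though that fact is not needed for the bijectivity statement itself.
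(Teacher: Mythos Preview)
Your proposal is correct and follows essentially the same route as the paper: the paper's entire proof is the two-line observation that it suffices to check $\sigma\circ\omega^{r}(M)=\omega^{r-1}(M)$ for $1\le r\le r_0$, which ``follows from the explicit definitions of $\omega$ and $\sigma$.'' You have expanded this considerably, correctly identifying the two pieces of bookkeeping the paper sweeps under the rug --- that the column-selection rules for $\sigma$ on $\omega^{r}(M)$ pick the same column $j$ that $\omega$ just acted on, and that the first-passage times $r_0$ and $s_0$ coincide. One small slip: to conclude $\omega^{r_0}(M)\notin\mathcal{M}_{\mathrm{Hh}}$ you need Lemma~\ref{lemma:omega2} (the general $r$), not Lemma~\ref{lemma:omega1} (the $r=1$ case) as you cite.
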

\begin{proof}
Suppose that 
$M\in\mathcal{M}_{Hh}\cap(\mathcal{M}(\lambda)\setminus\mathcal{M}_{\mathrm{LS}}(\lambda))$. 
It is enough to show that $\sigma\circ\omega^{r}(M)=\omega^{r-1}(M)$ with $1\le r\le r_{0}$.
This follows from the explicit definitions of $\omega$ and $\sigma$.
\end{proof}

\begin{example}
Let $\lambda=UUDDUUUDDUDUUDDD$ be a Dyck path.
The shape of an integer array is 
\begin{eqnarray*}
\begin{array}{ccc}
\ast & & \\
\ast & \ast & \ast \\
\ast & & \ast \\ \hline
2 & 4 & 5 
\end{array}
\end{eqnarray*}
where the integers $2,4$ and $5$ below a horizontal line are the capacities with respect 
to $\lambda_0$ and 
we omit the first column whose capacity and entries are zero.
We have $\#\{\nu \ |\  \nu\in\mathcal{M}_{\mathrm{LS}}(\lambda) \text{\ and\ } |\nu|=5\}=38$. 
We have 6 labellings out of 38 labellings which have a discrepancy of type Hh and they are 
\begin{eqnarray*}
\begin{array}{ccc}
0 & & \\
0 & 0 & 0 \\
0 & & 5
\end{array},\quad
\begin{array}{ccc}
0 & & \\
0 & 0 & 1 \\
0 & & 4
\end{array},\quad
\begin{array}{ccc}
0 & & \\
0 & 1 & 0 \\
0 & & 4
\end{array},\quad
\begin{array}{ccc}
0 & & \\
0 & 3 & 0 \\
0 & & 2
\end{array},\quad
\begin{array}{ccc}
0 & & \\
0 & 3 & 1 \\
0 & & 1
\end{array},\quad
\begin{array}{ccc}
0 & & \\
0 & 4 & 0 \\
0 & & 1
\end{array}.
\end{eqnarray*}
The actions of $\sigma$ on these 6 labellings are 
\begin{eqnarray*}
\begin{array}{ccc}
1 & & \\
1 & 1 & 0 \\
2 & & 0
\end{array},\quad
\begin{array}{ccc}
1 & & \\
1 & 1 & 0 \\
1 & & 1
\end{array},\quad
\begin{array}{ccc}
1 & & \\
1 & 2 & 0 \\
1 & & 0
\end{array},\quad
\begin{array}{ccc}
1 & & \\
1 & 0 & 0 \\
1 & & 2
\end{array},\quad
\begin{array}{ccc}
1 & & \\
1 & 0 & 1 \\
1 & & 1
\end{array},\quad
\begin{array}{ccc}
1 & & \\
1 & 0 & 0 \\
2 & & 1
\end{array},
\end{eqnarray*}
and they are in $\mathcal{M}_{\mathrm{Hh}}(\lambda)$ but have a 
discrepancy of type LS.
\end{example}

\subsection{A bijective proof of another expression of Theorem \ref{thrm-A1}}

Let $\lambda_{0}$ be a path consisting of only $U$'s. 
\begin{theorem}
\label{thrm:DyckKL}
For a Dyck path $\lambda$, we have 
\begin{eqnarray}
\label{art1}
\sum_{D\in\mathcal{D}(\lambda/\ast)}q^{\mathrm{art}(D)}
=P_{\lambda,\lambda_{0}}
\end{eqnarray}
\end{theorem}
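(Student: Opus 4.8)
The plan is to realize both sides as a generating function $\sum_{M}q^{|M|}$ over integer arrays of the shape determined by $A(\lambda)$, and to glue the bijections already built in Sections~\ref{sec:Hh} and \ref{sec:maps} into a single weight-preserving bijection $\mathcal{M}_{\mathrm{Hh}}(\lambda)\to\mathcal{M}_{\mathrm{LS}}(\lambda)$. First I rewrite the left-hand side. By definition $\sum_{D\in\mathcal{D}(\lambda/\ast)}q^{\mathrm{art}(D)}=P^{\mathrm{Dyck}}_{\lambda}$. The Hermite history of type II from Section~\ref{sec:Hh} sends a Dyck tiling $D$ to a pair $(H',\lambda)$, where $H'$ is a labelling of the up steps of $\lambda$ with a label of $u$ in $[0,n(u)]$, and this is a bijection from $\mathcal{D}(\lambda/\ast)$ onto the set of such pairs with $|H'|=\mathrm{art}(D)$. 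Encoding $(H',\lambda)$ by its array of integers of type Hh and invoking Proposition~\ref{prop:Hh}, the image is exactly $\mathcal{M}_{\mathrm{Hh}}(\lambda)$ and the weight $|H'|$ becomes the entry-sum $|M|$. Hence
\[
\sum_{D\in\mathcal{D}(\lambda/\ast)}q^{\mathrm{art}(D)}=\sum_{M\in\mathcal{M}_{\mathrm{Hh}}(\lambda)}q^{|M|}.
\]

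Next I rewrite the right-hand side. With $\lambda_{0}$ the all-$U$ word, the capacities of $A(\lambda/\lambda_{0})$ are precisely the numbers $n(u)$ sitting in the bottom rows of the array (as recorded in Section~\ref{sec:Hh}), so specifying a labelling $\nu$ of LS type of $A(\lambda/\lambda_{0})$ is the same as specifying an array of integers of LS type, and the collection of such arrays is exactly $\mathcal{M}_{\mathrm{LS}}(\lambda)$, the subset of $\mathcal{M}(\lambda)$ cut out by (LS1) and (LS2). Since $\sigma(\nu)=|M|$ for the corresponding array, we obtain $P_{\lambda,\lambda_{0}}=\sum_{M\in\mathcal{M}_{\mathrm{LS}}(\lambda)}q^{|M|}$.

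It therefore suffices to produce a weight-preserving bijection $\mathcal{M}_{\mathrm{Hh}}(\lambda)\to\mathcal{M}_{\mathrm{LS}}(\lambda)$. I decompose $\mathcal{M}_{\mathrm{Hh}}(\lambda)=\bigl(\mathcal{M}_{\mathrm{Hh}}\cap\mathcal{M}_{\mathrm{LS}}\bigr)\sqcup\bigl(\mathcal{M}_{\mathrm{Hh}}\cap(\mathcal{M}\setminus\mathcal{M}_{\mathrm{LS}})\bigr)$ and similarly $\mathcal{M}_{\mathrm{LS}}(\lambda)=\bigl(\mathcal{M}_{\mathrm{LS}}\cap\mathcal{M}_{\mathrm{Hh}}\bigr)\sqcup\bigl(\mathcal{M}_{\mathrm{LS}}\cap(\mathcal{M}\setminus\mathcal{M}_{\mathrm{Hh}})\bigr)$. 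On the common part $\mathcal{M}_{\mathrm{Hh}}\cap\mathcal{M}_{\mathrm{LS}}$ I use the identity map, which trivially preserves $|M|$. On the complementary part I use $\omega^{r_{0}}$: by Theorem~\ref{thrm:omega-sigma} this is a bijection from $\mathcal{M}_{\mathrm{Hh}}\cap(\mathcal{M}\setminus\mathcal{M}_{\mathrm{LS}})$ onto $\mathcal{M}_{\mathrm{LS}}\cap(\mathcal{M}\setminus\mathcal{M}_{\mathrm{Hh}})$ with inverse $\sigma^{r_{0}}$, and by Lemma~\ref{lemma:degomega} it preserves the entry-sum. Gluing the two pieces yields a weight-preserving bijection $\mathcal{M}_{\mathrm{Hh}}(\lambda)\to\mathcal{M}_{\mathrm{LS}}(\lambda)$, so comparing coefficients of $q^{k}$ in the two generating functions above gives the desired equality \eqref{art1}.

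The main obstacle is not in this final assembly, which is purely formal, but in the two inputs it relies on. One is Proposition~\ref{prop:Hh}, i.e. that the geometric cover-inclusivity constraint on a Dyck tiling (the existence of the box $b'$ southwest of the leftmost box of each tile) is captured precisely by the combinatorial conditions (Hh1)--(Hh3) on the array. The other is Theorem~\ref{thrm:omega-sigma}, whose proof requires the delicate column bookkeeping of Lemmas~\ref{lemma:omega1}--\ref{lemma:omega3}: correctly locating the active discrepancy column via $\mathrm{Dis}^{\mathrm{LS}}$ and $\mathrm{Dis}^{\mathrm{Hh}}$, checking that $\omega$ genuinely moves an Hh-array out of $\mathcal{M}_{\mathrm{Hh}}$ at each step while staying in $\mathcal{M}(\lambda)$, that iteration terminates at a well-defined smallest $r_{0}$, and that $\sigma$ undoes $\omega$ one step at a time. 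Granting these earlier results, the only point still needing care is the matching of statistics, namely that $\mathrm{art}(D)$ on the tiling side and $\sigma(\nu)$ on the labelling side both equal $|M|$, which is immediate from the constructions in Section~\ref{sec:Hh}.
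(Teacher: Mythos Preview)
Your proof is correct and follows essentially the same approach as the paper: convert the left-hand side via the type~II Hermite history and Proposition~\ref{prop:Hh} into $\sum_{M\in\mathcal{M}_{\mathrm{Hh}}(\lambda)}q^{|M|}$, identify the right-hand side by definition with $\sum_{M\in\mathcal{M}_{\mathrm{LS}}(\lambda)}q^{|M|}$, and then use the weight-preserving bijection $\omega^{r_{0}}$ from Theorem~\ref{thrm:omega-sigma} together with Lemma~\ref{lemma:degomega} on the complementary pieces of the two decompositions. Your exposition is slightly more explicit than the paper's in spelling out the identity map on $\mathcal{M}_{\mathrm{Hh}}\cap\mathcal{M}_{\mathrm{LS}}$ and in flagging that the real work lies in the cited inputs, but the logical structure is the same.
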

\begin{proof}
From the bijection between a Dyck tiling and $(H',\lambda)$, 
the left hand side of Eqn.(\ref{art1}) is equal to 
\begin{eqnarray*}
\sum_{H'}q^{|H'|},
\end{eqnarray*}
where sum runs over all possible $H'$ of the pair $(H',\lambda)$.
From Proposition \ref{prop:Hh}, the sum is expressed as 
\begin{eqnarray*}
\sum_{M\in\mathcal{M}_{\mathrm{Hh}}(\lambda)}q^{|M|}. 
\end{eqnarray*}
The set $\mathcal{M}_{\mathrm{Hh}}(\lambda)$ is equal to 
$(\mathcal{M}_{\mathrm{LS}}(\lambda)\cap\mathcal{M}_{\mathrm{Hh}}(\lambda))\cup
(\mathcal{M}_{\mathrm{Hh}}(\lambda)\cap(\mathcal{M}(\lambda)
\setminus\mathcal{M}_{\mathrm{LS}}(\lambda)))$.
We apply Theorem \ref{thrm:omega-sigma} and Lemma \ref{lemma:degomega}
to $\mathcal{M}_{\mathrm{Hh}}(\lambda)\cap(\mathcal{M}(\lambda)
\setminus\mathcal{M}_{\mathrm{LS}}(\lambda))$. 
Then the sum is equivalent to 
\begin{eqnarray*}
\sum_{M\in\mathcal{M}_{\mathrm{LS}}(\lambda)}q^{|M|},
\end{eqnarray*}
which implies the sum is equal to the Kazhdan--Lusztig polynomial
$P_{\lambda,\lambda_{0}}$ from Theorem \ref{thrm:KL}.
\end{proof}

Let $A(\lambda/\lambda_0)$ be a plane tree associated to a path $\lambda$.
We depict the generating function $P_{\lambda,\lambda_{0}}$ 
as this tree $A(\lambda/\lambda_0)$ with the capacities.
Recall that the tree satisfies the conditions (LS1) and (LS2).
\begin{lemma}
\label{lemma:tritree}
A trivalent tree satisfies the relation:
\begin{eqnarray}
\label{tri-tree}
\tikzpic{-0.5}{
\draw (0,0)--(0,0.5) (0,1)--(0,2)
      (-0.1,0.4)--(0.1,0.4)(-0.1,1.6)--(0.1,1.6)(-0.1,1.2)--(0.1,1.2)
;
\draw[dotted] (0,0.5)--(0,1.5);
\draw[decoration={brace,mirror,raise=5pt},decorate]
  (0,0) -- node[right=6pt] {$M-m_{1}-n_{1}$} (0,2);
\draw[decoration={brace,mirror,raise=5pt},decorate]
  (0,0) -- node[anchor=south east]{$n_{1}$} (-1.2,-1.2);

\draw(0,0)node[anchor=north]{A}--(-0.6,-0.6)
(-0.8,-0.8)--(-1.2,-1.2)node[circle,inner sep=1pt,draw,anchor=north east]{$c_1$}
(-0.5,-0.3)--(-0.3,-0.5)(-0.95,-0.79)--(-0.79,-0.95);
\draw[dotted](-0.6,-0.6)--(-0.8,-0.8);

\draw(0,0)--(0.5,-0.5)
(0.8,-0.8)--(1.2,-1.2)node[circle,inner sep=1pt,draw,anchor=north west]{$c_2$}
(0.3,-0.5)--(0.5,-0.3)(0.7+0.05,-0.9-0.05)--(0.9+0.05,-0.7-0.05);
\draw[dotted](0.5,-0.5)--(1.2,-1.2);
\draw[decoration={brace,mirror,raise=5pt},decorate]
  (1.2,-1.2) -- node[right=6pt] {$m_{1}$} (0,0);
}
=\genfrac{[}{]}{0pt}{}{n_1+m_1}{n_1}\cdot
\tikzpic{-0.5}{
\draw(0,0)--(0,-0.5)(0,-1.5)--(0,-2)node[circle,inner sep=1pt,draw,anchor=north]{$c_1$}
(-0.1,-0.4)--(0.1,-0.4)(-0.1,-1.6)--(0.1,-1.6);
\draw[dotted](0,-0.5)--(0,-2);
\draw[decoration={brace,mirror,raise=5pt},decorate]
  (0,-2) -- node[right=6pt] {$M$} (0,0);
}
\end{eqnarray}
where $c_2:=c_1+n_1$.
\end{lemma}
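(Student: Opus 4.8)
The plan is to prove the trivalent tree identity (\ref{tri-tree}) by unpacking both sides directly in terms of labellings of LS type and then recognizing the resulting sum over the branching vertex as a $q$-binomial coefficient. Recall that on the left the tree $A(\lambda/\lambda_0)$ has a vertex $A$ above which sits a linear chain of $M-m_1-n_1$ edges carrying a single label (since integers are non-increasing from leaves to root along a chain of univalent-internal edges, a chain contributes only one independent parameter), and below $A$ it splits into two chains: the left one of length $n_1$ ending at a leaf with capacity $c_1$, and the right one of length $m_1$ ending at a leaf with capacity $c_2=c_1+n_1$. On the right, we have a single linear chain of $M$ edges ending at a leaf with capacity $c_1$.

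First I would fix the label $k$ on the edge immediately below $A$ (equivalently, the common value forced along the top chain of length $M-m_1-n_1$ is some value $\ge k$, but only its relation to $k$ matters since those edges factor out identically on both sides as a sum $\sum q^{(M-m_1-n_1)\cdot(\text{choices above }k)}$ — in fact the top chain contributes the same factor to both sides once the value at $A$ is pinned, so it suffices to compare the generating functions of the two subtrees hanging below $A$ with root-label forced to take any fixed value, then sum). So the core reduces to: show that the generating function for labellings of the ``two-chains-below-$A$'' configuration, with the edge-values on both chains bounded above by their respective leaf capacities and non-increasing toward $A$, equals $\genfrac{[}{]}{0pt}{}{n_1+m_1}{n_1}$ times the generating function for a single chain of the combined length ending at capacity $c_1$. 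Concretely, a labelling of the left chain is a weakly decreasing sequence $c_1 \ge a_1 \ge \cdots \ge a_{n_1} = (\text{value at }A)$ wait — I must be careful about orientation; integers are non-increasing from leaves to root, so along each chain from its leaf to $A$ the values weakly decrease, and the value at $A$ is the minimum, shared. Setting the shared value at $A$ to $j$, the left chain contributes a weakly decreasing sequence from something $\le c_1$ down to $j$, i.e. a partition in a box, and similarly the right chain contributes a weakly decreasing sequence from something $\le c_2 = c_1+n_1$ down to $j$.

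The key computational step is the $q$-Vandermonde / Durfee-type identity: I would shift the right chain by noting $c_2 = c_1 + n_1$ and reparametrize the two weakly-decreasing sequences as a single weakly-decreasing sequence by interleaving, which is exactly the combinatorial content of the $q$-binomial recursion. More precisely, counting pairs (a weakly decreasing length-$n_1$ word with entries in $[j, c_1]$, a weakly decreasing length-$m_1$ word with entries in $[j, c_1+n_1]$), weighted by $q$ to the total, and summing over the merge, one obtains $\genfrac{[}{]}{0pt}{}{n_1+m_1}{n_1}$ times the count of a single weakly decreasing length-$n_1+m_1$ word with entries in $[j,c_1]$; this is the standard lattice-path proof that $\genfrac{[}{]}{0pt}{}{n_1+m_1}{n_1}$ enumerates the interleavings (shuffles) with the appropriate area statistic. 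Summing this identity over all $j$ (weighted by the top chain's contribution, which is identical on both sides given $j$) yields (\ref{tri-tree}).

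The main obstacle, and the step I would be most careful about, is the bookkeeping of the capacity constraint (LS2) and the shift $c_2 = c_1 + n_1$: one must verify that the offset of exactly $n_1$ between the two leaf capacities is precisely what makes the merge of the two decreasing words correspond bijectively (with matching $q$-weight) to a shuffle counted by $\genfrac{[}{]}{0pt}{}{n_1+m_1}{n_1}$ together with a single decreasing word bounded by $c_1$. This is where the geometry of $A(\lambda)$ — why the right branch's leaf capacity exceeds the left's by the left branch length — enters, and I would state it as a short lemma about capacities: along the skew shape, adding the $n_1$ chords corresponding to the left subtree increases the capacity of the relevant leaf on the right by exactly $n_1$. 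Everything else — the factoring-out of the common top chain, the fact that chains contribute single parameters, and the final summation over $j$ — is routine once this offset identity is in place.
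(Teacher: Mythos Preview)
Your overall strategy is correct and matches the paper's: condition on the value $i$ carried by the edge immediately \emph{above} the branch point $A$, factor the contribution of the two descending chains using the identity
\[
\genfrac{[}{]}{0pt}{}{n_1+c_1-i}{n_1}\genfrac{[}{]}{0pt}{}{n_1+m_1+c_1-i}{m_1}
=\genfrac{[}{]}{0pt}{}{n_1+m_1}{n_1}\genfrac{[}{]}{0pt}{}{n_1+m_1+c_1-i}{n_1+m_1},
\]
and then sum over $i$. Two points deserve correction or clarification.

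First, your opening remark that the top chain ``carries a single label'' because values are non-increasing is wrong: a chain of $k$ edges carries a full weakly decreasing $k$-tuple, not one value. You later treat the lower chains correctly as weakly decreasing words, so this seems to be a slip, but it matters: the top-chain contribution for fixed bottom value $i$ is $q^{i}\genfrac{[}{]}{0pt}{}{M-m_1-n_1+i-1}{i}$, not $q^{i}$. Your argument still works because this factor is indeed the same on both sides once you split the right-hand single chain of length $M$ at depth $M-m_1-n_1$ and condition on that edge being $i$; just state this splitting explicitly.

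Second, your ``shuffle/interleaving'' justification of the displayed binomial identity is too vague to stand as a proof. The identity is immediate once you observe that both sides equal the $q$-multinomial $\dfrac{[n_1+m_1+c_1-i]!}{[n_1]!\,[m_1]!\,[c_1-i]!}$; this is exactly the paper's ``straightforward calculation''. A genuine bijective merge of a length-$n_1$ word bounded by $c_1$ with a length-$m_1$ word bounded by $c_1+n_1$ into a shuffle times a length-$(n_1+m_1)$ word bounded by $c_1$ is possible but not the one-liner you suggest, and the offset $c_2=c_1+n_1$ is essential for it. Either write that bijection down carefully or just invoke the multinomial form.

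Where your route is slightly cleaner than the paper's: after factoring out $\genfrac{[}{]}{0pt}{}{n_1+m_1}{n_1}$, the paper is left with the sum $\sum_{i=0}^{c_1} q^{i(x+1)}\genfrac{[}{]}{0pt}{}{M-x+i-1}{i}\genfrac{[}{]}{0pt}{}{x+c_1-i}{x}$ (with $x=n_1+m_1$) and verifies it equals $\genfrac{[}{]}{0pt}{}{M+c_1}{c_1}$ via a recurrence on the $q$-binomials. Your observation that this sum is nothing but the decomposition of the single length-$M$ chain by the value on its $(M-x)$-th edge makes that recurrence unnecessary.
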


\begin{proof}
Suppose that a tree $T$ consisting of $N$ edges has a single capacity $c$.
The generating function corresponding to the tree $T$ is $\genfrac{[}{]}{0pt}{}{N+c}{c}$.
Similarly, if the bottom edge of $T$ is equal to $c$, the generating function 
is 
\begin{eqnarray}
\label{singletree}
\genfrac{[}{]}{0pt}{}{N+c}{c}-\genfrac{[}{]}{0pt}{}{N+c-1}{c-1}
=q^{c}\genfrac{[}{]}{0pt}{}{N+c-1}{c}.
\end{eqnarray}

Let $A$ be the ramification point of the trivalent tree in the left hand 
side of Eqn.(\ref{tri-tree}). 
We denote by $f'(M,c_1,m_1,n_1)$ the left hand side of Eqn.(\ref{tri-tree}).
We sum over all possible labellings of the trivalent tree where the edge
just above the point $A$ has the integer $i$.
From Eqn.(\ref{singletree}), we have 
\begin{eqnarray*}
f'(M,c_1,m_1,n_1)
=\sum_{i=0}^{c_1}q^{i(m_1+n_1+1)}\genfrac{[}{]}{0pt}{}{M-m_1-n_1+i-1}{i}
\genfrac{[}{]}{0pt}{}{n_1+c_1-i}{n_1}\genfrac{[}{]}{0pt}{}{n_1+m_1+c_1-i}{m_1}.
\end{eqnarray*}
From a straight forward calculation, we have 
\begin{eqnarray}
\label{tri-tree-binomial}
\genfrac{[}{]}{0pt}{}{n_1+c_1-i}{n_1}\genfrac{[}{]}{0pt}{}{n_1+m_1+c_1-i}{m_1}
=\genfrac{[}{]}{0pt}{}{n_1+m_1+c_1-i}{n_1+m_1}\genfrac{[}{]}{0pt}{}{n_1+m_1}{n_1}.
\end{eqnarray}
Substituting Eqn.(\ref{tri-tree-binomial}) into $f'(M,c_1,m_1,n_1)$, 
Eqn.(\ref{tri-tree}) is equivalent to the following identity:
\begin{eqnarray}
\label{tri-tree-binomial2}
\sum_{i=0}^{c_1}q^{i(x+1)}\genfrac{[}{]}{0pt}{}{M-x+i-1}{i}
\genfrac{[}{]}{0pt}{}{M+x-i}{x}=\genfrac{[}{]}{0pt}{}{M+c_1}{c_1}, 
\end{eqnarray}
where $x=n_1+m_1$.
Let $f(M,c_1)$ be the left hand side of Eqn.(\ref{tri-tree-binomial2}). 
By using $\genfrac{[}{]}{0pt}{}{n+1}{k}=q^{k}\genfrac{[}{]}{0pt}{}{n}{k}
+\genfrac{[}{]}{0pt}{}{n}{k-1}$, 
we have the recurrence relation:
\begin{eqnarray*}
f(M,c_1)&=&\sum_{i=0}^{c_1}q^{i(x+1)}\left\{
q^{M-x}\genfrac{[}{]}{0pt}{}{M+i-x-1}{M-x}+\genfrac{[}{]}{0pt}{}{M+i-x-1}{M-x-1}
\right\}\genfrac{[}{]}{0pt}{}{c_1+x-i}{x} \\
&=&q^{M+1}f(M+1,c_1-1)+f(M,c_1),
\end{eqnarray*}
with $f(0,0)=1$. 
This recurrence relation is equivalent to the one for $q$-binomial coefficient,
which implies that Eqn.(\ref{tri-tree}) holds true.
\end{proof}

Let $\lambda$ (resp. $\lambda'$) be a Dyck path of length $2n$ (resp. $2(n-1)$).
\begin{lemma}
\label{lemma:treedel}
Suppose a Dyck path $\lambda$ is written as $\lambda=U\lambda'D$. 
Then, we have 
\begin{eqnarray*}
P_{\lambda,\lambda_{0}}=P_{\lambda',\lambda_{0}}
\end{eqnarray*}
\end{lemma}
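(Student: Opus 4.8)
The plan is to exhibit a bijection between the Lascoux--Sch\"utzenberger (LS) type labellings of the tree $A(\lambda/\lambda_{0})$ and those of $A(\lambda'/\lambda_{0})$ that preserves the statistic $\sigma$. Since $P_{\lambda,\lambda_{0}}=\sum_{\nu}q^{\sigma(\nu)}$ is by definition the sum over all such labellings, and likewise for $\lambda'$, producing such a bijection proves the identity at once. Throughout, $\lambda_{0}$ is the all-$U$ word of the appropriate length ($2n$ for $\lambda$, $2(n-1)$ for $\lambda'$).

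First I would unwind the tree recursion. Because $\lambda=U\lambda'D$ with $\lambda'\in\mathcal{Z}$, rule ($\Diamond$4) says that $A(\lambda)$ is $A(\lambda')$ with one new edge $e_{0}$ adjoined below everything, so $e_{0}$ is the unique root edge and its children are the roots of the forest $A(\lambda')$. I must then check that the leaves and the capacities they carry are unchanged. For the leaves: the consecutive substrings $UD$ of $U\lambda'D$ are exactly those of $\lambda'$, since prepending $U$ and appending $D$ to the nonempty Dyck word $\lambda'$ produces the seams $UU$ and $DD$ and creates no new occurrence of $UD$; these substrings are precisely the leaves of each tree. For the capacities: feeding the all-$U$ reference word into $\mathrm{cap}(UD)=||\lambda'_{0}v||_{U}-||\lambda'U||_{U}$ shows that the capacity of an occurrence of $UD$ preceded by a word $\alpha$ is just $||\alpha||_{D}$, and replacing $\alpha$ by $U\alpha$ — the prefix of that same $UD$ once it sits inside $\lambda$ — does not change $||\alpha||_{D}$. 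The degenerate case $\lambda'=\emptyset$, that is $\lambda=UD$, I would dispatch by hand: both generating functions equal $1$.

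The key point is that in every LS labelling $\nu$ of $A(\lambda/\lambda_{0})$ the label $\nu(e_{0})$ on the new root edge is forced to be $0$. Indeed $\lambda$ begins with a block of $U$'s followed by a $D$, so its first occurrence of $UD$ is preceded by $U$'s only; by the capacity computation above the corresponding leaf $\ell$ has capacity $0$, so (LS2) forces $\nu(\ell)=0$, and (LS1) — labels weakly decrease along the path from any leaf down to the root — forces $\nu(e_{0})\le\nu(\ell)=0$. With this in hand the bijection is immediate: send $\nu$ to its restriction to the edges of $A(\lambda')$ (delete $e_{0}$ together with its label $0$), and in the other direction adjoin $e_{0}$ with label $0$. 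I would then verify that (LS1) and (LS2) survive in both directions — the leaves and capacities are untouched, every parent/child pair of $A(\lambda')$ is inherited from $A(\lambda)$, and the only new constraint in the reverse direction, namely $\nu(e_{0})=0\le\nu(\text{child})$, holds because labels are non-negative — and that $\sigma$ is preserved since we only insert or delete a zero. Summing $q^{\sigma}$ over the two matched families gives $P_{\lambda,\lambda_{0}}=P_{\lambda',\lambda_{0}}$.

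The one place needing genuine care is the capacity bookkeeping of the second paragraph: one must be confident that the leaf/capacity data of $A(\lambda/\lambda_{0})$ restricts correctly to $A(\lambda'/\lambda_{0})$ even though the two reference words have different lengths, and that some leaf really does have capacity $0$ so that $\nu(e_{0})$ is pinned. Using the hypothesis that $\lambda_{0}$ is the all-$U$ word to collapse $\mathrm{cap}$ into a count of $D$'s in a prefix makes both facts transparent. (Alternatively one could derive the lemma from Theorems~\ref{thrm:DyckKL} and~\ref{thrm-A1}: peeling off the outer chord of $\lambda=U\lambda'D$ removes one chord, of length $n$, and leaves all other chord lengths intact, so $[n]!/\prod_{c}[l(c)]$ simply loses the factor $[n]$; but the labelling argument is the one that fits the present development.)
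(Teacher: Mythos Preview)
Your proof is correct and follows essentially the same approach as the paper: identify the unique root edge of $A(\lambda)$ coming from rule~($\Diamond$4), observe that the leftmost leaf has capacity $0$ so that (LS1)--(LS2) force the root-edge label to be $0$, and conclude by restricting/extending labellings. You supply more detail on the capacity bookkeeping than the paper does, but the argument is the same.
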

\begin{proof}
The tree $A(\lambda/\lambda_{0})$ has a unique edge $e$ connecting to the root. 
Since the leftmost capacity of $A(\lambda/\lambda_0)$ is zero, the integer on the 
edge $e$ has to be zero.
The tree obtained from $A(\lambda/\lambda_0)$ by deleting the edge $e$ is nothing 
but the tree $A(\lambda'/\lambda_0)$.
The sum of labellings in $A(\lambda/\lambda_0)$ is equal to $A(\lambda'/\lambda_0)$.
Thus, we have $P_{\lambda,\lambda_{0}}=P_{\lambda',\lambda_{0}}$.
\end{proof}

Given two Dyck paths $\lambda_1$ 
of length $2n_1$ and $\lambda_2$ of length $2m_1$,
we denote by $\lambda:=\lambda_1\circ\lambda_2$ the 
concatenation of two Dyck paths.  
\begin{lemma}
\label{lemma:treefac}
Suppose $\lambda=\lambda_1\circ\lambda_2$.
Then, we have 
\begin{eqnarray*}
P_{\lambda,\lambda_{0}}=\genfrac{[}{]}{0pt}{}{n_1+m_1}{n_1}
P_{\lambda_1,\lambda_0}P_{\lambda_{2},\lambda_{0}}
\end{eqnarray*}
\end{lemma}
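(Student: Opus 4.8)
The plan is to make the tree $A(\lambda/\lambda_{0})$ explicit, reduce the assertion to a statement about a single tree whose capacities are uniformly shifted, and prove that statement by collapsing the tree using the identity behind Lemma \ref{lemma:tritree}. Since $\lambda=\lambda_{1}\circ\lambda_{2}$ with $\lambda_{1},\lambda_{2}$ Dyck words, rule ($\Diamond$3) says $A(\lambda)$ is obtained by gluing $A(\lambda_{1})$ and $A(\lambda_{2})$ at a common root, sharing no edge. For the all-$U$ word $\lambda_{0}$ the capacity of a leaf of $A(\lambda/\lambda_{0})$ is the number of down steps to its left; hence each leaf coming from $\lambda_{1}$ keeps the capacity it has in $A(\lambda_{1}/\lambda_{0})$, while each leaf coming from $\lambda_{2}$ has its capacity raised by $n_{1}$, the number of down steps of $\lambda_{1}$. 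As no root-to-leaf path of $A(\lambda/\lambda_{0})$ crosses between the two pieces, (LS1) and (LS2) decouple, so an LS labelling of $A(\lambda/\lambda_{0})$ is exactly an independent pair of LS labellings of the two pieces and $\sigma$ is additive; writing $Q_{\mu}(k)$ for the generating function of LS labellings of $A(\mu/\lambda_{0})$ with every capacity raised by $k$, this gives
\[
P_{\lambda,\lambda_{0}}=P_{\lambda_{1},\lambda_{0}}\cdot Q_{\lambda_{2}}(n_{1}).
\]
It therefore suffices to show $Q_{\mu}(k)=\genfrac{[}{]}{0pt}{}{m+k}{k}\,P_{\mu,\lambda_{0}}$ for every Dyck word $\mu$ with $m$ up steps and every $k\ge 0$, and then apply it with $\mu=\lambda_{2}$ ($m=m_{1}$), $k=n_{1}$.

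The key tool is a local reading of the computation (\ref{tri-tree-binomial}) behind Lemma \ref{lemma:tritree}. Suppose a vertex $v$ of a tree has, among its descending branches, two chains: a left one $C_{1}$ with $a$ edges and leaf capacity $c$, and a right one $C_{2}$ with $b$ edges and leaf capacity $c+a$. Then replacing $C_{1}\cup C_{2}$ by a single chain with $a+b$ edges and leaf capacity $c$ multiplies the generating function of the whole tree by $\genfrac{[}{]}{0pt}{}{a+b}{a}$. Indeed, after conditioning on the value $w$ of the edge entering $v$ (or with $w=0$ if $v$ is the root), the gadget at $v$ contributes $\prod_{i}\mathrm{GF}(C_{i},\ \mathrm{top}\ge w)$, and the subproduct $\mathrm{GF}(C_{1},\mathrm{top}\ge w)\,\mathrm{GF}(C_{2},\mathrm{top}\ge w)$ turns, by (\ref{tri-tree-binomial}), into $\genfrac{[}{]}{0pt}{}{a+b}{a}$ times $\mathrm{GF}$ of the merged chain with $\mathrm{top}\ge w$, all other branches and everything above $v$ cancelling on both sides. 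Note $\genfrac{[}{]}{0pt}{}{a+b}{a}$ is independent of $c$; also, appending an edge to the top of a chain lengthens it by one without changing its leaf capacity.

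Using this I flatten $A(\mu/\lambda_{0})$ to a single chain by working bottom-up: at a branching vertex $v$ with no branching vertex strictly below it, the descending branches are chains $C_{1},\dots,C_{s}$ whose leaf capacities are $d,\ d+|C_{1}|,\ d+|C_{1}|+|C_{2}|,\dots$ for some $d$, because passing from the leftmost leaf of the $i$th branch to that of the $(i{+}1)$st adds exactly the down steps of the $i$th irreducible factor, which number $|C_{i}|$; so the hypothesis of the merging step holds, and merging $C_{1},\dots,C_{s}$ left to right (the hypothesis being restored after each merge) turns the fan at $v$ into one chain; then absorb the edge above $v$ into its top and continue. The result is a single chain with $m$ edges and leaf capacity equal to the capacity of the leftmost leaf of $A(\mu/\lambda_{0})$, namely $0$. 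Since the sequence of moves and the accumulated product $\prod\genfrac{[}{]}{0pt}{}{a+b}{a}$ of scalars depend only on the shape of $A(\mu)$, comparing generating functions yields $P_{\mu,\lambda_{0}}=\big(\prod\genfrac{[}{]}{0pt}{}{a+b}{a}\big)\genfrac{[}{]}{0pt}{}{m}{0}=\prod\genfrac{[}{]}{0pt}{}{a+b}{a}$; running the identical procedure on the capacity-shifted tree $A(\mu/\lambda_{0})^{+k}$ produces the same scalars but a final chain of $m$ edges with leaf capacity $k$, whence $Q_{\mu}(k)=\big(\prod\genfrac{[}{]}{0pt}{}{a+b}{a}\big)\genfrac{[}{]}{0pt}{}{m+k}{k}=P_{\mu,\lambda_{0}}\genfrac{[}{]}{0pt}{}{m+k}{k}$. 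Combined with the reduction, $P_{\lambda,\lambda_{0}}=P_{\lambda_{1},\lambda_{0}}P_{\lambda_{2},\lambda_{0}}\genfrac{[}{]}{0pt}{}{n_{1}+m_{1}}{n_{1}}$.

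The part demanding the most care is the localization of Lemma \ref{lemma:tritree} in the merging step — that the identity, proved in the excerpt for a bare trivalent tree, survives inside an arbitrary ambient tree — together with the verification in the flattening step that the ``left capacity $+$ left length $=$ right capacity'' hypothesis genuinely holds at every merge and is preserved as merges accumulate. The remaining ingredients (the decoupling, the value of the leftmost capacity, the $q$-binomial arithmetic) are routine.
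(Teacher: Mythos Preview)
Your proof is correct and follows essentially the same approach as the paper: both reduce to showing that the generating function of the capacity-shifted tree satisfies $Q_{\mu}(k)=\genfrac{[}{]}{0pt}{}{m+k}{k}P_{\mu,\lambda_{0}}$ (the paper writes this as $P^{c_{1}}_{\lambda,\lambda_{0}}=\genfrac{[}{]}{0pt}{}{n+c_{1}}{c_{1}}P_{\lambda,\lambda_{0}}$), and both obtain this by repeatedly applying Lemma~\ref{lemma:tritree} to collapse the tree. Your write-up is more explicit than the paper's terse ``by successive use of Lemma~\ref{lemma:tritree}'', spelling out the bottom-up flattening and checking that the capacity hypothesis $c_{2}=c_{1}+n_{1}$ is preserved through the merges, but the underlying argument is the same.
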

\begin{proof}

For a Dyck path $\lambda$ of length $2n$, we denote by $A(\lambda/\lambda_0;c_1)$ 
the tree $A(\lambda/\lambda_0)$ with capacities shifted by $c_1$. 
We denote by $P_{\lambda,\lambda_0}^{c_1}$ the generating function
corresponding to $A(\lambda/\lambda_0;c_1)$.
By successive use of Lemma \ref{lemma:tritree}, we have 
\begin{eqnarray*}
P^{c_1}_{\lambda,\lambda_0}=\genfrac{[}{]}{0pt}{}{n+c_1}{c_1}P_{\lambda,\lambda_0}.
\end{eqnarray*}

Let $\lambda_1$ (resp. $\lambda_2$) be a Dyck path of length $2n_1$ (resp. $2m_1$).
We consider the case where $\lambda=\lambda_{1}\circ\lambda_{2}$.
The tree $A(\lambda/\lambda_{0})$ is obtained by connecting two trees 
$A(\lambda_1/\lambda_0)$ and $A(\lambda_2/\lambda_{0};n_1)$ at the root. 
Then, we have
\begin{eqnarray*}
P_{\lambda,\lambda_0}&=&P_{\lambda_{1},\lambda_{0}}P^{n_1}_{\lambda_2,\lambda_0} \\
&=&\genfrac{[}{]}{0pt}{}{n_1+m_1}{n_1}P_{\lambda_{1},\lambda_{0}}P_{\lambda_2,\lambda_0},
\end{eqnarray*}
which implies Lemma holds true.
\end{proof}

\begin{remark}
If $\lambda=\lambda_1\circ\lambda_{2}\circ\lambda_{3}$, $\lambda$ can be 
regarded as $\lambda=(\lambda_1\circ\lambda_2)\circ\lambda_3$ or 
$\lambda=\lambda_1\circ(\lambda_2\circ\lambda_3)$, which 
leads to a different expression in Lemma \ref{lemma:treefac}.
\end{remark}

Given a tree $A(\lambda)$, we enumerate the ramification points
of the tree $A(\lambda)$ by $1,\ldots, R(\lambda)$ where $R(\lambda)$ is 
the number of ramification points. 
Here, a ramification point means that an edge $e$ has two or more edges connecting 
below $e$. 
Fix the $i$-th ramification point $A$. 
At a point $A$, there exists a unique edge $e$ above $A$ and there are $k\ge2$ 
edges $e_1,\ldots, e_{k}$ below $A$.
Let $M_{i}$ be the number of edges of a partial tree connected the point $A$.
Let $N_{i,j}$, $1\le j\le k$, be the number of a partial tree connected to 
the point $A$ such that $e_j$ is the unique edge connected to the root.
Thus, we have $M_i=\sum_{1\le j\le k}N_{i,j}$.
%%%%%
\begin{theorem}
\label{thrm:Dyck-fac}
Let $\lambda$ be a Dyck path and the integers $n_i,m_i$, $1\le i\le R(\lambda)$, 
defined as above.
We have 
\begin{eqnarray}
\label{Dyck-fac}
\sum_{D\in\mathcal{D}(\lambda/\ast)}q^{\mathrm{art}(D)}
=\prod_{i=0}^{R(\lambda)}\genfrac{[}{]}{0pt}{}{M_{i}}{N_{i,1},N_{i,2},\ldots,N_{i,k}}.
\end{eqnarray}
\end{theorem}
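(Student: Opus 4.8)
The plan is to combine Theorem~\ref{thrm:DyckKL}, which identifies the left-hand side with the Kazhdan--Lusztig polynomial $P_{\lambda,\lambda_0}$ (equivalently, the generating function of labellings of LS type of the tree $A(\lambda/\lambda_0)$ where $\lambda_0$ consists of only $U$'s), with the structural Lemmas~\ref{lemma:treedel}, \ref{lemma:treefac} and~\ref{lemma:tritree} that describe how $P_{\lambda,\lambda_0}$ decomposes along the tree. Since $\lambda_0 = U\cdots U$, every capacity of $A(\lambda/\lambda_0)$ equals $n(u)$, and in particular the leftmost capacity, coming from a word of the form $U\lambda' D$, is zero; this is exactly the hypothesis under which Lemma~\ref{lemma:treedel} lets us peel off a top edge without changing the generating function. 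So the first step is to reduce, by repeated application of Lemma~\ref{lemma:treedel}, to the situation where $\lambda$ is a concatenation of indecomposable pieces, and then to induct on the number of edges of $A(\lambda)$ (equivalently, on $n$).

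The induction would proceed by a case analysis on the structure of the tree $A(\lambda)$ near the root, mirroring the defining clauses $(\Diamond 1)$--$(\Diamond 4)$. If $\lambda = U\lambda'D$ with $A(\lambda)$ having a single edge above everything, Lemma~\ref{lemma:treedel} removes that edge; this edge is not a ramification point, so the product on the right-hand side of~(\ref{Dyck-fac}) is unchanged (the factor it would contribute is $\genfrac{[}{]}{0pt}{}{M_i}{M_i}=1$), and we are done by the inductive hypothesis applied to $\lambda'$. If instead $\lambda=\lambda_1\circ\lambda_2$ factors nontrivially at the root, then the root itself is a ramification point $A$ with (after grouping) $M_A = n_1+m_1$ edges splitting into the $N_{A,j}$; Lemma~\ref{lemma:treefac} (or its multi-factor version, using the remark following it) gives $P_{\lambda,\lambda_0} = \genfrac{[}{]}{0pt}{}{M_A}{N_{A,1},\ldots,N_{A,k}} \prod_j P_{\lambda_j,\lambda_0}$, and the remaining ramification points of $A(\lambda)$ are precisely the disjoint unions of those of the $A(\lambda_j)$, with the same $M_i, N_{i,\bullet}$ values. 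Applying the inductive hypothesis to each $\lambda_j$ and multiplying yields exactly~(\ref{Dyck-fac}).

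The one subtlety that needs care is that Lemma~\ref{lemma:treefac} is stated for capacities coming from $\lambda_0$ directly, whereas when we recurse into $\lambda_2$ inside a larger tree the capacities get shifted; but this is precisely handled by the quantity $P_{\lambda,\lambda_0}^{c_1}$ and the identity $P^{c_1}_{\lambda,\lambda_0}=\genfrac{[}{]}{0pt}{}{n+c_1}{c_1}P_{\lambda,\lambda_0}$ established inside the proof of Lemma~\ref{lemma:treefac} via iterated Lemma~\ref{lemma:tritree}. So the genuinely new content of this theorem beyond the already-proven lemmas is mostly bookkeeping: one must check that every ramification point is accounted for exactly once as the tree is dismantled, and that a non-ramification top edge contributes the trivial multinomial factor $1$ (which is why the index $i$ runs from $0$ to $R(\lambda)$ — the $i=0$ term corresponds to a degenerate/empty ramification and equals $1$). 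I expect the main obstacle to be purely organizational: phrasing the induction so that the recursive decomposition of $\lambda$ into $U\lambda'D$ or $\lambda_1\circ\lambda_2$ aligns cleanly with the enumeration $1,\ldots,R(\lambda)$ of ramification points and with the definitions of $M_i, N_{i,j}$, and making sure the multi-factor concatenation (the Remark after Lemma~\ref{lemma:treefac}) is invoked correctly so that a $k$-fold branching contributes a single multinomial coefficient rather than an iterated product of binomials.
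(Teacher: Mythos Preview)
Your proposal is correct and follows essentially the same approach as the paper: use Theorem~\ref{thrm:DyckKL} to identify the left-hand side with $P_{\lambda,\lambda_0}$, then successively apply Lemma~\ref{lemma:treedel} and Lemma~\ref{lemma:treefac} to decompose along the tree. The paper's proof is a terse two-sentence statement of exactly this strategy; your write-up supplies the inductive bookkeeping (tracking ramification points, handling the $i=0$ degenerate factor, and the multi-factor concatenation) that the paper leaves implicit.
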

\begin{proof}
From Theorem \ref{thrm:DyckKL}, the left hand side of Eqn.(\ref{Dyck-fac}) 
is equal to the Kazhdan--Lusztig polynomial $P_{\lambda,\lambda_0}$. 
We successively apply Lemma \ref{lemma:treedel} and Lemma \ref{lemma:treefac} to 
$P_{\lambda,\lambda_0}$.   
\end{proof}

\section{Incidence matrix and ballot tilings}
\label{sec:im}
\subsection{\texorpdfstring{$q,t$-deformed incidence matrix}{q,t-deformed incidence matrix}}
\label{subsec:im}
Let $\lambda$ and $\mu$ be two Dyck paths and 
$\mathcal{D}(\lambda/\mu)$ be the set of cover-inclusive 
Dyck tilings as in Section \ref{sec:Dyck}.
In \cite{KW11}, Kenyon and Wilson showed that the inverse 
matrix $M^{-1}$ of the incidence matrix $M$ is expressed 
in terms of $|\mathcal{D}_{\lambda,\mu}|$. 
In this subsection, we consider the $q,t$-deformed incidence 
matrix $M$ whose inverse $M^{-1}$ is expressed in terms of 
ballot tilings.

Let $\mathcal{P}_{n}$ be the set of paths of length $n$ which 
consists of $U$ and $D$. The cardinality of $\mathcal{P}_{n}$ 
is $2^{n}$.
We consider operations on a path $\lambda\in\mathcal{P}_n$:
\begin{enumerate}[(A)]
\item We make a pair between adjacent $U$ and $D$ in this order. 
Then, connect the pair into a simple arc.
\item Repeat the procedure $(A)$ until all the $U$'s are to the left 
of all the $D$'s.
\end{enumerate}
Suppose that $U$ and $D$ are connected by a simple arc and 
the positions of $U$ and $D$ from left are $i$ and $j$ with ($i<j$).
The {\it size} of an arc is defined as $(j-i+1)/2$.  

We define three operations $UD$-flipping, $UU$-flipping and $U$-flipping 
on $\mathcal{P}_{n}$ as follows.
The $UD$-flipping is an operation to reverse a pair of $U$ and $D$ ($U$ and $D$ are
not necessarily adjacent) into $D$ and $U$. 
Similarly, $UU$-flipping (resp. $U$-flipping) is an operation to reverse 
a pair of two $U$'s (resp. a single $U$) into two $D$'s (resp. a single $D$).

We consider the following two cases: type BI and type BIII.
\begin{remark}
We call two cases type BI and BIII since an underlying diagram is 
equivalent to the Kazhdan--Lusztig basis of type BI and BIII studied 
in \cite{S14}.
\end{remark}

\paragraph{\bf Type BI}
In addition to the operations (A) and (B), 
we have two more operations on a path:
\begin{enumerate}
\item[(C)] Put a star $(\bigstar)$ on the rightmost $U$ if it exists.
\item[(D)] For remaining $U$'s, we make a pair of adjacent $U$'s from 
right to left. Then, we connect this pair into a simple dashed arc.
\end{enumerate}

We define a relation $\leftarrow_{I}$ on paths in $\mathcal{P}_{n}$. 
We say $\lambda_1\leftarrow_{I}\lambda_{2}$ if $\lambda_1$  can be 
obtained from $\lambda_2$ by $UD$-flippings of the paired $UD$ 
connected by a simple arc, a $U$-flipping of the $U$ with a star 
or $UU$-flippings of the $UU$ pairs connected by a simple dashed arc.

Suppose $\lambda_1$ is obtained from $\lambda_{2}$ by a $U$-flipping 
on the $U$ with a star and the flipped $U$ is at $(2r+1)$-th position 
from right. 
We define the weight $\mathrm{wt}(\lambda_1\leftarrow_{I}\lambda_2):=-q^{r+1}$.

Suppose that $\lambda_1$ is obtained from $\lambda_{2}$ by a $UD$-flipping 
on $U$ and $D$ connected by an arc of size $m$.
We define the weight 
$\mathrm{wt}(\lambda_1\leftarrow_{I}\lambda_2):=-q^{m}$

Suppose that $\lambda_1$ is obtained from $\lambda_{2}$ by a $UU$-flipping 
on two $U$'s connected by a dashed arc of size $m$ and the position of the 
right $U$ is $2r$ from right. 
We define the weight $\mathrm{wt}(\lambda_1\leftarrow_{I}\lambda_2):=-q^{2r+m}$.

Suppose $\lambda_1\leftarrow_{I}\lambda_2$. 
The weight $\mathrm{wt}(\lambda_1\leftarrow_{I}\lambda_2)$ is defined 
as the product of weights of flippings.

\begin{example}
Let $\lambda_1=DDUDU$ and $\lambda_2=UUDUU$. 
The path $\lambda_1$ can be obtained from $\lambda_2$ by a $UU$-flipping and 
a $UD$-flipping.
The weight is $\mathrm{wt}(\lambda_1\leftarrow_{I}\lambda_2)=q^{5}$.
\end{example}

\paragraph{\bf Type BIII}
We apply operations (A) and (B) to a path. We have unpaired $U$'s and 
unpaired $D$'s. 

We define a relation $\leftarrow_{III}$ on paths in $\mathcal{P}_{n}$. 
We say $\lambda_1\leftarrow_{III}\lambda_2$ if $\lambda_1$ can be obtained 
from $\lambda_2$ by $UD$-flippings on $U$ and $D$ connected by a simple arc,
and by $U$-flippings on unpaired $U$'s.

Suppose that $\lambda_1$ is obtained from $\lambda_2$ by a $U$-flipping 
on an unpaired $U$ and there are $s$ arcs right to the $U$.
We define $\mathrm{wt}(\lambda_1\leftarrow_{III}\lambda_2):=-q^{s}t$.

The weight for $UD$-flipping is the same as the one of type BIII. 
The weight $\mathrm{wt}(\lambda_1\leftarrow_{III}\lambda_2)$ is the product 
of the weights of flippings.

\paragraph{Definition of an incidence matrix $M$}
We denote $\leftarrow_{I}$ or $\leftarrow_{III}$ by $\leftarrow$.
\begin{defn}
The incidence matrix $M:=(M_{\lambda,\mu})_{\lambda,\mu\in\mathcal{P}_{n}}$ 
is defined as 
\begin{eqnarray*}
M_{\lambda,\mu}:=\mathrm{wt}(\lambda\leftarrow\mu)\cdot
\delta_{\{\lambda\leftarrow\mu\}}
\end{eqnarray*}
where $\delta_{S}=1$ in case of $S$ is true and $\delta_{S}=0$ otherwise.
\end{defn}
We order the rows and columns according the reversed lexicographic
order on paths, {\it i.e.}, we have $U<D$.
Then, the incidence matrix $M$ is lower triangular.

\begin{example}
\label{example-invM}
The incidence matrix $M_{I}, M_{III}$ on $\mathcal{P}_2$ and their
inverses are
\begin{eqnarray*}
&&M_{I}=
\left(
\begin{array}{cccc}
1 & 0 & 0 & 0 \\
-q & 1 & 0 & 0 \\
0 & -q & 1 & 0 \\
0 & 0 & -q & 1 \\
\end{array}
\right), \qquad
M^{-1}_{I}=
\left(
\begin{array}{cccc}
1 & 0 & 0 & 0 \\
q & 1 & 0 & 0 \\
q^2 & q & 1 & 0 \\
q^3 & q^2 &  q & 1 \\
\end{array}
\right), \\ 
%%%%%%%%%%
%%%%%%%%%%
&&M_{III}=
\left(
\begin{array}{cccc}
1 & 0 & 0 & 0 \\
-t & 1 & 0 & 0 \\
-t & -q & 1 & 0 \\
t^2 & 0 & -t & 1 \\
\end{array}
\right), \qquad
M^{-1}_{III}=
\left(
\begin{array}{cccc}
1 & 0 & 0 & 0 \\
t & 1 & 0 & 0 \\
(1+q)t & q & 1 & 0 \\
qt^2 & qt &  t & 1 \\
\end{array}
\right),
\end{eqnarray*}
where the order of bases is $(UU,UD,DU,DD)$.
\end{example}

\subsection{Ballot tiling}
A {\it ballot path of length $(2n,n')$} is a lattice path from 
the origin $(0,0)$ to $(2n+n',n')$ with $U$ steps and 
$D$ steps, which does not go below the horizontal line $y=0$.
We denote by $\mathcal{B}_{N}$ the set of ballot paths of length 
$N=2n+n'$. 
A Dyck path in Section \ref{sec:Dyck} is a ballot path of length $(2n,0)$.
The highest path in $\mathcal{B}_{N}$ is a path consisting of 
only $U$'s and the lowest path is the zig-zag path $\underbrace{UDUD\ldots}_{N}$.

A ballot path $\lambda$ of length $N$ is identified with the shifted Young diagram 
which is determined by the path $\lambda$, the line $y=x$ and boxes whose center 
is on the line $x=N$. 
We call a box whose center is on the line $x=N$ an {\it anchor} box.
See Figure \ref{fig:sYoungD} for an example.
If the shifted skew shape $\lambda/\mu$ exists, we call $\lambda$ and $\mu$ 
the lower path and the upper path.

We denote by $\mathfrak{S}_{N}^{C}$ the Weyl group associated to the Dynkin 
diagram of type $C$. 
We denote by $s_i$ with $1\le i\le N$ the elementary transposition of 
$\mathfrak{S}_{N}^{C}$.
The Weyl group $\mathfrak{S}_{N}^{C}$ is isomorphic to the signed permutation
group and its order is $2^{N}\cdot N!$. 
The elementary transposition $s_i$ is a transposition of $(i,i+1)$ for 
$1\le i\le N-1$ and $s_N$ is a transposition $(N,\underline{N})$.
Here, we use the bar-notation instead of the minus sign. 
For a reduced word $w:=s_{i_1}\ldots s_{i_p}$, we define 
inversions as follows: 
\begin{eqnarray*}
\mathrm{inv}_{1}(w)&:=&\#\{i_j \ |\ 1\le j\le p, 1\le i_j\le N-1\}, \\
\mathrm{inv}_{2}(w)&:=&\#\{i_j \ |\  1\le j\le p, i_j=N\}, 
\end{eqnarray*}
and $\mathrm{Inv}(w):=\mathrm{inv}_{1}(w)+\mathrm{inv}_{2}(w)$.

The shifted Young diagram $\lambda$ gives a 
word $\omega(\lambda)\in\mathfrak{S}_{N}^{C}$ as follows.
Fix a path $\lambda\in\mathcal{B}_{N}$. 
We enumerate the all $U$'s and $D$'s from right by $1,2,\ldots N$.
We denote by $\mathrm{Pos}(\lambda)$ the set of the positions of $D$ in $\lambda$.
For the $i$-th $D$, we assign a word $\omega_{i}(\lambda):=s_{N-i+1}\ldots s_{N}$.
Suppose that the set is written as $\mathrm{Pos}(\lambda)=\{i_1<i_2<\cdots<i_{N_{D}}\}$ 
where $N_D$ is the number of $D$ in $\lambda$.
Then, the word $\omega(\lambda)$ is defined as 
\begin{eqnarray*}
\omega(\lambda)&:=&\overset{\rightarrow}{\prod_{i\in\mathrm{Pos}(\lambda)}}\omega_{i}(\lambda) \\
&=&\omega_{i_1}(\lambda)\omega_{i_2}(\lambda)\cdots\omega_{i_{N_{D}}}(\lambda),
\end{eqnarray*}
where the product $\overset{\rightarrow}{\prod}$ is an ordered product.
See Figure \ref{fig:sYoungD} for an example.

\begin{figure}[ht]
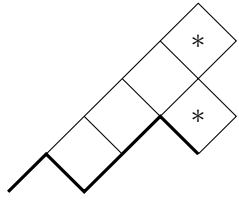

\tikzpic{-0.5}{
\draw(0,0)--(5/2,5/2)--(6/2,4/2)--(5/2,3/2)--(6/2,2/2)--(5/2,1/2);
\draw(2/2,2/2)--(3/2,1/2)(3/2,3/2)--(4/2,2/2)--(5/2,3/2)--(4/2,4/2);
\draw[very thick](0,0)--(1/2,1/2)--(2/2,0)--(4/2,2/2)--(5/2,1/2);
\draw(5/2,2/2)node{$\ast$}(5/2,4/2)node{$\ast$};
}
\caption{The shifted Young diagram associated to the path $\lambda=UDUUD$.
The boxes with an asterisk are anchor boxes.
The word is $\omega(\lambda)=s_{5}s_{2}s_{3}s_{4}s_{5}$.
}
\label{fig:sYoungD}
\end{figure}

A {\it ballot tile} is a ribbon such that the centers of the boxes form a ballot path.
A {\it ballot tiling} is a tiling of a shifted skew Young diagram $\lambda/\mu$ 
by ballot tiles. 
We consider a cover-inclusive tiling with a condition: the rightmost box of a ballot 
tile of length $(2n,n')$ with $n'\ge1$ is on an anchor box of the skew shape $\lambda/\mu$.
We have two types of a ballot tiling: type BI and type BIII.

\paragraph{\bf Type BI}
We put a constraint on a ballot tiling. The number of ballot tiles of length $(2n,n')$ 
is even for $n'\in 2\mathbb{Z}+1$ and zero for $n'\in2\mathbb{N}_{+}$.
The statistics area and tiles for a ballot tile $B$ of length $(2n,n')$ with $n'\ge1$
are defined as 
$\mathrm{area}(B):=2n+1+n'$ and $\mathrm{tiles}(B):=0$.
The statistics on Dyck tile $D$ (that is a ballot tile of length $(2n,0)$) are given 
by the same as in the case of Dyck tilings, {\it i.e.}, 
$\mathrm{area}(D)=2n+1$ and $\mathrm{tiles}(D)=1$.
The statistics art for a ballot tile 
is defined as $\mathrm{art}(B):=(\mathrm{area}(B)+\mathrm{tiles}(B))/2$.

\paragraph{\bf Type BIII}
Let $B$ be a ballot tile of length $(2n,n')$.
We denote by (S1) the following statement:
\begin{enumerate}
\item[(S1)] 
The rightmost box of a ballot tile (including a Dyck tile) is on an anchor box.
\end{enumerate}
We define two statistics area and tiles by $\mathrm{area}(B):=2n+1$ and 
$\mathrm{tiles}(B):=1$.
Then, statistics art is define by 
\begin{eqnarray*}
\mathrm{art}(B):=
\begin{cases}
(\mathrm{area}(B)-\mathrm{tiles}(B))/2, & \text{(S1) is true}, \\
(\mathrm{area}(B)+\mathrm{tiles}(B))/2, & \text{otherwise}.
\end{cases}
\end{eqnarray*}
Given a ballot tiling $T$, we denote by $\mathrm{tiles}_2(T)$
the number of ballot tiles satisfying the statement (S1).

We denote by $\mathcal{B}^{X}(\lambda/\mu)$ with $X=I$ or $III$ 
the set of cover-inclusive ballot tilings of type $X$  
in the skew shape $\lambda/\mu$.  
We define 
\begin{eqnarray*}
\mathcal{B}^{X}(\lambda/\ast):=\bigcup_{\mu}\mathcal{B}^{X}(\lambda/\mu), \\
\mathcal{B}^{X}(\ast/\mu):=\bigcup_{\lambda}\mathcal{B}^{X}(\lambda/\mu).
\end{eqnarray*}

\begin{defn}
We define generating functions of ballot tilings:
\begin{eqnarray*}
P^{I}_{\lambda,\mu}&:=&\sum_{B\in\mathcal{B}(\lambda/\mu)}q^{\mathrm{art}(B)}, \\
P^{III}_{\lambda,\mu}&:=&\sum_{B\in\mathcal{B}(\lambda/\mu)}
q^{\mathrm{art}(B)}t^{\mathrm{tiles}_{2}(B)},
\end{eqnarray*}
and 
\begin{eqnarray*}
P^{X}_{\lambda,\ast}&:=&\sum_{\mu}P^{X}_{\lambda,\mu}, \\
P^{I}_{\ast,\mu}&:=&
\sum_{B\in\mathcal{B}^{I}(\ast/\mu)}q^{\mathrm{tiles}(B)},
\end{eqnarray*}
where $X=I$ or $III$.
\end{defn}

\begin{example}
We consider the set $\mathcal{B}^{I}(\lambda/\ast)$ where $\lambda=DDUDU$.
We have eight ballot tilings in $\mathcal{B}^{I}(\lambda/\ast)$ whose 
statistic art is equal to five:
\begin{eqnarray*}
\tikzpic{-0.5}{
\draw(0,0)--(0.6,-0.6)--(0.9,-0.3)--(1.2,-0.6)--(1.5,-0.3);
\draw(0,0)--(0.6,0.6)--(1.5,-0.3);
\draw(0.3,0.3)--(0.9,-0.3)(0.3,-0.3)--(0.9,0.3)(0.9,-0.3)--(1.2,0);
\draw[dashed](0.6,0.6)--(1.5,1.5)--(1.8,1.2)--(0.9,0.3)
(1.2,1.2)--(1.8,0.6)--(1.2,0)(0.9,0.9)--(1.8,0)--(1.5,-0.3);
}\quad
\tikzpic{-0.5}{
\draw(0,0)--(0.6,-0.6)--(0.9,-0.3)--(1.2,-0.6)--(1.8,0);
\draw(0,0)--(0.3,0.3)--(0.9,-0.3)--(1.5,0.3)--(1.8,0);
\draw(0.3,-0.3)--(0.9,0.3)--(1.5,-0.3);
\draw[dashed](0.3,0.3)--(1.5,1.5)--(1.8,1.2)--(0.9,0.3)
(0.6,0.6)--(0.9,0.3)(0.9,0.9)--(1.5,0.3)(1.2,1.2)--(1.8,0.6)--(1.5,0.3);
}\quad
\tikzpic{-0.5}{
\draw(0,0)--(0.6,-0.6)--(0.9,-0.3)--(1.2,-0.6)--(1.8,0);
\draw(0.3,-0.3)--(1.2,0.6)--(1.8,0)(0.6,0)--(0.9,-0.3)--(1.5,0.3)
(0.9,0.3)--(1.5,-0.3);
\draw[dashed](0,0)--(1.5,1.5)--(1.8,1.2)--(1.2,0.6)
(0.3,0.3)--(0.6,0)(0.6,0.6)--(0.9,0.3)(0.9,0.9)--(1.5,0.3)
(1.2,1.2)--(1.8,0.6)--(1.5,0.3);
}\quad
\tikzpic{-0.5}{
\draw(0,0)--(0.6,-0.6)--(0.9,-0.3)--(1.2,-0.6)--(1.8,0);
\draw(0,0)--(0.6,0.6)--(1.5,-0.3)(1.2,0)--(1.5,0.3)--(1.8,0);
\draw(0.3,0.3)--(0.6,0)(0.3,-0.3)--(0.9,0.3);
\draw[dashed](0.6,0.6)--(1.5,1.5)--(1.8,1.2)--(0.9,0.3)
(0.9,0.9)--(1.5,0.3)(1.2,1.2)--(1.8,0.6)--(1.2,0);
} \\
\tikzpic{-0.5}{
\draw(0,0)--(0.6,-0.6)--(0.9,-0.3)--(1.2,-0.6)--(1.8,0);
\draw(0,0)--(0.3,0.3)--(0.6,0)(0.3,-0.3)--(1.2,0.6)--(1.8,0)
(0.9,0.3)--(1.5,-0.3)(1.2,0)--(1.5,0.3);
\draw[dashed](0.3,0.3)--(1.5,1.5)--(1.8,1.2)--(1.2,0.6)
(0.6,0.6)--(0.9,0.3)(0.9,0.9)--(1.2,0.6)(1.2,1.2)--(1.8,0.6)--(1.5,0.3);
}\quad
\tikzpic{-0.5}{
\draw(0,0)--(0.6,-0.6)--(0.9,-0.3)--(1.2,-0.6)--(1.8,0);
\draw(0.3,-0.3)--(1.5,0.9)--(1.8,0.6)--(1.5,0.3)--(1.8,0);
\draw(0.9,0.3)--(1.5,-0.3)(1.2,0)--(1.5,0.3)--(1.2,0.6);
\draw[dashed](0,0)--(1.5,1.5)--(1.8,1.2)--(1.5,0.9)
(0.3,0.3)--(0.6,0)(0.6,0.6)--(0.9,0.3)(0.9,0.9)--(1.2,0.6)
(1.2,1.2)--(1.5,0.9);
}\quad
\tikzpic{-0.5}{
\draw(0,0)--(0.6,-0.6)--(0.9,-0.3)--(1.2,-0.6)--(1.8,0);
\draw(0,0)--(0.3,0.3)--(0.9,-0.3)--(1.5,0.3)--(1.8,0);
\draw(0.3,-0.3)--(1.5,0.9)--(1.8,0.6)--(1.5,0.3)(0.9,0.3)--(1.2,0);
\draw[dashed](0.3,0.3)--(1.5,1.5)--(1.8,1.2)--(1.5,0.9)
(0.6,0.6)--(0.9,0.3)(0.9,0.9)--(1.2,0.6)(1.2,1.2)--(1.5,0.9);
}\quad
\tikzpic{-0.5}{
\draw(0,0)--(0.6,-0.6)--(0.9,-0.3)--(1.2,-0.6)--(1.8,0);
\draw(0,0)--(0.9,0.9)--(1.2,0.6)--(1.5,0.9)--(1.8,0.6)--(1.2,0)
--(0.9,0.3)--(0.3,-0.3);
\draw(1.8,0)--(1.5,0.3)(0.3,0.3)--(0.6,0);
\draw[dashed](0.9,0.9)--(1.5,1.5)--(1.8,1.2)--(1.2,0.6)
(1.2,1.2)--(1.5,0.9);
}
\end{eqnarray*}
\end{example}

By a similar argument to Section 1 in \cite{KW11}, we have the following 
theorem:
\begin{theorem}
The inverse of the incidence matrix $M_{X}$ ($X=I$ or $III$) is given by
$(M^{-1}_{X})_{\lambda,\mu}=P^{X}_{\lambda,\mu}$.
\end{theorem}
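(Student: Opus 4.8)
The plan is to follow the proof strategy of Kenyon and Wilson from \cite{KW11} for ordinary Dyck tilings and adapt it to the two families of ballot tilings. First I would record the combinatorial fact that makes everything work: the incidence matrix $M_X$ is lower triangular with $1$'s on the diagonal when the rows and columns are ordered by the reversed lexicographic order on $\mathcal{P}_n$ (this is stated already in Section \ref{subsec:im}), so $M_X$ is invertible over $\mathbb{Z}[q,t]$ and $M_X^{-1}$ is again lower triangular with $1$'s on the diagonal. Thus both sides of the claimed identity $(M_X^{-1})_{\lambda,\mu}=P^X_{\lambda,\mu}$ vanish unless $\mu\le\lambda$ in this order, and both equal $1$ when $\lambda=\mu$ (the empty tiling has $\mathrm{art}=0$). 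So the content is the off-diagonal case.

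The key step is to show that the matrix $\widetilde{M}_X$ defined by $(\widetilde{M}_X)_{\lambda,\mu}:=P^X_{\lambda,\mu}$ satisfies $M_X\widetilde{M}_X=\mathrm{Id}$, i.e. for $\lambda\neq\nu$,
\begin{eqnarray*}
\sum_{\mu}M_{X,\lambda,\mu}\,P^X_{\mu,\nu}=0.
\end{eqnarray*}
Since $M_{X,\lambda,\mu}$ is supported on those $\mu$ with $\lambda\leftarrow_X\mu$, this is an alternating sum over a small set of paths $\mu$ obtained from $\lambda$ by a single flipping move (a $UD$-flip on an arc, a $UU$-flip on a dashed arc, or a $U$-flip on the starred $U$), weighted by the sign and power of $q$ (and $t$) attached to that move. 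The plan is to produce, for each nonzero term $M_{X,\lambda,\mu}P^X_{\mu,\nu}$, a sign-reversing, weight-preserving involution on the set of pairs (flipping move taking $\lambda$ to some $\mu$, ballot tiling in $\mathcal{B}^X(\mu/\nu)$). Concretely: given such a pair, one removes a single tile (or pair of tiles, in the type BI parity-constrained case) from the top of the tiling that is created or destroyed by ``undoing'' the flip, thereby matching it with the pair coming from a neighbouring flipping move; the weight bookkeeping is arranged precisely so that $\mathrm{wt}(\lambda\leftarrow_X\mu)\cdot q^{\mathrm{art}}$ (times $t^{\mathrm{tiles}_2}$ in type BIII) is preserved and the sign flips. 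This is exactly the mechanism in Section 1 of \cite{KW11}; the bulk of the work is checking that the three flip types and the two tile-statistics conventions (the definitions of $\mathrm{area}$, $\mathrm{tiles}$, $\mathrm{art}$, and in type BIII the statement (S1) and $\mathrm{tiles}_2$) are compatible with this cancellation.

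I expect the main obstacle to be the careful case analysis around the anchor boxes and the starred/dashed arcs, since these are the features genuinely new to the type B setting and have no analogue in \cite{KW11}. In type BI the constraint that ballot tiles of odd ``height'' $n'$ occur in even number, together with the weight $-q^{2r+m}$ for a $UU$-flip of a dashed arc, must be shown to mesh with the cover-inclusive condition near the diagonal $x=N$; in type BIII the splitting of $\mathrm{art}$ according to whether (S1) holds, and the extra variable $t$ tracking $\mathrm{tiles}_2$, must be tracked through each involution step. I would organize the proof by treating the topmost ``active'' column (the leftmost position where $\lambda$ and the tiling first deviate from $\nu$) and doing induction on the number of tiles, exactly paralleling the inductive structure of the Dyck case, so that the $q,t$-deformation and the anchor conditions only have to be verified locally at one tile at a time.
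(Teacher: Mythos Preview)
Your proposal is correct and follows essentially the same approach as the paper, which gives no detailed proof but simply states that the result follows ``by a similar argument to Section~1 in \cite{KW11}''. One small imprecision worth fixing: the paths $\mu$ with $\lambda\leftarrow_X\mu$ are those from which $\lambda$ is obtained by flipping an arbitrary \emph{subset} of the arcs/dashed arcs/star of $\mu$ (not a single flip), with weight equal to the product of the individual flip weights and sign $(-1)^{\#\text{flips}}$; the sign-reversing involution you describe then toggles one distinguished flip in that subset against the presence or absence of the corresponding top tile, exactly as in the Kenyon--Wilson argument.
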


\section{Ballot tiling of type BIII}
\label{sec:BallotBIII}
Let $\lambda$ be a ballot path. 
The number of $U$ in $\lambda$ may not be equal to that of 
$D$ in $\lambda$. 
A Dyck path can be obtained by a concatenation of $\lambda$,  
a sequence of $U$'s from left and a sequence of $D$'s from right.  
By abuse of notation, we also denote the obtained Dyck path 
by $\lambda$.

We denote by $N_{D}(\lambda)$ the number of $D$ in $\lambda$.
\begin{theorem}
\label{thrm:BIII-fac}
We have 
\begin{eqnarray}
\label{BIII-fac}
P_{\lambda,\ast}^{III}=P^{\mathrm{Dyck}}_{\lambda,\ast}\cdot
\prod_{i=1}^{N_{D}(\lambda)}(1+q^{i-1}t).
\end{eqnarray}
\end{theorem}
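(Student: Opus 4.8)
The plan is to prove Theorem~\ref{thrm:BIII-fac} by a weight-preserving bijection that splits a type~BIII ballot tiling of $\lambda/\ast$ into an ordinary cover-inclusive Dyck tiling of the completed lower path, together with an independent choice of ``anchor decorations'', one for each down step of $\lambda$. Throughout, write $\widehat\lambda$ for the Dyck path obtained from the ballot path $\lambda$ by prepending $U$'s and appending $D$'s, so that $P^{\mathrm{Dyck}}_{\lambda,\ast}$ in \eqref{BIII-fac} means $P^{\mathrm{Dyck}}_{\widehat\lambda}$.

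First I would single out, in a tiling $T\in\mathcal B^{III}(\lambda/\mu)$, the \emph{anchor tiles}: those whose rightmost box lies on an anchor box, i.e.\ exactly the tiles counted by $\mathrm{tiles}_2(T)$ and weighted with the $(\mathrm{area}-\mathrm{tiles})/2$ branch of $\mathrm{art}$. The complementary tiles never touch the anchor strip, so deleting all anchor tiles and sliding the remaining tiles down should leave a genuine cover-inclusive Dyck tiling $D\in\mathcal D(\widehat\lambda/\ast)$; conversely I would describe how to re-insert anchor tiles. The combinatorial point is that the anchor tiles, read from the bottom of the anchor strip upward, occupy a choice of $k$ among the $N_D(\lambda)$ slots naturally indexed by the down steps of $\lambda$, and that this choice can be made freely, independently of~$D$.

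Next comes the weight bookkeeping. Since an anchor tile satisfies (S1) and carries the $(\mathrm{area}-\mathrm{tiles})/2$ version of $\mathrm{art}$, while its straightened image is an ordinary Dyck tile carrying the $(\mathrm{area}+\mathrm{tiles})/2$ version, I would show that the difference between the total $\mathrm{art}$ of $T$ and of $D$, contributed by an anchor tile sitting in the $i$-th slot, is exactly $i-1$, and that each anchor tile contributes one to $\mathrm{tiles}_2$. Summing $q^{\mathrm{art}}t^{\mathrm{tiles}_2}$ over all tilings then factors as $\bigl(\sum_{D\in\mathcal D(\widehat\lambda/\ast)}q^{\mathrm{art}(D)}\bigr)\cdot\sum_{S\subseteq\{1,\dots,N_D(\lambda)\}}\prod_{i\in S}(q^{i-1}t)$, and the second sum is $\prod_{i=1}^{N_D(\lambda)}(1+q^{i-1}t)$, giving the claim.

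The main obstacle is the two structural facts underlying the bijection: (a) that straightening the anchor tiles really yields a \emph{cover-inclusive} Dyck tiling (cover-inclusiveness must be checked to survive the deletion and the sliding), and (b) that the slot occupied by an anchor tile is genuinely free---every subset of slots is realizable over every Dyck tiling and yields a distinct valid type~BIII ballot tiling---together with pinning down the exponent $i-1$ precisely as a function of the slot. Should the geometry prove delicate, an alternative route is the incidence-matrix side: since $(M^{-1}_{III})_{\lambda,\mu}=P^{III}_{\lambda,\mu}$ we have $P^{III}_{\lambda,\ast}=\sum_\mu(M^{-1}_{III})_{\lambda,\mu}$, so it suffices to solve $M_{III}v=\mathbf 1$ and identify $v_\lambda$ with the right-hand side of \eqref{BIII-fac}, using the analogous identity for the Dyck incidence matrix of \cite{KW11}; one could equally set up a recursion on $\lambda$ by peeling off its first or last step and matching it against the corresponding recursion for $P^{\mathrm{Dyck}}_{\widehat\lambda}$.
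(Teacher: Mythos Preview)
Your high-level strategy coincides with the paper's: both give a weight-preserving bijection between type~BIII ballot tilings above $\lambda$ and pairs consisting of a Dyck tiling above the completed path together with a subset $S\subseteq\{1,\dots,N_D(\lambda)\}$ recording the anchor data. The paper, however, builds the bijection in the \emph{opposite} direction: it starts from a pair $(T,S)$ on the right-hand side, places the Dyck tiling $T$ in the region between $\lambda$ and the ``highest Dyck path'' $\lambda_H=U^{N-N_D}D^{N_D}$, places the subset data as rows of boxes in the staircase between $\lambda_H$ and $\lambda_0$, and then runs an iterative insertion procedure. At each step one examines the current upper path $\mu$ of the Dyck tiling, locates a specific down step $D_x$ determined by the largest remaining element of $S$, translates part of the tiling by $(1,0)$, and lays down a ballot tile together with a maximal chain of Dyck tiles along the new boundary. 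Invertibility is then asserted.

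Your proposed inverse---``delete the anchor tiles and slide the rest down''---is too coarse to serve as a description of this inverse map. After removing the anchor tiles the remaining tiles do not in general sit in a skew shape bounded by two Dyck paths, and there is no single ``slide'' that restores cover-inclusiveness; one must undo the translation-and-insertion steps one anchor tile at a time, from the top downward. You correctly flag (a) and (b) as the crux, but the paper's construction shows that resolving them requires more bookkeeping than your sketch admits: in particular, the exponent attached to the $i$-th slot arises not from a local contribution of a single tile but from the cumulative effect of the iterated translations. Your incidence-matrix and recursive alternatives are legitimate fallback plans, though the paper does not pursue them.
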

\begin{proof}
We give a bijective proof of Eqn.(\ref{BIII-fac}).
A term from the right hand side of Eqn.(\ref{BIII-fac}) 
is a product of $q^{\mathrm{art}(T)}$ for a Dyck tiling $T$ 
and a factor $q^{i_1+\ldots+i_{r}}t^{r}$ with 
$\{1\le i_1<i_2<\ldots<i_r\le N_{D}(\lambda)\}$. 

We divide the shifted shape $\lambda$ into two domains:
the first one is the domain determined by two Dyck paths $\lambda$ 
and $\lambda_{H}:=\underbrace{U\ldots U}_{N-N_{D}(\lambda)}\underbrace{D\ldots D}_{N_{D}(\lambda)}$,
and the second one is the domain determined by two ballot paths 
$\lambda_{H}$ and $\lambda_0$.
The second domain is the staircase of $N_{D}(\lambda)$ rows. 
We put a Dyck tiling $T$ in the first domain. 
For a factor $q^{i_1+\ldots i_r}t^{r}$, we put $i_{r+1-i}$ boxes and a box with $\ast$ 
on the $i$-th row from the bottom in the second domain such that 
the box $b$ with $\ast$ is on an anchor box and other boxes are adjacent to the box $b$
(See the left picture in Figure \ref{fig:BI} as an example).

\begin{figure}[ht]
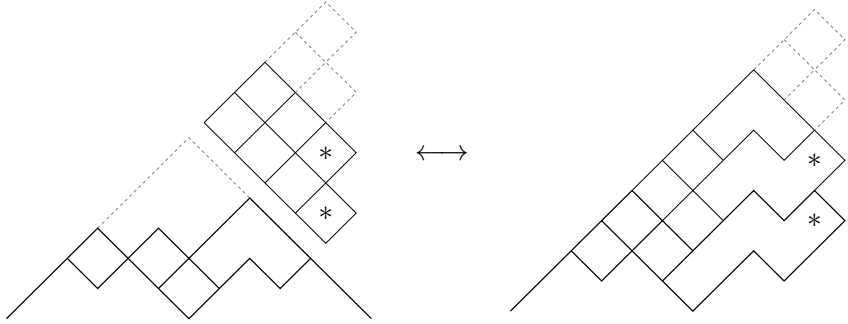

\scalebox{0.4}{
\tikzpic{-0.5}{
\draw[very thick](0,0)--(2,2)--(3,1)--(4,2)--(6,0)--(8,2)--(9,1)--(10,2)--(12,0);
\draw[gray,dashed](3,3)--(6,6)--(10,2);
\draw[very thick](2,2)--(3,3)--(4,2)--(5,3)--(7,1)
     (5,1)--(6,2)--(8,4)--(10,2);

\draw (6.5,6.5)--(10.5,2.5)--(11.5,3.5)--(7.5,7.5)--(6.5,6.5);
\draw(7.5,7.5)--(8.5,8.5)--(11.5,5.5)--(9.5,3.5);
\draw(7.5,5.5)--(9.5,7.5)(8.5,4.5)--(10.5,6.5);
\draw[gray,dashed](8.5,8.5)--(10.5,10.5)--(11.5,9.5)--(10.5,8.5)--(11.5,7.5)--(10.5,6.5)
                   (9.5,9.5)--(10.5,8.5)--(9.5,7.5);
\draw(10.5,3.5)node{\scalebox{2.5}{$\ast$}}(10.5,5.5)node{\scalebox{2.5}{$\ast$}};
}}
\quad$\longleftrightarrow$\quad
\scalebox{0.4}{
\tikzpic{-0.5}{
\draw[very thick](0,0)--(2,2)--(3,1)--(4,2)--(6,0)--(8,2)--(9,1)--(10,2);
%%\draw[gray,dashed](4,4)--(6,6)--(10,2);
\draw[very thick](2,2)--(3,3)--(4,2)--(5,3)--(6,2)
     (5,1)--(6,2)(3,3)--(4,4)--(5,3)
     (6,2)--(8,4)--(9,3)--(10,4)--(11,3)--(10,2);
\draw(10,3)node{\scalebox{2.5}{$\ast$}};
\draw(4,4)--(5,5)--(6,4)--(5,3)(6,4)--(7,3)
     (6,4)--(8,6)--(9,5)--(10,6)--(11,5)--(10,4);
\draw(10,5)node{\scalebox{2.5}{$\ast$}};
\draw(5,5)--(6,6)--(7,5);
\draw(6,6)--(8,8)--(10,6);
\draw[gray,dashed](8,8)--(10,10)--(11,9)--(10,8)--(11,7)--(10,6)
                  (9,9)--(10,8)--(9,7);
}
}
\caption{The bijection proving Theorem \ref{thrm:BIII-fac}. The lower path is $\lambda=UUDUDDUUDU$.
The left picture is a term $q^{5}\cdot q^{3}t\cdot q^{2}t$ from the right hand side 
of Eqn.(\ref{BIII-fac}) and the right picture is a ballot tiling from the left hand 
side of Eqn.(\ref{BIII-fac}).
}
\label{fig:BI}
\end{figure}

Let $\mu$ be the upper path of the Dyck tiling $T$ and 
$\mathrm{UP}(\mu)$ be the set of $U$ in $\mu$.
We define $\varphi:\mathrm{UP}(\mu)\rightarrow \mathbb{N}$ 
by $\varphi(U)$ as the number of boxes between $U$ and the Dyck 
path $\lambda_H$ in the $(1,-1)$-direction.

We define an operation on $T$ in the first domain and $i_r+1$ boxes
on the bottom row in the second domain as follows.

Suppose that $\varphi(U)>N_{D}(\lambda)-i_r-1$ with $U\in\mathrm{UP}(\mu)$.
We translate by $(1,0)$ the boxes (including Dyck tiles) 
$T'$ between $U\in\mathrm{UP}(\mu)$ and $\lambda$ in the 
$(1,-1)$-direction. Thus, $T$ is divide into two tilings, {\it i.e.}, $T=T'\sqcup T''$ 
where a tiling $T''$ is attached to the path $\lambda$.
We denote by $\lambda'$ the upper path of the tiling $T''$.

Let $U_x$ be the up step of $\mu$ such that $\varphi(U_x)\le N_{D}(\lambda)-i_r-1$
and $\varphi(U_{x+1})>N_{D}-i_r-1$.
We denote by $D_x$ the down step $D$ in $\mu$ (or equivalently $\lambda'$) 
such that $D_x$ is $N_{D}(\lambda)-i_r-\varphi(U_x)$ steps right to $U_x$.

We put ballot tiles along the shape $\lambda'$ from the lowest anchor box $a$ to 
the down step $D_{x}$ of $\lambda'$. 
Let $L$ be the locally lowest point of $\lambda'$ between $a$ and $D_{x}$.
We put a ballot tile from $L$ to the anchor box $a$ along $\lambda'$.
We put Dyck tiles from $L$ to $D_{x}$ along $\lambda'$ such that we have 
maximal Dyck tiles.
If $L$ and $D_x$ are at the same height, we merge the ballot tile from $L$ 
to $a$ and a Dyck tile from $L$ to $D_x$ into a ballot tile from $D_x$ 
to $a$.
Together with $T'$, this operation gives a ballot tiling above $\lambda'$.

We repeat the above procedure until boxes corresponding to a term $q^{i_1}t$ 
are transformed into ballot tiles (see the right picture in Figure \ref{fig:BI}).
It is easy to see that this operation is invertible and the exponent of 
a ballot tiling is preserved.
Thus, we have a bijection regarding to Eqn.(\ref{BIII-fac}).

\end{proof}

\begin{example}
From the inverse of incidence matrix in Example \ref{example-invM}, 
we have 
\begin{eqnarray*}
P^{III}_{UU,\ast}=1, \quad P^{III}_{UD,\ast}=1+t, \quad P^{III}_{DU,\ast}=(1+q)(1+t), 
\quad P^{III}_{DD,\ast}=(1+t)(1+qt).
\end{eqnarray*}

\end{example}

Let $\lambda_0$ be a path consisting of only $U$'s. 
Recall that two paths $\lambda_0$ and $\lambda$ determine the shifted 
shape $\lambda$.
For each $U$ in $\lambda_0$, we define $\mathrm{ht}(U)$ as one plus the number 
of boxes between $U$ and a path $\lambda$ or an anchor box in the $(1,-1)$ direction.
Let $\mathrm{UP}_{\ast}(\lambda)$ (resp. $\mathrm{UP}_{\circ}(\lambda)$) 
be the set of $U$'s in $\lambda_0$ such that there is an anchor box  (resp. no anchor box) 
in the $(1,-1)$-direction of $U$.

We denote by $\mathfrak{S}_{N}^{C}(\underline{1}2,\underline{12})$ the set of 
$(\underline{1}2,\underline{12})$-avoiding signed permutations.
%%%%%%%
%%%%%%%
\begin{theorem}
\label{thrm:BIII-2}
Let $\pi_0=\omega(\mu)$. Then, 
\begin{eqnarray}
\label{BIII-2}
\sum_{\substack{\pi\le\pi_0 \\ \pi\in\mathfrak{S}_{N}^{C}(\underline{1}2,\underline{12})}}
q^{\mathrm{inv}_{1}(\pi)}t^{\mathrm{inv}_{2}(\pi)}
=\prod_{U\in\mathrm{UP}_{\circ}(\mu)}[\mathrm{ht}(U)]
\prod_{U\in\mathrm{UP}_{\ast}(\mu)}[\mathrm{ht}(U)]_{t},
\end{eqnarray}
where $\pi\le\pi_0$ is the Bruhat order on $\mathfrak{S}_{N}^{C}$.
\end{theorem}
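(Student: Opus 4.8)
The plan is to prove Eqn.(\ref{BIII-2}) by realizing both sides through ballot tilings of type BIII with the fixed upper path $\mu$, thereby reproducing in type B the pattern ``$\text{product}=\text{tiling generating function}=\text{order-theoretic sum}$'' that underlies the equality of the second and third terms of Theorem \ref{thrm-A2}. The right-hand side will be matched to tilings by the box-sliding bijection of Section \ref{sec:bijHer}, and the left-hand side by a signed analogue of the correspondence between $132$-avoiding permutations and Dyck tilings.

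First I would interpret the right-hand side combinatorially. Writing $[\mathrm{ht}(U)]=\sum_{j=0}^{\mathrm{ht}(U)-1}q^{j}$ for $U\in\mathrm{UP}_{\circ}(\mu)$ and $[\mathrm{ht}(U)]_{t}=\sum_{j=0}^{\mathrm{ht}(U)-2}q^{j}+q^{\mathrm{ht}(U)-2}t$ for $U\in\mathrm{UP}_{\ast}(\mu)$, a monomial of the product is the datum, for every up step $U$ of $\lambda_{0}$, of a number $j_{U}$ of boxes to be placed from $U$ in the $(1,-1)$-direction toward the zig-zag path, together with a marking of the corresponding anchor box for those $\mathrm{UP}_{\ast}$-factors that contribute a power of $t$. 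Running the ``slide northeast into a tile'' procedure of Section \ref{sec:bijHer} in the shifted region, now permitting ballot tiles whose rightmost cell rests on an anchor box, converts this configuration into a cover-inclusive ballot tiling $B\in\mathcal{B}^{III}(\ast/\mu)$; the marked boxes become exactly the tiles counted by $\mathrm{tiles}_{2}$, and one checks the procedure is invertible and preserves weights. This identifies the right-hand side of Eqn.(\ref{BIII-2}) with a generating function over $\mathcal{B}^{III}(\ast/\mu)$, just as the factorization in Theorem \ref{thrm:BIII-fac} was obtained by such a box manipulation.

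Second I would match this tiling generating function to the Bruhat-order sum. Starting from $\pi_{0}=\omega(\mu)=\omega_{i_{1}}(\mu)\cdots\omega_{i_{N_{D}}}(\mu)$, one must first pass from this (generally non-reduced) product to an honest reduced expression and record how $\mathrm{inv}_{1}$ and $\mathrm{inv}_{2}$ decompose over the chord/anchor structure of $\mu$; the empty tiling $\mu/\mu$ corresponds to $\pi=e$ and the maximal cover-inclusive tiling to $\pi_{0}$. Reading off from a tiling $B$, for each chord of $\mu$, how far the associated tile travels and whether it terminates on an anchor box, I obtain a subword of that reduced expression, hence an element $\pi\le\pi_{0}$, with $\mathrm{inv}_{1}(\pi)$ recording the non-anchor slides (the letters $s_{i}$, $i<N$) and $\mathrm{inv}_{2}(\pi)$ recording the anchor moves (the letters $s_{N}$), so that $q^{\mathrm{inv}_{1}(\pi)}t^{\mathrm{inv}_{2}(\pi)}$ equals the tiling weight; this is the signed counterpart of the bijection between Dyck tilings and $\{\pi\ge\rho(\mu)\}$ used in \cite{JVK16}. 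The substance of the theorem is that this assignment is a bijection onto $\{\pi\le\pi_{0}\}\cap\mathfrak{S}_{N}^{C}(\underline{1}2,\underline{12})$, i.e. that cover-inclusiveness of $B$ is equivalent to $\pi$ avoiding the two patterns. Combining the two steps yields Eqn.(\ref{BIII-2}); equivalently, one may bypass the tilings and observe directly that both sides are generating functions for the same ``weakly increasing'' box data attached to the up steps of $\lambda_{0}$, contributing $[\mathrm{ht}(U)]$ for $U\in\mathrm{UP}_{\circ}(\mu)$ and $[\mathrm{ht}(U)]_{t}$ for $U\in\mathrm{UP}_{\ast}(\mu)$.

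I expect the last identification to be the main obstacle. Concretely one has to check that an element $\pi\le\omega(\mu)$ fails to be $(\underline{1}2,\underline{12})$-avoiding precisely when the corresponding slid-box arrangement violates cover-inclusiveness, distinguishing the two failure modes: one involving only letters $s_{i}$ with $i<N$, to be handled exactly as the $132$-type failure in type A, and the genuinely type-B one involving $s_{N}$ and hence the anchor column. Establishing this requires an explicit description of $(\underline{1}2,\underline{12})$-avoidance in terms of the positions of bars, a careful analysis of the reduced word of $\omega(\mu)$ so that the Bruhat lower interval decomposes up step by up step of $\lambda_{0}$ into independent chains, and a verification that $\mathrm{inv}_{1}$, $\mathrm{inv}_{2}$ are additive along this decomposition. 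Once that is in place, the remaining bookkeeping -- that the weights multiply and that no monomial is counted twice -- is of the same routine $q$-series type already used in Section \ref{sec:bijHer} and in the proof of Theorem \ref{thrm:BIII-fac}.
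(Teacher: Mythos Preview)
Your proposal takes a genuinely different route from the paper, and the detour you build in is precisely what makes your ``main obstacle'' harder than it needs to be. The paper does \emph{not} pass through ballot tilings at all in proving Eqn.(\ref{BIII-2}). Instead it works directly with the unslid box configurations $T$: for each factor $[\mathrm{ht}(U)]$ or $[\mathrm{ht}(U)]_{t}$ one simply stacks $i$ boxes in the corresponding column of the (rotated) shifted shape, and then reads off a word $\omega_{T}(\mu)$ by scanning the boxes column by column (right to left, top to bottom). Each box sits in a cell labelled by an $s_{j}$ via the definition of $\omega(\mu)$, so $\omega_{T}(\mu)$ is automatically a subword of a fixed expression for $\pi_{0}$, hence $\le\pi_{0}$ in Bruhat order. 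The technical heart of the proof is then a reduced-word argument: from the reading order one shows that $\omega_{T}(\mu)$ can never contain $s_{N-1}s_{N}s_{N-1}$ as a consecutive pattern after applying braid moves, and this is exactly what rules out the patterns $\underline{1}2$ and $\underline{12}$. Uniqueness of $T$ given $\pi$ follows because the reading order forbids both $s_{i}s_{j}$ with $i<j$ and $s_{i}s_{i+1}s_{i}$ as subwords.

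Your plan first slides the boxes into ballot tiles and then seeks a bijection between cover-inclusive tilings and pattern-avoiding $\pi\le\pi_{0}$. That second bijection would in effect have to undo the sliding and recover the column configuration $T$ before one could read off a word, so the tiling step buys nothing and the equivalence ``cover-inclusive $\Leftrightarrow$ $(\underline{1}2,\underline{12})$-avoiding'' you flag as the obstacle is a statement the paper never needs. The alternative you mention in passing---bypassing tilings and matching both sides to the same column-box data---is essentially the paper's argument, but you do not supply the key reduced-word/braid analysis that makes it work.
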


\begin{remark}
Theorem \ref{thrm:BIII-2} is an analogue of the second and the third 
terms in Theorem \ref{thrm-A2}.
\end{remark}

\begin{proof}[Proof of Theorem \ref{thrm:BIII-2}]
We rotate the shifted shape $\mu$ $45$ degrees clockwise.
For a term $q^{i}$, $0\le i\le \mathrm{ht}(U)-1$, in 
$[\mathrm{ht}(U)]$ or $[\mathrm{ht}(U)]_{t}$, we put $i$ boxes from top to bottom 
in the same column as $U$.
Similarly, for a term $q^{\mathrm{ht}(U)-2}t$ in $[\mathrm{ht}(U)]_{t}$, 
we put $\mathrm{ht}(U)-1$ boxes from top to bottom in the same column as $U$.
We denote by $T$ a configuration of boxes in the shifted shape $\mu$.
Recall that each box in the shape $\mu$ corresponds to an 
elementary transposition $s_i$ by $\omega(\mu)$. 
We assign a word in $\mathfrak{S}_{N}^{C}$ to boxes in $\mu$ 
by reading transpositions in boxes from top to bottom and from right 
to left, {\it i.e.}, starting with the rightmost column, reading 
down transpositions from top to bottom, moving to the column left to 
the rightmost column, working down from top to bottom and continue 
the procedure until it ends at the bottom row of the leftmost column.
We denote by $\omega_{T}(\mu)$ the word obtained in this way.

It is obvious that the word $\pi:=\omega_{T}(\mu)$ satisfies 
$\pi\le\pi_{0}$ in the Bruhat order.
The existence of two patterns $\underline{1}2$ and $\underline{12}$ 
means that $\pi$ contains the partial sequence $s_{N-1}s_{N}s_{N-1}$. 
Suppose $\pi$ contains the partial sequence $s_{N-1}s_{N}s_{N-1}$. 
Then, the existence of a box $b$ corresponding the right transposition $s_{N-1}$ 
means that there are boxes left to $b$. 
Such boxes correspond to a transposition $s_j$ with $j\le N-2$.
Since we read boxes from top to bottom and from right to left, 
$\pi$ partially contains a sequence $s_{N-1}ws_{N}w's_{N-1}$ where 
$w$ or $w'$ contains $s_{N-2}$. 
There is no braid relations such that a reduced word $\pi$ 
contains $s_{N-1}s_{N}s_{N-1}$.
This contradicts the assumption.
Similarly, if $\pi$ contains $s_{N-2}$ right to the right $s_{N-1}$,
there is $s_{N-3}$ right to the left $s_{N-2}$ in $w_{T}(\mu)$.
A reduced word for $\pi$ does not contain a sequence 
$s_{N-1}s_{N}s_{N-2}s_{N-1}s_{N-2}=s_{N-1}s_{N}s_{N-1}s_{N-2}s_{N-1}$. 
From these observations, $\pi$ is a 
$(\underline{1}2,\underline{12})$-avoiding signed permutation.
The inversion $\mathrm{inv}_1(\pi)$ is equal to the number of 
boxes except anchor boxes and the inversion $\mathrm{inv}_2(\pi)$ 
is equal to the number of anchor boxes.

Suppose that a $(\underline{1}2,\underline{12})$-avoiding signed permutation 
$\pi$ satisfies $\pi\le\pi_0$.
We show that there exists a unique expression of $\pi$ such that 
$\pi$ is obtained as $\omega_{T}(\mu)$ with a certain $T$.
Different expressions of a word $\pi\in\mathfrak{S}_{N}^{C}$ are 
obtained by using the relations $s_is_{i+1}s_{i}=s_{i+1}s_{i}s_{i+1}$ 
for $1\le i\le N-1$, $s_is_j=s_js_i$ for $|i-j|>1$ and 
$s_{N-1}s_{N}s_{N-1}s_{N}=s_{N}s_{N-1}s_{N}s_{N-1}$.
Since $\pi$ is a $(\underline{1}2,\underline{12})$-avoiding signed permutation,
$\pi$ contains neither $s_{N-1}s_{N}s_{N-1}s_{N}$, $s_{N}s_{N-1}s_{N}s_{N-1}$
nor $s_{N-1}s_{N}s_{N-1}$.
By a construction of $T$, $\omega_{T}(\mu)$ do not contain partial 
sequences $s_{i}s_{j}$ with $i<j$ and $s_{i}s_{i+1}s_{i}$ with $1\le i\le N-1$.
Thus, given an expression of $\pi$, we have a unique expression of $\pi$ such 
that $\pi=\omega_{T}(\mu)$ with some $T$.

We have a correspondence a term in the right hand side of Eqn.(\ref{BIII-2})
with $\pi=\omega_{T}(\mu)$ where $\pi$ is in 
$\mathfrak{S}_{N}^{C}(\underline{1}2,\underline{12})$ and satisfies $\pi\le\pi_0$.
This completes the proof.
\end{proof}

\section{Planted plane tree and Generalized perfect matching}
\label{sec:pptgpm}
\subsection{Planted plane tree}
\label{sec:ppt}
We introduce the notion of a tree for a ballot path following \cite{Boe88}.
Let $\lambda$ be a ballot path.
Recall the definition of $\mathcal{Z}$ in Section \ref{sec:tree}.
If a path $\lambda=\lambda' \underline{U}z$ with $z\in\mathcal{Z}$, 
we call the underlined $U$ a {\it terminal} $U$.
If a path $\lambda=\lambda'\underline{U}z_{2r}\underline{U}z_{2r-1}U\cdots
z_3Uz_2Uz_1Uz_{0}$ with $r\ge1$ and $z_i\in\mathcal{Z}$, we call 
underlined two $U$'s a $UU$-pair.
A $U$ which is not classified above is called an {\it extra} $U$.

We define the capacity for a pair $UD$ in the same way as in the case of a Dyck word.
If $\lambda=\lambda'\underline{U}$ and $\lambda_{0}=\lambda'_0\alpha$ with $\alpha\in\{U,D\}$, 
the capacity of the underlined $U$ is 
\begin{eqnarray*}
\mathrm{cap}(U):=||\lambda_0||_{U}-||\lambda'||_{U}.
\end{eqnarray*}

We define a plane tree $A(\lambda)$ for a ballot path $\lambda$. 
A tree $A(\lambda)$ satisfies $(\Diamond1)$ to $(\Diamond4)$ (see Section \ref{sec:tree}) 
and the following four conditions $(\Diamond5)$ to $(\Diamond8)$:
\begin{enumerate}
\item[($\Diamond5$)]
If the underlined $U$ in $\underline{U}\lambda'$ is the terminal $U$, 
the tree $A(U\lambda')$ is obtained by putting an edge just above 
the tree $A(\lambda')$.
We mark the edge with a dot ($\bullet$).
\item[($\Diamond6$)]Suppose underlined $U$'s in $\underline{U}z\underline{U}\lambda'$ with 
$z\in\mathcal{Z}$ are a $UU$-pair. 
Then the tree $A(UzU\lambda')$ is obtained by attaching an edge above the root of  
$A(z\lambda')$. We mark the edge with a dot ($\bullet$).
\item[($\Diamond7$)]If the underlined $U$ in $\underline{U}\lambda'$ is an extra $U$, 
we have $A(U\lambda')=A(\lambda')$.
\end{enumerate}
We need to encode an additional information on a tree \cite{Boe88}.
Suppose that $\lambda=\lambda'z_{2r+1}\lambda''$ with 
$\lambda''=Uz_{2r}Uz_{2r-1}\ldots z_{1}Uz_{0}$ and $z_{2r+1}=x_{s}x_{s-1}\cdots x_1$, 
$x_i\in\mathcal{Z}$.
Here, all $x_i$'s can not be expressed as a concatenation of two non-empty Dyck words.
Thus, the tree $A(x_i)$ contains a unique maximal edge (the edge connecting to the root)
corresponding to a $UD$ pair. 
The tree $A(\lambda'')$ also has a unique maximal edge corresponding to a $UU$-pair 
or a terminal $U$.
The maximal edge of $A(x_i)$ (resp. $A(\lambda'')$) is said to {\it immediately precede} 
the maximal edge of $A(x_{i+1})$ (resp. $A(x_1)$) for $1\le i\le s$. 
Then we put an information on a tree:
\begin{enumerate}
\item[($\Diamond8$)] When an edge $e$ immediately precedes an edge $e'$ in a binary tree
$A(\lambda)$, we put a dashed arrow from $e$ to $e'$.
\end{enumerate}

\begin{remark}
We can embed a ballot path $\lambda$ into a set of longer ballot paths by 
adding $U$'s left to $\lambda$.
Then, we have a ballot path without $D$'s satisfying ($\Diamond4$). 
In other words, all $D$'s are paired with $U$'s. 
Especially, when a ballot path $\lambda$ consists of only $D$'s, 
that is, $\lambda=\underbrace{D\ldots D}_{N}$, 
we embed $\lambda$ in the set of ballot paths of length $2N$ 
by adding $N$ $U$'s left to $\lambda$.
\end{remark}

\begin{example}
When $\lambda=UUDDUUUUUDDU$, the tree $A(\lambda)$ is depicted as 
\begin{eqnarray*}
\scalebox{0.7}{
\tikzpic{-0.5}{
\coordinate
	child{coordinate (c0)   
	child{coordinate (c01)}
	child[missing]
	}
	child {coordinate (c2)}{
        child {coordinate (c3) child {coordinate (c4)} child[missing]}
	child {coordinate (c5)}
	};
\draw[very thick](c2)node{$-$} (c3)node{$-$} (c4)node{$-$} (c5)node{$-$}(c0)node{$-$}(c01)node{$-$};
\draw[very thick,latex-,dashed] ($(c2)!.7!(c3)$)--($(c2)!.7!(c5)$);
\node at ($(c2)!.5!(c5)$){\scalebox{1.5}{$\bullet$}};
\node at ($(0,0)!.5!(c2)$){\scalebox{1.5}{$\bullet$}};
}
}.
\end{eqnarray*}
\end{example}

As in the case of Section \ref{sec:tree}, a {\it labelling of Lascoux--Sch\"utzenberger type}
(type LS for short) is a set of non-negative integers on the edges of $A(\lambda/\lambda_0)$ 
satisfying (LS1) and (LS2).

\begin{remark}
One can compute a Kazhdan--Lusztig polynomial for the Hermitian symmetric pair 
$(B_{N},A_{N-1})$ by a tree $A(\lambda/\mu)$. 
It requires two more additional information on a tree 
in addition to (LS1) and (LS2). 
The extra information comes from the existence of arrows and dotted edges.
See~\cite{Boe88,S14} for a detailed explanation. 
\end{remark}

Let $E$ be the number of edges in $A(\lambda)$.
A {\it natural labelling} of $A(\lambda)$ is a labelling such that 
integers on edges are increasing from the root to a leaf and an integer 
in $\{1,\ldots,E\}$ appears exactly once. 
A {\it reference} natural labelling of $A(\lambda)$ is a natural labelling 
such that integers on edges are increasing in left-to-right depth-first search. 
We denote by $\mathrm{pre}'(L)$ (resp. $\mathrm{post}'(L)$) the permutation 
obtained by reading a natural labelling $L$ from left to right using 
the pre-order (resp. modified post-order).  
The pre-order (resp. post-order) means that we visit a node before (resp. after)
both of its left and right subtrees.
The modified post-order means as follows. We visit edges on a tree one-by-one by
following the post-order. We visit edges $e$'s with $\bullet$ from top to bottom 
up to a ramification point soon after all the edges on the subtree which is left 
to edges $e$'s are visited. 
Then, we continue visiting remaining edges by following the modified post-order.  
Let $\pi=\mathrm{post}'(L):=\pi_1\ldots\pi_{E}$ be a permutation in the one-line notation. 
If $\pi_{i}$ with $1\le i\le E$ is on an edge with a dot, we put an underbar on $\pi_{i}$.
The obtained barred permutation is denote by $\mathrm{post}(L)$.
Similarly, we obtain $\mathrm{pre}(L)$ by adding underbars to $\mathrm{pre}'(L)$.
We identify $\mathrm{post}(L)$ and $\mathrm{pre}(L)$ with signed permutations 
in $\mathfrak{S}_{E}^{C}$ by regarding an unbarred integer as a negative integer.
Note that the pre-order word for the reference natural labelling is the identity.

We define the {\it inversion} of a signed permutation $\sigma$ in the signed 
alphabet $\{\pm1,\ldots,\pm E\}$ as 
\begin{eqnarray*}
\mathrm{Inv}(\sigma)&:=&\#\{(i,j)\ |\  i<j, |\sigma(i)|>|\sigma(j)|\} \\
&&\quad+2\#\{(i,j) \ |\ i<j, 0<-\sigma(i)<|\sigma(j)|\}
+\#\{i \ |\  \sigma(i)<0 \}.
\end{eqnarray*}
We define the inversion of an inverse pre-order word $\sigma$ as 
\begin{eqnarray*}
\mathrm{inv}(\sigma)
&:=&
\#\{(i,j) \ |\  i<j, \sigma(i)<\sigma(j), \text{ and } 
i,j \text{ are not underlined} \} \\
&&+2\#\{(i,j) \ |\  i<j, \sigma(i)>\sigma(j), \text{ and }
i \text{ or } j \text{ is underlined} \}.
\end{eqnarray*}

\begin{example}
\label{example:L}
A natural labelling $L$ associated to a path $UUDDUUUUDDUU$ is 
\begin{eqnarray*}
L=
\scalebox{0.7}{
\tikzpic{-0.5}{
\coordinate
	child{coordinate (c0)   
	child{coordinate (c01)}
	child[missing]
	}
	child {coordinate (c2)}{
        child {coordinate (c3) child {coordinate (c4)} child[missing]}
	child {coordinate (c5)}
	};
\draw[very thick](c2)node{$-$} (c3)node{$-$} (c4)node{$-$} (c5)node{$-$}(c0)node{$-$}(c01)node{$-$};
%%\draw[very thick,latex-,dashed] ($(c2)!.7!(c3)$)--($(c2)!.7!(c5)$);
\node at ($(c2)!.5!(c5)$){\scalebox{1.5}{$\bullet$}};
\node at ($(0,0)!.5!(c2)$){\scalebox{1.5}{$\bullet$}};
\node[fill=white,anchor=south east] at ($(0,0)!.6!(c0)$){\scalebox{1.425}{1}};
\node[fill=white,anchor=south east] at ($(c0)!.6!(c01)$){\scalebox{1.425}{3}};
\node[fill=white,anchor=south west] at ($(0,0)!.6!(c2)$){\scalebox{1.425}{2}};
\node[fill=white,anchor=south east] at ($(c2)!.6!(c3)$){\scalebox{1.425}{4}};
\node[fill=white,anchor=south east] at ($(c3)!.6!(c4)$){\scalebox{1.425}{5}};
\node[fill=white,anchor=south west] at ($(c2)!.6!(c5)$){\scalebox{1.425}{6}};
}
}.
\end{eqnarray*}
Then, we have $\mathrm{pre}(L)=13\underline{2}45\underline{6}$ 
and $\mathrm{post}(L)=31\underline{2}54\underline{6}$.
\end{example}

\subsection{Generalized perfect matching}
\label{sec:gpm}
Given a ballot path $\gamma$,  let $\gamma_{0}$ be the lowest ballot paths such that  
the skew shape $\gamma_{0}/\gamma$ exists and $|\gamma|_{X}=|\gamma_0|_{X}$ 
with $X\in\{U,D\}$.
The path $\gamma_0$ is a concatenation of a zig-zag path and a path consisting 
of only $U$'s.

A generalized {\it perfect matching} is a diagram consisting 
of arcs, dashed arcs and a diamond ($\diamond$).
We consider two types of generalized perfect matching: type I and type II.

A generalized perfect matching on a path $\gamma$ of type I 
is defined as follows.
An arc connects a $U$ and a $D$ in $\gamma$, a dashed 
arc connects two $U$'s in $\gamma$ and a diamond is on a $U$ in $\gamma$.
Given $\gamma$, we start with making pairs of $U$ and $D$ by connecting them 
into an arc. 
Then, we make a pairs of two $U$'s next to each other by connecting them 
into a dashed arc.
We put a diamond $\diamond$ in a $U$ of a remaining $U$ (if it exists).
We denote by $\mathrm{PM}_{I}(\gamma)$ the set of generalized perfect matchings 
of type I for a path $\gamma$. 
We consider two conditions:
\begin{enumerate}
\item[(C1)]
Two configurations of two dashed arcs  
\begin{eqnarray*}
\tikzpic{0}{
\draw[dashed](0,0)..controls(0,0.5)and(0.5,0.5)..(0.5,0);
\draw[dashed](0.7,0)..controls(0.7,0.5)and(1.2,0.5)..(1.2,0);
}\qquad
\tikzpic{0}{
%\draw[dashed](0,0)..controls(0,0.5)and(0.5,0.5)..(0.5,0);
%\draw[dashed](0.7,0)..controls(0.7,0.5)and(1.2,0.5)..(1.2,0);
\draw[dashed](0,0)..controls(0,0.8)and(1.2,0.8)..(1.2,0);
\draw[dashed](0.3,0)..controls(0.3,0.5)and(0.9,0.5)..(0.9,0);
}
\end{eqnarray*}
are not admissible. 
\item[(C2)]There exists no $U$ right to the diamond ($\diamond$) such 
that it forms a dashed arc.
\end{enumerate}
We denote by $\mathrm{PM}'_{I}(\gamma)\subseteq\mathrm{PM}_{I}(\gamma)$ the set of generalized 
perfect matchings of $\gamma$ satisfying (C1) and (C2).

We define a generalized perfect matching of type II as follows.
Let $\lambda$ be a ballot path and $L$ be a natural labelling of 
the tree $A(\lambda)$ and $\pi=\mathrm{pre}(L)$. 
Here, we consider only a plane tree without ``arrows".
We associate a sequence of (underlined) integers $\mathbf{p}':=(p'_1,\ldots,p'_E)$ 
to $L$ which has $E$ edges. 
Given an edge $e$ with an integer $n_{e}$, there is a unique sequence 
of edges from $e$ to the root. 
We denote by $S_{e}$ the set of such edges.
We denote by $T_{e}$ the set of edges such that an integer $n_{e'}$
on the edge $e'\in T_{e}$ satisfies $n_{e'}< n_{e}$ and $n_{e'}$ appears 
to the left to $n_e$ in $\pi$.
We define $\mathbf{p}'$ as 
\begin{eqnarray*}
p'_{n_e}:=2|T_{e}|-|S_{e}|+1,
\end{eqnarray*}
where $1\le n_e\le E$.
If an edge $e$ in $A(\lambda)$ has a dot, we put an underline on $p'_{n_e}$.

We define a {\it mini-word} $w$ from $\mathbf{p}'$ as follows.
Let $S_{E}:=[1,2E]$. 
For $1\le i\le E$, we make a pair between $(p'_{i}+1)$-th smallest 
element $a'_{i}$ and $b'_{i}:=\max(S_{i})$ in $S_{i}$ and define 
$S_{i-1}:=S_{i}\setminus\{a'_{i},b'_{i}\}$.
When $p'_{i}$ is underlined, we put an underline on $a'_{i}$.
Then, we obtain a sequence of (underlined) integers $\mathbf{a}':=(a'_{1},\ldots,a'_{E})$.
Let $e_{i}$ be an edge with the integer $i$, and 
$m_{i}$, $1\le i\le E$, be 
\begin{eqnarray*}
m_{i}:=\#\{a'_{j}\ |\  a'_{j}<a'_{i}, \text{$a'_{j}$ is underlined, and } e_{j}\in S_{e_{i}}\}.
\end{eqnarray*}
We define $\mathbf{a}:=(a_{1},\ldots,a_{E})=(a'_{1}+m_1,a'_{2}+m_2,\ldots,a'_{E}+m_{E})$ 
and $\mathbf{p}:=(p_1,\ldots,p_{E})=(p'_{1}+m_1,p'_{2}+m_2,\ldots,p'_{E}+m_{E})$.
A sequence of integers $\mathbf{b}:=(b_1,\ldots,b_{E})$ is defined 
as an increasing sequence in $S_{2E}\setminus\{a_{i}\ |\ 1\le i\le E\}$.
Then, $w$ is a $2\times E$ array of integer where the first row 
is $\mathbf{a}$ and the second row is $\mathbf{b}$.
By construction, $b_1,\ldots,b_{E}$ is an increasing sequence.
Further, when $a_{i}$ is underlined, we put an underline on $b_{i}$.
We have a generalized perfect matching from a mini-word $w$ in the following way.
We consider a graph with $2E$ nodes which lie in a line from left to right.
A pair $(a_i,b_i)$ without underlines corresponds to an arc from the $a_i$-th 
node and the $b_i$-th node. If a pair $(a_j,b_j)$ is underlined, then we connect 
the $a_j$-th node and $b_{j}$-th node by a dashed arc.
We call a generalized perfect matching obtained in this way type II.
We denote by $\mathrm{PM}_{II}(\lambda)$ the set of generalized perfect matchings 
of type II corresponding to pre-order words of $A(\lambda)$.

\begin{example}
We consider the same path and the same natural labelling as in Example \ref{example:L}.
We have $\mathbf{p}'=(0,\underline{2},1,5,6,\underline{9})$ and 
$\mathbf{a}'=(1,\underline{4},2,6,7,\underline{10})$.
Thus we have $\mathbf{p}=(0,\underline{2},1,6,7,\underline{10})$ and 
$w=
\begin{array}{cccccc}
1 & \underline{4} & 2 & 7 & 8 & \underline{11} \\
3 & \underline{5} & 6 & 9 & 10 & \underline{12}
\end{array}
$.
The generalized perfect matching corresponding to $w$ 
is depicted as 
\begin{eqnarray*}
\tikzpic{-0.5}{
\draw(0,0)..controls(0,0.7)and(1,0.7)..(1,0)
     (0.5,0)..controls(0.5,1.2)and(2.5,1.2)..(2.5,0)
     (3,0)..controls(3,1)and(4,1)..(4,0)
     (3.5,0)..controls(3.5,1)and(4.5,1)..(4.5,0);
\draw[dashed](1.5,0)..controls(1.5,0.5)and(2,0.5)..(2,0) 
            (5,0)..controls(5,0.5)and(5.5,0.5)..(5.5,0);
\foreach \x in {1,2,...,12} {
	\draw(0.5*\x-0.5,0)node[anchor=north]{\x};
	}
}
\end{eqnarray*}
\end{example}

Below, we define two statistics {\it crossing} and {\it nesting}
for a generalized perfect matching in $\mathrm{PM}'_{I}(\mu)$ and 
$\mathrm{PM}_{I}(\mu)$, and {\it nesting} for $\mathrm{PM}_{II}(\mu)$.
If the length of a generalized perfect matching is $l$,  
crossing and nesting are sequences of integers of length $l$.

We start with the definitions of two statistics {\it crossing} and 
{\it nesting} for a generalized perfect matching in $\mathrm{PM}'_{I}(\mu)$.
Suppose that the length of $\pi\in\mathrm{PM}'_{I}(\mu)$ is $l$.
We assign sequences of integers $\mathbf{p}_{\mathrm{cr}}$ and 
$\mathbf{p}_{\mathrm{nes}}$ of length $l$ to $\pi$.

For crossing, we define a partial integer sequence for a configuration of 
arcs, dashed arcs and the diamond as follows.
\begin{eqnarray*}
\tikzpic{-0.5}{
\draw(0,0)..controls(0,0.9)and(1.5,0.9)..(1.5,0);
\draw[dashed](0.5,0)..controls(0.5,0.5)and(1,0.5)..(1,0);
\draw(0,0)node[anchor=north]{$0$}(0.5,0)node[anchor=north]{$0$}
     (1,0)node[anchor=north]{$0$}(1.5,0)node[anchor=north]{$2$};
}
\quad
\tikzpic{-0.5}{
\draw(0,0)..controls(0,0.7)and(1,0.7)..(1,0);
\draw[dashed](0.5,0)..controls(0.5,0.7)and(1.5,0.7)..(1.5,0);
\draw(0,0)node[anchor=north]{$0$}(0.5,0)node[anchor=north]{$0$}
     (1,0)node[anchor=north]{$1$}(1.5,0)node[anchor=north]{$0$};
}
\quad
\tikzpic{-0.5}{
\draw(0,0)..controls(0,0.7)and(1,0.7)..(1,0);
\draw(0.5,0)..controls(0.5,0.7)and(1.5,0.7)..(1.5,0);
\draw(0,0)node[anchor=north]{$0$}(0.5,0)node[anchor=north]{$0$}
     (1,0)node[anchor=north]{$1$}(1.5,0)node[anchor=north]{$0$};
}
\quad
\tikzpic{-0.5}{
\draw[dashed](0,0)..controls(0,0.7)and(1,0.7)..(1,0);
\draw(0.5,0)..controls(0.5,0.7)and(1.5,0.7)..(1.5,0);
\draw(0,0)node[anchor=north]{$0$}(0.5,0)node[anchor=north]{$0$}
     (1,0)node[anchor=north]{$0$}(1.5,0)node[anchor=north]{$1$};
}\quad
\tikzpic{-0.5}{
\draw(0,0)..controls(0,0.5)and(1,0.5)..(1,0);
\draw(0.5,0)node{$\diamond$};
\draw(0.5,0)node[anchor=north]{$0$}
     (0,0)node[anchor=north]{$0$}(1,0)node[anchor=north]{$1$};
}
\end{eqnarray*}
Partial integer sequences for other configurations are zero.
Then, a sequence $\mathbf{p}_{\mathrm{cr}}$ is defined by 
the sum of partial sequences corresponding to diagrams depicted above.
We define $\mathrm{cr}(\pi):=\sum_{i=1}^{l}p_i$ for 
$\mathbf{p}_\mathrm{cr}=p_1\ldots p_l$.

For nesting, we define a partial integer sequence for a configuration
of arcs, dashed arcs and the diamond as follows.
\begin{eqnarray*}
\tikzpic{-0.5}{
\draw(0,0)..controls(0,0.9)and(1.5,0.9)..(1.5,0);
\draw[dashed](0.5,0)..controls(0.5,0.5)and(1,0.5)..(1,0);
\draw(0,0)node[anchor=north]{$0$}(0.5,0)node[anchor=north]{$1$}
     (1,0)node[anchor=north]{$1$}(1.5,0)node[anchor=north]{$0$};
}
\quad
\tikzpic{-0.5}{
\draw[dashed](0,0)..controls(0,0.9)and(1.5,0.9)..(1.5,0);
\draw(0.5,0)..controls(0.5,0.5)and(1,0.5)..(1,0);
\draw(0,0)node[anchor=north]{$0$}(0.5,0)node[anchor=north]{$1$}
     (1,0)node[anchor=north]{$0$}(1.5,0)node[anchor=north]{$0$};
}
\quad
\tikzpic{-0.5}{
\draw(0,0)..controls(0,0.9)and(1.5,0.9)..(1.5,0);
\draw(0.5,0)..controls(0.5,0.5)and(1,0.5)..(1,0);
\draw(0,0)node[anchor=north]{$0$}(0.5,0)node[anchor=north]{$1$}
     (1,0)node[anchor=north]{$0$}(1.5,0)node[anchor=north]{$0$};
}
\quad
\tikzpic{-0.5}{
\draw[dashed](0,0)..controls(0,0.7)and(1,0.7)..(1,0);
\draw(0.5,0)..controls(0.5,0.7)and(1.5,0.7)..(1.5,0);
\draw(0,0)node[anchor=north]{$0$}(0.5,0)node[anchor=north]{$0$}
     (1,0)node[anchor=north]{$1$}(1.5,0)node[anchor=north]{$0$};
}
\quad
\tikzpic{-0.5}{
\draw(0,0)..controls(0,0.5)and(1,0.5)..(1,0);
\draw(0.5,0)node{$\diamond$};
\draw(0.5,0)node[anchor=north]{$1$}
     (0,0)node[anchor=north]{$0$}(1,0)node[anchor=north]{$0$};
}
\end{eqnarray*}
Partial integer sequences for other configurations are zero.
The sequence $\mathbf{p}_{\mathrm{nes}}$ is defined by adding the partial 
sequences corresponding to diagrams depicted above.
We define $\mathrm{ne(\pi)}:=\sum_{i=1}^{l}p_i$ for 
$\mathbf{p}_{\mathrm{nes}}=p_1\ldots p_l$.

\begin{example}
\label{ex:GLP}
Fix a path $\mu:=UUDUUUDDUUUUD$. 
Let $\pi_1$ be a generalized link pattern in $\mathrm{PM}'_{I}(\mu)$ depicted 
as below: 
\begin{eqnarray*}
\pi_1:=
\tikzpic{-0.4}{
\draw(0,0)..controls(0,0.5)and(1,0.5)..(1,0)
     (1.5,0)..controls(1.5,1)and(3.5,1)..(3.5,0)
     (2,0)..controls(2,1.4)and(6,1.4)..(6,0) 
     (2.5,0)..controls(2.5,0.5)and(3,0.5)..(3,0);
\draw[dashed](0.5,0)..controls(0.5,1.8)and(4.5,1.8)..(4.5,0)
             (4,0)..controls(4,0.5)and(5,0.5)..(5,0);
\draw(5.5,0)node{$\diamond$};
\foreach \x/\y in{0/1,0.5/2,1/3,1.5/4,2/5,2.5/6,3/7,3.5/8,4/9,4.5/10,5/11,5.5/12,6/13}{
	\draw(\x,0)node[anchor=north]{$\y$};
			 }
}
\end{eqnarray*}
The crossing of $\pi_{1}$ is $\mathbf{p}_{\mathrm{cr}}(\pi_1):=(p_{3},p_{7},p_{8},p_{13})=(1,0,1,4)$ and 
other $p_i$'s are zero.
The nesting of $\pi_1$ is 
$\mathbf{p}_{\mathrm{nes}}(\pi_{1}):=(p_{1},p_{2},p_{4},p_{5},p_{6},p_{9},p_{10},p_{11},p_{12})
=(0,0,1,0,3,1,1,1,1)$ and other $p_{i}$'s are zero.
\end{example}

For a perfect matching in $\mathrm{PM}_{I}(\mu)$, 
we define a partial sequence of integers for a configuration
of arcs, dashed arcs and the diamond as follows:
\begin{eqnarray}
\label{weight-PMI}
\begin{gathered}
\tikzpic{-0.5}{
\draw[dashed](0,0)..controls(0,0.9)and(1.5,0.9)..(1.5,0);
\draw[dashed](0.5,0)..controls(0.5,0.5)and(1,0.5)..(1,0);
\draw(0,0)node[anchor=north]{$0$}(0.5,0)node[anchor=north]{$1$}
     (1,0)node[anchor=north]{$1$}(1.5,0)node[anchor=north]{$0$};
}
\quad
\tikzpic{-0.5}{
\draw[dashed](0,0)..controls(0,0.7)and(1,0.7)..(1,0);
\draw[dashed](0.5,0)..controls(0.5,0.7)and(1.5,0.7)..(1.5,0);
\draw(0,0)node[anchor=north]{$0$}(0.5,0)node[anchor=north]{$0$}
     (1,0)node[anchor=north]{$1$}(1.5,0)node[anchor=north]{$0$};
}
\quad
\tikzpic{-0.5}{
\draw[dashed](0,0)..controls(0,0.7)and(1,0.7)..(1,0);
\draw(0.5,0)..controls(0.5,0.7)and(1.5,0.7)..(1.5,0);
\draw(0,0)node[anchor=north]{$0$}(0.5,0)node[anchor=north]{$0$}
     (1,0)node[anchor=north]{$1$}(1.5,0)node[anchor=north]{$0$};
}
\quad
\tikzpic{-0.5}{
\draw[dashed](0,0)..controls(0,0.5)and(1,0.5)..(1,0);
\draw(0.5,0)node{$\diamond$};
\draw(0.5,0)node[anchor=north]{$0$}
     (0,0)node[anchor=north]{$0$}(1,0)node[anchor=north]{$1$};
} \\
\tikzpic{-0.5}{
\draw[dashed](0.5,0)..controls(0.5,0.5)and(1,0.5)..(1,0);
\draw(0,0)node{$\diamond$};
\draw(0.5,0)node[anchor=north]{$1$}
     (0,0)node[anchor=north]{$0$}(1,0)node[anchor=north]{$1$};
}
\quad
\tikzpic{-0.5}{
\draw(0,0)..controls(0,0.9)and(1.5,0.9)..(1.5,0);
\draw[dashed](0.5,0)..controls(0.5,0.5)and(1,0.5)..(1,0);
\draw(0,0)node[anchor=north]{$0$}(0.5,0)node[anchor=north]{$1$}
     (1,0)node[anchor=north]{$1$}(1.5,0)node[anchor=north]{$0$};
}
\quad
\tikzpic{-0.5}{
\draw(0.5,0)..controls(0.5,0.5)and(1,0.5)..(1,0);
\draw(0,0)node{$\diamond$};
\draw(0.5,0)node[anchor=north]{$1$}
     (0,0)node[anchor=north]{$0$}(1,0)node[anchor=north]{$0$};
}
\end{gathered}
\end{eqnarray}
For nesting, we define a partial sequence of integers in addition to Eqn.(\ref{weight-PMI}):
\begin{eqnarray*}
\tikzpic{-0.5}{
\draw[dashed](0,0)..controls(0,0.9)and(1.5,0.9)..(1.5,0);
\draw(0.5,0)..controls(0.5,0.5)and(1,0.5)..(1,0);
\draw(0,0)node[anchor=north]{$0$}(0.5,0)node[anchor=north]{$1$}
     (1,0)node[anchor=north]{$0$}(1.5,0)node[anchor=north]{$0$};
}
\quad
\tikzpic{-0.5}{
\draw(0,0)..controls(0,0.9)and(1.5,0.9)..(1.5,0);
\draw(0.5,0)..controls(0.5,0.5)and(1,0.5)..(1,0);
\draw(0,0)node[anchor=north]{$0$}(0.5,0)node[anchor=north]{$1$}
     (1,0)node[anchor=north]{$0$}(1.5,0)node[anchor=north]{$0$};
}
\end{eqnarray*} 
A partial sequence of integers is zero for other configurations.
Similarly for crossing, we define a partial sequence of integers 
in addition to Eqn.(\ref{weight-PMI}): 
\begin{eqnarray*}
\tikzpic{-0.5}{
\draw(0,0)..controls(0,0.7)and(1,0.7)..(1,0);
\draw[dashed](0.5,0)..controls(0.5,0.7)and(1.5,0.7)..(1.5,0);
\draw(0,0)node[anchor=north]{$0$}(0.5,0)node[anchor=north]{$1$}
     (1,0)node[anchor=north]{$0$}(1.5,0)node[anchor=north]{$0$};
}
\quad\tikzpic{-0.5}{
\draw(0,0)..controls(0,0.7)and(1,0.7)..(1,0);
\draw(0.5,0)..controls(0.5,0.7)and(1.5,0.7)..(1.5,0);
\draw(0,0)node[anchor=north]{$0$}(0.5,0)node[anchor=north]{$1$}
     (1,0)node[anchor=north]{$0$}(1.5,0)node[anchor=north]{$0$};
}
\end{eqnarray*}
A partial sequence of integers is zero for other configurations.
Sequences $\mathbf{p}_{\mathrm{nes}}$ and $\mathbf{p}_{\mathrm{cr}}$ 
are defined as a sum of partial sequences corresponding to diagrams 
depicted above. 
The quantities $\mathrm{nes}(\pi)$ and $\mathrm{cr}(\pi)$ is defined 
as the sum of $p_{i}$ with $1\le i\le l$.

For nesting of a generalized perfect matching in $\mathrm{PM}_{II}(\lambda)$, 
we define a partial sequence of integers for a configuration of 
arcs, dashed arcs (no diamond since we consider a plane tree without arrows) 
as follows:
\begin{eqnarray*}
\tikzpic{-0.5}{
\draw(0,0)..controls(0,0.9)and(1.5,0.9)..(1.5,0);
\draw[dashed](0.5,0)..controls(0.5,0.5)and(1,0.5)..(1,0);
\draw(0,0)node[anchor=north]{$0$}(0.5,0)node[anchor=north]{$1$}
     (1,0)node[anchor=north]{$1$}(1.5,0)node[anchor=north]{$0$};
}
\quad
\tikzpic{-0.5}{
\draw(0,0)..controls(0,0.9)and(1.5,0.9)..(1.5,0);
\draw(0.5,0)..controls(0.5,0.5)and(1,0.5)..(1,0);
\draw(0,0)node[anchor=north]{$0$}(0.5,0)node[anchor=north]{$1$}
     (1,0)node[anchor=north]{$0$}(1.5,0)node[anchor=north]{$0$};
}
\quad
\tikzpic{-0.5}{
\draw[dashed](0,0)..controls(0,0.7)and(1,0.7)..(1,0);
\draw(0.5,0)..controls(0.5,0.7)and(1.5,0.7)..(1.5,0);
\draw(0,0)node[anchor=north]{$0$}(0.5,0)node[anchor=north]{$0$}
     (1,0)node[anchor=north]{$1$}(1.5,0)node[anchor=north]{$0$};
}
\end{eqnarray*}
A integer sequence $\mathbf{p}_{\mathrm{nes}}$ is defined as 
a sum of partial sequence corresponding to nesting diagrams 
depicted above. We define $\mathrm{nes}(\pi)$ as the sum of $p_i$.

\subsection{Ballot tiling strip}
\label{sec:BTS}
We give a bijection called {\it ballot tiling strip} (BTS for short)
from a natural labelling $L$ of $A(\lambda)$ to a ballot tiling.
We consider the case where the tree $A(\lambda)$ has no arrows.

We define two operations {\it Dyck spread} and {\it ballot spread}
on a ballot path $\rho$.
We first define the Dyck spread following \cite[Section 2]{KMPW12}.
Given a ballot path $\rho$ and a column $s$, the Dyck spread of 
$\rho$ at $s$ is the ballot path $\rho'$ consisting  
of the following points:
\begin{eqnarray*}
\{(x-1,y)\ |\ (x,y)\in\rho, x\le s\}\cup
\{(x,y+1)\ |\ (x,y)\in\rho, x=s\}\cup
\{(x+1,y)\ |\ (x,y)\in\rho, x\ge s\}.
\end{eqnarray*}
Similarly, the ballot spread of $\rho$ at $s$ is the ballot path 
$\rho'$ consisting of the following points:
\begin{eqnarray*}
\{(x-1,0)\ |\ (x,y)\in\rho, x\le s\}\cup
\{(s,y+1)\ |\ (s,y)\in\rho,\}\cup\{(x+1,y+2)\ |\  (x,y)\in\rho, x\ge s\}.
\end{eqnarray*}

We define the spread of a ballot tile at column $s$ by spreading 
the ballot path which characterize the ballot tile.
Then, the spread of a ballot tiling at column $s$ by spreading 
the upper path, the lower path and ballot tiles in-between them.  

We define a growth process of a ballot tiling $T$ which we call 
the {\it strip-grow}.
Given a ballot tiling $T$ of length $N=2n+n'$  
and a column $s$ with $0\le s\le N$, 
we define the Dyck strip grow $\mathrm{DSG}(T,s)$ 
(resp. ballot strip grow $\mathrm{BSG}(T,s)$) to be a ballot 
tiling $T'$ obtained as follows.
Suppose that we obtain a tiling $T''$ by (Dyck or ballot) spreading 
the tiling $T$ at $s$. Let $\mu$ be the upper path of $T''$.
We put one-box tiles on the up steps of $\mu$ which is right to 
the column $s$.
The tiling $T'$ obtained by the above procedure is denoted by 
$\mathrm{DSG}(T,s)$ (resp. $\mathrm{BSG}(T,s)$).

Let $L$ be a natural labelling of a tree $A(\lambda)$.
Recall that one constructs $\mathbf{p}=(p_1,\ldots,p_E)$ from $L$.
The map $\mathrm{BTS}$ from integer sequences $\mathbf{p}$ to 
cover-inclusive ballot tilings is given by successive applications 
of the strip-grow.
We define 
\begin{eqnarray*}
BTS(\mathbf{p}):=
\begin{cases}
\emptyset & E=0, \\
BSG(BTS(p_1,\ldots,p_{n-1}),p_{n}-(n-1)) & n>0, \  \text{$p_{E}$ is underlined}, \\
DSG(BTS(p_1,\ldots,p_{n-1}),p_{n}-(n-1)) & otherwise. 
\end{cases}
\end{eqnarray*}

\begin{example}
Let $L$ be a natural labelling of the following tree:
\begin{eqnarray*}
\scalebox{0.7}{
\tikzpic{-0.5}{
\coordinate
	child{coordinate (c0)   
		child{coordinate (c01)}
		child[missing]
	}
	child[missing]
	child {coordinate (c2)}{
        	child {coordinate (c3) 
			child {coordinate (c4)} child {coordinate (c5)} 
			child {coordinate (c6)}}
		%%child[missing]
		child {coordinate (c7)}
	};
\draw(c0)node{$-$}(c2)node{$-$}(c3)node{$-$};
\node at ($(0,0)!.5!(c2)$){\scalebox{1.5}{$\bullet$}};
\node at ($(c2)!.5!(c7)$){\scalebox{1.5}{$\bullet$}};
\node[fill=white,anchor=south east] at ($(0,0)!.6!(c0)$){\scalebox{1.425}{2}};
\node[fill=white,anchor=south east] at ($(c0)!.6!(c01)$){\scalebox{1.425}{5}};

\node[fill=white,anchor=south west] at ($(0,0)!.6!(c2)$){\scalebox{1.425}{1}};
\node[fill=white,anchor=south east] at ($(c2)!.6!(c3)$){\scalebox{1.425}{3}};
\node[fill=white,anchor=south east] at ($(c3)!.6!(c4)$){\scalebox{1.425}{7}};
\node[fill=white,anchor=east] at ($(c3)!.7!(c5)$){\scalebox{1.425}{8}};
\node[fill=white,anchor=south west] at ($(c3)!.5!(c6)$){\scalebox{1.425}{6}};
\node[fill=white,anchor=south west] at ($(c2)!.7!(c7)$){\scalebox{1.425}{4}};
}
}
\end{eqnarray*}
The growth of ballot tilings are 
\begin{eqnarray*}
\begin{gathered}
\nonumber
\tikzpic{-0.5}{
\draw[thick](0,0)--(0.6,0.6);
\draw(0.3,0.3)node{\rotatebox{-45}{$-$}};
}\quad
\tikzpic{-0.5}{
\draw[thick](0,0)--(0.3,0.3)--(0.6,0)--(1.2,0.6);
\draw(0.3,0.3)--(0.9,0.9)--(1.2,0.6)(0.6,0.6)--(0.9,0.3);
}\quad
\tikzpic{-0.5}{
\draw[thick](0,0)--(0.3,0.3)--(0.6,0)--(1.5,0.9)--(1.8,0.6);
\draw(0.3,0.3)--(0.9,0.9)--(1.2,0.6)(0.6,0.6)--(0.9,0.3);
}\quad
\tikzpic{-0.5}{
\draw[thick](0,0)--(0.3,0.3)--(0.6,0)--(1.5,0.9)--(1.8,0.6)--(2.4,1.2);
\draw(2.1,0.9)node{\rotatebox{-45}{$-$}};
\draw(0.3,0.3)--(0.9,0.9)--(1.2,0.6)(0.6,0.6)--(0.9,0.3);
}\quad
\tikzpic{-0.5}{
\draw[thick](-0.6,0)--(0,0.6)--(0.6,0)--(1.5,0.9)--(1.8,0.6)--(2.4,1.2);
\draw(-0.3,0.3)node{\rotatebox{-45}{$-$}};
\draw(0.3,0.3)--(0.9,0.9)--(1.2,0.6)(0.6,0.6)--(0.9,0.3);
\draw(0,0.6)--(0.6,1.2)--(0.9,0.9)(0.3,0.9)--(0.6,0.6);
\draw(0.9,0.9)--(1.2,1.2)--(1.5,0.9)--(2.1,1.5)--(2.4,1.2)(1.8,1.2)--(2.1,0.9);
} \\
\tikzpic{-0.5}{
\draw[thick](-0.6,0)--(0,0.6)--(0.6,0)--(1.5,0.9)--(1.8,1.2)--(2.4,0.6)--(3,1.2);
\draw(-0.3,0.3)node{\rotatebox{-45}{$-$}};
\draw(0.3,0.3)--(0.9,0.9)--(1.2,0.6)(0.6,0.6)--(0.9,0.3);
\draw(0,0.6)--(0.6,1.2)--(0.9,0.9)(0.3,0.9)--(0.6,0.6);
\draw(0.9,0.9)--(1.2,1.2)--(1.5,0.9);
\draw(1.8,1.2)--(2.4,1.8)--(3,1.2)(2.1,1.5)--(2.7,0.9)(2.1,0.9)--(2.7,1.5);
}\quad
\tikzpic{-0.5}{
\draw[thick](-0.6,0)--(0,0.6)--(0.6,0)--(1.5,0.9)--(1.8,1.2)--
	(2.1,0.9)--(2.4,1.2)--(3,0.6)--(3.6,1.2);
\draw(-0.3,0.3)node{\rotatebox{-45}{$-$}};
\draw(0.3,0.3)--(0.9,0.9)--(1.2,0.6)(0.6,0.6)--(0.9,0.3);
\draw(0,0.6)--(0.6,1.2)--(0.9,0.9)(0.3,0.9)--(0.6,0.6);
\draw(0.9,0.9)--(1.2,1.2)--(1.5,0.9);
\draw(2.4,1.2)--(3,1.8)--(3.6,1.2)(2.7,1.5)--(3.3,0.9)(2.7,0.9)--(3.3,1.5);
\draw(1.8,1.2)--(2.7,2.1)--(3,1.8)(2.1,1.5)--(2.4,1.2)(2.4,1.8)--(2.7,1.5);
}
\quad
\tikzpic{-0.5}{
\draw[thick](-0.6,0)--(0,0.6)--(0.6,0)--(1.5,0.9)--(1.8,1.2)--
	(2.1,0.9)--(2.4,1.2)--(2.7,0.9)--(3,1.2)--(3.6,0.6)--(4.2,1.2);
\draw(-0.3,0.3)node{\rotatebox{-45}{$-$}};
\draw(0.3,0.3)--(0.9,0.9)--(1.2,0.6)(0.6,0.6)--(0.9,0.3);
\draw(0,0.6)--(0.6,1.2)--(0.9,0.9)(0.3,0.9)--(0.6,0.6);
\draw(0.9,0.9)--(1.2,1.2)--(1.5,0.9);
\draw(3,1.2)--(3.6,1.8)--(4.2,1.2)(3.3,1.5)--(3.9,0.9)(3.3,0.9)--(3.9,1.5);
\draw(2.7,1.5)--(3.3,2.1)--(3.6,1.8)(2.7,1.5)--(3,1.2)(3,1.8)--(3.3,1.5);
\draw(1.8,1.2)--(2.4,1.8)--(2.7,1.5);
\draw(2.4,1.8)--(3,2.4)--(3.3,2.1)(2.7,2.1)--(3,1.8);
}
\end{gathered}
\end{eqnarray*}
\end{example}

Given a ballot path $\lambda$, a sequence of integers $\mathbf{p}$ 
is bijective to an inverse pre-order word $\sigma$.
Thus we define 
\begin{eqnarray*}
\mathrm{BTS}(\lambda,\sigma):=\mathrm{BTS}(\mathbf{p}),
\end{eqnarray*}
where $\mathbf{p}$ is represented by a pair $(\lambda,\sigma)$.
\begin{theorem}
\label{thrm:BTSinpo} 
For a ballot path $\lambda$ and an inverse pre-order word $\sigma$
associated to a natural labelling $L$ of the tree $A(\lambda)$, 
we have 
\begin{eqnarray*}
\mathrm{art}(\mathrm{BTS}(\lambda,\sigma))=\mathrm{inv}(\sigma).
\end{eqnarray*}
\end{theorem}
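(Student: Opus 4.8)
The plan is to prove $\mathrm{art}(\mathrm{BTS}(\lambda,\sigma))=\mathrm{inv}(\sigma)$ by induction on the number of edges $E$ of the tree $A(\lambda)$, tracking how a single strip-grow changes both sides. The base case $E=0$ is immediate, since $\mathrm{BTS}(\emptyset)=\emptyset$ has $\mathrm{art}=0$ and the empty inverse pre-order word has $\mathrm{inv}=0$. For the inductive step, write $\sigma=\sigma_1\cdots\sigma_E$ in the one-line notation, let $L$ be the associated natural labelling, and let $e_E$ be the edge of $A(\lambda)$ carrying the label $E$. Deleting $e_E$ produces the tree $A(\lambda')$ of a ballot path $\lambda'$ on $E-1$ edges, with inverse pre-order word $\sigma'=\sigma_1\cdots\sigma_{E-1}$ (the restriction of $\sigma$ to the first $E-1$ positions). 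By definition of $\mathrm{BTS}$, we have $\mathrm{BTS}(\lambda,\sigma)=\mathrm{DSG}(\mathrm{BTS}(\lambda',\sigma'),p_E-(E-1))$ if position $E$ is not underlined in $\sigma$, and the ballot-strip-grow $\mathrm{BSG}$ in place of $\mathrm{DSG}$ if it is. By the induction hypothesis, $\mathrm{art}(\mathrm{BTS}(\lambda',\sigma'))=\mathrm{inv}(\sigma')$, so it suffices to compute the increment $\mathrm{art}(\mathrm{BTS}(\lambda,\sigma))-\mathrm{art}(\mathrm{BTS}(\lambda',\sigma'))$ contributed by the last strip-grow and to match it with $\mathrm{inv}(\sigma)-\mathrm{inv}(\sigma')$, the contribution of the last entry $\sigma_E$ to the inversion statistic.

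The first key step is an \emph{area/tiles bookkeeping lemma} for a single Dyck strip-grow and a single ballot strip-grow, analogous to the corresponding computation for the Dyck tiling strip in \cite{KMPW12}. When we spread a ballot tiling at column $s$ and then cap the up steps to the right of $s$ with one-box tiles, the skew shape gains a prescribed number of boxes and the tile count changes in a controlled way; since $\mathrm{art}=(\mathrm{area}+\mathrm{tiles})/2$ for ordinary tiles but the anchor convention of type BI sets $\mathrm{tiles}(B)=0$ and $\mathrm{area}(B)=2n+1+n'$ for a ballot tile of length $(2n,n')$, I would verify that in both the $\mathrm{DSG}$ and $\mathrm{BSG}$ cases the increment of $\mathrm{art}$ equals exactly the number of boxes (resp. plus one) traversed by the new strip, which is governed by the spread column $s=p_E-(E-1)$ and the structure of $S_{e_E}$. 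The second key step is to read off $\mathrm{inv}(\sigma)-\mathrm{inv}(\sigma')$ from the definition of the inversion of an inverse pre-order word: the new last position $E$ contributes $\#\{j<E : \sigma_j<\sigma_E,\ j,E\text{ both unbarred}\}$ plus twice $\#\{j<E : \sigma_j>\sigma_E,\ j\text{ or }E\text{ barred}\}$, and one must show this equals the same strip length. The bridge between these two counts is precisely the definition $p'_{n_e}=2|T_e|-|S_e|+1$ together with the passage $\mathbf p'\mapsto\mathbf p$ through the $m_i$-shifts: the quantity $p_E-(E-1)$ unwinds into exactly the signed inversion count attached to $\sigma_E$, with the factor of $2$ coming from the $|T_e|$ term and the underline correction $m_E$ accounting for the dotted (barred) edges along $S_{e_E}$.

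The main obstacle I anticipate is the case analysis driven by the dots ($\bullet$) on edges and their interaction with the underlines in $\sigma$ and with the two spread operations. Concretely: an underlined last position forces a ballot strip-grow, which both adds an anchor box and alters the vertical displacement of the part of the tiling to the right of $s$ by two units rather than one, so the area increment and the tile-count increment both shift, and one has to check that the $+1$ in $p'_{n_e}=2|T_e|-|S_e|+1$ together with the $m_i$-shift exactly compensates. I would handle this by isolating a single \emph{local} claim — for each added edge, the column $s$ at which we spread equals (number of boxes already present strictly to the right in the relevant rows) adjusted by $|S_{e_E}|$ — and proving it directly from the recursive construction of $A(\lambda)$ via conditions $(\Diamond1)$–$(\Diamond8)$, so that the global identity follows by summing. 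A secondary technical point is confirming that the modified post-order / pre-order conventions make the restriction of $\sigma$ to its first $E-1$ letters genuinely the inverse pre-order word of $A(\lambda')$; this is essentially the remark that the pre-order word of the reference natural labelling is the identity, extended to show deletion of the maximal-label edge commutes with taking pre-order words, and I would dispatch it quickly before entering the main induction.
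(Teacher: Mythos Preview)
The paper states this theorem without proof: it appears immediately after the definition of $\mathrm{BTS}$ in Section~\ref{sec:BTS} and is simply invoked later (in Corollary~\ref{cor:BTS}), evidently because the authors regard it as a routine extension of the analogous statement for the Dyck tiling strip in \cite{KMPW12}. So there is no ``paper's own proof'' to compare against.

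Your inductive strategy is exactly the natural one and is what the DTS argument in \cite{KMPW12} does in the type~A setting: show that each strip-grow step increases $\mathrm{art}$ by precisely the contribution of the new last letter $\sigma_E$ to $\mathrm{inv}(\sigma)$. The two ingredients you isolate---an area/tiles bookkeeping lemma for a single $\mathrm{DSG}$ or $\mathrm{BSG}$, and the unwinding of the spread column $p_E-(E-1)$ through $p'_{n_e}=2|T_e|-|S_e|+1$ and the $m_i$-shifts---are the right ones, and the match with the two summands in the definition of $\mathrm{inv}$ (weight~$1$ for unbarred pairs, weight~$2$ when an underline intervenes) is exactly what the factor of~$2$ in $2|T_e|$ and the ballot spread's vertical shift by~$2$ are designed to produce. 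The case analysis you anticipate for dotted edges is genuine but local, and your plan to reduce it to a single claim per added edge is the clean way to organize it. The secondary point about restriction of $\sigma$ to its first $E-1$ letters is indeed minor: since the edge labelled $E$ is necessarily a leaf, deleting it leaves $T_e$, $S_e$, and hence $p'_1,\ldots,p'_{E-1}$ unchanged, and the $m_i$ for $i<E$ are likewise unaffected because they count only ancestors with smaller $a'$-value.
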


\subsection{Hermite history for a ballot tiling}
\label{sec:HhBT}
We consider a ballot tiling in the shape $\lambda/\mu$.
We define an Hermite history for a ballot tiling by replacing 
Dyck with ballot in the definitions in the first three paragraphs 
of Section \ref{sec:Hh}.
Though we have three types of Hermite histories, 
we consider only two types of Hermite histories for a generalized 
perfect matching in $\mathrm{PM}_{I}(\mu)$: type I and type III.
For a generalized perfect matching $\pi\in\mathrm{PM}_{II}(\lambda)$,
we consider only Hermite histories of type I. 

The main difference between Hermite history for a Dyck tiling and 
the one for a ballot tiling is that 
the statistics tiles $\mathrm{tiles}(T)$ is zero for a ballot 
tile of length $(2n,n')$ with $n'\ge1$.

Let $\pi$ be a generalized perfect matching in $\mathrm{PM}'_{I}(\mu)$.
By definition, a non-zero element in the integer sequence 
$\mathbf{p}_{\mathrm{nes}}(\pi)$ is on an up step of $\mu$.
Thus a labelling $H$ can be obtained by deleting zeros, which are 
on the down steps of $\mu$, from $\mathbf{p}_{\mathrm{nes}}$.  
The lower path $\lambda$ can be obtained as follows.
Recall that a path $\mu$ consists of $U$'s and $D$'s.
We put an integer sequence $\mathbf{p}_{\mathrm{nes}}$ on 
the path $\mu$.
We enumerate the up steps in $\mu$ from right to left by 
$1,2,..,N_{U}$ where $N_{U}$ is the number of up steps in $\mu$.
We perform the following operation starting from the first up step and 
ending with the $N_{U}$-th up step.
Suppose that the $i$-th up step has an integer $n_i$. 
Let $\mu_i$ be a partial path in $\mu$ which is right to the $i$-th up step.
We perform the operations (A) and (B) (See Section \ref{subsec:im}) on $\mu_{i}$.
We have paired $UD$'s and unpaired $D$'s and $U$'s.
We enumerate the unpaired $D$'s and $U$'s from left to right by $1,2,\ldots$ and 
let $m_{j}$ be the position of the $j$-th $U$ or $D$ in $\mu$.
We denote by $m_0$ the position of the $i$-th up step in $\mu$.
For $1\le j\le n_{i}$, we move the $j$-th unpaired $U$ or $D$ to the position 
$m_{j-1}$ and put an up step at the position $m_{n_{i}}$.
We obtain the lower path $\lambda$ in this way.

\begin{example}
For a ballot path $\mu=UUUUDDUDD$ and $\mathbf{p}_{\mathrm{nes}}=(0,0,2,0,0)$, 
the lower path $\lambda$ is obtained as 
\begin{eqnarray*}
\tikzpic{-0.5}{
\draw(0,0)node[anchor=north]{$U$}node[anchor=south]{$0$};
\draw(0.4,0)node[anchor=north]{$U$}node[anchor=south]{$0$};
\draw(0.8,0)node[anchor=north]{$U$}node[anchor=south]{$2$};
\draw(1.2,0)node[anchor=north]{$U$}node[anchor=south]{$0$};
\draw(1.6,0)node[anchor=north]{$D$};
\draw(2.0,0)node[anchor=north]{$D$};
\draw(2.4,0)node[anchor=north]{$U$}node[anchor=south]{$0$};
\draw(2.8,0)node[anchor=north]{$D$};
\draw(3.2,0)node[anchor=north]{$D$};
}
\rightarrow
\tikzpic{-0.5}{
\draw(0,0)node[anchor=north]{$U$};
\draw(0.4,0)node[anchor=north]{$U$};
\draw(0.8,0)node[anchor=north]{$U$}node[anchor=south]{$2$};
\draw(1.2,0)node[anchor=north]{$U$};
\draw(1.6,0)node[anchor=north]{$D$};
\draw(2.0,0)node[anchor=north]{$D$};
\draw(2.4,0)node[anchor=north]{$U$};
\draw(2.8,0)node[anchor=north]{$D$};
\draw(3.2,0)node[anchor=north]{$D$};
\draw(1.2,0)..controls(1.2,0.3)and(1.6,0.3)..(1.6,0);
\draw(2.4,0)..controls(2.4,0.3)and(2.8,0.3)..(2.8,0);
}
\rightarrow
\tikzpic{-0.5}{
\draw(0,0)node[anchor=north]{$U$};
\draw(0.4,0)node[anchor=north]{$U$};
\draw(0.8,0)node[anchor=north]{$D$};
\draw(1.2,0)node[anchor=north]{$U$};
\draw(1.6,0)node[anchor=north]{$D$};
\draw(2.0,0)node[anchor=north]{$D$};
\draw(2.4,0)node[anchor=north]{$U$};
\draw(2.8,0)node[anchor=north]{$D$};
\draw(3.2,0)node[anchor=north]{$U$};
}
\end{eqnarray*}
\end{example}

The following theorem is clear from discussions in Section \ref{sec:bijHer} and 
the constructions of $\mathbf{p}_{\mathrm{nes}}$ and the lower path $\lambda$.
\begin{theorem}
\label{thrm:IdashI}
The map from $\mathbf{p}_{\mathrm{nes}}(\pi)$ to a labelling $H$ of type I 
is a bijection.
\end{theorem}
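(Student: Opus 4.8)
The plan is to show that the map $\pi\mapsto H$, which simply deletes from $\mathbf{p}_{\mathrm{nes}}(\pi)$ the zeros sitting on the down steps of $\mu$, is a bijection from $\mathrm{PM}'_{I}(\mu)$ onto the set of labellings $H$ for which $(\mu,H)$ is a Hermite history of type I. Because $\mu$ is fixed, passing back from $H$ to the full sequence $\mathbf{p}_{\mathrm{nes}}$ is automatic (insert $0$ on every down step), so the content of the statement is twofold: that the $H$ produced always lies in the admissible range, and that $\pi$ can be recovered uniquely from $\mathbf{p}_{\mathrm{nes}}(\pi)$. The first thing I would record is that $\mathbf{p}_{\mathrm{nes}}(\pi)$ is supported on up steps: inspecting the finitely many local pictures defining $\mathbf{p}_{\mathrm{nes}}$ on $\mathrm{PM}'_{I}(\mu)$, every unit contribution is placed on the left endpoint of a nested solid arc, on an endpoint of a nested dashed arc, on the right endpoint of a dashed arc that closes while a solid arc crosses it from the left, or on an up step carrying a diamond that lies under a solid arc — in each case an up step. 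Hence $H$ is well defined on the up steps of $\mu$.

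Next I would verify the range condition. For an up step $u$ of $\mu$ at height $h_{u}$, the value $H(u)$ equals exactly the number of arcs and dashed arcs of $\pi$ that ``cover'' $u$ in the sense dictated by those pictures, counting also the single covering arc that contributes through a diamond on $u$; each such arc or dashed arc occupies a step of $\mu$ lying strictly below the level of $u$, so $H(u)\le h_{u}-1$. Conditions (C1) and (C2) in the definition of $\mathrm{PM}'_{I}(\mu)$ are precisely what forbids the configurations (two nested dashed arcs, a dashed arc to the right of the diamond) that would let one step below $u$ be counted twice, so there is no overcounting and $H(u)\in[0,h_{u}-1]$. By the ballot analogue of the definitions carried over in Section~\ref{sec:HhBT}, this is exactly the range required of a type-I labelling, so the image of the map lies in the claimed target set.

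For injectivity and surjectivity together I would exhibit the inverse explicitly. Given $\mu$ and an admissible $H$, scan $\mu$ from left to right maintaining the stack of currently open up steps; when a down step is reached, close a solid arc to the open up step selected by the relevant label, and when the scan ends pair the remaining open up steps into dashed arcs and assign a diamond to the unique leftover (if any), the choices again being forced by the values of $H$. This is the generalized-perfect-matching counterpart of the box-placement algorithm of Section~\ref{sec:bijHer}: there, placing $H(u)$ boxes below each up step $u$ of $\mu$ and running the local move was shown to be invertible, and the identical bookkeeping shows here both that the reconstructed $\pi$ automatically satisfies (C1) and (C2) and that $\mathbf{p}_{\mathrm{nes}}(\pi)$ restricts to $H$; hence the two maps are mutually inverse. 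Composing with the construction of the lower path $\lambda$ given just before the theorem then identifies $\mathrm{PM}'_{I}(\mu)$ with the type-I Hermite histories of upper path $\mu$ (equivalently, via the Hermite history of Section~\ref{sec:HhBT}, with the ballot tilings of $\ast/\mu$).

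The main obstacle, I expect, is the joint bookkeeping of the three kinds of pieces — solid arcs, dashed arcs, and the diamond — during the greedy reconstruction, namely checking that conditions (C1) and (C2) are exactly the constraints that make this reconstruction both well defined and invertible, i.e. that the forbidden configurations coincide precisely with those forced by an out-of-range label or colliding with the diamond. A secondary point to treat with care is the ballot-specific feature stressed in Section~\ref{sec:HhBT}, that a ballot tile of length $(2n,n')$ with $n'\ge1$ contributes $0$ to the tiles statistic: this is what guarantees that the diamond and the dashed arcs feed into the nesting count without affecting the count of solid arcs that a closing down step must resolve, so that the lower path produced by the reconstruction agrees with the one built in the construction preceding the theorem.
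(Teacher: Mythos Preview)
Your proposal is correct and follows essentially the same approach as the paper, which states the theorem as ``clear from discussions in Section~\ref{sec:bijHer} and the constructions of $\mathbf{p}_{\mathrm{nes}}$ and the lower path $\lambda$'' and gives no further argument. You have simply fleshed out what the paper leaves implicit: that the nesting weights are supported on up steps, that the values land in the admissible range $[0,h_u-1]$, and that the reconstruction of $\pi$ from $(\mu,H)$ is the ballot analogue of the box-placement bijection of Section~\ref{sec:bijHer}.
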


Similarly, we can construct a labelling $H''$ of type III from 
$\mathbf{p}_{\mathrm{cr}}(\pi)$ by deleting zeros, which are on the 
up steps of $\mu$, from $\mathbf{p}_{\mathrm{cr}}(\pi)$. 
The lower path $\lambda$ can be obtained in a similar way to 
$\mathbf{p}_{\mathrm{nes}}(\pi)$.
We replace $\mathbf{p}_{\mathrm{nes}}$ with $\mathbf{p}_{\mathrm{cr}}$,an up step 
with a down step, left with right, right with left
in the operation to obtain $\lambda$ from $\mathbf{p}_{\mathrm{nes}}(\pi)$.
Then, we have the following theorem.
\begin{theorem}
\label{thrm:IdashIII}
The map from $\mathbf{p}_{\mathrm{cr}}(\pi)$ to a labelling $H''$ of type III
is a bijection.
\end{theorem}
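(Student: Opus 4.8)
The plan is to obtain Theorem~\ref{thrm:IdashIII} from Theorem~\ref{thrm:IdashI} by means of the left--right reflection that interchanges the roles of up steps and down steps. Recall from Section~\ref{sec:Hh} that an Hermite history of type~III is read off from the very same system of non-intersecting paths on a ballot tiling as an Hermite history of type~I, after exchanging ``northwest'' with ``northeast'', ``southeast'' with ``southwest'', and ``up step'' with ``down step''; on a Dyck skew shape this is literally the reflection about a vertical line, and on a shifted skew shape it is that same reflection away from the anchor column. Correspondingly, comparing the diagrams in Section~\ref{sec:gpm}, one sees that the partial integer sequences defining $\mathbf{p}_{\mathrm{cr}}(\pi)$ are precisely the mirror images of those defining $\mathbf{p}_{\mathrm{nes}}(\pi)$, together with the bookkeeping swap among an arc, a dashed arc and the diamond; and the recipe for constructing $\lambda$ from $\mathbf{p}_{\mathrm{cr}}(\pi)$ stated just above the theorem is exactly the recipe for constructing $\lambda$ from $\mathbf{p}_{\mathrm{nes}}(\pi)$ with ``up step'', ``left'' and ``right'' reflected. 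Thus the type~III construction is the type~I construction conjugated by this reflection, and the proof should proceed by transporting the argument behind Theorem~\ref{thrm:IdashI} through it.

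First I would check that the output is a genuine type~III Hermite history. A non-zero entry of $\mathbf{p}_{\mathrm{cr}}(\pi)$ always lies on a down step of $\mu$, directly from the crossing diagrams, so deleting the zeros sitting on up steps of $\mu$ leaves precisely a labelling of the down steps of $\mu$; moreover the step-by-step operation producing $\lambda$ moves the unpaired steps to the left, so the label attached to a down step of height $h'$ falls in $[0,h'-1]$, as required by the definition of $H''$. Since this rule is the mirror of the one used for $\mathbf{p}_{\mathrm{nes}}$, the path $\lambda$ is again a valid lower path for $\mu$, and the argument of Section~\ref{sec:bijHer} applied after the reflection identifies the pair $(H'',\lambda)$ with the type~III Hermite history of a uniquely determined cover-inclusive ballot tiling in $\mathcal{B}^{I}(\lambda/\mu)$.

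For injectivity and surjectivity I would invert the two stages: starting from a type~III Hermite history with upper path $\mu$, undo the ``merge unpaired steps'' operation to recover the positions of the non-zero labels among the down steps of $\mu$, then re-insert the zeros on the up steps of $\mu$; this reconstructs $\mathbf{p}_{\mathrm{cr}}(\pi)$, hence $\pi\in\mathrm{PM}'_{I}(\mu)$, uniquely, and by the reflected form of the proof of Theorem~\ref{thrm:IdashI} these two maps are mutually inverse. The one place that is not a pure transcription of the type~I argument is the right end of the shifted shape, where the diagram fails to be symmetric under the reflection: one must check that the statement (S1) for ballot tiles and the role of the diamond together with condition (C2) on $\pi$ are correctly matched against the reflected constructions. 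Verifying this compatibility at the anchor column is, I expect, the only genuine obstacle; granting it, the bijectivity follows exactly as in Section~\ref{sec:bijHer}.
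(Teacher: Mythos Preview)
Your proposal is correct and follows the same approach as the paper. The paper in fact gives no proof at all beyond the sentence preceding the theorem: it describes the type~III construction as the type~I construction with the formal substitutions $\mathbf{p}_{\mathrm{nes}}\leftrightarrow\mathbf{p}_{\mathrm{cr}}$, up step $\leftrightarrow$ down step, left $\leftrightarrow$ right, and then simply states the result. Your write-up is a more detailed elaboration of exactly this swap argument.

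One remark on the subtlety you flag. Your worry about the anchor column is understandable, but it is not really an obstacle here: the swap is purely algorithmic rather than a literal geometric reflection of the shifted shape. For $\pi\in\mathrm{PM}'_{I}(\mu)$ the crossing diagrams place every nonzero entry of $\mathbf{p}_{\mathrm{cr}}(\pi)$ on a right endpoint of an arc, i.e.\ on a $D$ of $\mu$, so the resulting labels sit only on down steps and never on the surplus $U$'s at the right end of the ballot path. The operation rebuilding $\lambda$ then works entirely on the partial word to the left of each down step, where the pairing rules (A)--(B) behave exactly as in the type~I case after the swap. Thus the compatibility at the anchor column that you single out does not require a separate verification, and the argument goes through as a straightforward transcription of the type~I proof.
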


\begin{example}
Let $\pi_1$ be a generalized link pattern as in Example \ref{ex:GLP}.
We have two types of labellings: $H=(0,0,1,0,3,1,1,1,1)$ for type I and 
$H''=(1,0,1,4)$.
The lower paths $\lambda_{\mathrm{nes}}$ and $\lambda_{\mathrm{cr}}$ 
are given by $\lambda_{\mathrm{nes}}=UUDDUDUDUUUUU$ and 
$\lambda_{\mathrm{cr}}=UDUUDUDUDUUUU$.
Ballot tilings corresponding to $(\mu,H)$ and $(\mu,H'')$ are depicted as 
\begin{eqnarray*}
(\mu,H)=
\tikzpic{-0.5}{
\draw[thick](0,0)--(0.8,0.8)--(1.6,0)--(2,0.4)--(2.4,0)--(2.8,0.4)
             --(3.2,0)--(5.2,2);
\draw[thick](1.2,0.4)--(2.4,1.6)--(3.2,0.8)--(4.8,2.4)--(5.2,2);
\draw(2,1.2)--(2.8,0.4)(2.4,0.8)--(2.8,1.2)(2.8,0.4)--(3.2,0.8)--(3.6,0.4);
\draw(4,0.8)--(3.6,1.2)(4.4,1.2)--(4,1.6)(4.8,1.6)--(4.4,2);
\draw(0.4,0.4)node{\rotatebox{-45}{$-$}}(1.6,0.8)node{\rotatebox{-45}{$-$}};
\draw[red](1.4,0.6)--(1.6,0.4)--(2,0.8)--(2.6,0.2);
\draw[red](2.2,1.4)--(3.4,0.2);
\foreach \x in {0,1,2,3}{
\draw[red](3.4+0.4*\x,1+0.4*\x)--(3.8+0.4*\x,0.6+0.4*\x);
}
\draw(0.2,0.2)node[anchor=south east]{$0$};
\draw(0.6,0.6)node[anchor=south east]{$0$};
\draw(1.4,0.6)node[anchor=south east]{$1$};
\draw(1.8,1)node[anchor=south east]{$0$};
\draw(2.2,1.4)node[anchor=south east]{$3$};
\draw(3.4,1)node[anchor=south east]{$1$};
\draw(3.8,1.4)node[anchor=south east]{$1$};
\draw(4.2,1.8)node[anchor=south east]{$1$};
\draw(4.6,2.2)node[anchor=south east]{$1$};
}\qquad
(\mu,H'')=
\tikzpic{-0.5}{
\draw[thick](0,0)--(0.4,0.4)--(0.8,0)--(1.6,0.8)--(2,0.4)--(2.4,0.8)
            --(2.8,0.4)--(3.2,0.8)--(3.6,0.4)--(5.2,2);
\draw[thick](0.4,0.4)--(0.8,0.8)--(1.2,0.4)--(2.4,1.6)--(3.2,0.8)
            --(4.8,2.4)--(5.2,2);
\draw(4.4,2)--(4.8,1.6)(4,1.6)--(4.4,1.2)(3.6,1.2)--(4,0.8);
\draw(2,1.2)node{\rotatebox{-45}{$-$}};
\draw(2.8,1.2)node{\rotatebox{45}{$-$}};
\draw[red](1,0.6)--(0.6,0.2)(1.8,0.6)--(2.4,1.2)--(2.8,0.8)--(3,1);
\draw[red](3.4,0.6)--(5,2.2);
\draw(1,0.6)node[anchor=south west]{$1$};
\draw(2.6,1.4)node[anchor=south west]{$0$};
\draw(3,1)node[anchor=south west]{$1$};
\draw(5,2.2)node[anchor=south west]{$4$};
}
\end{eqnarray*}
\end{example}

Let $\pi$ be a generalized perfect matching in $\mathrm{PM}_{I}(\mu)$. 
By definition, a non-zero element in the integer sequences 
$\mathbf{p}_{\mathrm{nes}}(\pi)$ and $\mathbf{p}_{\mathrm{cr}}(\pi)$ 
is on an up step of $\mu$.
The labelling $H_{\mathrm{nes}}$ (resp. $H_{\mathrm{cr}}$) can be obtained 
by deleting zeros, which are on the down steps of $\mu$, from 
$\mathbf{p}_{\mathrm{nes}}$ (resp. $\mathbf{p}_{\mathrm{cr}}$).
The lower paths $\lambda_{\mathrm{nes}}$ and $\lambda_{\mathrm{cr}}$ can 
be obtained as follow.
Since we have the same algorithm for $\mathbf{p}_{\mathrm{nes}}$ 
and $\mathbf{p}_{\mathrm{cr}}$, we abbreviate $\mathbf{p}_{\mathrm{nes}}$ or
$\mathbf{p}_{\mathrm{nes}}$ as $\mathbf{p}$.
We put the integer sequence $\mathbf{p}$ on the path $\mu$.
We enumerate the up steps in $\mu$ from right to left by $1,2,\ldots,N_{U}$ 
where $N_{U}$ is the number of up steps in $\mu$.
We perform the following operations on $\mu$ starting from the first up step and 
ending with the $N_{U}$-th up step.
Suppose that the $i$-th up step has an integer $n_i$.
Let $\mu_{i}$ be a partial path in $\mu$ which is right to the $i$-th up step.
We perform the operations (A) and (B) on $\mu_{i}$ as in Section \ref{subsec:im}.
We have paired $UD$'s and unpaired $D$'s and $U$'s.
For the unpaired $U$'s, we make a pair of two $U$'s from right to left and 
connect them by a dashed arc.
We have at most one $U$ which does not form a pair. 
We enumerate $U$ in a paired $UU$'s, unpaired $D$'s and a $U$ in $\mu_{i}$ 
from left to right by $1,2,\ldots$ and let $m_j$ be the position of 
the $j$-th $U$ or $D$ in $\mu$.
Note that we do not enumerate paired $UD$'s.
We denote by $m_0$ the position of the $i$-th up step in $\mu$.
Let $N(i)$ be an integer such that 
the sum of the numbers of paired $UU$, unpaired $D$ and unpaired $U$ 
is $n_{i}$ in a partial sequence $\nu_{i}$ in $\mu_{i}$
which is left to the $(N(i)+1)$-th element. 
If such $N(i)$ does not exist, then we attach a sequence consisting of only $D$'s 
to the right of $\mu_{i}$ and define $N(i)$ as above.
For $1\le j\le N(i)$, we move the $j$-th $U$ or $D$ to
the position $m_{j-1}$ and put an up step at the position $m_{N(i)}$.
Further, if two $U$'s are connected by dashed arc and moved leftward, 
we flip these two $U$'s into two $D$'s.
The obtained path is the lower path $\lambda$.

\begin{example}
Let $\mu:=UUUDUUDUUUU$ and $\mathbf{p}:=(0,0,0,0,0,2,0,0,0,1,1)$. 
The lower path $\lambda$ is obtained as 
\begin{eqnarray*}
\tikzpic{-0.5}{
\draw(0,0)node[anchor=north]{$U$}node[anchor=south]{$0$};
\draw(0.4,0)node[anchor=north]{$U$}node[anchor=south]{$0$};
\draw(0.8,0)node[anchor=north]{$U$}node[anchor=south]{$0$};
\draw(1.2,0)node[anchor=north]{$D$};
\draw(1.6,0)node[anchor=north]{$U$}node[anchor=south]{$0$};
\draw(2.0,0)node[anchor=north]{$U$}node[anchor=south]{$2$};
\draw(2.4,0)node[anchor=north]{$D$};
\draw(2.8,0)node[anchor=north]{$U$}node[anchor=south]{$0$};
\draw(3.2,0)node[anchor=north]{$U$}node[anchor=south]{$0$};
\draw(3.6,0)node[anchor=north]{$U$}node[anchor=south]{$1$};
\draw(4,0)node[anchor=north]{$U$}node[anchor=south]{$1$};
}
&\rightarrow&
\tikzpic{-0.5}{
\draw(0,0)node[anchor=north]{$U$}node[anchor=south]{$0$};
\draw(0.4,0)node[anchor=north]{$U$}node[anchor=south]{$0$};
\draw(0.8,0)node[anchor=north]{$U$}node[anchor=south]{$0$};
\draw(1.2,0)node[anchor=north]{$D$};
\draw(1.6,0)node[anchor=north]{$U$}node[anchor=south]{$0$};
\draw(2.0,0)node[anchor=north]{$U$}node[anchor=south]{$2$};
\draw(2.4,0)node[anchor=north]{$D$};
\draw(2.8,0)node[anchor=north]{$U$}node[anchor=south]{$0$};
\draw(3.2,0)node[anchor=north]{$U$}node[anchor=south]{$0$};
\draw(3.6,0)node[anchor=north]{$U$}node[anchor=south]{$1$};
\draw(4,0)node[anchor=north]{$D$};
}\rightarrow
\tikzpic{-0.5}{
\draw(0,0)node[anchor=north]{$U$};
\draw(0.4,0)node[anchor=north]{$U$};
\draw(0.8,0)node[anchor=north]{$U$};
\draw(1.2,0)node[anchor=north]{$D$};
\draw(1.6,0)node[anchor=north]{$U$};
\draw(2.0,0)node[anchor=north]{$U$}node[anchor=south]{$2$};
\draw(2.4,0)node[anchor=north]{$D$};
\draw(2.8,0)node[anchor=north]{$U$};
\draw(3.2,0)node[anchor=north]{$U$};
\draw(3.6,0)node[anchor=north]{$D$};
\draw(4,0)node[anchor=north]{$U$};
} \\
&\rightarrow&
\tikzpic{-0.5}{
\draw(0,0)node[anchor=north]{$U$};
\draw(0.4,0)node[anchor=north]{$U$};
\draw(0.8,0)node[anchor=north]{$U$};
\draw(1.2,0)node[anchor=north]{$D$};
\draw(1.6,0)node[anchor=north]{$U$};
\draw(2.0,0)node[anchor=north]{$U$}node[anchor=south]{$2$};
\draw(2.4,0)node[anchor=north]{$D$};
\draw(2.8,0)node[anchor=north]{$U$};
\draw(3.2,0)node[anchor=north]{$U$};
\draw(3.6,0)node[anchor=north]{$D$};
\draw(4,0)node[anchor=north]{$U$};
\draw(3.2,0)..controls(3.2,0.3)and(3.6,0.3)..(3.6,0);
\draw[dashed](2.8,0)..controls(2.8,0.6)and(4,0.6)..(4,0);
}
\rightarrow
\tikzpic{-0.5}{
\draw(0,0)node[anchor=north]{$U$};
\draw(0.4,0)node[anchor=north]{$U$};
\draw(0.8,0)node[anchor=north]{$U$};
\draw(1.2,0)node[anchor=north]{$D$};
\draw(1.6,0)node[anchor=north]{$U$};
\draw(2.0,0)node[anchor=north]{$D$};
\draw(2.4,0)node[anchor=north]{$D$};
\draw(2.8,0)node[anchor=north]{$D$};
\draw(3.2,0)node[anchor=north]{$U$};
\draw(3.6,0)node[anchor=north]{$D$};
\draw(4,0)node[anchor=north]{$U$};
}
\end{eqnarray*}
The Hermite history $(\mu,H)$ is given by
\begin{eqnarray*}
(\mu,H)=
\tikzpic{-0.5}{
\draw(0,0)--(1.2,1.2)--(1.6,0.8)--(2,1.2)--(3.2,0)--(3.6,0.4)--(4,0)--(4.8,0.8);
\draw(2,1.2)--(2.4,1.6)--(2.8,1.2)--(4.4,2.8)--(4.8,2.4);
\draw(2.4,0.8)--(2.8,1.2)--(3.2,0.8);
\draw(2.8,0.4)--(3.6,1.2)--(4,0.8)--(4.8,1.6);
\draw(4.4,1.2)--(4.8,0.8)(4,2.4)--(4.8,1.6)(4.4,2)--(4.8,2.4);
\draw(3.6,2)--(4,1.6)--(4.4,2);
\draw(0.4,0.4)node{\rotatebox{-45}{$-$}};
\draw(0.8,0.8)node{\rotatebox{-45}{$-$}};
\draw(3.2,1.6)node{\rotatebox{-45}{$-$}};
\draw(4.4,0.4)node{\rotatebox{-45}{$-$}};
\draw(0.2,0.2)node[anchor=south east]{$0$};
\draw(0.6,0.6)node[anchor=south east]{$0$};
\draw(1,1)node[anchor=south east]{$0$};
\draw(1.8,1)node[anchor=south east]{$0$};
\draw(2.2,1.4)node[anchor=south east]{$2$};
\draw(3,1.4)node[anchor=south east]{$0$};
\draw(3.4,1.8)node[anchor=south east]{$0$};
\draw(3.8,2.2)node[anchor=south east]{$1$};
\draw(4.2,2.6)node[anchor=south east]{$1$};
\draw[red](2.2,1.4)--(3.2,0.4)--(3.6,0.8)--(4,0.4)--(4.6,1);
\draw[red](3,1.4)--(3.2,1.2)--(3.6,1.6)--(4,1.2)--(4.6,1.8);
\draw[red](3.8,2.2)--(4.2,1.8)(4.2,2.6)--(4.6,2.2);
}
\end{eqnarray*}
Note that the statistics tiles for a ballot tile of length $(2,1)$ is zero.
\end{example}

Starting with $\mu$ and $\mathbf{p}:=\mathbf{p}_{\mathrm{nes}}$ or 
$\mathbf{p}_{\mathrm{cr}}$, we construct  a sequence of lower paths 
$\lambda_1,\ldots\lambda_{p}$ corresponding to an up step in $\mu$.
One can put a ballot tile (or two ballot tiles of length $(2n,n')$ with $n'\ge1$) 
in the shape $\lambda_{i}/\mu$.
Thus, we obtain the pair $(\mu,H)$. Given a pair $(\mu,H)$, $\mathbf{p}$ is 
obtained by inserting zeros into a label $H$ at the position of down steps in $\mu$. 
Thus, we have the following theorem.
\begin{theorem}
\label{thrm:PMI-Hh}
The map from $\mathbf{p}_{\mathrm{nes}}$ or $\mathbf{p}_{\mathrm{cr}}$ to 
a labelling $H$ of type I is a bijection. 
\end{theorem}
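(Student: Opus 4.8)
The plan is to reduce the statement to the reversibility of the lower-path construction described just above, together with the analysis of Hermite histories already carried out in Section~\ref{sec:bijHer}. Fix the upper ballot path $\mu$. The map in question factors as $\mathbf{p}\mapsto H$, where $H$ is obtained from $\mathbf{p}=\mathbf{p}_{\mathrm{nes}}$ (resp. $\mathbf{p}_{\mathrm{cr}}$) by deleting the entries sitting on down steps of $\mu$, all of which are zero by construction. Since the positions of the down steps of $\mu$ are fixed, the assignment $H\mapsto\mathbf{p}$ given by reinserting zeros at those positions is a two-sided inverse on the level of \emph{sequences}. Thus the only thing to prove is that this correspondence restricts to a bijection between the set $\{\mathbf{p}_{\mathrm{nes}}(\pi):\pi\in\mathrm{PM}_{I}(\mu)\}$ (resp. the ``$\mathrm{cr}$'' analogue) and the set of labellings $H$ that occur as Hermite histories of type~I for a cover-inclusive ballot tiling with upper path $\mu$. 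This will follow once I show (i) the lower-path construction applied to any admissible $\mathbf{p}$ terminates in a genuine ballot path $\lambda$ whose shifted skew shape $\lambda/\mu$ carries the prescribed ballot tiling, and (ii) conversely, every cover-inclusive ballot tiling with upper path $\mu$ is produced in this way by exactly one $\pi$.

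For (i) I would run the algorithm step by step, processing the up steps of $\mu$ from right to left, maintaining the invariant that after treating the $i$-th up step we have a valid ballot path $\lambda_i$ with the strip between $\lambda_{i-1}$ and $\lambda_i$ consisting either of a single ballot tile or of a matched pair of ballot tiles of length $(2n,n')$ with $n'\ge 1$. The entry $n_i$ on the $i$-th up step bounds the size of the tile to be inserted, and the key point is that the choice of $N(i)$ --- the least length of a prefix of $\mu_i$ containing $n_i$ of the marked steps --- plays exactly the role of the south-west box $b'$ in the proof of Proposition~\ref{prop:Hh}: it records precisely when the translated leftmost box of the tile is supported, which is the cover-inclusiveness condition. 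The genuinely type-B feature is the handling of the dashed arcs: conditions (C1) and (C2) on $\mathrm{PM}_{I}(\mu)$ guarantee that the $UU$-pairs nest compatibly and that no $UU$-pair forms to the right of the diamond, so that when two arc-joined $U$'s are moved leftward and flipped to two $D$'s the result is again a ballot path and the resulting odd-height ballot tiles occur in pairs (and the even-height ones not at all), matching the type-BI parity constraint on $\mathrm{tiles}$ of length $(2n,n')$. The admissible range of each $n_i$ coming from $\mathbf{p}_{\mathrm{nes}}(\pi)$ (resp. $\mathbf{p}_{\mathrm{cr}}(\pi)$) agrees with the admissible range of Hermite-history labels by the same count of boxes in the relevant direction used in Section~\ref{sec:Hh}.

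For (ii) I would reverse the construction: starting from a cover-inclusive ballot tiling $T$ in some shape $\lambda/\mu$, read off the labels of its Hermite history of type~I on the up steps of $\mu$, reinsert zeros at the down steps to form $\mathbf{p}$, and then peel off the tiles one at a time --- equivalently, undo the path growth $\lambda_1,\ldots,\lambda_p$ --- to recover a generalized perfect matching $\pi\in\mathrm{PM}_{I}(\mu)$ with $\mathbf{p}_{\mathrm{nes}}(\pi)=\mathbf{p}$ (resp. $\mathbf{p}_{\mathrm{cr}}(\pi)=\mathbf{p}$); uniqueness of $\pi$ follows because each local move is invertible once the arc, dashed-arc and diamond data is fixed, which in turn is forced by $\mathbf{p}$ and $\mu$. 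Finally $\mathrm{nes}(\pi)=|H|$ (resp. $\mathrm{cr}(\pi)=|H|$) is immediate from the definitions, so the bijection is weight-preserving. The main obstacle I anticipate is (i): verifying that the $UU$-pair/dashed-arc flipping always returns a legal ballot path and that the induced tiles obey the type-BI length parities, since this is the one place where the argument genuinely departs from the type-A discussion of Section~\ref{sec:bijHer} and relies essentially on (C1) and (C2); once that is in hand, the rest is bookkeeping parallel to Theorems~\ref{thrm:IdashI} and~\ref{thrm:IdashIII}.
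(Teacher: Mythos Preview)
Your overall strategy is the same as the paper's: both observe that deleting the zeros sitting on down steps of $\mu$ (and conversely reinserting them) is trivially invertible, and both rely on the step-by-step lower-path construction $\lambda_1,\ldots,\lambda_p$ to show that the resulting labels are exactly the Hermite histories of type~I. The paper's proof is in fact only the short paragraph immediately preceding the theorem.

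There is, however, a genuine confusion in your elaboration. You repeatedly invoke conditions (C1) and (C2) as the reason the $UU$-flipping produces a legal ballot path and the correct type-BI parity. But (C1) and (C2) are the conditions that cut out the \emph{subset} $\mathrm{PM}'_{I}(\mu)\subseteq\mathrm{PM}_{I}(\mu)$; the present theorem concerns the full set $\mathrm{PM}_{I}(\mu)$, where these conditions need not hold --- the weight table (\ref{weight-PMI}) explicitly includes nested dashed arcs and diamonds to the left of dashed arcs. So your step (i), as written, does not apply. The point you are missing is that the lower-path algorithm does \emph{not} use the dashed arcs of $\pi$ directly: at each stage it re-pairs the currently unpaired $U$'s \emph{from right to left}, which automatically produces a non-crossing, non-nested pattern regardless of how the dashed arcs in the original $\pi$ were arranged. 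The dashed arcs and diamond of $\pi$ enter only through the numerical values of $\mathbf{p}_{\mathrm{nes}}$ (or $\mathbf{p}_{\mathrm{cr}}$). Once you replace the appeal to (C1)/(C2) by this observation about the algorithm's internal re-pairing, your argument goes through and coincides with the paper's.
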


Let $\pi$ be a generalized perfect matching in $\mathrm{PM}_{II}(\lambda)$.
We construct a pair $(\mu,H)$ of type I from $\pi$.
Recall that the generalized perfect matching $\pi$ consists of arcs and 
dashed arcs.
If the $i$-th node and the $j$-th node are connected by an arc, we define 
$\mu_{i}:=U$ and $\mu_{j}:=D$. If the $i$-th and the $j$-th nodes are connected
by a dashed arc, we define $\mu_{i}:=U$ and $\mu_{j}:=U$.
Then, $\mu=\mu_1\ldots \mu_{N}$ is a ballot path where $N$ is the number of
nodes in $\pi$.
The labelling $H$ can be obtained by deleting zeros, which are on the down steps 
of $\mu$, from $\mathbf{p}_{\mathrm{nes}}$. 
\begin{prop}
\label{prop:HhBTS}
We have 
\begin{eqnarray*}
|H|=\mathrm{nes}(\pi)=\mathrm{tiles}(\mathrm{BTS}(\lambda,\sigma)).
\end{eqnarray*}
\end{prop}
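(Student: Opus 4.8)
The plan is to note that the proposition asserts two equalities of which the first, $|H|=\mathrm{nes}(\pi)$, is a matter of unwinding definitions, so the content is concentrated in $\mathrm{nes}(\pi)=\mathrm{tiles}(\mathrm{BTS}(\lambda,\sigma))$. For the first equality: by the placement rules defining the partial sequences of $\mathbf{p}_{\mathrm{nes}}$ on $\mathrm{PM}_{II}$, every nonzero entry of $\mathbf{p}_{\mathrm{nes}}(\pi)$ sits on an up step of $\mu$, while $H$ is obtained from $\mathbf{p}_{\mathrm{nes}}(\pi)$ by deleting the entries on down steps, all of which are $0$; hence the labels of $H$ are exactly the nonzero entries of $\mathbf{p}_{\mathrm{nes}}(\pi)$, and $|H|=\sum_i(\mathbf{p}_{\mathrm{nes}})_i=\mathrm{nes}(\pi)$ by the definition of $\mathrm{nes}$.

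For the remaining equality I would prove the sharper statement that $\mathrm{BTS}(\lambda,\sigma)$ is the cover-inclusive ballot tiling of shape $\lambda/\mu$ whose type-I Hermite history is $(\mu,H)$. Granting this, under the Hermite-history bijection for ballot tilings the label on an up step $u$ of $\mu$ counts the tiles traversed by the non-intersecting path issued from $u$, and, since (as noted above) a ballot tile of length $(2n,n')$ with $n'\ge1$ carries $\mathrm{tiles}=0$, summing over all up steps yields $\mathrm{tiles}$ of the tiling --- the natural generalization of the identity $|H|=\mathrm{tiles}(H)$ of Section~\ref{sec:Hh}. Thus $|H|=\mathrm{tiles}(\mathrm{BTS}(\lambda,\sigma))$, which together with the previous paragraph is the claim. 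To prove the sharper statement I would induct on the number $E$ of edges of $A(\lambda)$, equivalently on the number of strip-grow operations in the recursion defining $\mathrm{BTS}(\mathbf{p})$; the base case $E=0$ is the empty tiling. In the inductive step, processing the edge labelled $i$ applies $\mathrm{DSG}$ (resp.\ $\mathrm{BSG}$) at column $p_i-(i-1)$ according as the edge is undotted (resp.\ dotted); on the matching side this is precisely the step $S_i\rightsquigarrow S_{i-1}$ of the mini-word recursion, inserting a new arc (resp.\ dashed arc) and hence a new $UD$ (resp.\ $UU$) pattern into $\mu$ together with a shift of $\mathbf{p}_{\mathrm{nes}}$; on the tiling side, spreading at that column inserts the corresponding Dyck tile (resp.\ ballot tile with $n'\ge1$) traversed by the new path of the Hermite history, and the one-box tiles placed on the up steps of the spread upper path to the right of the column reproduce the unchanged, shifted labels. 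Theorem~\ref{thrm:BTSinpo} can serve as an independent consistency check at the level of $\mathrm{art}$.

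The hard part will be this last compatibility: one must carry the three parallel data --- the natural labelling of $A(\lambda)$, the mini-word / generalized perfect matching, and the ballot tiling --- through one step simultaneously and see that ``spread, then add one-box tiles'' reproduces exactly the insertion dictated by the entry $p_i$, in particular that the column index $p_i-(i-1)$ coming from the BTS recursion matches the position at which the mini-word recursion inserts the new pair (and that the consistency of the shape $\lambda/\mu$ on both sides is maintained). The dotted-edge case needs the most care, since there a $\mathrm{BSG}$, a dashed arc, and a ballot tile with $n'\ge1$ must be matched up, the last contributing $0$ to $\mathrm{tiles}$ even though the two endpoints of the dashed arc may carry nonzero nesting labels. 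Once this one-step correspondence is verified, the proposition is immediate.
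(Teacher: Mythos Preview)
The paper states Proposition~\ref{prop:HhBTS} without proof, so there is no argument in the paper to compare against; what follows is an assessment of your plan on its own.

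Your treatment of $|H|=\mathrm{nes}(\pi)$ is correct and complete: it is immediate from the construction of $H$ as $\mathbf{p}_{\mathrm{nes}}(\pi)$ with the (zero) entries on down steps of $\mu$ removed.

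For $\mathrm{nes}(\pi)=\mathrm{tiles}(\mathrm{BTS}(\lambda,\sigma))$, your overall strategy---prove the sharper statement that $\mathrm{BTS}(\lambda,\sigma)$ is the tiling with type-I Hermite history $(\mu,H)$, by induction on $E$ along the strip-grow recursion, matching one DSG/BSG step with one step $S_i\rightsquigarrow S_{i-1}$ of the mini-word construction---is the natural one and is almost certainly what the authors have in mind. There is, however, a logical slip in how you pass from the sharper statement to the desired equality. You write that since a ballot tile of length $(2n,n')$ with $n'\ge1$ carries $\mathrm{tiles}=0$, summing the Hermite labels over up steps yields $\mathrm{tiles}$ of the tiling. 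But the label $H(u)$ records the \emph{number of tiles traversed} by the path issued from $u$, not the sum of their $\mathrm{tiles}$-values; a non-Dyck ballot tile traversed by some path would contribute $1$ to $|H|$ yet $0$ to $\mathrm{tiles}$. To repair this you must argue one of two things: either (a) the BTS output contains no non-Dyck ballot tiles---which is plausible since DSG and BSG only append single boxes, but then you must check that ballot-spreading an existing Dyck tile never produces a tile with $n'\ge1$---or (b) the Hermite paths in the resulting tiling never enter such a tile. This is exactly the substance of the dotted-edge case you correctly flag as delicate, and it should be resolved explicitly in the inductive step rather than absorbed into a heuristic about $\mathrm{tiles}=0$. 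Once that point is nailed down, the rest of your outline goes through.
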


\section{Ballot tiling of type BI with a fixed upper path}
\label{sec:BI-up}
\subsection{Dyck tiling with a fixed upper path}
\label{sec:D-up}
Let $\gamma$ be a ballot path of length $(2n,n')$ and 
$\gamma_0:=\underbrace{UD\ldots UD}_{2n}\underbrace{U\ldots U}_{n'}$. 
Since a Dyck path can be obtained from $\gamma$ by adding $n'$ $U$'s from right,
we also denote by $\gamma$ the obtained Dyck path. 
Then, the shape $\gamma'/\gamma$ with $\gamma_{0}\le\gamma'\le\gamma$ can be 
obtained as a region determined by two Dyck paths $\gamma$ and $\gamma_{0}$.
We consider the generating function 
\begin{eqnarray*}
\widetilde{P}_{\gamma}:=\sum_{\gamma_{0}\le\gamma'\le\gamma}
\sum_{D\in\mathcal{D}(\gamma'/\gamma)}q^{\mathrm{tiles}(D)}.
\end{eqnarray*}
Let $\mathrm{D}(\gamma_{0})$ be the set of $D$ steps in $\gamma_0$.
Given a down step $d$ in $\gamma_0$, we denote by $\mathrm{ht}(d;\gamma)$ 
one plus the number of boxes in $\gamma_0/\gamma$ which lie in the $(1,1)$-direction 
from $d$.
\begin{theorem}
\label{thrm:gD-up}
We have 
\begin{eqnarray*}
\widetilde{P}_{\gamma}=\prod_{d\in\mathrm{D}(\gamma_{0})}[\mathrm{ht}(d;\mu)]
\end{eqnarray*}
\end{theorem}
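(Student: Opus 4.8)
The plan is to imitate the bijective proof of Theorem~\ref{thrm-A2} carried out in Section~\ref{sec:bijHer}. The starting point is that $\widetilde{P}_{\gamma}$ is, by definition, the generating function by $q^{\mathrm{tiles}}$ of the cover-inclusive Dyck tilings that have upper path $\gamma$ and sit inside the fixed region $\gamma_0/\gamma$ (equivalently, whose lower path $\gamma'$ satisfies $\gamma_0\le\gamma'\le\gamma$). The construction of Section~\ref{sec:bijHer} — take a monomial of the product, drop the corresponding boxes against one boundary of the region, then repeatedly absorb them into maximal Dyck tiles while displacing the leftover boxes — never used that the lower boundary in that argument was the global zig-zag path; it used only that it is a fixed lower path and that the tilings considered are those contained in (lower boundary)$/$(upper boundary). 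Hence the identical construction applies with lower boundary $\gamma_0$ and upper boundary $\gamma$.

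Concretely, I would expand $\prod_{d\in\mathrm{D}(\gamma_0)}[\mathrm{ht}(d;\gamma)]=\prod_{d}(1+q+\cdots+q^{\mathrm{ht}(d;\gamma)-1})$; a monomial is a choice of an integer $0\le i_d\le\mathrm{ht}(d;\gamma)-1$ for each down step $d$ of $\gamma_0$, that is, $i_d$ at most the number of boxes of $\gamma_0/\gamma$ lying in the $(1,1)$-direction from $d$. For each $d$ I place $i_d$ successive boxes of $\gamma_0/\gamma$ in the $(1,1)$-direction starting from $d$, so that they form a staircase resting against $\gamma_0$, and then apply the pushing procedure of Section~\ref{sec:bijHer} in the form symmetric to the one written there (reading the down steps of $\gamma_0$ from left to right, and using the $(1,1)$-direction and the lower boundary in place of the $(1,-1)$-direction and the upper boundary). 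The result is a cover-inclusive Dyck tiling $D$ inside $\gamma_0/\gamma$ with $\mathrm{tiles}(D)=\sum_d i_d$, whose weight is exactly the chosen monomial; reversibility of the procedure (peel Dyck tiles off $\gamma_0$, read back the staircases, hence the $i_d$) then gives the bijection, which yields the theorem modulo the verification below.

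The main obstacle, and the only point where the argument is not a verbatim transcription of Section~\ref{sec:bijHer}, is that the lower boundary $\gamma_0=(UD)^nU^{n'}$ is not flat: its terminal run $U^{n'}$ is a ramp, so a displaced staircase or a growing Dyck tile can climb it, and one must check that no box or tile is ever pushed outside $\gamma_0/\gamma$ and that the inverse map still reconstructs the $i_d$ uniquely. (The $n'$ extra down steps on the ramp of the Dyck completion of $\gamma_0$ carry no boxes of $\gamma_0/\gamma$, which is why only $\mathrm{D}(\gamma_0)$ occurs in the product.) Alternatively I would run the construction of Section~\ref{sec:bijHer} unchanged, with ceiling $\gamma$ and floor $\gamma_0$, obtaining $\widetilde{P}_{\gamma}=\prod_{u}[m(u)+1]$ with $u$ ranging over the up steps of $\gamma$ and $m(u)$ the number of boxes of $\gamma_0/\gamma$ in the $(1,-1)$-direction from $u$, and then observe that both $\prod_{u}[m(u)+1]$ and $\prod_{d}[\mathrm{ht}(d;\gamma)]$ equal $\prod_{S}(1+q+\cdots+q^{|S|})$ taken over the maximal $(1,-1)$-diagonal strips $S$ of $\gamma_0/\gamma$ (each strip meeting $\gamma$ in one up step and $\gamma_0$ in one down step), the remaining factors being $[1]=1$; this route carries the same single difficulty, namely the careful bijection in the presence of the $U^{n'}$ ramp.
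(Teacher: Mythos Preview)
Your overall plan—mimic Section~\ref{sec:bijHer}—is the right one, and it is exactly what the paper intends. But both concrete routes you sketch are off, and for the same underlying reason: you anchor the construction at the wrong boundary.

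Take $\gamma=UUUDUD$, so $n=2$, $n'=2$, $\gamma_0=UDUDUU$. The paper's own example gives $\widetilde P_\gamma=[3]^2$. Your alternative route computes $\prod_u[m(u)+1]$ over the up steps of $\gamma$, with $m(u)$ the number of boxes of $\gamma_0/\gamma$ in the $(1,-1)$ direction from $u$. For this $\gamma$ the four up steps have $m(u)=0,1,2,1$, giving $[1][2][3][2]=[2]^2[3]\neq[3]^2$. The reason your product identification fails is that $\mathrm{ht}(d;\gamma)$ is defined using the $(1,1)$-diagonals, while $m(u)$ uses the $(1,-1)$-diagonals; a $(1,-1)$-strip in $\gamma_0/\gamma$ meets $\gamma$ in a $U$ step and $\gamma_0$ in a $U$ step (its SE edge), not a $D$ step as you wrote, so the two products run over genuinely different families of strips. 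Moreover the construction of Section~\ref{sec:bijHer} run ``unchanged with floor $\gamma_0$'' is not a bijection onto the tilings with lower path $\ge\gamma_0$: the displacement step can push boxes below the ramp $U^{n'}$, so the restriction $i_u\le m(u)$ does not correspond to the restriction on lower paths.

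Your primary route (drop boxes from the down steps of $\gamma_0$ in the $(1,1)$ direction and push ``symmetrically'') produces cover-inclusive tilings whose \emph{lower} path is $\gamma_0$ and whose \emph{upper} path varies—i.e.\ it computes $\sum_{\gamma''\le\gamma}\sum_{D\in\mathcal D(\gamma_0/\gamma'')}q^{\mathrm{tiles}(D)}$, not $\widetilde P_\gamma$. For $\gamma=UUUDUD$ this dual sum is $1+3q+3q^2+2q^3+q^4$, again $\neq[3]^2$.

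What the paper actually means by ``the same argument'' (compare the explicit proof of Theorem~\ref{thrm:BIII-Hhfac}) is the mirror-image of Section~\ref{sec:bijHer} anchored at the \emph{upper} path: for each down step $d'$ of $\gamma$ at height $h'(d')$, drop $i_{d'}\in[0,h'(d')-1]$ boxes in the $(-1,-1)$ direction (so their northeast edges rest on $\gamma$), then run the pushing procedure with left and right interchanged. This yields cover-inclusive tilings with upper path $\gamma$; the boxes automatically stay above $\gamma_0$ because the $(1,1)$-diagonal through the $k$-th down step of $\gamma$ hits precisely the $k$-th down step of $\gamma_0$ after exactly $h'(d')-1$ boxes. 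That last observation also gives the identification $\prod_{d'\in D(\gamma)}[h'(d')]=\prod_{d\in D(\gamma_0)}[\mathrm{ht}(d;\gamma)]$, which is how the product gets reindexed by $D(\gamma_0)$ as in the statement.
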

We omit the proof since one can apply the same argument as the proof of the first 
equality in Theorem \ref{thrm-A2}.

Recall that $\mathrm{PM}'_{I}(\gamma)$ is the set of generalized perfect matching
of type I and satisfying two conditions (C1) and (C2).
Then,  
\begin{theorem}
\label{thrm:gD-up-gpm}
We have 
\begin{align}
\label{eqn:GDyckU}
\begin{split}
\widetilde{P}_{\gamma}&=\sum_{\pi\in\mathrm{PM}'_{I}(\gamma)}q^{\mathrm{nes}(\pi)} \\
&=\sum_{\pi\in\mathrm{PM}'_{I}(\gamma)}q^{\mathrm{cr}(\pi)}
\end{split}
\end{align}
\end{theorem}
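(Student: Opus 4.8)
The plan is to obtain both identities essentially for free from the Hermite-history description of $\widetilde{P}_{\gamma}$, feeding it into Theorems \ref{thrm:IdashI} and \ref{thrm:IdashIII}. Unwinding the definition, $\widetilde{P}_{\gamma}$ is the generating function $\sum_{D}q^{\mathrm{tiles}(D)}$ over \emph{all} cover-inclusive Dyck tilings $D$ with upper path $\gamma$: the lower bound $\gamma'\ge\gamma_{0}$ is automatic because $\gamma_{0}$ is the lowest path of the strip, and $\gamma'\le\gamma$ is automatic for a skew shape. By the type I Hermite history bijection recalled in Section \ref{sec:Hh} (and already exploited in Section \ref{sec:bijHer}), such a $D$ corresponds to a pair $(\gamma,H)$ with $\mathrm{tiles}(D)=|H|$, where $H$ ranges over all admissible type I labellings of the fixed path $\gamma$; hence $\widetilde{P}_{\gamma}=\sum_{H}q^{|H|}$. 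Applying instead the type III Hermite history bijection --- the reflection of the type I one across the vertical axis, which still carries $\mathrm{tiles}(D)$ to the sum $|H''|$ of the labels --- gives the companion expression $\widetilde{P}_{\gamma}=\sum_{H''}q^{|H''|}$.

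For the first of the two claimed equalities I would now invoke Theorem \ref{thrm:IdashI}: the map $\pi\mapsto H$ obtained by deleting from $\mathbf{p}_{\mathrm{nes}}(\pi)$ the zero entries sitting on down steps of $\gamma$ is a bijection from $\mathrm{PM}'_{I}(\gamma)$ onto the admissible type I labellings of $\gamma$. Every non-zero entry of $\mathbf{p}_{\mathrm{nes}}(\pi)$ already lies on an up step, so this deletion does not change the sum of the entries, whence $|H|=\mathrm{nes}(\pi)$; composing with the previous paragraph gives $\sum_{\pi\in\mathrm{PM}'_{I}(\gamma)}q^{\mathrm{nes}(\pi)}=\sum_{H}q^{|H|}=\widetilde{P}_{\gamma}$. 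The second equality is proved verbatim the same way, with Theorem \ref{thrm:IdashIII} in place of Theorem \ref{thrm:IdashI}, $\mathbf{p}_{\mathrm{cr}}$ in place of $\mathbf{p}_{\mathrm{nes}}$, and type III labellings in place of type I labellings, so that $\sum_{\pi\in\mathrm{PM}'_{I}(\gamma)}q^{\mathrm{cr}(\pi)}=\sum_{H''}q^{|H''|}=\widetilde{P}_{\gamma}$. (Alternatively, once the first equality is in hand the second reduces to $\sum_{\pi}q^{\mathrm{nes}(\pi)}=\sum_{\pi}q^{\mathrm{cr}(\pi)}$, which one could instead prove via an involution of $\mathrm{PM}'_{I}(\gamma)$ exchanging crossings and nestings; the route through Theorem \ref{thrm:IdashIII} has the advantage of not requiring such an involution.)

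Almost all of the content is packaged inside the cited bijections, so once they are granted the proof is a formal composition. The step that will need the most care is matching the summation domains: one must verify that, for a fixed path $\gamma$, the image of $\mathrm{PM}'_{I}(\gamma)$ under the maps of Theorems \ref{thrm:IdashI} and \ref{thrm:IdashIII} is exactly the set of \emph{all} admissible type I (resp. type III) labellings of $\gamma$ --- in particular that the conditions (C1) and (C2) are precisely what single out the matchings these maps reach --- and that the reflection relating the two Hermite-history conventions of Section \ref{sec:Hh} preserves the $\mathrm{tiles}$ statistic, so that $\mathrm{tiles}(D)=|H''|$ as used above. Should one want a proof of the second equality that avoids Dyck tilings altogether, the hard part would instead be either constructing the crossing/nesting involution on $\mathrm{PM}'_{I}(\gamma)$ compatible with (C1) and (C2), or evaluating $\sum_{\pi}q^{\mathrm{nes}(\pi)}$ column by column and matching it against the product $\prod_{d\in\mathrm{D}(\gamma_{0})}[\mathrm{ht}(d;\gamma)]$ furnished by Theorem \ref{thrm:gD-up}.
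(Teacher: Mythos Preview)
Your proposal is correct and follows essentially the same approach as the paper: rewrite $\widetilde{P}_{\gamma}$ as a sum over Hermite histories via the bijection of Section~\ref{sec:Hh}, then transport this to $\mathrm{PM}'_{I}(\gamma)$ using Theorems~\ref{thrm:IdashI} and~\ref{thrm:IdashIII} together with $|H|=\mathrm{nes}(\pi)$ and $|H''|=\mathrm{cr}(\pi)$. The paper's proof is more terse but structurally identical, and your added remarks about verifying that (C1)--(C2) exactly cut out the image of the bijection are appropriate points of care.
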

\begin{proof}
The generating function $\widetilde{P}_{\gamma}$ is equal to the sum of $q^{|H|}$ over all possible 
Hermite histories $H$'s. 
We have $|H|=\mathrm{nes}(\pi)$ 
for a generalized perfect matching $\pi$ corresponding to the pair $(\gamma,H)$.
From Theorem \ref{thrm:IdashI}, we have 
$\widetilde{P}_{\gamma}=\sum_{\pi\in\mathrm{PM}'_{I}(\gamma)}q^{\mathrm{nes}(\pi)}$.

The second equality in Eqn.(\ref{eqn:GDyckU}) follows from the similar argument with 
Theorem \ref{thrm:IdashIII}.
\end{proof}

\begin{example}
Let $\gamma=UUUDUD$. From Theorem \ref{thrm:gD-up}, the generating function 
$\widetilde{P}_{\gamma}=[3]^{2}$.
We have nine generalized perfect matchings in $\mathrm{PM}_{I}'(\gamma)$:
\begin{eqnarray*}
&
\begin{array}{c|ccccc}
\pi & 
\tikzpic{-0.3}{
\draw(0,0)..controls(0,0.9)and(2,0.9)..(2,0);
\draw(0.4,0)..controls(0.4,0.6)and(1.2,0.6)..(1.2,0);
\draw[dashed](0.8,0)..controls(0.8,0.6)and(1.6,0.6)..(1.6,0);
} &
%%%%%%
\tikzpic{-0.3}{
\draw(0,0)..controls(0,0.9)and(2,0.9)..(2,0);
\draw(0.4,0)..controls(0.4,0.6)and(1.6,0.6)..(1.6,0);
\draw[dashed](0.8,0)..controls(0.8,0.3)and(1.2,0.3)..(1.2,0);
} & 
%%%%%
\tikzpic{-0.3}{
\draw(0,0)..controls(0,0.9)and(1.2,0.9)..(1.2,0);
\draw(0.4,0)..controls(0.4,0.9)and(2,0.9)..(2,0);
\draw[dashed](0.8,0)..controls(0.8,0.6)and(1.6,0.6)..(1.6,0);
} &
%%%%%
\tikzpic{-0.3}{
\draw[dashed](0,0)..controls(0,0.9)and(1.6,0.9)..(1.6,0);
\draw(0.4,0)..controls(0.4,0.9)and(2,0.9)..(2,0);
\draw(0.8,0)..controls(0.8,0.4)and(1.2,0.4)..(1.2,0);
} &
%%%%%
\tikzpic{-0.3}{
\draw(0,0)..controls(0,0.6)and(1.2,0.6)..(1.2,0);
\draw[dashed](0.4,0)..controls(0.4,0.9)and(1.6,0.9)..(1.6,0);
\draw(0.8,0)..controls(0.8,0.6)and(2,0.6)..(2,0);
} \\[9pt] \hline
\mathrm{nes}(\pi) & 3 & 4 & 2 & 3 & 1 \\
\mathrm{cr}(\pi)  & 3 & 2 & 4 & 1 & 3 
\end{array}& \\
&
\begin{array}{c|ccccc}
\pi & 
\tikzpic{-0.3}{
\draw[dashed](0,0)..controls(0,0.9)and(1.6,0.9)..(1.6,0);
\draw(0.4,0)..controls(0.4,0.6)and(1.2,0.6)..(1.2,0);
\draw(0.8,0)..controls(0.8,0.6)and(2,0.6)..(2,0);
} &
%%%%%%
\tikzpic{-0.3}{
\draw(0,0)..controls(0,0.7)and(1.2,0.7)..(1.2,0);
\draw[dashed](0.4,0)..controls(0.4,0.4)and(0.8,0.4)..(0.8,0);
\draw(1.6,0)..controls(1.6,0.3)and(2,0.3)..(2,0);
} & 
%%%%%
\tikzpic{-0.3}{
\draw[dashed](0,0)..controls(0,0.5)and(0.8,0.5)..(0.8,0);
\draw(0.4,0)..controls(0.4,0.5)and(1.2,0.5)..(1.2,0);
\draw(1.6,0)..controls(1.6,0.4)and(2,0.4)..(2,0);
} &
%%%%%
\tikzpic{-0.3}{
\draw[dashed](0,0)..controls(0,0.4)and(0.4,0.4)..(0.4,0);
\draw(0.8,0)..controls(0.8,0.4)and(1.2,0.4)..(1.2,0);
\draw(1.6,0)..controls(1.6,0.4)and(2,0.4)..(2,0);
} \\[9pt] \hline
\mathrm{nes}(\pi) & 2 & 2 & 1 & 0 \\
\mathrm{cr}(\pi)  & 2 & 2 & 1 & 0  
\end{array}
&
\end{eqnarray*}
From Theorem \ref{thrm:gD-up-gpm}, we have 
$\widetilde{P}_{\gamma}=\sum_{\pi}q^{\mathrm{nes}(\pi)}
=\sum_{\pi}q^{\mathrm{cr}(\pi)}$.
\end{example}

\subsection{Ballot tiling of type BI with a fixed upper path}
\label{subsec:BI-up}
Recall that the lowest ballot path of length $N=2n+n'$ 
is the zig-zag path $\mu_{0}$, {\it i.e.}, 
$\mu_{0}=\underbrace{UDUD\ldots}_{N}$. 
Below, we restrict $N$ to be odd and 
consider the generating function $P_{\ast,\mu}^{I}$.  

Recall that the two paths $\mu$ and $\mu_{0}$ determine the skew shape 
$\mu_{0}/\mu$.
Let $\mathrm{D}(\mu_0)$ be the set of $D$ steps in $\mu_{0}$.
Given a down step $d$ in $\mu_0$, we denote by $\mathrm{ht}(d;\mu)$ 
one plus the number of boxes in $\mu_0/\mu$ which lie  
in the $(1,1)$-direction from $d$.

A permutation $w$ in $\mathfrak{S}_{N+1}$ is called 
a {\it down-up} (or {\it alternating}) permutation if 
the one-line notation $w$ satisfies 
$w_{2m-1}>w_{2m}<w_{2m+1}$ for $m\ge1$.  	
Let $\mathfrak{S}^{\mathrm{DU}}_{N+1}$ be the set of down-up permutations on 
$\{1,2,\ldots,N+1\}$ for which the entries in the even 
positions are increasing, {\it i.e.}, $w_{2m}<w_{2m+2}$ for $m\ge1$. 

We enumerate $D$ steps in $\mu_0$ from left to right by $1,2,\ldots,(N-1)/2$.
Let $S_{0}:=\{1,2,\ldots,N+1\}$ and $\mu$ be a path above $\mu_0$.
We define a permutation $\sigma=\sigma_1\ldots \sigma_{N+1}:=\sigma(\mu)=$ 
in $\mathfrak{S}^{\mathrm{DU}}_{N+1}$ as follows.
For $1\le i\le(N-1)/2$, we define $\sigma_{2i-1}$ as the $(\mathrm{ht}(d_{2i-1};\mu)+1)$-th 
smallest element in $S_{2i-2}$ and $S_{2i-1}:=S_{2i-2}\setminus\{\sigma_{2i-1}\}$.
Then, $\sigma_{2i}$ is the smallest element in $S_{2i-1}$ and we 
define $S_{2i}:=S_{2i-1}\setminus\{\sigma_{2i}\}$.
We define $\sigma_{N}=\max(S_{N-1})$ and $\sigma_{N+1}=\min(S_{N-1})$.
By construction, the obtained permutation is in $\mathfrak{S}^{\mathrm{DU}}_{N+1}$.
%%%%%
\begin{example}
We have $(\mathrm{ht}(d_1;\mu),\mathrm{ht}(d_2;\mu))=(3,3)$ 
for $\mu=UUUDU$. Thus we have $\sigma(\mu)=416253$.
\end{example}

\begin{theorem}
\label{thrm:BIII-Hhfac}
We have 
\begin{align}
\label{eqn:BalHhfac}
\begin{split}
P_{\ast,\mu}^{I}&=\prod_{d\in\mathrm{D}(\mu_0)}[\mathrm{ht}(d;\mu)] \\
&=\sum_{\substack{\pi\le\sigma(\mu) \\\pi\in\mathfrak{S}^{\mathrm{DU}}_{N+1}}}
q^{l(\pi)-l(\sigma(\mu))}
\end{split}
\end{align}
where $\le$ is the Bruhat order on $\mathfrak{S}_{N+1}$ and 
$l$ is the inversion on $\mathfrak{S}_{N+1}$.
\end{theorem}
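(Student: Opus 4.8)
The statement packages two independent identities, and I would treat them in turn.

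\textbf{First equality} $P_{\ast,\mu}^{I}=\prod_{d\in\mathrm{D}(\mu_{0})}[\mathrm{ht}(d;\mu)]$. This is the type-BI analogue of the first equality of Theorem \ref{thrm-A2}, and I would deduce it in two steps. First, using the bijection of Section \ref{sec:HhBT} between Hermite histories of a ballot tiling of type BI and generalized perfect matchings (Theorem \ref{thrm:PMI-Hh}), together with the identity $\mathrm{tiles}(B)=|H|$ for a type-I Hermite history $(\mu,H)$ of a ballot tiling of type BI (the analogue of $|H|=\mathrm{tiles}(H)$ in Section \ref{sec:Hh}; recall a ballot tile of length $(2n,n')$ with $n'\ge 1$ contributes nothing to $\mathrm{tiles}$), one rewrites $P_{\ast,\mu}^{I}=\sum_{(\mu,H)}q^{|H|}=\sum_{\pi\in\mathrm{PM}_{I}(\mu)}q^{\mathrm{nes}(\pi)}$. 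Second, I would prove $\sum_{\pi\in\mathrm{PM}_{I}(\mu)}q^{\mathrm{nes}(\pi)}=\prod_{d\in\mathrm{D}(\mu_{0})}[\mathrm{ht}(d;\mu)]$ by the box-placement argument used in Section \ref{sec:bijHer} and behind Theorem \ref{thrm:gD-up}: expand $[\mathrm{ht}(d;\mu)]=1+q+\cdots+q^{\mathrm{ht}(d;\mu)-1}$, read a monomial of the product as a tuple $(i_{d})_{d}$ with $0\le i_{d}\le\mathrm{ht}(d;\mu)-1$, place $i_{d}$ successive boxes in the $(1,1)$-direction from the down step $d$ of the zig-zag path $\mu_{0}$ inside the region between $\mu$ and $\mu_{0}$, and run the cascade that grows each box group into a maximal tile while pushing the remaining boxes. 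Since $N$ is odd there is an anchor column, and the type-BI parity condition forces the tiles reaching it to occur in pairs, which is exactly what makes the output a cover-inclusive ballot tiling of type BI; as those tiles have $\mathrm{tiles}=0$, the monomial weight $q^{\sum_{d}i_{d}}$ is carried to $q^{\mathrm{tiles}(B)}$, and invertibility of the cascade gives the bijection. (Equivalently one may factor $\sum_{\pi\in\mathrm{PM}_{I}(\mu)}q^{\mathrm{nes}(\pi)}$ directly by peeling off the outermost arc and dashed-arc structure of $\mu$ level by level, using that $\mathrm{nes}$ is a sum of the local contributions defined in Section \ref{sec:gpm}.)

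\textbf{Second equality.} Here I would construct an explicit weight-preserving bijection between the monomials of $\prod_{d\in\mathrm{D}(\mu_{0})}[\mathrm{ht}(d;\mu)]$ and the down-up permutations $\pi\in\mathfrak{S}^{\mathrm{DU}}_{N+1}$ with $\pi\le\sigma(\mu)$. Given a tuple $(i_{d})_{d}$ with $0\le i_{d}\le\mathrm{ht}(d;\mu)-1$, run the recursive construction of $\sigma(\mu)$ verbatim, except that at the odd position $2i-1$ one selects the $(\mathrm{ht}(d_{i};\mu)-i_{d_{i}}+1)$-th smallest remaining element in place of the $(\mathrm{ht}(d_{i};\mu)+1)$-th; even positions still take the smallest remaining element and the last two positions the maximum then the minimum. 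The resulting $\pi$ is automatically alternating with increasing entries in even positions, and the requirement that each odd-position entry exceed the following even-position entry is precisely the bound $i_{d}\le\mathrm{ht}(d;\mu)-1$, so the map indeed lands in $\mathfrak{S}^{\mathrm{DU}}_{N+1}$. A direct count shows the entry chosen at position $2i-1$ creates $\mathrm{ht}(d_{i};\mu)-i_{d_{i}}$ inversions, so $l(\sigma(\mu))-l(\pi)=\sum_{d}i_{d}$ and the $q$-weights correspond; summing over all valid tuples then reproduces $\prod_{d\in\mathrm{D}(\mu_{0})}[\mathrm{ht}(d;\mu)]$.

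\textbf{The main obstacle} is to verify that this map is onto the whole Bruhat interval, i.e. that for $\pi\in\mathfrak{S}^{\mathrm{DU}}_{N+1}$ one has $\pi\le\sigma(\mu)$ exactly when the rank chosen at each odd position of $\pi$ is at most the corresponding rank of $\sigma(\mu)$ (equivalently every $i_{d}\ge 0$). I would establish this via the rank-matrix criterion for the Bruhat order on $\mathfrak{S}_{N+1}$: write the rank matrices of $\sigma(\mu)$ and of a general such $\pi$ in terms of the tuple of heights, show that lowering a single $i_{d}$ weakly decreases every entry of the rank matrix (the ``if'' direction), and that $i_{d}<0$ forces some entry to strictly exceed that of $\sigma(\mu)$ (the ``only if'' direction); a lighter alternative is to check that decreasing one $i_{d}$ by one realizes a downward move in $\mathfrak{S}^{\mathrm{DU}}_{N+1}$ and that such moves generate the interval. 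A secondary, less conceptual point is the bookkeeping of the anchor column and the type-BI parity constraint in the cascade of the first part.
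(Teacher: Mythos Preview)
Your proof is essentially correct and close in spirit to the paper's, with some differences worth noting.

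\textbf{First equality.} The paper proves this directly: it places boxes attached to the down steps of $\mu$ (not of $\mu_0$---your direction is reversed, though this is cosmetic), runs the cascade to build a ballot tiling, and checks the type-BI parity at the anchor column. Your detour through $\mathrm{PM}_{I}(\mu)$ via Theorem~\ref{thrm:PMI-Hh} is logically sound but unnecessary: once you invoke the cascade in your second step you are reproducing the paper's bijection between monomials of the product and ballot tilings, and the perfect-matching reformulation plays no role. (In fact the paper runs the logic in the opposite order: Theorem~\ref{thrm:bIgPM}, which is your step~1, is deduced \emph{from} Theorem~\ref{thrm:BIII-Hhfac} together with a cardinality check.) If you drop the detour, your argument for the first equality is the paper's.

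\textbf{Second equality.} Here your approach genuinely differs. The paper parameterizes the bijection by products of simple transpositions: it forms $\rho_j=s_{2j+r(j)-1}\cdots s_{2j}$ from the box data, sets $\pi=\rho_1\cdots\rho_{(N-1)/2}\,w^0$ where $w^0$ is the minimal element of $\mathfrak{S}^{\mathrm{DU}}_{N+1}$, observes that this lands in $\mathfrak{S}^{\mathrm{DU}}_{N+1}$, and then disposes of the Bruhat characterization with a one-line cardinality argument ($|\mathfrak{S}^{\mathrm{DU}}_{N+1}|=N!!$ matches $\prod_d\mathrm{ht}(d;\mu)$ when $\mu=U\cdots U$). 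You instead build $\pi$ by the rank-choice recipe mirroring the definition of $\sigma(\mu)$ and propose to verify the Bruhat condition via the rank-matrix criterion. Both constructions produce the same permutation, but the paper's cardinality shortcut sidesteps the Bruhat verification you flag as the main obstacle: once the map is known to be injective into $\mathfrak{S}^{\mathrm{DU}}_{N+1}$ and the counts agree for the top $\mu$, the nesting of intervals follows. Your rank-matrix route would work too and is more self-contained, at the cost of a longer calculation; the paper's proof is shorter but relies on the reader accepting that the map respects the Bruhat nesting, which it states rather than proves in detail.
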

\begin{proof}

Fix a ballot path $\mu$ of odd length and a down step $d$ of $\mu$ or a 
northeast edge of an anchor box in $\mu_{0}/\mu$.
The number of boxes in $\mu_0/\mu$ in the southwest direction from $d$
is $\mathrm{ht}(d,\mu)$.
The $q$-integer $[\mathrm{ht}(d,\mu)]$ is written as a expansion
$1+q+\ldots+q^{\mathrm{ht}(d,\mu)-1}$.
Thus, when we take a term $q^{i}$ from $[\mathrm{ht}(d,\mu)]$, 
we place $i$ successive boxes in the region between $d$ and the zig-zag path
such that the northeast edge of the northeast box is attached to the 
down step $d$. 
As in Section \ref{sec:bijHer}, we place boxes for each down step of $\mu$ 
for a product of terms of $\prod_{d}[\mathrm{ht}(d;\mu)]$.
We denote by $B'$ the obtained diagram consisting boxes.

We construct a ballot tiling from $B'$ as follows.
We enumerate the down steps of $\mu$ from left to right  
by $d_1,\ldots,d_{n}$ where $n$ is the number of down steps in $\mu$.
We also enumerate the boxes attached to the down step $d_{i}$ from 
top to bottom by $1,2,\ldots,r(i)$ where $r(i)$ is the number of boxes 
attached to $d_{i}$. 
Given a pair $(d_i,j)$ with $1\le j\le r(i)$, we call the corresponding box 
$(d_i,j)$ box. 
If there exists a box to the northwest of $(d_i,j)$ box and it forms a ballot 
tile, we move to the next pair $(d_i,j+1)$ for $1\le j\le r(i)-1$ or $(d_{i+1},1)$ 
for $j=r(i)$.
Otherwise, there is no box to the northwest of $(d_i,j)$ box. 
Let $b$ (resp. $b'$) be a rightmost box in $\mu_0/\mu$ left to the $(d_i,j)$ box such that 
the translation of $b$ by $(2,0)$ (resp. $(-1,1)$) is either outside of the region 
$\mu_0/\mu$ or contained in a ballot tile.
In some cases, $b$ coincides with $b'$ but $b$ and $b'$ are different boxes in general.
By definition, the box $b$ is right to $b'$.
If $b$ is the $(d_i,j)$ box, we have a unique ballot tile $B$ such that it satisfies 
the cover-inclusive property, it start from $b$ and ends at the tile $b'$.
We move the boxes $(d_i,k)$ with $i+1\le k\le r(i)$ to the southwest of the box $b'$. 
Then, we move the next pair $(d_{i},j+1)$.
If $b$ is not the $(d_i,j)$ box, we have two ballot tiles $B$ and $B'$ of length 
$(2n,n')$ with $n'\ge1$ such that $B$ is just above $B'$ and southwest of the leftmost 
box in $B$ is $b$.
We also have a unique ballot tile (Dyck tile more precisely) starting from $b$ and 
ending at $b'$ as above. 
We move the boxes $(d_i,k)$ with $i+1\le k\le r(i)$ to the southwest of the box $b'$. 
Then we move the next pair $(d_i,j+1)$.
\begin{figure}[ht]
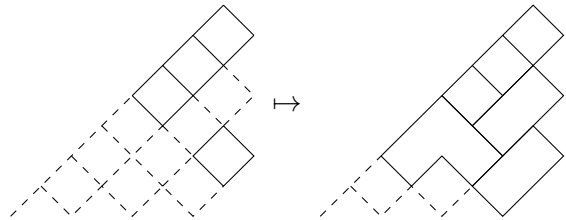

\begin{eqnarray*}
\tikzpic{-0.5}{
\draw[dashed](0,0)--(1.6,1.6);
\draw(1.6,1.6)--(2.8,2.8)(1.6,1.6)--(2,1.2)--(3.2,2.4)
(2.8,2.8)--(3.2,2.4)(2.4,2.4)--(2.8,2)(2,2)--(2.4,1.6);
\draw[dashed](0.4,0.4)--(0.8,0)--(1.2,0.4)--(1.6,0)--(2,0.4)--(2.4,0)--(2.8,0.4);
\draw[dashed](0.8,0.8)--(1.2,0.4)(1.2,1.2)--(2,0.4)(2,1.2)--(2.4,0.8)
             (1.2,0.4)--(2,1.2)(2,0.4)--(2.4,0.8)(2.8,1.2)--(3.2,1.6)--(2.8,2)
             (2.4,1.6)--(2.8,1.2);
\draw(2.8,1.2)--(3.2,0.8)--(2.8,0.4)--(2.4,0.8)--(2.8,1.2);
}\mapsto
\tikzpic{-0.5}{
\draw[dashed](0,0)--(0.8,0.8);
\draw[dashed](0.4,0.4)--(0.8,0)--(1.2,0.4)--(1.6,0)--(2,0.4);
\draw(0.8,0.8)--(1.6,1.6)--(2.4,0.8)--(2,0.4)--(1.6,0.8)--(1.2,0.4)--(0.8,0.8);
\draw(1.6,1.6)--(2.8,2.8)--(3.2,2.4)--(2,1.2)--(1.6,1.6)(2,2)--(2.4,1.6)(2.4,2.4)--(2.8,2);
\draw(2.8,2)--(3.2,1.6)--(2.4,0.8)--(2,1.2)--(2.8,2);
\draw(2,0.4)--(2.4,0)--(3.2,0.8)--(2.8,1.2)--(2,0.4);
}
\end{eqnarray*}
\caption{The bijection from a diagram $B'$ to a ballot tiling.
The statistics $\mathrm{tiles}$ of the ballot tiling is $\mathrm{tiles}(B)=4$.}
\label{fig:BalHh}
\end{figure}
We obtain a ballot tiling by visiting all boxes $(d_i,j)$ and performing the above-mentioned 
operation on them. 
It is obvious that the above operation is invertible, {\it i.e.}, one can construct a 
diagram $B'$ from a ballot tiling. 
See Fig. \ref{fig:BalHh} for an example of this bijection.
Thus we obtain a bijective proof of the first equality in Eqn.(\ref{eqn:BalHhfac}).

To prove the second equality, we show that there exists a bijection between 
$B'$ and $\pi\le\sigma(\mu)$ with $\pi\in\mathfrak{S}^{\mathrm{DU}}_{N+1}$.
It is easy to show that the cardinality of $\mathfrak{S}^{\mathrm{DU}}_{N+1}$
is given by $|\mathfrak{S}^{\mathrm{DU}}_{N+1}|=N!!$.

We construct $B''$ from $B'$ by translating the $i$ successive boxes attached to 
a down step in $\mu$ by $(-1,-1)$ direction such that the southwest edge of the 
southwest box is attached to the path $\mu_{0}$.
We enumerate down steps of $\mu_0$ from left to right by $1,2,\ldots,(N-1)/2$ 
and denote by $e_{j}$ the $j$-th down step.
Suppose that the down step $e_j$ has $r(j)$ boxes in the $(1,1)$ direction from $e_{j}$.
We define a product of simple transpositions as $\rho_{j}:=s_{2j+r(j)-1}\ldots s_{2j+1}s_{2j}$.
We consider the following ordered product of simple transpositions in $\mathfrak{S}_{N+1}$:
\begin{eqnarray*}
\rho(B'):=\rho_1\rho_2\ldots\rho_{(N-1)/2}.
\end{eqnarray*}
The smallest element with respect to the Bruhat order in $\mathfrak{S}_{N+1}^{\mathrm{DU}}$
is $w^{0}:=w_{1}\ldots w_{N+1}$ with $w_{2j-1}=2j$ and $w_{2j}=2j-1$ in one-line notation. 
The $\rho_{j}$ transpose $2j$ with $2j+r(j)$ and $k$ with $k-1$ for $2j<k\le 2j+r(j)$.
One can easily show that $\rho_{(N-1)/2}w^{0}$ is in $\mathfrak{S}^{\mathrm{DU}}_{N+1}$.
More in general, we have $\rho(B')w^{0}\in\mathfrak{S}^{\mathrm{DU}}_{N+1}$. 
The cardinality of possible $B'$'s is given by 
$\prod_{d\in\mathrm{D}(\mu_0)}\mathrm{ht}(d;\mu)$, which implies 
that $|B'|=N!!$ for $\mu=UU\ldots U$.
Thus we have a bijection between $B'$ and a permutation $\pi\in\mathfrak{S}^{\mathrm{DU}}_{N+1}$.
The second equality in Eqn.(\ref{eqn:BalHhfac}) follows from the direct consequence of this bijection.
\end{proof}

The generating function $P_{\ast,\mu}^{I}$ is related to generalized 
perfect matchings in $\mathrm{PM}_{I}(\mu)$ as follows.
\begin{theorem}
\label{thrm:bIgPM}
We have 
\begin{eqnarray*}
P_{\ast,\mu}^{I}&=&
\sum_{\pi\in\mathrm{PM}_{I}(\mu)}q^{\mathrm{nes}(\pi)} \\
&=&\sum_{\pi\in\mathrm{PM}_{I}(\mu)}q^{\mathrm{cr}(\pi)}.
\end{eqnarray*}
\end{theorem}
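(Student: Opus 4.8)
The plan is to reduce the statement to results already established in Section \ref{sec:HhBT}, in the same spirit as the proof of Theorem \ref{thrm:gD-up-gpm}. First I would recall that, by the definition of the generating function $P^{I}_{\ast,\mu}$ and the bijection between a ballot tiling and its Hermite history of type I described at the beginning of Section \ref{sec:HhBT}, we have
\begin{eqnarray*}
P^{I}_{\ast,\mu}=\sum_{(\mu,H)}q^{|H|},
\end{eqnarray*}
where the sum runs over all Hermite histories $(\mu,H)$ of type I with the fixed upper path $\mu$, and $|H|$ is the sum of the labels on the up steps of $\mu$. Here one uses the key feature emphasized in Section \ref{sec:HhBT}: a ballot tile of length $(2n,n')$ with $n'\ge 1$ contributes $0$ to $\mathrm{tiles}$, so that $|H|$ indeed records $\mathrm{tiles}$ of the associated ballot tiling and not a shifted version of $\mathrm{art}$.

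Next I would invoke Theorem \ref{thrm:PMI-Hh}, which gives a bijection between the integer sequence $\mathbf{p}_{\mathrm{nes}}(\pi)$ (respectively $\mathbf{p}_{\mathrm{cr}}(\pi)$) of a generalized perfect matching $\pi\in\mathrm{PM}_{I}(\mu)$ and a labelling $H$ of type I for the fixed upper path $\mu$. Under this bijection the label $H$ is obtained from $\mathbf{p}_{\mathrm{nes}}(\pi)$ by deleting the zeros sitting on the down steps of $\mu$, so by the definition of $\mathrm{nes}(\pi)$ as the sum of the entries of $\mathbf{p}_{\mathrm{nes}}(\pi)$ we get $|H|=\mathrm{nes}(\pi)$; the same argument with $\mathbf{p}_{\mathrm{cr}}$ gives $|H|=\mathrm{cr}(\pi)$. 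Summing $q^{|H|}$ over all Hermite histories and transporting the sum through these bijections yields
\begin{eqnarray*}
P^{I}_{\ast,\mu}=\sum_{\pi\in\mathrm{PM}_{I}(\mu)}q^{\mathrm{nes}(\pi)}
=\sum_{\pi\in\mathrm{PM}_{I}(\mu)}q^{\mathrm{cr}(\pi)},
\end{eqnarray*}
which is the assertion.

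The step I expect to require the most care is checking that Theorem \ref{thrm:PMI-Hh} really applies to every $\pi\in\mathrm{PM}_{I}(\mu)$ for the prescribed upper path — i.e. that the algorithm producing the lower path $\lambda$ from $(\mu,\mathbf{p})$ is well defined and invertible, including the delicate point where two $U$'s joined by a dashed arc are flipped to two $D$'s when they move leftward (and the auxiliary integer $N(i)$, possibly requiring appending $D$'s on the right). One must also verify that the weight conventions for $\mathrm{nes}$ and $\mathrm{cr}$ in $\mathrm{PM}_{I}(\mu)$ from Eqn.(\ref{weight-PMI}) and the two subsequent displays are exactly what is needed so that deleting the down-step zeros leaves a valid type-I labelling with the correct total weight; but since Theorem \ref{thrm:PMI-Hh} is already available, these checks amount only to unwinding definitions, and the proof reduces to the chain of equalities above.
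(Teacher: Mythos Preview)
Your proposal is correct and follows essentially the same route as the paper: both arguments pass through the Hermite history so that $P^{I}_{\ast,\mu}=\sum_{(\mu,H)}q^{|H|}$, then invoke Theorem \ref{thrm:PMI-Hh} and the definitions of $\mathrm{nes}$ and $\mathrm{cr}$ to obtain $|H|=\mathrm{nes}(\pi)=\mathrm{cr}(\pi)$. The paper additionally checks that $|\mathrm{PM}_{I}(\mu)|=\prod_{d}\mathrm{ht}(d;\mu)$ via Theorem \ref{thrm:BIII-Hhfac} before citing Theorem \ref{thrm:PMI-Hh}, but since Theorem \ref{thrm:PMI-Hh} already asserts a bijection this extra cardinality verification is not needed, and your streamlined version is fine.
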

\begin{proof}
From Theorem \ref{thrm:BIII-Hhfac}, the number of ballot tilings 
whose fixed upper path is $\mu$ is given by 
$\prod_{d\in\mathrm{D}(\mu_{0})}\mathrm{ht}(d;\mu)$.
By a direct computation, it is easy to show that the cardinality of 
the set $\mathrm{PM}_{I}(\mu)$ is also given 
by $\prod_{d\in\mathrm{D}(\mu_{0})}\mathrm{ht}(d;\mu)$.
Thus we have a bijection between a ballot tiling $B$ with a fixed upper 
path $\mu$ and a generalized perfect matching $\pi$ in $\mathrm{PM}_{I}(\mu)$.
Let $(\mu,H)$ be an Hermite history for a ballot tiling with a fixed upper path 
$\mu$.
From $\mathrm{tiles}(B)=|H|$, Theorem \ref{thrm:PMI-Hh} and 
Proposition \ref{prop:HhBTS}, we have 
$\mathrm{tiles}(B)=|H|=\mathrm{nes}(\pi)=\mathrm{cr}(\pi)$.
This completes the proof.
\end{proof}

\section{Ballot tiling of type BI with a fixed lower path}
\label{sec:BI-low}
\subsection{\texorpdfstring{Generating function $P(M,N)$}{Generating function P(M,N)}}
\label{sec:BalP}
Let $\lambda_{M,N}$ be a path 
$\lambda_{M,N}:=\underbrace{D\ldots D}_{N}\underbrace{U\ldots U}_{M}$.
We abbreviate the generating function $P^{I}_{\lambda_{M,N},\ast}$ of
ballot tilings of type BI as $P(M,N)$. 

\begin{lemma}
\label{lemma:recr}
The generating function $P(M,N)$ satisfies the following recurrence relation: 
\begin{eqnarray}
\label{recrel}
\qquad
P(M,N)=
\begin{cases}
P(M,N-1)+q^{N}P(M-1,N), & M: \text {even}, \\
P(M,N-1)+q^{N}P(M-1,N)+q^{M+N}P(M+1,N-2), & M: \text{odd}.
\end{cases}
\end{eqnarray}
\end{lemma}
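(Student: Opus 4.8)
The plan is to prove the recurrence combinatorially by analyzing how a cover-inclusive ballot tiling of type BI with lower path $\lambda_{M,N} = D^N U^M$ behaves at its leftmost column. The key observation is that $\lambda_{M,N}$ starts with a down step, so the first step of any upper path $\mu$ with $\lambda_{M,N}/\mu$ nonempty is constrained, and we can classify tilings according to what tile (if any) occupies the cell in the first column. First I would set up the shifted skew shape determined by $\lambda_{M,N}$ and $\lambda_0$ (the zigzag path), noting that the leftmost cell of this region sits in column $1$, just to the right of the initial segment of $\lambda_{M,N}$. Every ballot tiling $B \in \mathcal{B}^I(\lambda_{M,N}/\ast)$ either leaves this leftmost cell uncovered — meaning $\mu$ itself passes through it, i.e.\ $\mu$ begins with $D$ and the remaining region is governed by $\lambda_{M,N-1}$ — or covers it with a tile $T$. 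In the covered case, $T$ is the unique tile containing the leftmost cell, and because the region is a staircase at the far left, $T$ must be a ballot tile whose leftmost box is exactly that cell; its rightmost box is an anchor box (if $T$ is a genuine ballot tile of length $(2n,n')$ with $n'\ge 1$) or $T$ is a Dyck tile.

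Next I would carefully track the area/tiles bookkeeping. Removing the leftmost cell when $\mu$ starts with $D$ corresponds to passing from $\lambda_{M,N}$ to $\lambda_{M,N-1}$ (deleting the first $D$), and since nothing is tiled in that cell there is no change in weight beyond the shift already encoded: this yields the summand $P(M,N-1)$. When the leftmost cell is covered by a Dyck tile $T$ (length $(2n,0)$), peeling off $T$ together with the row it occupies reduces the problem to a lower path with one fewer $D$ and one more $U$ — i.e.\ $\lambda_{M-1,N}$ is not quite right; here the crucial arithmetic is that a Dyck tile starting at the leftmost cell, after removal, converts the first $DU$-pattern appropriately, and the exponent $q^N$ is exactly $\mathrm{art}$ of the minimal such configuration plus the shift, giving $q^N P(M-1,N)$. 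The parity split enters precisely because of the type BI constraint that the number of ballot tiles of length $(2n,n')$ with $n'$ odd must be even: when $M$ is even there can be no leftover unpaired structure forcing an extra ballot tile at the left, but when $M$ is odd the terminal $U$ of $\lambda_{M,N}$ produces a configuration where the leftmost cell can be covered by a genuine ballot tile reaching an anchor box, and removing that tile (together with its partner, to respect the evenness constraint) lowers $N$ by $2$ and raises $M$ by $1$, with weight contribution $q^{M+N}$.

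The main obstacle will be making the parity argument and the third term precise: I need to show that the "extra" case — a ballot tile of length $(2n,n')$, $n'\ge 1$, occupying the leftmost cell — occurs if and only if $M$ is odd, and that after removing it (and its forced partner tile, since type BI requires an even count of such tiles) the residual tiling is an \emph{arbitrary} cover-inclusive type BI tiling with lower path $\lambda_{M+1,N-2}$, with the weight shift being exactly $q^{M+N}$. This requires checking (i) that the terminal-$U$ structure of $\lambda_{M,N}$ when $M$ is odd is what allows an anchor box to be reached from the leftmost cell, (ii) that the map removing the pair of ballot tiles is a bijection onto $\mathcal{B}^I(\lambda_{M+1,N-2}/\ast)$, and (iii) the $\mathrm{area}$/$\mathrm{tiles}$ accounting: a ballot tile of length $(2n,n')$ has $\mathrm{art} = (2n+1+n')/2$ and $\mathrm{tiles}=0$, so the two removed tiles contribute a controlled power of $q$ that, combined with the reindexing of the staircase, collapses to $q^{M+N}$. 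Once these three points are verified, summing over the three mutually exclusive cases gives exactly \eqref{recrel}. When $M$ is even, case (iii) is vacuous and only the first two terms survive, which is the even branch of the recurrence. I would also record the base cases ($P(0,0)=1$, and $P(M,N)=0$ for $M<0$ or $N<0$) to anchor the recursion, though these follow immediately from the definitions.
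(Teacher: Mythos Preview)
Your overall strategy --- case analysis on the leftmost box $b$ of the shifted shape --- matches the paper's, and your first case ($b$ uncovered, contributing $P(M,N-1)$) is correct. However, the second and third cases need correction.

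The essential geometric fact is that $\lambda_{M,N}=D^NU^M$ contains no $UD$ factor and hence no local peak, so the region supports no tile of length $(2n,n')$ with $n\ge1$: any such tile, pushed down repeatedly by cover-inclusiveness, must eventually rest on a peak of $\lambda$, and there is none. Consequently every tile is either a single box or a ballot tile of length $(0,n')$ running northeast to an anchor box. In particular, if $b$ is covered it is always a single box: the only ballot tile with leftmost box $b$ would have length $(0,N{+}M{-}1)$, and since the first column contains $b$ alone, only one such tile fits --- violating the type~BI parity rule. Thus your proposed third case, with a genuine ballot tile through $b$, cannot occur.

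The paper's actual third case is decided at the \emph{lowest} box of $Y$ (the valley of $\lambda_{M,N}$), not at $b$. Once $b$ is covered, a step-by-step argument on the upper path $\mu$ forces every box on the bottom diagonal above $D^N$ to be covered. Either (case~2) all $N$ of them are single-box Dyck tiles, and stripping them yields $q^N P(M-1,N)$; or (case~3, possible only for $M$ odd) the lowest box is the left end of a ballot tile of length $(0,M)$, the evenness rule forces a second such tile just above it, and removing the $N{-}1$ remaining single boxes together with this pair --- total $\mathrm{art}=(N{-}1)+2\cdot\frac{M+1}{2}=M+N$ --- leaves a tiling over $\lambda_{M+1,N-2}$. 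So the ballot tiles responsible for the third term live at the valley and have length exactly $(0,M)$; they do not pass through $b$.
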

\begin{proof}
We first consider the case where $M$ is even.
Let $Y$ be the shifted shape determined by $\lambda_{M,N}$ as in 
Section \ref{sec:im}. 
Let $b$ be the leftmost box of $Y$. In other words, the southwest edge 
of $b$ is attached to the first down step in $\lambda_{M,N}$.
Let $B$ be a cover-inclusive ballot tiling of type I in the shape $Y$.
The generating function $P(M,N)$ is the sum of two contributions:
the first one is that the box $b$ does not form a ballot tile in $B$,
and the second one is that the box $b$ forms a ballot tile in $B$.
In the first case, we have $P(M,N-1)$.
In the second case, since $\lambda_{M,N}$ does not have a partial 
sequence $DU$, there is no ballot tiles of length $(2n,n')$ with 
$n\ge1$ and $n'\ge0$ in $Y$.
The box $b$ forms a ballot tile of length $(0,0)$ and so do the $N-1$ boxes
which lie in the southeast direction of $b$.
The second contribution is $q^{N}P(M-1,N)$, which implies Eqn.(\ref{recrel}) for 
$M$ even.

Below, we consider the case where $M$ is odd.
We have three contributions for $P(M,N)$ and two of them are the same as 
in case of $M$ even.
The third contribution is that the box $b$ forms a ballot tile of 
length $(0,0)$ and there are two ballot tiles of length $(0,M)$ in $Y$.
Note that the unique lowest box in $Y$ forms the ballot tile of length $(0,M)$. 
The contribution of the third configuration is 
$q^{M+N}P(M+1,N-2)$, which implies Eqn.(\ref{recrel}).

\end{proof}

For positive integers $m$ and $N$, we define 
\begin{eqnarray*}
a_{(2m-1,N)}&:=&\genfrac{}{}{}{}{[N+2m]}{[2m]}, \\
a_{(2m,N)}&=&\genfrac{}{}{}{}{[2N+2m]}{[N+2m]}.
\end{eqnarray*}
The recurrence relations (Lemma \ref{lemma:recr}) can be solved 
by using $a_{(M,N)}$.
%%%%%%%%%
%%%%%%%%%
\begin{prop}
\label{prop:BalP}
The generating function $P_{M,N}$ satisfies
\begin{eqnarray}
P(M,N)=\prod_{1\le i\le N}(1+q^{i})\cdot 
\prod_{1\le j\le M}a_{(j,N)}.
\end{eqnarray}
\end{prop}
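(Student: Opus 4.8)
The plan is to prove Proposition \ref{prop:BalP} by induction on $M+N$, checking that the claimed product formula satisfies the recurrence in Lemma \ref{lemma:recr} together with the obvious base cases $P(0,N)=\prod_{1\le i\le N}(1+q^{i})$ (a staircase with only Dyck tiles of length $0$, giving the generating function $\prod_i(1+q^i)$) and $P(M,0)=\prod_{1\le j\le M}a_{(j,0)}=1$ (an empty shape). Writing $R(M,N):=\prod_{1\le i\le N}(1+q^{i})\cdot\prod_{1\le j\le M}a_{(j,N)}$ for the right-hand side, the task reduces to two polynomial identities in $q$: for $M$ even,
\begin{eqnarray*}
R(M,N)=R(M,N-1)+q^{N}R(M-1,N),
\end{eqnarray*}
and for $M$ odd,
\begin{eqnarray*}
R(M,N)=R(M,N-1)+q^{N}R(M-1,N)+q^{M+N}R(M+1,N-2).
\end{eqnarray*}

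First I would record the elementary facts about the factors $a_{(j,N)}$ that make these identities manageable. Note $\prod_{1\le j\le M}a_{(j,N)}$ telescopes: for $M=2m$ it equals $\tfrac{[2N+2m]}{[2m]}\cdot\prod_{1\le k<m}\tfrac{[2N+2k]}{[2k]}\cdot\prod\ldots$, and more usefully one checks that $\prod_{1\le j\le 2m}a_{(j,N)}=\genfrac{[}{]}{0pt}{}{N+2m}{N}_{?}$-type ratios that collapse to $\tfrac{[2N+2m]!\,\ldots}{\ldots}$; the key point is the simple one-step relations $a_{(2m,N)}/a_{(2m,N-1)}$, $a_{(2m-1,N)}/a_{(2m-1,N-1)}$, and the ``shift in $M$'' relations such as $\prod_{j\le M+1}a_{(j,N-2)}$ versus $\prod_{j\le M}a_{(j,N)}$ that appear in the odd case. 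I would compute $R(M,N)/R(M,N-1)$ and $R(M-1,N)/R(M,N-1)$ explicitly as ratios of quantum integers, clear denominators, and reduce each required identity to a single identity among quantum integers of the form $[a][b]-[c][d]=q^{?}[e][f]$, each of which follows from the two-term Pascal-type relations for $q$-integers (e.g. $[a+b]=[a]+q^{a}[b]$) already used repeatedly in the proof of Lemma \ref{lemma:tritree}.

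Concretely, the steps in order are: (i) verify the base cases $M=0$ and $N=0$; (ii) prove the even-$M$ recurrence for $R$, which after dividing through by $\prod_{i\le N}(1+q^i)\prod_{j\le M-1}a_{(j,N)}$ becomes an identity relating $a_{(M,N)}$, the telescoped tail, $(1+q^N)$, and $q^N$ — a short manipulation using $[2N+2m]=[2N+2m-1]+q^{2N+2m-1}$ or the like; (iii) prove the odd-$M$ recurrence for $R$, which is the genuinely more involved case because the third term involves both a shift $N\to N-2$ and a shift $M\to M+1$, so the ratio $R(M+1,N-2)/R(M,N-1)$ is a product of several quantum-integer ratios; here I would again clear denominators and reduce to a quadratic identity in quantum integers; (iv) conclude by induction. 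I expect step (iii) — the odd-$M$ identity — to be the main obstacle, since it is the only place where three terms interact and where the $(1+q^i)$ prefactor interacts nontrivially with the change $N\to N-2$ (two factors $(1+q^{N-1})(1+q^{N})$ disappear); keeping careful track of which quantum integers survive after telescoping, and then recognizing the resulting expression as an instance of a Pascal-type $q$-identity, is where the bookkeeping is heaviest. Everything else is routine once the base cases and the ratio computations are set up, and no new combinatorial input beyond Lemma \ref{lemma:recr} is needed.
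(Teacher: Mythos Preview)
Your proposal is correct and follows essentially the same approach as the paper: induction on $M+N$, verifying that the product formula satisfies the recurrence of Lemma \ref{lemma:recr}. The paper carries out exactly the telescoping computation you sketch, writing $\prod_{j\le 2m}a_{(j,N)}=\prod_{j=1}^{m}\frac{[2N+2j]}{[2j]}$ and then reducing the even and odd recurrences to short quantum-integer identities; your expectation that the odd case is the heavier bookkeeping is borne out. One small remark: the paper takes $(M,N)=(1,1)$ as its base case rather than your boundary rows $M=0$, $N=0$; your claim $P(0,N)=\prod_{i=1}^{N}(1+q^{i})$ is true but is not quite as immediate as you suggest (it is the type BI generating function for $\lambda=D^{N}$, not just an ordinary Dyck staircase), so you would need a line justifying it independently.
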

%%%%%%%%%%
\begin{proof}
We prove Proposition by induction. 
When $(M,N)=(1,1)$, we have $a_{(1,1)}=[3]/[2]$. 
Since the generating function is calculated as $P(1,1)=[2][3]/[2]=[3]$,
Proposition holds true when $(M,N)=(1,1)$.

Assume that Proposition holds true for $P(m,n)$ such that $m+n\le M+N-1$.
We first consider the case where $M=2m$.
From the definition of $a_{(M,N)}$, we have 
\begin{eqnarray}
\label{eqn:a1}
\prod_{1\le j\le 2m}a_{(j,N-1)}&=&\prod_{1\le j\le m}a_{(2j,N-1)}a_{(2j-1,N-1)} \nonumber \\
&=&\prod_{1\le j\le m}\genfrac{}{}{}{}{[2N+2j-2]}{[2j]}, 
\end{eqnarray}
and 
\begin{eqnarray}
\label{eqn:a2}
\prod_{j=1}^{2m-1}a_{(j,N)}&=&a_{(2m-1,N)}\prod_{1\le j\le m-1}a_{(2j,N)}a_{(2j-1,N)} \nonumber \\
&=&\genfrac{}{}{}{}{[N+2m]}{[2N+2m]}\prod_{1\le j\le m}\genfrac{}{}{}{}{[2N+2j]}{[2j]}.
\end{eqnarray} 
The right hand side of Eqn.(\ref{recrel}) is written as 
\begin{eqnarray*}
P(M,N-1)+q^{N}P(M-1,N)
&=&\prod_{i=1}^{N-1}(1+q^{i})\left\{
\prod_{1\le j\le M}a_{(j,N-1)}+q^{N}(1+q^{N})\prod_{1\le j\le M-1}a_{(j,N)}
\right\} \\
&=&\prod_{i=1}^{N-1}(1+q^{i})\prod_{j=1}^{m}\genfrac{}{}{}{}{[2N+2j]}{[2j]}
\cdot\frac{[2N]+q^{N}(1+q^{N})[N+2m]}{[2N+2m]} \\
&=&\prod_{i=1}^{N}(1+q^{i})\prod_{j=1}^{m}\genfrac{}{}{}{}{[2N+2j]}{[2j]} \\
&=&\prod_{i=1}^{N}(1+q^{i})\prod_{1\le j\le M}a_{(j,N)},
\end{eqnarray*}
where we apply Eqn.(\ref{eqn:a1}) and Eqn.(\ref{eqn:a2}) to the first line.
This implies Eqn.(\ref{recrel}) holds for $P_{2m,N}$.

When $M=2m-1$, we have 
\begin{eqnarray*}
&&P(2m-1,N-1)+q^{N}P(2m-2,N)+q^{N+2m-1}P(2m,N-2)　\\
&&\qquad=
\prod_{i=1}^{N-2}(1+q^{i})\left\{
(1+q^{N-1})\frac{[N+2m-1]}{[2N+2m-2]}\prod_{j=1}^{m}\frac{[2N+2j-2]}{[2j]}\right. \\
&&\quad\qquad\left. +q^{N}(1+q^{N-1})(1+q^{N})\prod_{j=1}^{m-1}\frac{[2N+2j]}{[2j]} 
+q^{N+2m-1}\prod_{j=1}^{m}\frac{[2N+2j-4]}{[2j]}\right\} \\
&&\qquad=\prod_{i=1}^{N}(1+q^{i})\cdot\prod_{j=1}^{m}\frac{[2N+2j-2]}{[2j]}\cdot
\frac{[N+2m]}{[2N]} \\
&&\qquad=\prod_{i=1}^{N}(1+q^{i})\prod_{1\le j\le M}a_{(j,N)}.
\end{eqnarray*}
This completes the proof.
\end{proof}

\begin{example}
The first few values of $\prod_{1\le j\le M}a_{(j,N)}$ are as follows:
\begin{center}
\begin{tabular}{cc|c|c|c|c|c|}
 &  & \multicolumn{5}{c|}{$N$} \\
 &  & 1 & 2 & 3 & 4 & 5 \\ \hline 
 \multirow{10}{*}{$M$} & 1 & $\displaystyle{\frac{[3]}{[2]}}$ & 
$\displaystyle{\frac{[4]}{[2]}}$ & $\displaystyle{\frac{[5]}{[2]}}$ 
& $\displaystyle{\frac{[6]}{[2]}}$ & $\displaystyle{\frac{[7]}{[2]}}$ \\ \cline{2-7} 
 & 2 & $\displaystyle{\frac{[4]}{[2]}}$ & $\displaystyle{\frac{[6]}{[2]}}$ 
& $\displaystyle{\frac{[8]}{[2]}}$ & $\displaystyle{\frac{[10]}{[2]}}$ 
& $\displaystyle{\frac{[12]}{[2]}}$ \\ \cline{2-7}
 & 3 & $\displaystyle{\frac{[5]}{[2]}}$ & $\displaystyle{\frac{[6]^2}{[2][4]}}$ 
& $\displaystyle{\frac{[7][8]}{[2][4]}}$ & $\displaystyle{\frac{[8][10]}{[2][4]}}$ 
& $\displaystyle{\frac{[9][12]}{[2][4]}}$ \\ \cline{2-7}
 & 4 & $\displaystyle{\frac{[6]}{[2]}}$ & $\displaystyle{\frac{[6][8]}{[2][4]}}$ 
& $\displaystyle{\frac{[8][10]}{[2][4]}}$ & $\displaystyle{\frac{[10][12]}{[2][4]}}$ 
& $\displaystyle{\frac{[12][14]}{[2][4]}}$\\ \cline{2-7} 
& 5 & $\displaystyle{\frac{[7]}{[2]}}$ & $\displaystyle{\frac{[8]^2}{[2][4]}}$ 
& $\displaystyle{\frac{[8][9][10]}{[2][4][6]}}$ & $\displaystyle{\frac{[10]^2[12]}{[2][4][6]}}$ 
& $\displaystyle{\frac{[11][12][14]}{[2][4][6]}}$ \\ \hline
\end{tabular}
\end{center}
\end{example}

\subsection{\texorpdfstring{Factorization of $P^{I}_{\lambda,\ast}$}{Factorization of PI}}
\label{sec:BI-fac}
A {\it valley} of a ballot path $\lambda$ is a middle point of a partial
path $DU$ in $\lambda$. 
We call the lowest and leftmost valley the {\it minimum point} of a path $\lambda$. 
A path $\lambda$ can be written as a concatenation of two paths $\lambda_1$ and 
$\lambda_2$ at the minimum point, {\it i.e.}, $\lambda=\lambda_1\circ\lambda_2$.
Let $M_2$ (resp. $N_2$) be the number of $U$ (resp. $D$) in $\lambda_2$ and 
$N_{1}$ be the number of $D$'s in $\lambda_1$.

\begin{theorem}
\label{thrm:BI-fac}
Let $\lambda, N_1, M_2$ and $N_2$ be as above. 
We have 
\begin{eqnarray*}
P^{I}_{\lambda,\ast}=P_{\lambda_1}^{\mathrm{Dyck}}\cdot P(M_2+N_2,N_1)\cdot P^{I}_{\lambda_2}. 
\end{eqnarray*}
\end{theorem}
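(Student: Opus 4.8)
The plan is to prove the factorization $P^{I}_{\lambda,\ast}=P_{\lambda_1}^{\mathrm{Dyck}}\cdot P(M_2+N_2,N_1)\cdot P^{I}_{\lambda_2}$ by a decomposition of a ballot tiling in $\mathcal{B}^{I}(\lambda/\ast)$ along the vertical line through the minimum point of $\lambda$. Write $\lambda=\lambda_1\circ\lambda_2$ at this minimum point; since the minimum point is the lowest, leftmost valley, $\lambda_1$ is a path that never returns to its starting level before the minimum (so it behaves like a ``mountain" of Dyck type over $\lambda_1$'s own base), and $\lambda_2$ starts at the global minimum level. First I would argue that any ballot tile of $B\in\mathcal{B}^{I}(\lambda/\ast)$ that has a box in the columns under $\lambda_1$ and a box in the columns under $\lambda_2$ must be a \emph{Dyck} tile of length $(2n,0)$ that passes over the minimum point (an anchor-containing ballot tile of length $(2n,n')$ with $n'\ge 1$ has its rightmost box on an anchor box, which lies in the region associated with the $U$'s at the right end, so such a tile cannot straddle the minimum point of $\lambda$ coming from the left $D$-part). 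This will let me split $B$ into three pieces: the part of $B$ supported strictly over the columns of $\lambda_1$, the tiles straddling the minimum, and the part supported strictly over the columns of $\lambda_2$.

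The key geometric observation is that the tiles straddling the minimum point, together with the region of $\lambda_1/\mu|_{\lambda_1}$ directly below them, organize into exactly the configuration counted by $P(M_2+N_2,N_1)=P^{I}_{\lambda_{M_2+N_2,N_1},\ast}$, where $\lambda_{M,N}=D^{N}U^{M}$. The reason is that $\lambda_1$ contributes $N_1$ down steps forming the ``$D^{N_1}$" prefix, and the portion of the upper path over $\lambda_2$ that is reached by straddling tiles contributes an ``$U^{M_2+N_2}$"-shaped boundary (every step of $\lambda_2$, whether $U$ or $D$, corresponds after the decomposition to an up step of the relevant sub-boundary, because a Dyck tile running over the minimum must rise over all of $\lambda_2$'s steps it covers). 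I would make this precise by exhibiting a bijection $B\mapsto (B', C, B'')$ where $B'$ is a cover-inclusive Dyck tiling over $\lambda_1$ (counted by $P_{\lambda_1}^{\mathrm{Dyck}}$, using Theorem \ref{thrm-A1}/the definition of $P^{\mathrm{Dyck}}_{\lambda}$), $C$ is a ballot tiling of type BI over $\lambda_{M_2+N_2,N_1}$ (counted by $P(M_2+N_2,N_1)$, known explicitly from Proposition \ref{prop:BalP}), and $B''$ is a ballot tiling of type BI over $\lambda_2$ (counted by $P^{I}_{\lambda_2,\ast}$). One then checks that $\mathrm{art}$ is additive under this decomposition, i.e. $\mathrm{art}(B)=\mathrm{art}(B')+\mathrm{art}(C)+\mathrm{art}(B'')$, which gives the product formula for generating functions.

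An alternative, possibly cleaner, route is induction via the recurrence of Lemma \ref{lemma:recr} together with the Hermite-history/tree machinery of Section \ref{sec:ppt}: the tree $A(\lambda)$ of a concatenation $\lambda=\lambda_1\circ\lambda_2$ decomposes at the root (analogously to Lemma \ref{lemma:treefac}), with the $D$'s of $\lambda_1$ and the terminal/$UU$-structure over $\lambda_2$ producing precisely the middle factor $P(M_2+N_2,N_1)$; here one would use the BTS bijection (Theorem \ref{thrm:BTSinpo}) to transfer the tree factorization to ballot tilings. I expect the main obstacle to be the careful verification that \emph{no} anchor-containing ballot tile straddles the minimum point and that the straddling Dyck tiles, after stripping off the $\lambda_1$-part and the $\lambda_2$-part, assemble into a genuine cover-inclusive type-BI ballot tiling over $\lambda_{M_2+N_2,N_1}$ rather than some larger or constrained family — in other words, pinning down the exact parameters $M_2+N_2$ and $N_1$ of the middle factor and showing the map is a bijection onto all of $\mathcal{B}^{I}(\lambda_{M_2+N_2,N_1}/\ast)$. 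The additivity of $\mathrm{art}$ and the two outer factors should then be routine bookkeeping.
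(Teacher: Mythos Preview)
Your bijective decomposition is a natural first idea, but it is genuinely different from the paper's proof and, as written, has a real gap exactly where you anticipate it.

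The paper does \emph{not} construct a bijection $B\mapsto(B',C,B'')$. Instead it proves the factorization by induction on $|\lambda|$, using a recurrence for $P^{I}_{\lambda,\ast}$ obtained by case analysis on the leftmost box of the shifted shape (Lemmas~\ref{lemma:recBaleven} and~\ref{lemma:recBalodd}, with the parity of $M_2+N_2$ determining whether an extra term from a pair of anchor ballot tiles appears). Each term $Q(\nu_i;\mathbf{T})$ on the right-hand side is a generating function over a strictly shorter path, hence factors by induction; the resulting sum is then collapsed to the desired product using the algebraic identities of Lemma~\ref{lemma:recDyck}, Lemma~\ref{lemma:PBoxeven}, and the recurrence for $P(M,N)$ in Lemma~\ref{lemma:recr}. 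No geometric splitting of individual tilings is ever performed.

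The gap in your approach is precisely the middle factor. Your heuristic ``every step of $\lambda_2$, whether $U$ or $D$, becomes an up step of the sub-boundary'' is not a statement about the tiling but about a hoped-for boundary, and you do not say what map sends a straddling tile to a tile over $D^{N_1}U^{M_2+N_2}$. More seriously, when $M_2+N_2$ is odd the factor $P(M_2+N_2,N_1)$ counts tilings containing \emph{pairs of ballot tiles of length $(0,M_2+N_2)$} ending on anchor boxes of $\lambda_{M_2+N_2,N_1}$; these have no counterpart among your ``straddling Dyck tiles'', so the proposed $(B',C,B'')$ cannot be surjective onto the middle factor without some further (unstated) exchange of tiles with the $B''$ piece. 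This is exactly the phenomenon that forces the paper to treat the odd case separately via Lemma~\ref{lemma:recBalodd}. Your alternative route through trees and BTS has a similar issue: the BTS bijection of Section~\ref{sec:BTS} is set up only for trees without arrows, whereas a general ballot path produces arrows via~($\Diamond 8$), and the tree factorization in Section~\ref{sec:Fac-trees} is \emph{deduced from} Theorem~\ref{thrm:BI-fac}, not used to prove it.
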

Before proceeding with the proof of Theorem \ref{thrm:BI-fac}, 
we introduce some notations and lemmas used later.

We recursively define the set of valleys $\mathrm{Val}(\lambda)$ from left to
right as follows.
The leftmost valley is in $\mathrm{Val}(\lambda)$.
Take a valley $v$. 
If the rightmost valley in $\mathrm{Val}(\lambda)$ which is left to $v$ 
is lower than $v$, then $v\not\in\mathrm{Val}(\lambda)$. 
Otherwise, $v\in\mathrm{Val}(\lambda)$. 
The heights of valleys in $\mathrm{Val}(\lambda)$ are weakly decreasing.
The minimum point is the rightmost valley in $\mathrm{Val}(\lambda)$.

We construct a path $\overline{\lambda}$ from $\lambda$ as follows.
The path $\overline{\lambda}$ is the same as $\lambda$ from the starting 
point to the leftmost valley. 
From the leftmost valley to the rightmost valley in $\mathrm{Val}(\lambda)$, 
the path $\overline{\lambda}$ is the highest path which passes through 
all the valleys in $\mathrm{Val}(\lambda)$.
The path $\overline{\lambda}$ is again the same as $\lambda$ from 
the rightmost valley in $\mathrm{Val}(\lambda)$ to the ending point.

Let $r+1$ be the cardinality of the set $\mathrm{Val}(\lambda)$. 
We denote by $\lambda'_{0}$ the partial path of $\lambda$ from the starting 
point to the leftmost valley.
The path $\lambda$ is written as a concatenation of paths $\lambda'_{i}$,
$0\le i\le r$, and $\lambda_2$ such that $\lambda'_{i}$, $1\le i\le r$, is a partial 
path of $\lambda$ starting from the $i$-th valley in $\mathrm{Val}(\lambda)$
and ending at the $(i+1)$-th valley in $\mathrm{Val}(\lambda)$, 
{\it i.e.}, $\lambda=\lambda'_{0}\circ\lambda'_1\circ\ldots\circ\lambda'_r\circ\lambda_2$.
We denote by $\overline{\lambda_1}$ the path such that 
$\overline{\lambda}=\overline{\lambda_1}\circ\lambda_2$.

Let $n_0$ be the number of $D$ in $\lambda$ from the starting point 
of the path to the leftmost valley in $\mathrm{Val}(\lambda)$.
We denote by $n_i$ (resp. $m_i$) the number of $D$ (resp. $U$) 
in the path $\lambda'_{i}$, $1\le i\le r$.
From the definition of $\mathrm{Val}(\lambda)$, we have 
$n_{i}\ge m_{i}$ for $1\le i\le r$.
Let $s$ be an integer such that $n_{s-1}>m_{s-1}$ and 
$n_{i}=m_{i}$ for $i\ge s$. 
Thus, the path $\overline{\lambda}$ consists of $n_0$ $D$'s,
followed by $m_1$ $U$'s, followed by $n_1$ $D$'s, followed by 
$m_2$ $U$'s, $\cdots$, followed by $n_r$ $D$'s and ending with 
the path $\lambda_2$.

Let $\mu$ be a partial path of $\lambda$ from the starting point 
to $m_{s-1}$ $U$'s. 
We define 
\begin{eqnarray*}
\Box(n_0,m_1,n_1,\ldots,m_{s-1}):=P_{\mu}^{\mathrm{Dyck}},
\end{eqnarray*}
and $P_{T}^{\mathrm{Dyck}}:=P_{\mu}^{\mathrm{Dyck}}$ where the 
Young diagram $T$ is determined by the path $\mu$.
We construct sequences of integers $\mathbf{n}_{i}$, $0\le i\le s-1$ 
from $\mathbf{n}:=(n_0,m_1,n_1,\ldots,m_{s-1})$ as follows.
We define $\mathbf{n}_0:=(n_0-1,m_1,n_1,\ldots,m_{s-1})$, 
$\mathbf{n}_{1}:=(n_0,m_1-1,n_1,m_2,\ldots,m_{s-1})$, 
and $\mathbf{n}_{i}:=(n_0-1,m_1,n_1,\ldots,n_{i-1}+1,m_i-1,\ldots,m_{s-1})$
for $2\le i\le s-1$.

\begin{lemma}
\label{lemma:recDyck}
We have 
\begin{eqnarray*}
\Box(\mathbf{n})=\Box(\mathbf{n}_0)+\sum_{i=0}^{s-2}
q^{\mathrm{deg}_{i}(\mathbf{n})}\Box(\mathbf{n}_{i+1})
\end{eqnarray*}
where $\mathrm{deg}_{i}(\mathbf{n}):=\sum_{0\le j\le i}n_{j}$.
\end{lemma}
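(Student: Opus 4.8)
The plan is to prove the recurrence combinatorially, by decomposing the set of cover-inclusive Dyck tilings counted by the left-hand side, in the same spirit as the proof of Lemma \ref{lemma:recr}. By definition (and Theorem \ref{thrm:DyckKL}), $\Box(\mathbf{n})=P^{\mathrm{Dyck}}_{\mu}=\sum_{D\in\mathcal{D}(\mu/\ast)}q^{\mathrm{art}(D)}$, where $\mu$ is the Dyck path obtained from the ``essential'' word $D^{n_0}U^{m_1}D^{n_1}\cdots U^{m_{s-1}}$ by prepending $U$'s and appending $D$'s; by Lemma \ref{lemma:treedel} these outer matched $UD$-pairs may be ignored, so only the essential shape matters. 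First I would isolate the box $b$ of the associated Young diagram whose south-west edge lies on the first down step of the block $D^{n_0}$, i.e.\ the cell sitting above the leftmost valley of $\mu$. Every $D\in\mathcal{D}(\mu/\ast)$ falls into exactly one of two classes: either the column of $b$ is not used by any tile, in which case deleting that column leaves a cover-inclusive Dyck tiling for the shorter path encoded by $\mathbf{n}_{0}$ with the same $\mathrm{art}$; or $b$ is the leftmost cell of some Dyck tile $\tau$.

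In the second class, the geometry of $\mu$ — descents $D^{n_0},D^{n_1},\ldots$ separated by ascents $U^{m_1},U^{m_2},\ldots$ — forces $\tau$ to run rightward hugging the lower path, and when it reaches the $i$-th valley it must either terminate there or climb over it. Accordingly there is a well-defined index $i$ with $0\le i\le s-2$ such that $\tau$, together with the single-box tiles that cover-inclusivity forces beneath it, occupies cells down exactly the first $i+1$ descents. Removing $\tau$ and these forced boxes and straightening the residual lower path yields a cover-inclusive Dyck tiling for the path encoded by $\mathbf{n}_{i+1}$ — the bookkeeping of which blocks shrink and which grow is precisely the recipe $\mathbf{n}\mapsto\mathbf{n}_{i+1}$ — and this operation is reversible. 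Using $\mathrm{art}(D)=(\mathrm{area}(D)+\mathrm{tiles}(D))/2$ together with the fact (as in Section \ref{sec:Hh} and Section \ref{sec:bijHer}) that a Dyck tile of length $2k$ contributes $k+1$ to $\mathrm{art}$ while each additional single-box tile contributes $1$, the total weight carried by $\tau$ and the forced boxes equals $q^{n_0+n_1+\cdots+n_i}=q^{\mathrm{deg}_i(\mathbf{n})}$. Summing over the two classes and over $i$ then gives the claimed identity.

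The main obstacle is the second class: one must check carefully that ``remove $\tau$ and its forced boxes, then straighten the lower path'' really is a bijection onto the tilings for $\mathbf{n}_{i+1}$ — that the cover-inclusive constraint of the residual tiling is neither too strong nor too weak, that the index $i$ is genuinely determined by $D$, and that the $\mathrm{art}$-count collapses to $\mathrm{deg}_i(\mathbf{n})$ with no leftover powers of $q$. As a cross-check one can verify the base case $s=2$, where the recurrence reduces to the $q$-Pascal relation for $\Box(n_0,m_1)=\genfrac{[}{]}{0pt}{}{n_0+m_1}{n_0}$. An alternative, less geometric route is to use Theorem \ref{thrm-A1} and Theorem \ref{thrm:Dyck-fac} to write each $\Box$ explicitly as a product of quantum integers (reading the chord lengths, equivalently the ramification data of the tree $A(\mu)$, off of $\mathbf{n}$) and then verify the recurrence purely algebraically by repeated use of $[a+b]=[a]+q^{a}[b]$, exactly as in the proof of Proposition \ref{prop:BalP}; there the difficulty shifts to computing those chord lengths and tracking how they transform under $\mathbf{n}\mapsto\mathbf{n}_{i}$.
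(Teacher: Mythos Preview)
Your strategy is the paper's strategy: decompose $\mathcal{D}(\mu/\ast)$ according to whether the leftmost box $b$ (south-west edge on the first step of the block $D^{n_0}$) is covered, and in the covered case split further by an index $i$ recording how far the forced tile configuration reaches along the successive descent blocks. Your bookkeeping of the $q$-weight and of the target sequences $\mathbf{n}_i$ matches the paper's.

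Where your sketch and the paper diverge is in the geometry of the covered case. You describe the tile $\tau$ containing $b$ as one long Dyck tile ``running rightward hugging the lower path'' and either terminating at or climbing over the $i$-th valley; the forced single boxes then sit beneath $\tau$. But $b$ is not adjacent to the lower path in the sense your picture suggests: the leftmost cell of a Dyck tile is the cell at the start of its own Dyck word, so from $b$ the tile's next cell must go \emph{up}, toward the upper boundary, not down along the descent $D^{n_0}$. (Your two descriptions of $b$ --- ``SW edge on the first down step'' versus ``the cell above the leftmost valley'' --- already disagree, which is a symptom of this.) The paper's picture is different and more explicit: in the $(i{+}1)$-st subcase the forced configuration is a \emph{staircase} --- $n_0-1$ single boxes in the row through $b$, then for each $1\le j\le i$ a Dyck tile of length $2m_j$ riding over the hump $U^{m_j}$ of the lower path, with single boxes filling the gaps along each descent $D^{n_j}$; adjacent hump-tiles merge when consecutive valleys share a height. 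Removing this staircase leaves exactly the shape encoded by $\mathbf{n}_{i+1}$, and the $\mathrm{art}$ of the removed tiles is $\sum_{0\le j\le i}n_j=\mathrm{deg}_i(\mathbf{n})$, as you also computed.

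So the skeleton of your argument is right and essentially coincides with the paper's; what needs fixing is only the description of what is being peeled off --- replace ``one long $\tau$ along the lower path plus boxes beneath it'' by the staircase of single boxes and length-$2m_j$ tiles that the paper spells out.
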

\begin{proof}
Recall that a Dyck path $\mu$ determines the Young diagram $Y$.
We denote by $b$ the leftmost box in $Y$. 
There are two cases: the box $b$ is empty or forms a Dyck tile.
In the first case, we are not able to put a Dyck tile on boxes 
on the leftmost column (rotate $Y$ in 45 degrees) of $Y$.
The contribution of this case to $\Box(\mathbf{n})$ is given 
by $\Box(\mathbf{n}_{0})$.
In the second case, we have $(s-1)$ subcases.
The first subcase is the case where the boxes $B_1$ at the bottom row 
form a Dyck tile of length zero. This gives a factor $q^{n_0}$.
We consider a Dyck tiling in the remaining diagram $Y\setminus B_{1}$, 
which gives $\Box(\mathbf{n}_1)$.
Thus, the contribution to $\Box(\mathbf{n}_{0})$ is 
$q^{n_1}\Box(\mathbf{n}_{1})$.
In the $i$-th ($2\le i\le s-1$) subcase, we consider the following 
Dyck tiling.
In the bottom row, we have $n_0-1$ Dyck tiles of length zero 
from left to right. 
We have a Dyck tile of length $m_j$, ($1\le j\le i-1$), which start with 
the rightmost box in the $(1+\sum_{1\le k\le j}m_{k})$-th row from 
the bottom.
If two Dyck tiles of length $m_j$ and $m_{j+1}$ shares the rightmost 
box in the $(1+\sum_{1\le k\le j}m_{k})$-th row from the bottom,
we merge them into a larger Dyck tile of length $m_{j}+m_{j+1}$.
If $n_j>m_j$ and $j<i-1$, we put $n_{j}-m_{j}-1$ Dyck tiles of length zero 
in the $(1+\sum_{1\le k\le j}m_{k})$-th row in-between Dyck tiles 
of length $m_{j}$ and $m_{j+1}$.
If $n_{i-1}>m_{i-1}$, we put $n_{i-1}-m_{i-1}$ Dyck tiles of length zero
right to the Dyck tile of length $m_{i-1}$.
This configuration of Dyck tiles gives the factor $q^{\mathrm{deg}_{i}(\mathbf{n})}$.
In the remaining region in $Y$, we consider a Dyck tiling, which gives 
$\Box(\mathbf{n}_{i})$. 
By summing all the contribution, we obtain the right hand side of 
Lemma \ref{lemma:recDyck}.

\end{proof}

Let $T_{i}$ be the Young diagram determined by $\overline{\lambda}$ and 
$\lambda'_{i}$. 
We denote by $\mathbf{T}:=(T_{1},\ldots,T_{r})$ a sequence of Young diagrams.
Then, the path $\lambda$ is characterized a pair $(\overline{\lambda};\mathbf{T})$.
More generally, let $\mathrm{Val}'(\mu)$ be a subset of the set of valleys and 
$r-1$ be the cardinality of $\mathrm{Val}'(\mu)$.
We denote by $\mu'_{i}$, $1\le i\le r$, be the highest path from the $(i-1)$-th valley 
to the $i$-th valley in $\mathrm{Val}'(\lambda)$ where zeroth valley is the starting point 
of $\mu$ and $(r+1)$-th valley is the ending point of $\mu$.
We denote by $\overline{\mu}'$ the concatenation of paths 
$\mu'_{1}\circ\cdots\circ\mu'_{r}$.
Let $\nu_{i}$ be a path from the $(i-1)$-th valley to the $i$-th valley
such that the first step of $\nu_i$ is $U$. 
We denote by $T'_{i}$ the Young diagram determined by $\mu'_{i}$ 
and $\nu_{i}$.
The path $\mu$ is characterized by the pair $(\overline{\mu}',\mathbf{T}')$.
Note that there are several choices for $(\overline{\mu}',\mathbf{T}')$.
The set $\mathrm{Val}(\lambda)$ is a special choice of $\mathrm{Val}'(\lambda)$.

We define
\begin{eqnarray*}
Q(\overline{\lambda};\mathbf{T}):=P_{\lambda,\ast}^{I},
\end{eqnarray*}
where $\lambda$ is characterized by $(\overline{\lambda};\mathbf{T})$.

We construct paths $\nu_i$, $0\le i\le r+2$, from $\overline{\lambda}$ as follows.
The path $\nu_0$ is the path obtained from $\overline{\lambda}$ by replacing 
$n_0$ with $n_0-1$.
The path $\nu_1$ is the path obtained from $\overline{\lambda}$ by replacing 
$m_1$ with $m_1-1$.
The path $\nu_{i}$, $2\le i\le r$, is the path obtained from $\overline{\lambda}$
by replacing $n_{0}$ with $n_0-1$, $n_{i-1}$ with $n_{i-1}+1$ and 
$m_{i}$ with $m_{i}-1$.
Let $\nu'$ be a path obtained from $\overline{\lambda_{1}}$ by  
replacing $n_0$ with $n_{0}-1$, $n_r$ with $n_{r}+1$ and 
$\nu''$ be a path obtained from $\lambda_2$ by deleting the leftmost 
$U$ in $\lambda_{2}$.
Then, we define the path $\nu_{r+1}:=\nu'\circ\nu''$.
Let $\rho$ be a path obtained from $\overline{\lambda_1}$ by replacing 
$n_0$ with $n_0-1$ and $n_{s-1}$ with $n_{s-1}-1$, and 
$\rho'$ be a path $U\lambda_2$.
Then, we define $\nu_{r+2}:=\rho\circ\rho'$.

\begin{lemma}
\label{lemma:recBaleven}
Let $\lambda$, $\nu$ and $\mathbf{T}$ as above and $M_2+N_2$ be even.
We have 
\begin{eqnarray*}
P^{I}_{\lambda,\ast}=
Q(\nu_0;\mathbf{T})+\sum_{i=0}^{r}q^{\mathrm{deg}_{i}(\lambda)}Q(\nu_{i+1};\mathbf{T}),
\end{eqnarray*}
where $\mathrm{deg}_{i}(\lambda):=\sum_{0\le j\le i}n_{j}$ for $0\le i\le r$.
\end{lemma}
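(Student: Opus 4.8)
The plan is to imitate the proofs of Lemma~\ref{lemma:recr} and Lemma~\ref{lemma:recDyck}: fix the shifted shape $Y$ determined by $\lambda$ (after the embedding of the Remark in Section~\ref{sec:ppt}) and let $b$ be its leftmost box, the one whose southwest edge is attached to the first down step of $\lambda$. Every cover-inclusive ballot tiling $B\in\mathcal{B}^{I}(\lambda/\ast)$ lies in exactly one of two situations: (i) $b$ is not contained in the skew shape $\lambda/\mu$, the upper path of $B$ passing above $b$; or (ii) $b$ is covered by a tile of $B$. In case (ii), since $\overline{\lambda}$ has no factor $DU$ inside $\lambda'_0$ and the next nonempty step above $b$ is a down step, the tile covering $b$ must be a Dyck tile of length $0$; the cover-inclusive condition then forces the whole bottom row of $n_0$ length-$0$ Dyck tiles together with a staircase of further length-$0$ Dyck tiles climbing along the valleys of $\mathrm{Val}(\lambda)$, and this staircase either terminates within one of the subpaths $\lambda'_0,\lambda'_1,\dots,\lambda'_r$ or runs into $\lambda_2$.

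First I would handle case (i): deleting the first down step of $\lambda$ gives a shape-preserving bijection between the tilings with $b$ empty and the cover-inclusive ballot tilings of the shifted shape of $\nu_0$, the diagrams $\mathbf{T}$ on the other valleys being untouched; this contributes $Q(\nu_0;\mathbf{T})$. Next, for each $i$ with $0\le i\le r-1$, the case (ii) tilings whose staircase climbs through $\lambda'_1,\dots,\lambda'_i$ but no farther before a genuinely longer ballot tile can appear carry a forced factor $q^{n_0+n_1+\dots+n_i}=q^{\mathrm{deg}_i(\lambda)}$, and the remainder of $Y$ is the shifted shape of $\nu_{i+1}$ still carrying $\mathbf{T}$, so these tilings contribute $q^{\mathrm{deg}_i(\lambda)}Q(\nu_{i+1};\mathbf{T})$. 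Finally, the case (ii) tilings whose staircase runs past $\lambda'_r$ and absorbs the leftmost up step of $\lambda_2$ contribute the factor $q^{\mathrm{deg}_r(\lambda)}$ together with the residual path $\nu_{r+1}=\nu'\circ\nu''$, giving $q^{\mathrm{deg}_r(\lambda)}Q(\nu_{r+1};\mathbf{T})$. Summing the $r+2$ contributions yields the stated identity.

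The technical heart of the argument is checking that this partition into $r+2$ classes is exhaustive and disjoint, and that along the climbing staircase the cover-inclusive condition really forces exactly the length-$0$ Dyck tiles counted by $\mathrm{deg}_i(\lambda)$ and nothing more, so that the residual region is precisely the shifted shape of $\nu_{i+1}$ with the Young diagrams on the valleys to the right unaffected; one must also make sure that decreasing an $m_j$ by one does not disturb the set of valleys in a way that invalidates $\mathbf{T}$. The hypothesis that $M_2+N_2$ is even enters exactly to rule out one further configuration: the one in which, after the length-$0$ Dyck tile at $b$, two ballot tiles of length $(0,M_2+N_2)$ appear near the minimum point. By the defining constraint on type BI ballot tilings --- the number of ballot tiles of length $(2n,n')$ is even for $n'$ odd and zero for $n'$ even --- such a configuration requires $M_2+N_2$ odd, and in the odd branch it produces the extra term $q^{\mathrm{deg}_r(\lambda)+M_2+N_2}Q(\nu_{r+2};\mathbf{T})$, exactly parallel to the term $q^{M+N}P(M+1,N-2)$ in the $M$ odd case of Lemma~\ref{lemma:recr}. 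I expect this parity inspection near the minimum point, together with the bookkeeping that the diagrams $\mathbf{T}$ are carried along unchanged, to be the main obstacle; everything else is the same peeling argument already used for Lemma~\ref{lemma:recr} and Lemma~\ref{lemma:recDyck}.
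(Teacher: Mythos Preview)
Your approach is the paper's: peel off the leftmost box $b$ and classify tilings by how far the resulting forced configuration propagates along $\mathrm{Val}(\lambda)$, exactly parallel to Lemmas~\ref{lemma:recr} and~\ref{lemma:recDyck}, with the parity of $M_2+N_2$ ruling out the extra ballot-tile contribution.

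One imprecision worth fixing: the forced configuration climbing through $\mathrm{Val}(\lambda)$ is \emph{not} a staircase of length-$0$ Dyck tiles only. As in the proof of Lemma~\ref{lemma:recDyck}, passing from the $j$-th valley to the $(j{+}1)$-st requires a Dyck tile of length $2m_j$ spanning the $U^{m_j}$ segment, together with further length-$0$ tiles filling the $D$-runs; and when some $T_k\neq\emptyset$, that longer tile must additionally be deformed to follow the actual path $\lambda$ rather than the straightened path $\overline{\lambda}$. The total $\mathrm{art}$ of this collection is still $\mathrm{deg}_i(\lambda)=n_0+\cdots+n_i$, so your claimed factor is correct, but the mechanism is not ``$\mathrm{deg}_i(\lambda)$ single boxes''. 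The paper handles this by first treating the case $T_i=\emptyset$ (where it reduces verbatim to Lemma~\ref{lemma:recDyck}) and then deforming the length-$2m_k$ tiles for general $\mathbf{T}$. Your description of the excluded ballot-tile configuration is also slightly off: the tile is not of length $(0,M_2+N_2)$ but wraps along part of the path from a valley $S$ at height zero to the anchor column, so it has a nontrivial Dyck part; nonetheless the parity conclusion you draw is the right one.
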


\begin{proof}

By a similar argument to Lemma \ref{lemma:recDyck}, it is easy to show 
that Lemma holds true when all $T_{i}=\emptyset$ for $1\le i\le r$ (compare 
paths $v_{i}$ with $\mathbf{n}_{i}$).

Below, we consider the case where at least one $T_{k}\neq\emptyset$. 
The skew shape $\overline{\lambda}/\nu_{i}$ contains only Dyck tiles of length zero 
and Dyck tiles of length $2m_{j}$ with $1\le j\le i-1$. 
Let $\nu'$ be a path characterized by the pair $(\nu_{i},\mathbf{T})$.
If $T_{k}\neq\emptyset$ with $1\le k\le i-1$, we deform the Dyck tile of length 
$2m_{k}$ into the Dyck tile $D$ of length $2m_k$ such that the shape of $D$ fits 
$\nu'$. 
The weight of Dyck tiles is $q^{\mathrm{deg}_{i}(\lambda)}$.
If a ballot tiling contains a Dyck tile of length $2n$ with $n<m_k$ over $\lambda$, 
such a configuration is included in the calculation of 
$q^{\mathrm{deg}_{l}(\lambda)}Q(\nu_{l+1};\mathbf{T})$ with some $l<k$.
In the remaining region above $\nu'$, we can put a ballot tiling without any constrains.
Especially, since $M_{2}+N_{2}$ is even, the box at the minimum point are not 
occupied by a ballot tile of length $(2n,n')$ with $n'\ge1$.
These imply that we have the generating function $Q(\nu_{i},\mathbf{T})$ 
in the shape $\nu'$.

The sum of all contributions (the right hand side of Lemma \ref{lemma:recBaleven}) 
is equal to the left hand side of Lemma \ref{lemma:recBaleven}.

\end{proof}

\begin{lemma}
\label{lemma:recBalodd}
Let $\lambda$, $\nu$ and $\mathbf{T}$ as above and $M_2+N_2$ be odd.
We have 
\begin{eqnarray*}
P^{I}_{\lambda,\ast}=
Q(\nu_0;\mathbf{T})+\sum_{i=0}^{r}q^{\mathrm{deg}_{i}(\lambda)}Q(\nu_{i+1};\mathbf{T})
+q^{\mathrm{deg}_{r+1}(\lambda)}Q(\nu_{r+2};\mathbf{T}),
\end{eqnarray*}
where $\mathrm{deg}_{i}(\lambda):=\sum_{0\le j\le i}n_{j}$ for $0\le i\le r$
and $\mathrm{deg}_{r+1}(\lambda):=\sum_{i=0}^{r}n_{i}+M_2+\sum_{i=s}^{r}n_i$.
\end{lemma}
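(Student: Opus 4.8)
The plan is to repeat the case analysis carried out in the proof of Lemma \ref{lemma:recBaleven}, keeping track of the single extra family of tilings that appears because $M_2+N_2$ is now odd; this extra family is the exact analogue of the third term in the odd case of Lemma \ref{lemma:recr}.

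First I would dispose of the base case in which $T_i=\emptyset$ for every $1\le i\le r$, so that $\lambda=\overline{\lambda}$ and the shifted shape is a union of staircase pieces glued to the shape of $\lambda_2$. As in the even case one classifies a cover-inclusive ballot tiling $B\in\mathcal{B}^{I}(\lambda/\ast)$ by the tile containing the leftmost box $b$: if $b$ is uncovered we obtain $Q(\nu_0;\mathbf{T})$; if $b$ lies in a length-zero Dyck tile, the bottom rows are forced to be tiled by length-zero Dyck tiles up to some column, which produces the factor $q^{\mathrm{deg}_i(\lambda)}$ and the residual shape $(\nu_{i+1};\mathbf{T})$ for $0\le i\le r$, exactly as in Lemma \ref{lemma:recBaleven}. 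The new branch is a refinement of the latter: $b$ lies in a length-zero Dyck tile \emph{and} the box at the minimum point is occupied by a ballot tile of length $(2n,n')$ with $n'\ge1$. Because $M_2+N_2$ is odd, the type BI parity constraint (the number of ballot tiles of length $(2n,n')$ is even when $n'$ is odd and zero when $n'$ is positive even) together with cover-inclusivity pins this configuration down to exactly two width-$(M_2+N_2)$ ballot tiles stacked over the minimum point, sitting on the forced length-zero Dyck tiles below and to the right. A direct area count shows this block contributes the factor $q^{\mathrm{deg}_{r+1}(\lambda)}$ with $\mathrm{deg}_{r+1}(\lambda)=\sum_{i=0}^{r}n_i+M_2+\sum_{i=s}^{r}n_i$, and the region that remains above it is precisely the shape characterised by $(\nu_{r+2};\mathbf{T})$.

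For the general case in which some $T_k\neq\emptyset$, I would argue verbatim as in Lemma \ref{lemma:recBaleven}: inside each residual shape $\nu_{i+1}$ the forced length-zero Dyck tiles of width $2m_j$ are deformed into genuine Dyck tiles of width $2m_j$ fitting $T_j$, which does not change the weight, while any tiling using a narrower Dyck tile over one of the $\lambda'_j$ is already counted in an earlier term, so nothing is double counted. In the extra branch, removing the two width-$(M_2+N_2)$ ballot tiles together with the forced length-zero tiles leaves all of the $T_i$ untouched, so the residual generating function is again $Q(\nu_{r+2};\mathbf{T})$. Summing the contributions of all branches then yields the asserted identity.

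The step I expect to be the main obstacle is the careful treatment of the extra term: proving that, once the minimum-point box is occupied by a ballot tile of length $(2n,n')$ with $n'\ge1$, cover-inclusivity and the BI parity constraint leave only the pair of stacked width-$(M_2+N_2)$ ballot tiles over the forced length-zero tiles; that the accumulated $\mathrm{art}$-weight of this block is exactly $\mathrm{deg}_{r+1}(\lambda)$; and that the leftover shape is $(\nu_{r+2};\mathbf{T})$ — in particular that the substitutions $n_0\mapsto n_0-1$ and $n_{s-1}\mapsto n_{s-1}-1$ in $\overline{\lambda_1}$ and the prepended $U$ in $\rho'=U\lambda_2$ of the definition $\nu_{r+2}=\rho\circ\rho'$ correctly describe the geometry after the removal. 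Everything else is a routine transcription of the even case.
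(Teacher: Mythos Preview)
Your approach mirrors the paper's: both reduce to Lemma~\ref{lemma:recBaleven} for the first $r+2$ terms, and both identify the extra term with the configuration in which the box at the minimum point is covered by a ballot tile of length $(2n,n')$ with $n'\ge1$, which by the type BI parity rule must come in a pair $B_1,B_2$ of identical shape.

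Where your description diverges from the paper is the specific geometry of these two tiles. You place them ``over the minimum point'' with ``width $M_2+N_2$'' and length-zero Dyck tiles ``below and to the right''; the paper instead specifies that the leftmost box of $B_1$ sits at the point $S$, defined as the leftmost valley of $\lambda$ at the same height as the minimum point (the $s$-th valley in $\mathrm{Val}(\lambda)$), so that $B_1$ extends leftward across the plateau of valleys $s,s{+}1,\dots,r$ before running to the anchor; its length is $(\sum_{i=s}^{r}n_i,\,M_2)$, not $M_2+N_2$. This leftward extension is exactly the source of the summand $\sum_{i=s}^{r}n_i$ in $\mathrm{deg}_{r+1}(\lambda)$ and of the substitution $n_{s-1}\mapsto n_{s-1}-1$ in the definition of $\nu_{r+2}$ --- quantities you quote correctly but which cannot be accounted for by a tile confined to the $\lambda_2$ region alone. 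In the remaining strip $(\lambda\setminus(B_1\cup B_2))/\nu_{r+2}$ one places Dyck tiles exactly as in the $i=r$ term of the even case. Once you correct the position and length of $B_1,B_2$ accordingly, your argument coincides with the paper's.
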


\begin{proof}
The proof of Lemma is similar to the one of Lemma \ref{lemma:recBaleven}.
We have the first two terms in the right hand side of Lemma \ref{lemma:recBalodd} 
as a contribution to $P_{\lambda,\ast}^{I}$.

The difference is that the box at the minimum point can be occupied by 
a ballot tile $B_1$ of length $(2n,n')$ with $n'\ge1$. 
From the definition of ballot tilings of type BI, we have another 
ballot tile $B_2$ over $B_1$.
Note that the shapes of $B_1$ and $B_2$ are the same. 
Let $S$ be a valley such that $S$ is the leftmost valley which 
is the same height as the minimum point.
Then, we consider the configuration such that the leftmost box
of $B_1$ is at the point $S$.
The length of $B_1$ is $(\sum_{i=s}^{r}n_i,M_2)$.
In the remaining shape $(\lambda\setminus(B_1\cup B_2))/\nu_{r+2}$,
we put Dyck tiles like in the case of the term 
$q^{\mathrm{deg}_{r}(\lambda)}Q(\nu_{r+1};\mathbf{T})$.
Thus the contribution is given by $q^{\mathrm{deg}_{r+1}(\lambda)}Q(\nu_{r+2};\mathbf{T})$.
This completes the proof.
\end{proof}

\begin{lemma}
\label{lemma:PBoxeven}
Let $N=n_1+n_2$. 
We have
\begin{eqnarray*}
P(n_3,N)\Box(n_1,n_2-1)+q^{n_2}P(n_3-1,N)\Box(n_1-1,n_2)
=P(2n_2+n_3-1,n_1)\cdot P(n_3,n_2).
\end{eqnarray*}
\end{lemma}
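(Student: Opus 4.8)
The plan is to reduce both sides of the identity to the closed formulas already available, namely Theorem~\ref{thrm-A1} for the Dyck factors $\Box$ and Proposition~\ref{prop:BalP} for $P(M,N)$, and then to verify a one-line identity among quantum integers. First I would evaluate the $\Box$-factors. Completing the path $D^{a}U^{b}$ to a Dyck path by prepending $a$ copies of $U$ and appending $b$ copies of $D$ gives $U^{a}D^{a}U^{b}D^{b}$, whose chords form two nested families of lengths $1,\dots,a$ and $1,\dots,b$; hence Theorem~\ref{thrm-A1} gives
\begin{eqnarray*}
\Box(a,b)=P^{\mathrm{Dyck}}_{D^{a}U^{b}}=\frac{[a+b]!}{[a]!\,[b]!}=\genfrac{[}{]}{0pt}{}{a+b}{a}.
\end{eqnarray*}
In particular, with $N=n_{1}+n_{2}$, one has $\Box(n_{1},n_{2}-1)=\genfrac{[}{]}{0pt}{}{N-1}{n_{1}}$ and $\Box(n_{1}-1,n_{2})=\genfrac{[}{]}{0pt}{}{N-1}{n_{1}-1}$.

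Next I would put $P(M,N)$ into double-factorial form. Since $1+q^{i}=[2i]/[i]$, the prefactor in Proposition~\ref{prop:BalP} equals $\prod_{i=1}^{N}(1+q^{i})=[2N]!!/[N]!$, and using $a_{(2k,N)}a_{(2k-1,N)}=[2N+2k]/[2k]$ the remaining product telescopes, yielding
\begin{eqnarray*}
P(2m,N)=\frac{[2N+2m]!!}{[N]!\,[2m]!!},\qquad
P(2m-1,N)=\frac{[N+2m]\,[2N+2m-2]!!}{[N]!\,[2m]!!}.
\end{eqnarray*}

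Then I would split according to the parity of $n_{3}$ (so that $2n_{2}+n_{3}-1$ has the opposite parity), substitute the formulas above into both sides of the claimed identity, and clear denominators. In the case $n_{3}=2m$, after cancelling the common factor $[2N+2m-2]!!/[2m]!!$ and multiplying through by $[N]\,[n_{1}]!\,[n_{2}]!$, the statement becomes
\begin{eqnarray*}
[2N+2m]\,[n_{2}]+q^{n_{2}}\,[N+2m]\,[n_{1}]=[N]\,[N+n_{2}+2m],
\end{eqnarray*}
and in the case $n_{3}=2m-1$ it becomes, in the same way,
\begin{eqnarray*}
[N+2m]\,[n_{2}]+q^{n_{2}}\,[2m]\,[n_{1}]=[N]\,[n_{2}+2m].
\end{eqnarray*}

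Both reduce to the additivity rule $[x+y]=[x]+q^{x}[y]$: writing $[N+n_{2}+2m]=[n_{2}]+q^{n_{2}}[N+2m]$ (resp.\ $[n_{2}+2m]=[n_{2}]+q^{n_{2}}[2m]$) and subtracting the right-hand side from the left leaves $[n_{2}]\bigl([2N+2m]-[N]\bigr)+q^{n_{2}}[N+2m]\bigl([n_{1}]-[N]\bigr)$ (resp.\ the analogous expression), which vanishes because $[2N+2m]-[N]=q^{N}[N+2m]$, $[N+2m]-[N]=q^{N}[2m]$, $[n_{1}]-[N]=-q^{n_{1}}[n_{2}]$, and $q^{N}=q^{n_{1}+n_{2}}$. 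The computation is entirely routine; the only delicate point is the parity bookkeeping of the double factorials $[2k]!!$ and the matching of the powers of $q$ across the two cases. One could also try a recursive argument based on the recurrences of Lemmas~\ref{lemma:recr} and~\ref{lemma:recDyck}, or a bijective one decomposing a ballot tiling with lower path $\lambda_{2n_{2}+n_{3}-1,n_{1}}$ according to whether the leftmost box lies under a genuine ballot tile, but the direct route via the explicit formulas of Proposition~\ref{prop:BalP} is the shortest.
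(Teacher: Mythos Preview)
Your proof is correct. Both you and the paper start by observing $\Box(r,s)=\genfrac{[}{]}{0pt}{}{r+s}{r}$ and by invoking Proposition~\ref{prop:BalP}, but from there the routes diverge. The paper only uses the \emph{ratio} $P(n_{3}-1,N)/P(n_{3},N)$ to collapse the left-hand side to $\genfrac{[}{]}{0pt}{}{N}{n_{1}}\frac{[N+n_{2}+n_{3}]}{[2N+n_{3}]}P(n_{3},N)$, and then shows that the right-hand side equals this expression by \emph{induction on $n_{1}$}, deriving along the way several auxiliary one-step recursions for $P$ (Eqns.~(\ref{Prec1})--(\ref{Prec2})). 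You instead push Proposition~\ref{prop:BalP} all the way to the closed double-factorial forms $P(2m,N)=[2N+2m]!!/([N]!\,[2m]!!)$ and $P(2m-1,N)=[N+2m]\,[2N+2m-2]!!/([N]!\,[2m]!!)$, substitute on both sides, and reduce everything (in each parity) to a single identity in quantum integers, which you then check directly via $[x+y]=[x]+q^{x}[y]$. Your argument is shorter and avoids the induction entirely; the paper's argument, while longer, has the virtue of surfacing the recursions for $P$ that are reused elsewhere in the factorization proof.
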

\begin{proof}
We consider the case where $n_3$ is even, since the proof for the odd case 
is essentially the same.
Since we have 
\begin{eqnarray*} 
\Box(r,s)&=&\genfrac{[}{]}{0pt}{}{r+s}{r}, \\
P(n_3-1,N)&=&\frac{[2N+n_3]}{[N+n_3]}P(n_3,N),
\end{eqnarray*}
the left hand side of Lemma \ref{lemma:PBoxeven} is 
\begin{eqnarray}
\label{PBoxeven1}
\qquad
P(n_3,N)\Box(n_1,n_2-1)+q^{n_2}P(n_3-1,N)\Box(n_1-1,n_2)
=\genfrac{[}{]}{0pt}{}{N}{n_1}\frac{[N+n_2+n_3]}{[2N+n_3]}P(n_3,N).
\end{eqnarray}
We show that the right hand side of Eqn.(\ref{PBoxeven1}) is 
equal to the right hand side of Lemma \ref{lemma:PBoxeven} by 
induction on $n_1$.
When $n_{1}=0$, the both terms are equal to $P(n_3,n_2)$.
Suppose that the statement holds true when $n_1=n$.
When $n_1=n+1$, the right hand side of Eqn.(\ref{PBoxeven1}) 
is 
\begin{align}
\begin{split}
\nonumber 
&\genfrac{[}{]}{0pt}{}{N+1}{n_1+1}\frac{[N+n_2+n_3+1]}{[2N+n_3+2]}P(n_3,N+1) \\
&\qquad\quad=\frac{[n_2+1][N+n_2+n_3+1]}{[n_1+1][N+n_2+n_3+2]}\cdot
\genfrac{[}{]}{0pt}{}{N+1}{n_1}\frac{[N+n_2+n_3+2]}{[2N+n_3+2]}P(n_3,N+1)
\end{split} \\
&\qquad\quad=\frac{[n_2+1][N+n_2+n_3+1]}{[n_1+1][N+n_2+n_3+2]}
P(2n_2+n_3+1,n_1)P(n_3,n_2+1),
\label{Pbox1}
\end{align}
where we have used the assumption for $n_1=n$ and $n_2+1$.
Since $n_3$ is even, we have 
\begin{align}
\label{Prec1}
P(n_3,n_2+1)&=\frac{[2n_2+n_3+2]}{[n_2+1]}P(n_3,n_2), \\
\begin{split}
\label{Prec2}
P(2n_2+n_3+1,n_1)&=
\frac{[N+n_2+n_3+2][n_1+1]}{[N+n_2+n_3+3][2N+n_3+2]}P(2n_2+n_3+1,n_1+1) \\
&=\frac{[N+n_2+n_3+2][n_1+1]}{[2n_2+n_3+2][N+n_2+n_3+1]}P(2n_2+n_3-1,n_1+1)
\end{split}
\end{align}
Substituting Eqn.(\ref{Prec1}) and (\ref{Prec2}) into the right hand side 
of Eqn.(\ref{Pbox1}), 
we obtain $P(2n_2+n_3-1,n_1+1)\cdot P(n_3,n_2)$. 
This completes the proof.
\end{proof}

\begin{proof}[Proof of Theorem \ref{thrm:BI-fac}]
We prove Theorem by induction.
We first consider the case where $M_2+N_2$ is even.

Suppose that Theorem holds true up to $|\lambda|-1$ and $M_2+N_2$ is even.
We calculate the right hand side of Lemma \ref{lemma:recBaleven}
with the assumption.
From the assumption, we have 
\begin{align*}
\begin{split}
Q(\nu_{0};\mathbf{T})&=P_{\lambda_{2},\ast}^{I}\cdot \Box(\mathbf{n}_{0})\cdot
\prod_{j=1}^{r}P_{T_j}^{\mathrm{Dyck}} \cdot 
P(l_s,l'_{s-1}-1)\cdot\prod_{j=s+1}^{r+1}P(l_{j},n_{j-1})\cdot,
\end{split}\\ 
\begin{split}
Q(\nu_{i};\mathbf{T})&=
P_{\lambda_2,\ast}^{I}\cdot \Box(\mathbf{n}_i)\cdot
\prod_{j=1}^{r}P_{T_j}^{\mathrm{Dyck}} 
\cdot P(l_{s},l'_{s-1})\cdot \prod_{j=s+1}^{r+1}P(l_j,n_{j-1}),\qquad \text{for\ } 1\le i\le s-1, 
\end{split}　\\
%%%%%%%
\begin{split}
Q(\nu_{i};\mathbf{T})&=
P_{\lambda_2,\ast}^{I}\cdot \Box(n_0-1,m_1,n_1,\ldots,n_{i-1},n_{i-1}+1,m_{i}-1)\cdot
\prod_{j=1}^{r}P_{T_j}^{\mathrm{Dyck}}  \\
&\quad\times P(l_{i+1},l'_{i})\cdot \prod_{j=i+2}^{r+1}P(l_j,n_{j-1}),\qquad \text{for\ } s\le i\le r, 
\end{split}　\\
%%%%%%
\begin{split}
Q(\nu_{r+1};\mathbf{T})&=
P_{\lambda_2,\ast}^{I}\cdot P(M_2+N_2-1,l'_{r})\cdot 
\Box(n_0-1,m_1,n_1,\ldots,n_{r-1},m_{r})\cdot
\prod_{i=1}^{r}P_{T_i}^{\mathrm{Dyck}}
\end{split}
\end{align*}
where $l_i:=M_2+N_2+\sum_{j=i}^{r}(m_j+n_j)$ and $l'_{i}:=\sum_{k=0}^{i}n_{k}$.
We calculate the right hand side of Lemma \ref{lemma:recBaleven}. 
Since $\mathrm{deg}_{i}(\lambda)=l'_{i}$, we have 
\begin{align}
\label{facBal1}
\begin{split}
\sum_{i=s-1}^{r}q^{l'_i}Q(\nu_{i+1};\mathbf{T})
=q^{l'_{s-1}}P_{\lambda_2,\ast}^{I}\cdot \Box(\mathbf{n}_{0})
\cdot P(l_{s}-1,l'_{s-1})\cdot
\prod_{j=s}^{r}P(l'_{j+1},n_{j})
\cdot \prod_{i=1}^{r}P_{T_{i}}^{\mathrm{Dyck}}
\end{split}
\end{align} 
where we have used Lemma \ref{lemma:PBoxeven} and 
\begin{eqnarray*}
\Box(n_{0}-1,m_1,\ldots,n_{i-1}+1,m_{i}-1)
=\genfrac{[}{]}{0pt}{}{\sum_{j=0}^{i-1}n_{j}+m_{i}-1}{m_{i}-1}\Box(n_{0}-1,m_1,\ldots,n_{i-2},m_{i-1}).
\end{eqnarray*}
By applying Lemma \ref{lemma:recr} to the sum of Eqn.(\ref{facBal1}) and $Q(\nu_0;\mathbf{T})$,
we have   
\begin{eqnarray}
\label{facBal2}
\qquad
Q(\nu_0;\mathbf{T})+\sum_{i=s-1}^{r}q^{l'_{i}}Q(\nu_{i+1};\mathbf{T})
=
P_{\lambda_2,\ast}^{I}\cdot \Box(\mathbf{n}_0)\cdot P(l_s,l'_{s-1})\cdot 
\prod_{j=1}^{r}P_{T_{j}}^{\mathrm{Dyck}}\cdot \prod_{j=s}^{r}P(l_{j+1},n_j).
\end{eqnarray}
By applying Lemma \ref{lemma:recDyck} to the sum of Eqn.(\ref{facBal2}) and 
$\sum_{i=0}^{s-2}q^{l'_{i}}Q(\nu_{i+1};\mathbf{T})$, 
we obtain
\begin{eqnarray*}
Q(\nu_0;\mathbf{T})+\sum_{i=0}^{r}q^{\mathrm{deg}_{i}(\lambda)}Q(\nu_{i+1};\mathbf{T})
&=&
\Box(\mathbf{n})\cdot \prod_{j=1}^{r}P_{T_j}^{\mathrm{Dyck}}\cdot
P(l_s,l'_{s-1})\cdot \prod_{j=s}^{r}P(l_{j+1},n_{j}) \\
&=&P_{\lambda_2,\ast}^{I}\cdot P(l_s,l'_{s-1})\cdot \prod_{j=s}^{r}P(l_{j+1},n_{j})
\cdot P_{\lambda_1}^{\mathrm{Dyck}}.
\end{eqnarray*}
Thus the statement holds true when the shape is $\lambda$.  

We consider the case where $M_2+N_2$ is odd. 
From the induction assumption, we have 
\begin{eqnarray*}
Q(\nu_{r+2};\mathbf{T})
=P_{\lambda_2,\ast}^{I}\cdot P(l_s+1,l'_{s-1}-2)\cdot\prod_{j=s}^{r}P(l_{j+1},n_{j})
\cdot \Box(n_{0}-1,m_1,n_1,\ldots,m_{r})\cdot \prod_{j=1}^{r}P_{T_j}^{\mathrm{Dyck}}.
\end{eqnarray*}
The rest of the proof is essentially the same as in the case of $M_2+N_2$ even except 
that we use Lemma \ref{lemma:recBalodd} instead of Lemma \ref{lemma:recBaleven}.
This completes the proof.
\end{proof}

\subsection{Factorization in terms of trees}
\label{sec:Fac-trees}
Let $T$ be a tree corresponding to a ballot path $\lambda$.
Theorem \ref{thrm:BI-fac} can be translated into the following 
operations on a partial tree:
\begin{eqnarray*}
\tikzpic{-0.5}{
\draw(0,0)--(-0.3,-0.3)(-0.7,-0.7)--(-1,-1);
\draw(-0.2,-0.2)node{\rotatebox{-45}{$-$}}(-0.8,-0.8)node{\rotatebox{-45}{$-$}};
\draw[dashed](-0.3,-0.3)--(-0.7,-0.7);
\draw[decoration={brace,mirror,raise=5pt},decorate]
  (0,0) --(-1,-1);
\draw(-0.2,-0.2)node[left=9pt]{$N$};
%%%%%%%%%%%% 
\draw(0,0)--(0.3,-0.3)(0.7,-0.7)--(1,-1);
\draw[dashed](0.3,-0.3)--(0.7,-0.7);
\draw(0.2,-0.2)node{\rotatebox{45}{$-$}}(0.8,-0.8)node{\rotatebox{45}{$-$}};
\draw[decoration={brace,mirror,raise=5pt},decorate]
  (1,-1)--(0,0);
\draw(0.2,-0.2)node[right=9pt]{$M$};
}
&\mapsto&\genfrac{[}{]}{0pt}{}{M+N}{M}\cdot
\tikzpic{-0.4}{
\draw(0,0)--(0,-0.3)(0,-0.7)--(0,-1);
\draw(0,-0.2)node{$-$}(0,-0.8)node{$-$};
\draw[dashed](0,-0.3)--(0,-0.7);
\draw[decoration={brace,mirror,raise=5pt},decorate]
    (0,-1)-- (0,0);
\draw(0,-0.5)node[right=9pt]{$M+N$};
} \\
%%%%%%%%%%%%%%%%%%%%%%%%%%%%
\tikzpic{-0.5}{
\draw(0,0)--(-0.3,-0.3)(-0.7,-0.7)--(-1,-1);
\draw(-0.2,-0.2)node{\rotatebox{-45}{$-$}}(-0.8,-0.8)node{\rotatebox{-45}{$-$}};
\draw[dashed](-0.3,-0.3)--(-0.7,-0.7);
\draw[decoration={brace,mirror,raise=5pt},decorate]
  (0,0) --(-1,-1);
\draw(-0.2,-0.2)node[left=9pt]{$N$};
%%%%%%%%%%%% 
\draw(0,0)--(0.35,-0.35)(0.65,-0.65)--(1,-1);
\draw[dashed](0.35,-0.35)--(0.65,-0.65);
\draw(0.3,-0.3)node{\rotatebox{45}{$-$}}(0.7,-0.7)node{\rotatebox{45}{$-$}};
\draw(0.15,-0.15)node{$\bullet$}(0.85,-0.85)node{$\bullet$};
\draw[decoration={brace,mirror,raise=5pt},decorate]
  (1,-1)--(0,0);
\draw(0.2,-0.2)node[right=9pt]{$M$};
}&\mapsto&
\genfrac{[}{]}{0pt}{}{M+N}{M}_{q^{2}}\prod_{i=1}^{N}(1+q^{i})\cdot
\tikzpic{-0.4}{
\draw(0,0)--(0,-0.35)(0,-0.65)--(0,-1);
\draw(0,-0.3)node{$-$}(0,-0.7)node{$-$};
\draw[dashed](0,-0.4)--(0,-0.6);
\draw(0,-0.15)node{$\bullet$}(0,-0.85)node{$\bullet$};
\draw[decoration={brace,mirror,raise=5pt},decorate]
    (0,-1)-- (0,0);
\draw(0,-0.5)node[right=9pt]{$M+N$};
} \\
\tikzpic{-0.5}{
\draw(0,0)--(-0.45,-0.45)(-0.7,-0.7)--(-1,-1);
\draw(-0.4,-0.4)node{\rotatebox{-45}{$-$}}(-0.8,-0.8)node{\rotatebox{-45}{$-$}};
\draw[dashed](-0.3,-0.3)--(-0.7,-0.7);
\draw[decoration={brace,mirror,raise=5pt},decorate]
  (0,0) --(-1,-1);
\draw(-0.2,-0.2)node[left=9pt]{$N$};
%%%%%%%%%%%% 
\draw(0,0)--(0.45,-0.45)(0.65,-0.65)--(1,-1);
\draw[dashed](0.45,-0.45)--(0.65,-0.65);
\draw(0.4,-0.4)node{\rotatebox{45}{$-$}}(0.7,-0.7)node{\rotatebox{45}{$-$}};
\draw(0.15,-0.15)node{$\bullet$}(0.85,-0.85)node{$\bullet$};
\draw[decoration={brace,mirror,raise=5pt},decorate]
  (1,-1)--(0,0);
\draw(0.2,-0.2)node[right=9pt]{$M$};
%%%%%%
\draw[latex-,dashed](-0.3,-0.3)--(0.3,-0.3);
}&\mapsto&
\genfrac{[}{]}{0pt}{}{M+N}{M}_{q^{2}}
\frac{[2M+N]}{[2(M+N)]}\prod_{i=1}^{N}(1+q^{i}) \cdot
\tikzpic{-0.4}{
\draw(0,0)--(0,-0.35)(0,-0.65)--(0,-1);
\draw(0,-0.3)node{$-$}(0,-0.7)node{$-$};
\draw[dashed](0,-0.4)--(0,-0.6);
\draw(0,-0.15)node{$\bullet$}(0,-0.85)node{$\bullet$};
\draw[decoration={brace,mirror,raise=5pt},decorate]
    (0,-1)-- (0,0);
\draw(0,-0.5)node[right=9pt]{$M+N$};
\draw[latex-,dashed](-0.4,-0.15)node{$($}--(-0.1,-0.15)node{$)$};
}
%%%%%%%%%%%%%%%%%%%
\end{eqnarray*}
Here $(\leftarrow)$ in the right hand side of the third operation means that 
if the leftmost top edge in the left hand side of the third operation has an outgoing arrow,
we put an outgoing arrow on the top edge of the right hand side.

Then, we define operations on the following trees (not a partial tree):
\begin{eqnarray*}
\tikzpic{-0.4}{
\draw(0,0)--(0,-0.3)(0,-0.7)--(0,-1);
\draw(0,-0.2)node{$-$}(0,-0.8)node{$-$};
\draw[dashed](0,-0.3)--(0,-0.7);
\draw[decoration={brace,mirror,raise=5pt},decorate]
    (0,-1)-- (0,0);
\draw(0,-0.5)node[right=9pt]{$N$};
}&\mapsto&\prod_{i=1}^{N}(1+q^{i}), \\
\tikzpic{-0.4}{
\draw(0,0)--(0,-0.35)(0,-0.65)--(0,-1);
\draw(0,-0.3)node{$-$}(0,-0.7)node{$-$};
\draw[dashed](0,-0.4)--(0,-0.6);
\draw(0,-0.15)node{$\bullet$}(0,-0.85)node{$\bullet$};
\draw[decoration={brace,mirror,raise=5pt},decorate]
    (0,-1)-- (0,0);
\draw(0,-0.5)node[right=9pt]{$N$};
}&\mapsto& 1.
\end{eqnarray*}
Then, we have a map from a tree $T$ to $\mathbb{Z}[q,q^{-1}]$ by 
successive applications of the operations defined above.

\section{Various expressions of the generating function}
\label{sec:varigf}
In this section, we show various expressions of the generating function 
$P_{\lambda,\ast}^{I}$ for a ballot path $\lambda$.
Given a tree $T:=A(\lambda)$, we abbreviate $P_{\lambda,\ast}^{I}$ by $T$.
It is clear from the context whether $T$ stands for a tree or a generating 
function.
We denote by $|T|$ the number of edges in $T$.

The factorization of the generating function (Theorem \ref{thrm:BI-fac})
can be translated into the following operation on a tree.
Suppose that the root of a tree $T:=A(\lambda)$ has $p$ edges and the leftmost
edge $e_1$ has a partial tree $T_{1}$ ($T_1$ can be an empty tree).
We denote by $T_2:=T\setminus(T_{1}\cup e_{1})$ the partial tree obtained from $T$ by deleting
the partial tree $(T_{1}\cup e_{1})$.
\begin{lemma}
\label{lemma:fac-tree}
If the edge $e_1$ does not have an incoming arrow, the generating function $T$ satisfies
\begin{eqnarray*}
T=P^{\mathrm{Dyck}}_{T_{1}}\cdot P(2|T_{2}|,|T_{1}|+1)\cdot T_{2}.
\end{eqnarray*}
Similarly, if $e_{1}$ has an incoming arrow, the generating function $T$ satisfies
\begin{eqnarray*}
T=P^{\mathrm{Dyck}}_{T_{1}}\cdot P(2|T_{2}|-1,|T_{1}|+1)\cdot T_{2}.
\end{eqnarray*}
\end{lemma}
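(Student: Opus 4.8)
The plan is to read Lemma~\ref{lemma:fac-tree} off Theorem~\ref{thrm:BI-fac}, which already gives the factorization at the level of paths, after translating the tree decomposition $T=e_1\cup T_1\cup T_2$ into a decomposition of the underlying path $\lambda$. First I would recall how $A(\lambda)$ encodes $\lambda$: by $(\Diamond2)$--$(\Diamond7)$ the leftmost root edge $e_1$ together with its subtree $T_1$ is produced from the first block of $\lambda$. In the case that $e_1$ is an ordinary (non-dotted) $UD$-edge, this first block is an indecomposable Dyck word $\lambda_1=U\lambda_1'D$ with $A(\lambda_1')=T_1$, and by $(\Diamond3)$ the remainder $\lambda_2$ satisfies $A(\lambda_2)=T_2$. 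Since $\lambda_1'$ is a Dyck word, $\lambda_1$ cannot be split further as a concatenation of nonempty Dyck words, so it contains no valley at height $0$; hence the junction point of $\lambda_1$ and $\lambda_2$ is a valley at height $0$ and is exactly the minimum point of $\lambda$. Thus Theorem~\ref{thrm:BI-fac} applies with this very $\lambda=\lambda_1\circ\lambda_2$.

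The second step is to identify the three factors. By Lemma~\ref{lemma:treedel}, $P^{\mathrm{Dyck}}_{\lambda_1}=P^{\mathrm{Dyck}}_{\lambda_1'}=P^{\mathrm{Dyck}}_{T_1}$, since $\lambda_1=U\lambda_1'D$ and $A(\lambda_1')=T_1$. Counting $D$-steps, $\lambda_1$ has one more $D$ than $\lambda_1'$, and a Dyck word contributes one $D$ per $UD$-edge of its tree, so $N_1=|T_1|+1$. Also $P^{I}_{\lambda_2}=T_2$ by the convention identifying the tree $T_2$ with the generating function $P^{I}_{\lambda(T_2),\ast}$. It remains to pin down the first argument of $P$: the number of steps $M_2+N_2$ of $\lambda_2$ equals $2|T_2|$ exactly when every edge of $T_2$ is a length-two chord, and drops to $2|T_2|-1$ precisely when $\lambda_2$ begins with an odd-parity unpaired-$U$ configuration (a terminal $U$ not completed by a partner), which carries one step not accounted for by an edge. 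By $(\Diamond8)$ this is exactly the situation in which the maximal edge of that tail, running along the chain of dashed arrows through the Dyck blocks that separate it from $\lambda_1$, immediately precedes $e_1$; that is, $e_1$ receives an incoming arrow. Feeding these identifications into Theorem~\ref{thrm:BI-fac} gives both displayed identities.

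The remaining case is that $e_1$ is dotted, arising from a terminal $U$ or a $UU$-pair at the very start of $\lambda$ (so $\lambda$ has no nonempty Dyck prefix); here $e_1$ cannot receive an incoming arrow, so only the first identity is claimed. The plan is to treat this by induction on $|T|$: using $(\Diamond5)$/$(\Diamond6)$ one peels the dotted root edge, applies Theorem~\ref{thrm:BI-fac} (whose minimum-point split now separates the non-Dyck head from the tail) together with the Dyck reassembly Lemmas~\ref{lemma:treedel} and~\ref{lemma:treefac}, and invokes the induction hypothesis on the smaller subtree; the operations recorded in Section~\ref{sec:Fac-trees} are precisely the bookkeeping for this reduction. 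I expect the main obstacle to be exactly this dotted-edge bookkeeping: confirming that the even/odd dichotomy of $M_2+N_2$ governing the proof of Theorem~\ref{thrm:BI-fac} (Lemmas~\ref{lemma:recBaleven} and~\ref{lemma:recBalodd}) matches the presence or absence of an incoming arrow on $e_1$, and keeping the $D$-count relation $N_1=|T_1|+1$ and the argument $2|T_2|$ (versus $2|T_2|-1$) consistent when a terminal $U$ or a $UU$-pair sits above the purely Dyck part $T_1$.
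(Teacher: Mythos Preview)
Your approach is essentially the paper's: the paper states the lemma with no proof beyond the single sentence ``The factorization of the generating function (Theorem~\ref{thrm:BI-fac}) can be translated into the following operation on a tree,'' and you are spelling out that translation. Your dictionary is correct: when $e_1$ is non-dotted, $\lambda_1=U\lambda_1'D$ is the first indecomposable Dyck block, $N_1=|T_1|+1$, and the incoming-arrow dichotomy on $e_1$ is exactly the parity of the number of unpaired $U$'s in $\lambda_2$ via $(\Diamond8)$, which in turn gives $M_2+N_2=2|T_2|$ or $2|T_2|-1$.

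One remark: the dotted-$e_1$ subcase you worry about is largely a non-issue. In the paper's actual use of this lemma (see the proof of Lemma~\ref{lemma:tree2}), it is applied iteratively to peel off the leftmost \emph{non-dotted} root edge, so $e_1$ is never dotted there. If one does want the dotted case, then $e_1$ is the unique root edge, $T_2=\emptyset$, and the first identity reduces to $T=P^{\mathrm{Dyck}}_{T_1}\cdot P(0,|T_1|+1)$, which is exactly what the tree operations recorded in Section~\ref{sec:Fac-trees} give; no separate induction is needed. So your final paragraph can be dropped or replaced by this one-line observation.
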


\subsection{\texorpdfstring{Expansion by $\mathrm{tree}_{1}$}{Expansion by tree1}}
\label{sec:ext1}
Lemma \ref{lemma:recBaleven} and \ref{lemma:recBalodd} can be translated 
into the following expressions in terms of trees.

Let $T$ be a tree $A(\lambda)$ and $e_1$ be an edge which is the leftmost
and lowest in $T$.
We denote by $\mathrm{tree}_{1}(T,e_1)$ the tree obtained from $T$ by deleting $e_1$.
We have a unique sequence of edges  from the edge $e_1$ to the root.
Along the sequence of edges, we enumerate the ramification point 
from $e_1$ to the root by $1,\ldots,M$ where $M$ is the number assigned 
to the root.
We denote by $r_{i,j}$ the $j$-th edge from left in the $i$-th ramification 
point and by $r_{M,\bullet}$ an edge with a dot connecting to the root.
Let $T'_{i,j}$ be the partial tree in $T$ such that the $i$-th ramification 
point is the root of $T'_{i,j}$ and edges in $T'_{i,j}$ are left to the edge $r_{i,j+1}$.
We denote by $T_{i,j}$ the tree obtained from $T'_{i,j}$ by deleting the edge $e_1$ 
and adding an edge above the root of $T'_{i,j}$.
We define $T\setminus T'_{i,j}$ as a tree obtained from $T$ by deleting the partial
tree $T'_{i,j}$.
A tree $\mathrm{tree}_{1}(T,r_{i,j})$ with $(i,j)\neq(M,\bullet)$ is defined as a tree obtained 
from $T\setminus T'_{i,j}$ and $T_{i,j}$ by putting $T_{i,j}$ below the edge $r_{i,j}$ of 
$T\setminus T_{i,j}$ from left. 
A tree $\mathrm{tree}_{1}(T,r_{M,\bullet})$ is defined as a tree obtained  by concatenating 
$T\setminus T'_{M,\bullet}$ and $T_{M,\bullet}$ at the root and putting an arrow from the edge 
$r_{M,\bullet}$ to the unique edge of $T_{M,\bullet}$ connected to the root.
We define $\mathrm{deg}_{1}(T,r_{i,j})$ as the number of edges in $T'_{i,j}$, {\it i.e.},
$\mathrm{deg}_{1}(T,r_{i,j})=|T'_{i,j}|$.

When $T$ does not have arrows on the edges connecting to the root, we have
\begin{eqnarray*}
T=\mathrm{tree}_{1}(T,e_1)+\sum_{i,j}q^{\mathrm{deg}_{1}(T,r_{i,j})}\mathrm{tree}_{1}(T,r_{i,j}).
\end{eqnarray*}

Suppose that the $r_{M,j}$-th ($1\le j\le p$) edge in $T$ does not have an arrow
and the $r_{M,j}$-th ($p+1\le j\le r$) edge in $T$ has an incoming arrow.
The $r_{M,\bullet}$-th edge with a dot has an outgoing arrow.
A tree $\mathrm{tree}_{1}(T,r_{M,p})$ is defined as a tree obtained 
by concatenating $T_{M,p}$ and $T\setminus T'_{M,p}$ at the root and 
putting an incoming arrow on the unique edge of $T_{M,p}$ connected 
to the root.
We define 
\begin{eqnarray*}
S_{p}:=\{(i,j) \ |\ i\le M-1 \}\cup \{(M,j) \ |\  j\le p \},
\end{eqnarray*}
and $\mathrm{deg}_{1}(T,r_{M,p}):=|T'_{M,p}|$. 
We have 
\begin{eqnarray*}
T=\mathrm{tree}_{1}(T,e_1)+\sum_{(i,j)\in S_{p}}q^{\mathrm{deg}_{1}(T,r_{i,j})}
\mathrm{tree}_{1}(T,r_{i,j}).
\end{eqnarray*} 

Suppose that all the edges connecting to the root have arrows in $T$.
We define $r_{i,j}$, $\mathrm{tree}_{1}(T,r_{i,j})$ with $i\le M$ as above.
Let (S2) be the following statement for $T$:
\begin{enumerate}
\item[(S2)] The depth of the edge $e_1$ is more than or equal to two, and 
the leftmost edge connected to the root has an incoming arrow.
\end{enumerate}
Here, the depth of an edge $e$ in $T$ is defined as the distance from $e$ 
to the root of the tree.
When $T$ satisfies the statement (S2), we define a tree 
$\mathrm{tree}_{1}(T,r_{M+1})$ as follows.
Let $T'$ be a tree obtained from $T_{M,1}$ by deleting two 
successive edges connected to the root.
The tree $\mathrm{tree}_{1}(T,r_{M+1})$ is obtained by a concatenation
of $T'$ and $T\setminus T'_{M,1}$ at the root. 
We do not put an incoming arrow to the edge of $T'$ connected to the root.
We define 
\begin{eqnarray*}
\mathrm{deg}_{1}(T,r_{M+1}):=|T'_{M,1}|+2|T\setminus T'_{M,1}|-1.
\end{eqnarray*}
Then, we have
\begin{eqnarray*}
T=\mathrm{tree}_{1}(T,e_1)
+\sum_{i,j}q^{\mathrm{deg}_{1}(T,r_{i,j})}\mathrm{tree}_{1}(T,r_{i,j})
+\delta_{(S2)}\cdot q^{\mathrm{deg}_{1}(T,r_{M+1})}\mathrm{tree}_{1}(T,r_{M+1}),
\end{eqnarray*}
where $\delta_{(S2)}=1$ if (S2) is true and $\delta_{(S2)}=0$ otherwise.

\subsection{\texorpdfstring{Expansion by $\mathrm{tree}_{2}$ and 
$\mathrm{tree}_{3}$}{Expansion by tree2 and tree3}}
\label{sec:ext23}
Let $r_{i}$, $1\le i\le p+r-1$, be the $i$-th edge without a $\bullet$ from left 
which is connected to the root of a tree $T$ and $r_{p+r}:=r_{\bullet}$ be the edge 
with a $\bullet$ which is connected to the root.
We consider the tree $T$ where the edges $r_{i}$ with $1\le i\le p$ do not have 
incoming arrows and the edges $r_{i}$ with $p+1\le i<p+r$ have incoming arrows.
We denote by $T_{i}$ a partial tree connected to the edge $r_{i}$ for $1\le i\le p+r$.
We define $\mathrm{tree}_{2}(T,r_{i})$ as a tree obtained from $T$ by deleting the edge $r_{i}$
and connecting two partial trees at the root in-between the $r_{i-1}$-th edge and $r_{i+1}$-th
edge. We do not put incoming arrows on the edges left to the edge $r_{i+1}$.
Let $E(r_{i})$ be the edges left to the $r_{i}$-th edge.
We define 
\begin{eqnarray*}
\mathrm{deg}_{2}(T,r_{i}):=
\begin{cases}
\#\{e \ |\ e\in E(r_{i}) \} & 1\le i\le p, \\
2\#\{e \ |\  e\in E(r_{p+1})\}
+\#\{e \ |\  e\in (E(r_{i})\setminus E(r_{p+1}))\}  &  \text{$p+1\le i\le p+r$}.
\end{cases}
\end{eqnarray*}

We introduce two lemmas used later.
\begin{lemma}
\label{lemma:PP}
Let $\lambda_1$ and $\lambda_{2}$ be Dyck paths. 
We denote by $\lambda$ the path $U\lambda_1DU\lambda_2D$ and 
by $|\mu|$ the number of edges for a tree $A(\mu)$.
Then, we have 
\begin{align}
\label{eq-QQ}
\begin{split}
P(2(N+|\lambda_2|+1),|\lambda_1|+1)\cdot P(2N,|\lambda_2|+1)\cdot 
P^{\mathrm{Dyck}}_{\lambda_1}\cdot P^{\mathrm{Dyck}}_{\lambda_2}
=
\cdot P(2N,|\lambda|)\cdot P^{\mathrm{Dyck}}_{\lambda}.
\end{split}
\end{align}
\end{lemma}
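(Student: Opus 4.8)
Looking at Lemma \ref{lemma:PP}, the claim is an identity among generating functions $P(M,N)$ and Dyck generating functions $P^{\mathrm{Dyck}}$. The plan is to reduce everything to the explicit product formulas already established and then verify the resulting rational-function identity in quantum integers.

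\textbf{The plan.} First I would use Proposition \ref{prop:BalP}, which gives the closed form
$P(M,N)=\prod_{1\le i\le N}(1+q^{i})\cdot\prod_{1\le j\le M}a_{(j,N)}$, to rewrite each of the three factors $P(2(N+|\lambda_2|+1),|\lambda_1|+1)$, $P(2N,|\lambda_2|+1)$ and $P(2N,|\lambda|)$ in the stated identity. Setting $A:=|\lambda_1|$, $B:=|\lambda_2|$ (so $|\lambda|=A+B+2$) will keep the bookkeeping manageable. Since all three $P(M,N)$ appearing have \emph{even} first argument, only the factors $a_{(2m,N)}=[2N+2m]/[N+2m]$ enter, and the products $\prod_{j}a_{(2j,N)}=\prod_{j}[2N+2j]/[N+2j]$ telescope nicely into ratios of double factorials $[2k]!!$. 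Thus $P(2\ell,N)=\bigl(\prod_{i=1}^{N}(1+q^i)\bigr)\cdot\frac{[2N+2\ell]!!/[2N]!!}{[2N+\ell]!!/[N+\ell]!!}$ up to a careful indexing of the $a$'s, which I would compute once and reuse. The Dyck factors $P^{\mathrm{Dyck}}_{\lambda_1}$ and $P^{\mathrm{Dyck}}_{\lambda_2}$ appear on both sides symmetrically in a sense, but $P^{\mathrm{Dyck}}_{\lambda}$ with $\lambda=U\lambda_1DU\lambda_2D$ does not literally factor as a product of the two; however, using Theorem \ref{thrm:Dyck-fac} (or directly Lemma \ref{lemma:treefac} and Lemma \ref{lemma:treedel}), the tree $A(\lambda)$ is obtained by concatenating $A(U\lambda_1D)=A(\lambda_1)$ and $A(U\lambda_2D)=A(\lambda_2)$ at the root, so
$P^{\mathrm{Dyck}}_{\lambda}=\genfrac{[}{]}{0pt}{}{A+B}{A}P^{\mathrm{Dyck}}_{\lambda_1}P^{\mathrm{Dyck}}_{\lambda_2}$.

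\textbf{Key steps in order.} (1) Apply Lemma \ref{lemma:treefac} and Lemma \ref{lemma:treedel} to get the factorization of $P^{\mathrm{Dyck}}_{\lambda}$ displayed above; cancel $P^{\mathrm{Dyck}}_{\lambda_1}P^{\mathrm{Dyck}}_{\lambda_2}$ from both sides of Eqn.~(\ref{eq-QQ}). This reduces the claim to a pure identity in $q$-integers:
\[
P(2(N+B+1),A+1)\cdot P(2N,B+1)=\genfrac{[}{]}{0pt}{}{A+B}{A}\cdot P(2N,A+B+2).
\]
(2) Substitute the Proposition \ref{prop:BalP} formulas into both sides. The $\prod(1+q^i)$ prefactors are, respectively, $\prod_{i=1}^{A+1}(1+q^i)\cdot\prod_{i=1}^{B+1}(1+q^i)$ on the left and $\prod_{i=1}^{A+B+2}(1+q^i)$ on the right, so I would need $\prod_{i=1}^{A+1}(1+q^i)\prod_{i=1}^{B+1}(1+q^i)=\prod_{i=1}^{A+B+2}(1+q^i)\cdot C$ for the appropriate compensating factor $C$ — being careful, this is \emph{not} an equality of prefactors, so the $\prod a_{(j,\cdot)}$ parts must absorb the discrepancy. (3) Reduce the $\prod a_{(j,\cdot)}$ parts to ratios of $[\,\cdot\,]!!$ and $[\,\cdot\,]!$ and check the resulting identity by direct manipulation; alternatively, and probably cleaner, prove the reduced $q$-integer identity by induction on $A$ (or on $B$) using the recurrence Lemma \ref{lemma:recr} for $P(M,N)$ together with the Pascal-type recurrence for $\genfrac{[}{]}{0pt}{}{A+B}{A}$, mirroring the inductive proof of Lemma \ref{lemma:PBoxeven}.

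\textbf{Main obstacle.} The delicate part is the bookkeeping of the $(1+q^i)$ prefactors and the $a_{(j,N)}$ products when the first arguments $2(N+B+1)$, $2N$, $2N$ are even but the second arguments $A+1$, $B+1$, $A+B+2$ have mixed parities; the product $\prod_{1\le j\le M}a_{(j,N)}$ depends on whether $M$ is even or odd via the lone factor $a_{(M,N)}=[N+M+\text{?}]/\cdots$. I expect that after correctly expressing each $P$ as $\bigl(\prod(1+q^i)\bigr)\cdot\frac{[\,\cdot\,]!!}{[\,\cdot\,]!![\,\cdot\,]!!}$-type quotient, the identity collapses to a statement essentially equivalent to the one already verified in Lemma \ref{lemma:PBoxeven} (indeed Eqn.~(\ref{eq-QQ}) looks like an iterated/stacked version of that lemma), so the safest route is to prove the reduced identity by the same induction as Lemma \ref{lemma:PBoxeven} rather than by brute-force simplification. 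I would therefore present the proof as: reduce to the $q$-integer identity via Lemma \ref{lemma:treefac}, then induct on $|\lambda_1|$ using Lemma \ref{lemma:recr} and Lemma \ref{lemma:recDyck}, with Lemma \ref{lemma:PBoxeven} supplying the base/inductive step.
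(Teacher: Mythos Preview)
Your overall strategy is exactly the paper's: cancel the Dyck factors via Lemma~\ref{lemma:treedel} and Lemma~\ref{lemma:treefac}, then plug in the closed form of Proposition~\ref{prop:BalP} for each $P(\cdot,\cdot)$ and check the remaining $q$-integer identity directly (the paper does not bother with an induction on $|\lambda_1|$; the direct substitution already telescopes).

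There is, however, a concrete bookkeeping error that would make your reduced identity false as written. You assert that $A(U\lambda_1D)=A(\lambda_1)$ as trees, and hence obtain the binomial $\genfrac{[}{]}{0pt}{}{A+B}{A}$. This is not correct: by condition~($\Diamond4$), the tree $A(U\lambda_1D)$ is $A(\lambda_1)$ with an \emph{extra edge} attached above the root, so it has $|\lambda_1|+1$ edges (Lemma~\ref{lemma:treedel} only says the generating functions coincide, not the trees). Consequently, Lemma~\ref{lemma:treefac} applied to the concatenation $(U\lambda_1D)\circ(U\lambda_2D)$ gives
\[
P^{\mathrm{Dyck}}_{\lambda}=\genfrac{[}{]}{0pt}{}{|\lambda|}{|\lambda_1|+1}
P^{\mathrm{Dyck}}_{\lambda_1}P^{\mathrm{Dyck}}_{\lambda_2}
=\genfrac{[}{]}{0pt}{}{A+B+2}{A+1}
P^{\mathrm{Dyck}}_{\lambda_1}P^{\mathrm{Dyck}}_{\lambda_2},
\]
and your reduced identity should read
\[
P(2(N+B+1),A+1)\cdot P(2N,B+1)=\genfrac{[}{]}{0pt}{}{A+B+2}{A+1}\cdot P(2N,A+B+2).
\]
With this correction the substitution from Proposition~\ref{prop:BalP} goes through cleanly (the $(1+q^i)$ prefactors match without any compensating factor, contrary to your worry in step~(2)), and no induction is needed.
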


\begin{proof}
We have
\begin{eqnarray*}
P(2N,|\lambda|)&=&\prod_{j=1}^{|\lambda_1|+1}
\frac{[2N+2(|\lambda|+1-j)]}{[|\lambda|+1-j]}\cdot P(2N,|\lambda_2|+1), \\
%%%%%%
P(2(N+|\lambda_2|+1),|\lambda_{1}|+1)
&=&\prod_{j=1}^{|\lambda_{1}|+1}\frac{[2N+2(|\lambda|+1-j)]}{[|\lambda_1|+2-j]},
\end{eqnarray*}
from Proposition \ref{prop:BalP} and  
\begin{eqnarray*}
P^{\mathrm{Dyck}}_{\lambda}=\genfrac{[}{]}{0pt}{}{|\lambda|}{|\lambda_1|+1}
P^{\mathrm{Dyck}}_{\lambda_1}P^{\mathrm{Dyck}}_{\lambda_2}
\end{eqnarray*}
from Lemma \ref{lemma:treedel} and Lemma \ref{lemma:treefac}.
Substituting these into the left hand side of Eqn.(\ref{eq-QQ}), we obtain the 
right hand side of Eqn.(\ref{eq-QQ}).
\end{proof}

\begin{lemma}
\label{lemma:tree2}
We have 
\begin{align}
\label{eq:tree21}
\frac{\mathrm{tree}_{2}(T,r_{i})}{T}&=\frac{[|T_{i}|+1]}{[2(\sum_{1\le k\le p+r}|T_{k}|+p+r)]}, \quad 1\le i\le p,\\
\begin{split}
\label{eq:tree22}
\frac{\mathrm{tree}_{2}(T,r_{i})}{T}&=
\frac{[2(\sum_{k=p+1}^{p+r}|T_{k}|+r)]}{[2(\sum_{k=1}^{p+r}|T_{k}|+p+r)]}
\cdot \frac{[|T_{i}|+1]}{[2(\sum_{k=i+1}^{p+r}|T_{k}|+p+r-i)+|T_{i}|+1]} \\
&\quad\times\prod_{p+1\le j\le i-1}
\frac{[2(\sum_{k=j+1}^{p+r}|T_{k}|+p+r-j)]}{[2(\sum_{k=j+1}^{p+r}|T_{k}|+p+r-j)+|T_{j}|+1]}, 
\quad\text{for $p+1\le i\le p+r$}.
\end{split}
\end{align}
\end{lemma}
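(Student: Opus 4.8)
The plan is to derive both formulas in Lemma~\ref{lemma:tree2} as direct corollaries of the factorization Lemma~\ref{lemma:fac-tree} together with the closed product formula for $P(M,N)$ in Proposition~\ref{prop:BalP}. The point is that the tree $\mathrm{tree}_2(T,r_i)$ differs from $T$ only by the removal of a single edge $r_i$ connecting to the root, so the generating function $T/\mathrm{tree}_2(T,r_i)$ depends only on the ``sizes'' of the subtrees hanging off the root, namely $|T_k|+1$ for $1\le k\le p+r$ and the contribution of the dotted edge $r_\bullet$.

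First I would apply Lemma~\ref{lemma:fac-tree} repeatedly, peeling off the subtrees $T_1,\dots,T_{p}$ (without arrows) from the left one at a time. Each peeling of $T_j$ produces a factor $P^{\mathrm{Dyck}}_{T_j}\cdot P(2|T_{\ge j+1}|,|T_j|+1)$, where $|T_{\ge j+1}|:=\sum_{k\ge j+1}|T_k|$ plus the appropriate count of root edges. Since $P^{\mathrm{Dyck}}_{T_j}$ is the same factor appearing in the expansion of $\mathrm{tree}_2(T,r_i)$, and the Dyck contributions cancel in the ratio, I obtain $\mathrm{tree}_2(T,r_i)/T$ as an explicit ratio of the $P(M,N)$'s, which by Proposition~\ref{prop:BalP} is a ratio of quantum integers. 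For Eqn.~(\ref{eq:tree21}) with $1\le i\le p$, removing the edge $r_i$ decreases by one the argument $|T_i|+1\mapsto|T_i|$ in exactly one of the $P$-factors (the one peeling off $T_i$) and shifts the running totals $|T_{\ge k}|$ for $k<i$ accordingly; collecting terms and using the telescoping structure of $\prod a_{(j,N)}$ in Proposition~\ref{prop:BalP} should collapse the product down to the single ratio $[|T_i|+1]/[2(\sum_{1\le k\le p+r}|T_k|+p+r)]$.

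For Eqn.~(\ref{eq:tree22}), the subtlety is that the edges $r_{p+1},\dots,r_{p+r}$ carry incoming arrows, so I must use the \emph{second} case of Lemma~\ref{lemma:fac-tree} (the one producing $P(2|T_2|-1,|T_1|+1)$) when peeling those subtrees, and the dotted edge $r_\bullet$ must be handled via the third tree-operation of Section~\ref{sec:Fac-trees} (with its $\genfrac{[}{]}{0pt}{}{M+N}{M}_{q^2}\frac{[2M+N]}{[2(M+N)]}\prod(1+q^i)$ weight), or equivalently via the $q^2$-binomial version of $P$. After peeling the arrowless subtrees $T_1,\dots,T_p$ exactly as before (their contribution cancels in the ratio), I am left with a tree whose root has only the arrowed subtrees $T_{p+1},\dots,T_{p+r}$ and the dotted edge. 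Peeling $T_{p+1},\dots,T_{i-1}$ and then $T_i$ produces the telescoping product $\prod_{p+1\le j\le i-1}\frac{[2(\sum_{k\ge j+1}|T_k|+p+r-j)]}{[2(\sum_{k\ge j+1}|T_k|+p+r-j)+|T_j|+1]}$ together with the factor $\frac{[|T_i|+1]}{[2(\sum_{k\ge i+1}|T_k|+p+r-i)+|T_i|+1]}$ for the edge $r_i$ itself; the overall normalization $\frac{[2(\sum_{k\ge p+1}|T_k|+r)]}{[2(\sum_{k}|T_k|+p+r)]}$ comes from the transition between the arrowless block and the arrowed block, i.e.\ from the difference between applying the first and second cases of Lemma~\ref{lemma:fac-tree}.

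I expect the main obstacle to be bookkeeping: carefully matching the arguments $(M,N)$ of each $P(M,N)$ factor produced by successive applications of Lemma~\ref{lemma:fac-tree} against the running partial sums $\sum_{k\ge j+1}|T_k|+(\text{number of remaining root edges})$, and verifying that the telescoping of the product formula in Proposition~\ref{prop:BalP} genuinely collapses to the stated ratios. A clean way to organize this is to first prove an auxiliary identity, analogous to Lemma~\ref{lemma:PP}, expressing the ratio $P(2N,|\lambda|)\cdot P^{\mathrm{Dyck}}_\lambda / \bigl(P(2N',\cdot)\,P^{\mathrm{Dyck}}_{\lambda'}\bigr)$ after removing one edge, and then iterate it; the arrowed case requires the parallel identity with the $q^2$-binomial factors from Section~\ref{sec:Fac-trees}. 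Once that auxiliary step is in place, both (\ref{eq:tree21}) and (\ref{eq:tree22}) follow by a short induction on the number of root subtrees peeled.
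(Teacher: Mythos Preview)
Your proposal is correct and follows essentially the same route as the paper: factorize both $T$ and $\mathrm{tree}_2(T,r_i)$ via repeated applications of Lemma~\ref{lemma:fac-tree}, then collapse the resulting ratio of $P(M,N)$'s using Proposition~\ref{prop:BalP} and the auxiliary identity Lemma~\ref{lemma:PP}. One small imprecision: you write that ``the Dyck contributions cancel in the ratio,'' but for the subtree $T_i$ itself this is not literal---removing $r_i$ replaces $P^{\mathrm{Dyck}}_{T_i}$ by $\prod_j P^{\mathrm{Dyck}}_{T_{i,j}}$ (the children of $T_i$ are now attached directly to the root), and it is precisely Lemma~\ref{lemma:PP} that absorbs this discrepancy together with the corresponding $P$-factors; since you already flag the need for that auxiliary identity, the argument goes through as planned.
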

\begin{proof}
We prove Lemma for $1\le i\le p$ since one can apply the similar computation to 
$p+1\le i\le p+r$ case.

Let $r_{i,j}$ $1\le i\le s$, be the $j$-th edge without a $\bullet$ from left which 
is connected to the root of a partial tree $T_{i}$.
We denote by $T_{i,j}$ a partial tree connected to the edge $r_{i,j}$.

From the factorization (Lemma \ref{lemma:fac-tree}), for $1\le i\le p$ we have 
\begin{eqnarray*}
T&=&\prod_{j=1}^{i}P^{\mathrm{Dyck}}_{T_j}
\cdot\prod_{j=1}^{i}P\left(2\left(\sum_{k=j+1}^{p+r}|T_{k}|+p+r-j\right), |T_{j}|+1\right) 
\cdot (T\setminus \bigcup_{1\le j\le i} T_{j})\\
%%%%%%%%%%%
\mathrm{tree}_{2}(T,r_{i})
&=&\prod_{j=1}^{i-1}P^{\mathrm{Dyck}}_{T_j}
\cdot \prod_{j}P^{\mathrm{Dyck}}_{T_{i,j}} \cdot (T\setminus \bigcup_{1\le j\le i-1} T_{j})\\
&&\times \prod_{j=1}^{s}
P\left(2\left(\sum_{j+1\le k\le s}|T_{i,k}|+s-j+\sum_{k=i+1}^{p+r}|T_{k}|+p+r-i\right),|T_{i,j}|+1\right)
\\
&&\times\prod_{j=1}^{i-1}P\left(2\left(\sum_{k=j+1}^{p+r}|T_{k}|+p+r-j-1\right), |T_{j}|+1\right), \\
\end{eqnarray*}
where $P(M,N)$ is defined in Section \ref{sec:BalP}.
We have 
\begin{align*}
\begin{split}
\prod_{j}P^{\mathrm{Dyck}}_{T_{i,j}}\cdot\prod_{j=1}^{s}
P\left(2\left(\sum_{j+1\le k\le s}|T_{i,k}|+s-j+\sum_{k=i+1}^{p+r}|T_{k}|+p+r-i\right),|T_{i,j}|+1\right) \\
=P\left(2\left(\sum_{k=i+1}^{p+r}|T_{k}|+p+r-i\right), |T_{i}|\right) 
\end{split}
\end{align*}
from the successive use of Lemma \ref{lemma:PP}. We also have
\begin{eqnarray*}
\frac{P\left(2\left(\sum_{k=j+1}^{p+r}|T_{k}|+p+r-j-1\right), |T_{j}|+1\right)}
{P\left(2\left(\sum_{k=j+1}^{p+r}|T_{k}|+p+r-j\right), |T_{j}|+1\right)}
&=&\frac{\left[2\left(\sum_{k=j+1}^{p+r}|T_{k}|+p+r-j\right)\right]}
{\left[2\left(\sum_{k=j}^{p+r}|T_{k}|+p+r-j+1\right)\right]}, \\
\frac{P\left(2\left(\sum_{k=i+1}^{p+r}|T_{k}|+p+r-i\right), |T_{i}|\right)}
{P\left(2\left(\sum_{k=i+1}^{p+r}|T_{k}|+p+r-i\right), |T_{i}|+1\right)}
&=&\frac{\left[|T_{i}|+1\right]}{\left[2\left(\sum_{k=i}^{p+r}|T_{k}|+p+r+1-i\right)\right]}.
\end{eqnarray*}
Substituting these into the left hand side of Eqn.(\ref{eq:tree21}), 
we obtain the right hand side of Eqn.(\ref{eq:tree21}).

\end{proof}

\begin{prop}
\label{lemma:recT2}
We have 
\begin{align}
\label{eq:recT2}
T=(1+q^{\mathrm{deg}_{2}(T,r_{p+1})})
\sum_{1\le i\le p}q^{\mathrm{deg}_{2}(T,r_{i})}\mathrm{tree}_{2}(T,r_{i})
+\sum_{p+1\le i\le p+r}q^{\mathrm{deg}_{2}(T,r_{i})}\mathrm{tree}_{2}(T,r_{i}) 
\end{align}
\end{prop}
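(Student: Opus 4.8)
The plan is to interpret both sides of Eqn.(\ref{eq:recT2}) as generating functions for ballot tilings of type BI with a fixed lower path $\lambda$, and to establish the identity by a decomposition argument on the leftmost edges at the root of $T=A(\lambda)$ — precisely as in the proofs of Lemmas \ref{lemma:recBaleven} and \ref{lemma:recBalodd}, of which this is the tree-theoretic reformulation. Recall that the root of $T$ carries $p$ edges $r_1,\dots,r_p$ without incoming arrows, then $r-1$ edges $r_{p+1},\dots,r_{p+r-1}$ with incoming arrows, and the dotted edge $r_{p+r}=r_\bullet$ (which carries an outgoing arrow since all root edges have arrows in the relevant case). Translating through the bijection underlying Theorem \ref{thrm:BI-fac}, the path $\lambda$ decomposes at its valleys, and removing the leftmost-lowest structure corresponds exactly to the operations $\mathrm{tree}_2(T,r_i)$ defined just before the statement; the weight $q^{\mathrm{deg}_2(T,r_i)}$ records the area of the Dyck/ballot tiles created in the passage.

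First I would set up the correspondence between the $\mathrm{tree}_2$ operations and the path-level operations $\nu_i$ appearing in Lemmas \ref{lemma:recBaleven} and \ref{lemma:recBalodd}: deleting $r_i$ for $1\le i\le p$ and reattaching the partial subtrees at the root corresponds to sliding the leftmost column of boxes, contributing the terms with $Q(\nu_{i+1};\mathbf{T})$, whereas the edges with incoming arrows ($p+1\le i\le p+r$) correspond to the cases governed by the odd-length factor, i.e. where a ballot tile of length $(2n,n')$ with $n'\ge1$ is placed at the minimum point, doubling the relevant degree count. The factor $(1+q^{\mathrm{deg}_2(T,r_{p+1})})$ in front of the first sum is the key structural point: it encodes that for each of the $i\le p$ configurations, the minimum-point box may either be empty (coefficient $1$) or host the pair $B_1,B_2$ of equal-shape ballot tiles of positive $n'$ (coefficient $q^{\mathrm{deg}_2(T,r_{p+1})}$, where $\mathrm{deg}_2(T,r_{p+1})=2\#\{e\in E(r_{p+1})\}$ is exactly the area of that pair since its length is $(2N,M)$ with $2N+M$ counted by $E(r_{p+1})$). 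I would verify this by comparing with Lemma \ref{lemma:recBalodd}: the term $q^{\mathrm{deg}_{r+1}(\lambda)}Q(\nu_{r+2};\mathbf{T})$ there, after factoring out via Lemma \ref{lemma:fac-tree} and Lemma \ref{lemma:PP}, reorganizes precisely into $q^{\mathrm{deg}_2(T,r_{p+1})}$ times the sum $\sum_{1\le i\le p}q^{\mathrm{deg}_2(T,r_i)}\mathrm{tree}_2(T,r_i)$.

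Then I would carry out the bookkeeping: the $i\le p$ terms of Eqn.(\ref{eq:recT2}), without the prefactor, reproduce $\sum_{i=0}^{?}q^{\mathrm{deg}}Q(\nu_{i+1};\mathbf{T})$ restricted to the arrow-free part; the $p+1\le i\le p+r$ terms reproduce the contributions from the arrowed edges, where the doubled degree matches $\mathrm{deg}_2(T,r_i)=2\#\{e\in E(r_{p+1})\}+\#\{e\in E(r_i)\setminus E(r_{p+1})\}$; and I would use Lemma \ref{lemma:tree2} to confirm that the ratios $\mathrm{tree}_2(T,r_i)/T$ produced by this decomposition agree on the nose with the quantum-integer ratios stated there (this is a consistency check rather than an independent input). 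The main obstacle I anticipate is the careful matching of the arrowed-edge degrees and the telescoping product in Eqn.(\ref{eq:tree22}) with the $\mathrm{deg}_{r+1}$ and intermediate degrees in Lemma \ref{lemma:recBalodd} — in particular keeping track of which valley hosts the positive-$n'$ ballot tile and ensuring the $2(\cdots)$ factors line up. Once that indexing is pinned down, the identity follows by summing all contributions exactly as in the proof of Lemma \ref{lemma:recBalodd}, reading the result back through the dictionary $T\leftrightarrow P^I_{\lambda,\ast}$.
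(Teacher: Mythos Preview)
Your approach is genuinely different from the paper's, and there is a structural gap. The paper's proof is a short algebraic computation: it divides Eqn.~(\ref{eq:recT2}) by $T$, substitutes the explicit quantum-integer ratios from Lemma~\ref{lemma:tree2} (which in turn rest on the factorization of Theorem~\ref{thrm:BI-fac} via Lemma~\ref{lemma:fac-tree}), and checks that the two resulting sums telescope to
\[
\frac{[2\sum_{k=1}^{p}|T_k|+2p]}{[2|T|]}
+q^{2\sum_{k=1}^{p}|T_k|+2p}\frac{[2\sum_{k=p+1}^{p+r}|T_k|+2r]}{[2|T|]}=1.
\]
Lemma~\ref{lemma:tree2} is therefore \emph{the} input, not a consistency check.

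The gap in your plan is the identification of the $\mathrm{tree}_2$ terms with the $Q(\nu_{i+1};\mathbf{T})$ of Lemmas~\ref{lemma:recBaleven} and~\ref{lemma:recBalodd}. Those path-level recurrences are the $\mathrm{tree}_1$ expansion of Section~\ref{sec:ext1}: they delete the leftmost-lowest edge $e_1$ and index terms by ramification points along the path from $e_1$ to the root. The $\mathrm{tree}_2$ operation is different --- it deletes a \emph{root} edge $r_i$ and splices the subtrees of $T_i$ back at the root --- so there is no direct bijective dictionary between a single $\nu_{i+1}$ term and a single $\mathrm{tree}_2(T,r_i)$ term. Your claimed reorganization, whereby $q^{\mathrm{deg}_{r+1}(\lambda)}Q(\nu_{r+2};\mathbf{T})$ factors as $q^{\mathrm{deg}_2(T,r_{p+1})}\sum_{i\le p}q^{\mathrm{deg}_2(T,r_i)}\mathrm{tree}_2(T,r_i)$, would require exactly the factorization identities (Lemma~\ref{lemma:PP}, Proposition~\ref{prop:BalP}) that already yield Lemma~\ref{lemma:tree2}; at that point the bijective framing buys nothing over the paper's two-line telescoping argument.
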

\begin{proof}
From Lemma \ref{lemma:tree2}, we have
\begin{align*}
\begin{gathered}
(1+q^{\mathrm{deg}_{2}(T,r_{p+1})})
\sum_{1\le i\le p}q^{\mathrm{deg}_{2}(T,r_{i})}\frac{\mathrm{tree}_{2}(T,r_{i})}{T}
=\frac{\left[2\sum_{k=1}^{p}|T_{k}|+2p\right]}{\left[2(\sum_{k=1}^{p+r}|T_{k}|+p+r)\right]}, \\
\sum_{p+1\le i\le p+r}q^{\mathrm{deg}_{2}(T,r_{i})}\frac{\mathrm{tree}_{2}(T,r_{i})}{T}
=q^{2\sum_{k=1}^{p}|T_{k}|+2p}\frac{\left[2(\sum_{k=p+1}^{p+r}|T_{k}|+r)\right]}
{\left[2(\sum_{k=1}^{p+r}|T_{k}|+p+r)\right]}.
\end{gathered}
\end{align*}
The sum of these two terms gives Eqn.(\ref{eq:recT2}).
\end{proof}

Let $T_{i}$ be a partial tree connected to the $r_{i}$-th ($1\le i\le p+r$) edge.
We define 
\begin{eqnarray*}
C_{i}(T):=
\begin{cases}
q^{\mathrm{deg}_{4}(T,r_{i})}(1+q^{\mathrm{deg}_{3}(T,r_{i})}), & 1\le i\le p, \\
q^{\mathrm{deg}_{4}(T,r_{i})}, & p+1\le i\le p+r,
\end{cases}
\end{eqnarray*}
where 
\begin{eqnarray*}
\mathrm{deg}_{3}(T,r_{i})&:=&2(|T_{i+1}|+\ldots +|T_{p+r}|+p+r-i|)+|T_{i}|+1, \\
\mathrm{deg}_{4}(T,r_{i})&:=&\sum_{1\le j\le i-1}(|T_{j}|+1).
\end{eqnarray*}
Then, we have 
\begin{prop}
\begin{eqnarray*}
T=\sum_{1\le i\le p+r}C_{i}(T)\mathrm{tree}_{2}(T,r_{i}).
\end{eqnarray*}
\end{prop}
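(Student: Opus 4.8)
The plan is to divide the claimed identity by the generating function $T$ and then evaluate the resulting rational function in $q$ using the explicit ratios $\mathrm{tree}_{2}(T,r_{i})/T$ supplied by Lemma~\ref{lemma:tree2}. Thus it suffices to prove
\[
\sum_{1\le i\le p+r}C_{i}(T)\,\frac{\mathrm{tree}_{2}(T,r_{i})}{T}=1 .
\]
I would fix the notation $y_{j}:=|T_{j}|+1$ for $1\le j\le p$ and $x_{j}:=|T_{j}|+1$ for $p+1\le j\le p+r$, and write $Y_{i}:=\sum_{j=1}^{i}y_{j}$, $B:=\sum_{j=p+1}^{p+r}x_{j}$, and $A_{j}:=\sum_{k=j+1}^{p+r}x_{k}$ for $p\le j\le p+r$, so that $A_{p}=B$, $A_{p+r}=0$ and $|T|=Y_{p}+B$. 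In this notation $\mathrm{deg}_{4}(T,r_{i})=Y_{i-1}$ for $i\le p$ and $\mathrm{deg}_{4}(T,r_{i})=Y_{p}+B-A_{i-1}$ for $i>p$, while $\mathrm{deg}_{3}(T,r_{i})=2(Y_{p}-Y_{i}+B)+y_{i}=2|T|-2Y_{i}+y_{i}$. Throughout I will use only the standard rules $[a+b]=[a]+q^{a}[b]$ and $q^{a}[n]=(q^{a+n}-q^{a})/(q-1)$.

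First I would treat the non-arrow part $S_{1}:=\sum_{1\le i\le p}C_{i}(T)\,\mathrm{tree}_{2}(T,r_{i})/T$. Since $\mathrm{tree}_{2}(T,r_{i})/T=[y_{i}]/[2|T|]$ by Lemma~\ref{lemma:tree2}, and $Y_{i-1}+\mathrm{deg}_{3}(T,r_{i})=2|T|-Y_{i}$, each summand equals $(q^{Y_{i-1}}[y_{i}]+q^{\,2|T|-Y_{i}}[y_{i}])/[2|T|]$. The two families telescope: $\sum_{i=1}^{p}q^{Y_{i-1}}[y_{i}]=[Y_{p}]$ and $\sum_{i=1}^{p}q^{\,2|T|-Y_{i}}[y_{i}]=q^{\,2|T|-Y_{p}}[Y_{p}]$, so $S_{1}=(1+q^{\,2|T|-Y_{p}})[Y_{p}]/[2|T|]$. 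Substituting $[Y_{p}]=[2|T|]-q^{Y_{p}}[2|T|-Y_{p}]$ and then $[2|T|-Y_{p}]=[Y_{p}+2B]=[2B]+q^{2B}[Y_{p}]$ collapses this to $S_{1}=1-q^{Y_{p}}[2B]/[2|T|]$.

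The arrow part $S_{2}:=\sum_{p+1\le i\le p+r}C_{i}(T)\,\mathrm{tree}_{2}(T,r_{i})/T$ is where the main difficulty lies. Here Lemma~\ref{lemma:tree2} gives
\[
\frac{\mathrm{tree}_{2}(T,r_{i})}{T}=\frac{[2B]}{[2|T|]}\cdot\frac{[x_{i}]}{[2A_{i}+x_{i}]}\prod_{j=p+1}^{i-1}\frac{[2A_{j}]}{[2A_{j}+x_{j}]},
\]
and $C_{i}(T)=q^{\,Y_{p}+B-A_{i-1}}$, so after factoring out $q^{Y_{p}}[2B]/[2|T|]$ it remains to show $\sum_{i=p+1}^{p+r}q^{\,B-A_{i-1}}\frac{[x_{i}]}{[2A_{i}+x_{i}]}\prod_{j=p+1}^{i-1}\frac{[2A_{j}]}{[2A_{j}+x_{j}]}=1$. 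Using $A_{j-1}=A_{j}+x_{j}$ I would prove the more general claim that the tail sums
\[
F_{m}:=\sum_{i=p+m}^{p+r}q^{\,A_{p+m-1}-A_{i-1}}\,\frac{[x_{i}]}{[2A_{i}+x_{i}]}\prod_{j=p+m}^{i-1}\frac{[2A_{j}]}{[2A_{j}+x_{j}]}
\]
all equal $1$, by downward induction on $m$ from $m=r$ to $m=1$; the required sum is $F_{1}$ because $A_{p}=B$. The base case $F_{r}=[x_{p+r}]/[2A_{p+r}+x_{p+r}]=[x_{p+r}]/[x_{p+r}]=1$ is immediate since $A_{p+r}=0$. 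For the inductive step, isolating the term $i=p+m$ and pulling the $j=p+m$ factor out of the remaining terms yields the recursion $F_{m}=(\,[x_{p+m}]+q^{\,x_{p+m}}[2A_{p+m}]\,F_{m+1}\,)/[2A_{p+m}+x_{p+m}]$ (using $A_{p+m-1}-A_{p+m}=x_{p+m}$); feeding in $F_{m+1}=1$ and the addition rule $[x_{p+m}]+q^{\,x_{p+m}}[2A_{p+m}]=[2A_{p+m}+x_{p+m}]$ gives $F_{m}=1$. Hence $S_{2}=q^{Y_{p}}[2B]/[2|T|]$.

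Adding the two contributions gives $S_{1}+S_{2}=1$, which on multiplying by $T$ is the asserted expansion. The one genuinely delicate point is the downward induction identifying the arrow-part sum with $1$; after the reformulation in terms of the partial sums $A_{j}$ it is a one-step telescoping, and everything else is routine quantum-integer algebra, entirely parallel to the computation behind Proposition~\ref{lemma:recT2}.
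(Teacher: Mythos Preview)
Your proof is correct and follows essentially the same approach as the paper: divide through by $T$, invoke Lemma~\ref{lemma:tree2} for the ratios $\mathrm{tree}_{2}(T,r_{i})/T$, and verify that the two partial sums (over $i\le p$ and over $i>p$) combine to $1$. Your version is in fact more explicit---the notation $y_j,x_j,Y_i,A_j,B$ and the downward induction establishing $F_m=1$ spell out exactly the ``sum from $i=p+r$ to $i=p+1$'' telescoping that the paper only sketches---but the structure of the argument is the same.
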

\begin{proof}
We show that 
\begin{eqnarray}
\label{eq:c2}
\sum_{1\le i\le p}C_{i}(T)\frac{\mathrm{tree}_{2}(T,r_i)}{T}+
\sum_{p+1\le i\le p+r}C_{i}(T)\frac{\mathrm{tree}_{2}(T,r_i)}{T}=1.
\end{eqnarray}
We first compute the second sum in the left hand side of Eqn.(\ref{eq:c2}).
By taking the sum from $i=p+r$ to $i=1$, we obtain 
\begin{eqnarray}
\label{eq:c2-2}
\sum_{p+1\le i\le p+r}C_{i}(T)\frac{\mathrm{tree}_{2}(T,r_i)}{T}
=q^{\sum_{k=1}^{p}|T_{k}|+p}
\frac{\left[2(\sum_{i=p+1}^{p+r}|T_{i}|+r)\right]}{\left[2(\sum_{i=1}^{p+r}|T_{i}|+p+r)\right]}.
\end{eqnarray}
We add terms from $i=p$ to $i=1$ one by one to the right hand side of Eqn.(\ref{eq:c2-2}).
We obtain that the left hand side of Eqn.(\ref{eq:c2}) is equal to one. 
\end{proof}

We consider a Dyck path $\lambda$ and the corresponding tree $T^{\mathrm{Dyck}}:=A(\lambda)$.
We enumerate edges connected to a leaf from left to right by $1,2,\ldots,p$ 
and call the $i$-th edge $e_{i}$.
A tree $\mathrm{tree}_{3}(T^{\mathrm{Dyck}},e_{i})$ is defined as a tree obtained from $T$
by deleting $e_{i}$. 
Let $\mathrm{deg}_{4}(T^{\mathrm{Dyck}},e_{i})$ be the number of edges in $T^{\mathrm{Dyck}}$ 
strictly left to the edge $e_i$. Here, ``strictly left" means that we do not include the edges 
which are parents of $e_{i}$.
Given an edge $e$ in $T$, we denote by $\mathrm{ht}(e)$ the number of edges below 
$e$ plus one, that is, the height of the chord corresponding to $e$.
\begin{lemma}
\label{lemma:recDyck0}
We have 
\begin{eqnarray}
\label{eqn:recDyck0}
T^{\mathrm{Dyck}}=\sum_{i=1}^{p}q^{\mathrm{deg}_{4}(T^{\mathrm{Dyck}},e_{i})}
\mathrm{tree}_{3}(T^{\mathrm{Dyck}},e_i).
\end{eqnarray}
\end{lemma}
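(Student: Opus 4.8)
The plan is to reduce \eqref{eqn:recDyck0} to a short product-formula computation, using Theorem~\ref{thrm-A1} (equivalently Theorem~\ref{thrm:Dyck-fac}), which gives $T^{\mathrm{Dyck}}=P^{\mathrm{Dyck}}_{\lambda}=[n]!/\prod_{c\in\mathrm{Chord}(\lambda)}[l(c)]$ for a Dyck path $\lambda$ with $n$ chords. First I would translate both sides of \eqref{eqn:recDyck0} into chord language. A leaf edge $e_i$ of $A(\lambda)$ corresponds to an innermost chord $c_i$, i.e.\ an adjacent $UD$-pair; deleting it from $\lambda$ yields a Dyck path $\lambda^{(i)}$ with $n-1$ chords whose tree is $\mathrm{tree}_{3}(T^{\mathrm{Dyck}},e_i)$, so $\mathrm{tree}_{3}(T^{\mathrm{Dyck}},e_i)=P^{\mathrm{Dyck}}_{\lambda^{(i)}}$. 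Passing from $\lambda$ to $\lambda^{(i)}$ removes $c_i$ (which has $l(c_i)=1$), decreases $l(c)$ by one for each ancestor chord $c\succ c_i$, and leaves $l(c)$ unchanged for all other chords. Finally, the number of edges of $T^{\mathrm{Dyck}}$ strictly to the left of $e_i$ equals the number of chords of $\lambda$ lying entirely to the left of $c_i$, which is the number of $D$-steps before the up-step of $c_i$; thus $\mathrm{deg}_{4}(T^{\mathrm{Dyck}},e_i)$ coincides with the capacity $\mathrm{cap}(e_i)$ of the leaf edge $e_i$ in $A(\lambda/\lambda_0)$.

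Feeding these three observations and the product formula into \eqref{eqn:recDyck0} and cancelling the common factor $[n-1]!/\prod_{c\in\mathrm{Chord}(\lambda)}[l(c)]$, the lemma becomes equivalent to the numerical identity
\[
[n]\;=\;\sum_{e\ \text{leaf edge of }A(\lambda)}q^{\,\mathrm{cap}(e)}\prod_{c\,\succ\,e}\frac{[l(c)]}{[l(c)-1]},
\]
where the product runs over the ancestors of the chord of $e$. I would prove this by induction on $n$, using the two decompositions of a Dyck word that define $A(\lambda)$: if $\lambda=U\lambda'D$, then the root edge $c_0$ has $l(c_0)=n$, the leaves and their capacities are those of $A(\lambda')$, and every ancestor product acquires exactly the extra factor $[n]/[n-1]$, so the right-hand side equals $\tfrac{[n]}{[n-1]}[n-1]=[n]$ by the inductive hypothesis; if $\lambda=\lambda_1\lambda_2$ with $\lambda_1$ the first irreducible factor ($n_1$ chords) and $\lambda_2$ the rest ($n_2$ chords), the leaves of $A(\lambda_1)$ keep their data while the leaves of $A(\lambda_2)$ keep their ancestor products but have capacities shifted by $n_1$ (the number of $D$-steps in $\lambda_1$), so the right-hand side equals $[n_1]+q^{n_1}[n_2]=[n_1+n_2]$. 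The base case $\lambda=UD$ (a single leaf edge of capacity $0$ with no ancestors) is immediate.

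The one step that requires care is the bookkeeping of $l(c)$ and $\mathrm{cap}(e)$ under these decompositions: checking that appending or stripping an outer $UD$-pair changes chord lengths only on the unique new (resp.\ deleted) chord, and that a leaf in the right factor of a concatenation has its capacity increased by precisely the number of $D$-steps in the left factor. Once this is pinned down, the induction collapses by the elementary telescoping $[a]+q^{a}[b]=[a+b]$. I would also note that $\mathrm{deg}_{4}(T^{\mathrm{Dyck}},e_i)=\mathrm{cap}(e_i)$ can be read off purely inside the tree, so the whole argument stays within the language of $A(\lambda/\lambda_0)$; an alternative, slightly longer route would be to argue directly on labellings of LS type, but the product-formula reduction above is the most economical.
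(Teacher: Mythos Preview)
Your proof is correct. Both you and the paper reduce the identity to a ratio computation via the product formula $P^{\mathrm{Dyck}}_\lambda=[n]!/\prod_c[l(c)]$, obtaining $\mathrm{tree}_3(T,e_i)/T^{\mathrm{Dyck}}=\tfrac{1}{[n]}\prod_{c\succ c_i}\tfrac{[l(c)]}{[l(c)-1]}$ and then showing the weighted sum of these ratios equals $1$. The difference is purely in how the resulting identity is verified: the paper works locally, grouping the leaf edges that hang below a single lowest ramification point, summing their contributions to collapse that ramification into a straight chain, and iterating; you instead run a global structural induction on the Dyck word using the two cases $\lambda=U\lambda'D$ and $\lambda=\lambda_1\lambda_2$, together with the telescoping $[a]+q^a[b]=[a+b]$. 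Your route is a little more self-contained (it does not require tracking the set $\mathbf{E}$ of edges on the path to the root through successive collapses) and makes the identification $\mathrm{deg}_4(T^{\mathrm{Dyck}},e_i)=\mathrm{cap}(e_i)$ explicit, which is a nice byproduct; the paper's local-collapse argument, on the other hand, is closer in spirit to how $\mathrm{tree}_2$ and $\mathrm{tree}_3$ are used elsewhere in Section~\ref{sec:ext23}.
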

\begin{proof}
We consider a partial tree $T'$ depicted as below:
\begin{eqnarray*}
T':=\tikzpic{-0.5}{
\coordinate 
	child{coordinate (c0)   
		child{coordinate (c01)}
		child{coordinate (c02)}
		child[missing]
		child{coordinate (c03)}
	};

\draw($($(c0)!.8!(c02)$)!.5!($(c0)!.8!(c03)$)$)node{$\cdots$};
\draw(c0)node[anchor=south west]{$R$};
\draw(0,0)node[anchor=west]{$R_{0}$};
\draw($(0,0)!.5!(c0)$)node[anchor=east]{$m_{2}$};
\draw($(c0)!.5!(c01)$)node[anchor=south east]{$m_{1,1}$};
\draw($(c0)!.3!(c02)$)node[anchor=north west]{$m_{1,2}$};
\draw($(c0)!.5!(c03)$)node[anchor=south west]{$m_{1,r}$};
}
\end{eqnarray*}
The point $R_{0}$, $R$ is a ramification point and we have $m_{1,i}$ edges 
with $1\le i\le r$ below $R$ and $m_{2}$ edges above $R$.
We denote by $m_1$ the sum $\sum_{i=1}^{r}m_{1,i}$.
We have a unique sequence of edges from $R_{0}$ to the root.
We denote by $\mathbf{E}$ the set of such edges.
Let $T'_{i}$, $1\le i\le r$, be a tree obtained from $T'$ by 
deleting the $i$t-th leftmost and lowest edge connected to a leaf.
We have 
\begin{eqnarray*}
\frac{T'_{i}}{T^{\mathrm{Dyck}}}
=\frac{[m_{1,i}][m_1+m_2]}{[N][m_1]}\prod_{e\in\mathbf{E}}\frac{[\mathrm{ht}(e)]}{[\mathrm{ht}(e)-1]},
\end{eqnarray*}
for $1\le i\le r$ where $N$ is the number of edges in $T$ 
and $\mathrm{deg}_{4}(T,e_{i})=\sum_{1\le j\le i-1}m_{1,i-1}$.
Thus we have 
\begin{eqnarray}
\label{eqn:recDyck}
\sum_{i=1}^{r}q^{\mathrm{deg}_{4}(T^{\mathrm{Dyck}},e_{i})}\frac{T'_{i}}{T^{\mathrm{Dyck}}}
=\frac{[m_1+m_2]}{[N]}\prod_{e\in\mathbf{E}}\frac{[\mathrm{ht}(e)]}{[\mathrm{ht}(e)-1]}.
\end{eqnarray}
The right hand side of Eqn.(\ref{eqn:recDyck}) implies that $T'$ can be transformed 
to a tree with $m_1+m_2$ edges without a ramification point.
By repeating the above procedure for partial trees, we obtain Eqn.(\ref{eqn:recDyck0}).

\end{proof}

Below, we assume that a tree $T$ does not have arrows.
Let $r_i$, $1\le i\le p$ be the $i$-th edge without a $\bullet$
from left which is connected to the root of $T$ and 
$r_{p+1}$ be the edge with a $\bullet$ which is connected to 
the root of $T$.
We denote by $T_{i}$, $1\le i\le p+1$ a partial tree connected to 
the edge $r_{i}$.
We define 
\begin{eqnarray*}
\mathrm{deg}_{5}(T,r_{i})&:=&2\left(p+1-i+\sum_{i+1\le j\le p+1}|T_{j}|\right)
+\sum_{1\le j\le i}|T_{j}|+i, \qquad 1\le i\le p, \\
\mathrm{deg}_{5}(T,r_{p+1})&:=&\sum_{1\le j\le p}|T_{j}|+p.
\end{eqnarray*}
Let $s_{i}$ be the $i$-th edge from left which is connected to a leaf of partial 
trees $T_1,\ldots,T_{p}$.
We define 
\begin{eqnarray*}
\mathrm{deg}_{6}(T,s_i):=\#\{e \ |\ \text{an edge $e$ is strictly left to $s_{i}$}\},
\end{eqnarray*}
where ``strictly left" means that we do not include the edges which are parents of $s_{i}$.

\begin{prop}
\label{prop:fac-leftorder}
Suppose that $T$ has no arrows. 
We have
\begin{eqnarray*}
T=\sum_{1\le i\le p+1}q^{\mathrm{deg}_{5}(T,r_i)}\mathrm{tree}_{2}(T,r_{i})
+\sum_{j}q^{\mathrm{deg}_{6}(T,s_{j})}\mathrm{tree}_{3}(T,s_j).
\end{eqnarray*}
\end{prop}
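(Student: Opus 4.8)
The plan is to deduce the identity from the $\mathrm{tree}_2$-expansion established just before this proposition, namely $T=\sum_{1\le i\le p+r}C_i(T)\,\mathrm{tree}_2(T,r_i)$, together with the leaf-expansion of Dyck trees in Lemma~\ref{lemma:recDyck0}. Since $T$ has no arrows we are in the case $r=1$, so that expansion reads $T=\sum_{1\le i\le p}q^{\mathrm{deg}_4(T,r_i)}\bigl(1+q^{\mathrm{deg}_3(T,r_i)}\bigr)\mathrm{tree}_2(T,r_i)+q^{\mathrm{deg}_4(T,r_{p+1})}\mathrm{tree}_2(T,r_{p+1})$. A direct manipulation of the definitions gives $\mathrm{deg}_4(T,r_i)=\mathrm{deg}_2(T,r_i)$ (both count the edges strictly to the left of $r_i$), $\mathrm{deg}_4(T,r_i)+\mathrm{deg}_3(T,r_i)=\mathrm{deg}_5(T,r_i)$ for $1\le i\le p$, and $\mathrm{deg}_4(T,r_{p+1})=\mathrm{deg}_5(T,r_{p+1})$. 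Substituting, $T=\sum_{1\le i\le p+1}q^{\mathrm{deg}_5(T,r_i)}\mathrm{tree}_2(T,r_i)+\sum_{1\le i\le p}q^{\mathrm{deg}_2(T,r_i)}\mathrm{tree}_2(T,r_i)$, and comparing with the statement it remains to prove
\[
\sum_{1\le i\le p}q^{\mathrm{deg}_2(T,r_i)}\mathrm{tree}_2(T,r_i)=\sum_{j}q^{\mathrm{deg}_6(T,s_j)}\mathrm{tree}_3(T,s_j).\qquad(\star)
\]

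To prove $(\star)$ I would use that, because $T$ carries no arrows, the partial trees $T_1,\dots,T_p$ hanging on the non-dotted root edges $r_1,\dots,r_p$ are ordinary plane (Dyck) trees; the dotted edges and arrows of a ballot tree all occur weakly to the right, so they are confined to $T_{p+1}$. Deleting the edge $r_i$ and re-hanging the root-subtrees of $T_i$ on the global root is, after factoring out via Lemma~\ref{lemma:fac-tree} the common contributions of $r_{p+1}$, of $T_{p+1}$ and of the neighbouring $P(M,N)$-blocks, the same as removing one root edge of the Dyck forest $(T_1,\dots,T_p)$; deleting a leaf edge $s_j$ removes a leaf edge of that same forest. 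Applying Lemma~\ref{lemma:recDyck0} to the Dyck tree obtained by joining $T_1,\dots,T_p$ under one root rewrites the forest as the weighted sum over its leaf-edge deletions, with weights exactly $q^{\mathrm{deg}_6}$, and Lemma~\ref{lemma:fac-tree} transports this identity factor by factor up to $T$, turning the root-edge deletion on the left of $(\star)$ into $\mathrm{tree}_2(T,r_i)$ with coefficient $q^{\mathrm{deg}_2(T,r_i)}$.

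An alternative, self-contained route is induction on $|T|$: write $T=P^{\mathrm{Dyck}}_{T_1}\cdot P(2|T'|,|T_1|+1)\cdot T'$ with $T'=T\setminus(T_1\cup r_1)$ by Lemma~\ref{lemma:fac-tree}, expand $T_1$ by Lemma~\ref{lemma:recDyck0} and $T'$ by the induction hypothesis, and reassemble the two expansions, using the recurrences for $P(M,N)$ from Lemma~\ref{lemma:recr} and the closed form in Proposition~\ref{prop:BalP} to merge the resulting $P(M,N)$-factors into the prescribed ones.

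The main obstacle I anticipate is the degree bookkeeping in $(\star)$: one must verify that the exponents produced by the $q$-binomial coefficients in Lemma~\ref{lemma:fac-tree} and by the weights in Lemma~\ref{lemma:recDyck0} assemble precisely into the combinatorial counts $\mathrm{deg}_2(T,r_i)$ (edges strictly left of $r_i$) and $\mathrm{deg}_6(T,s_j)$ (edges strictly left of $s_j$, excluding its ancestors). A subsidiary point to settle cleanly is the claim that $T_1,\dots,T_p$ are Dyck trees, which should follow from the construction $(\Diamond 1)$–$(\Diamond 8)$ of $A(\lambda)$ once one notes that an unmatched $U$ producing a dotted edge can never lie inside a $UD$-bracket, so that all dotted edges of an arrowless $T$ descend from the rightmost root edge $r_{p+1}$.
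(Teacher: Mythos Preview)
Your reduction to $(\star)$ via the $C_i$-expansion is correct and clean: with $r=1$ the identities $\mathrm{deg}_4(T,r_i)=\mathrm{deg}_2(T,r_i)$, $\mathrm{deg}_4(T,r_i)+\mathrm{deg}_3(T,r_i)=\mathrm{deg}_5(T,r_i)$ and $\mathrm{deg}_4(T,r_{p+1})=\mathrm{deg}_5(T,r_{p+1})$ all check out, so the previous proposition immediately gives the $\mathrm{deg}_5$-half of the statement and leaves exactly $(\star)$.

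The paper's proof is organized differently. It does \emph{not} pass through the $C_i$-expansion; instead it divides the entire claimed identity by $T$ and computes each of the two sums directly. For the $\mathrm{tree}_2$-part it uses Lemma~\ref{lemma:tree2} (plus the extra ratio $\mathrm{tree}_2(T,r_{p+1})/T=[2(|T_{p+1}|+1)]/[2|T|]$) and telescopes to obtain
\[
\sum_{i=1}^{p+1}q^{\mathrm{deg}_5(T,r_i)}\frac{\mathrm{tree}_2(T,r_i)}{T}
=q^{\mathrm{deg}_5(T,r_{p+1})}\,\frac{\bigl[\sum_{k\le p}|T_k|+p+2(|T_{p+1}|+1)\bigr]}{[2|T|]}.
\]
For the $\mathrm{tree}_3$-part it uses the factorization (Lemma~\ref{lemma:fac-tree}) to write
\[
\frac{\mathrm{tree}_3(T,e_{i,j})}{T}=\frac{[|T_i|+1]}{[2|T|]}\cdot\frac{P^{\mathrm{Dyck}}(T_i\setminus e_{i,j})}{P^{\mathrm{Dyck}}(T_i)},
\]
then applies Lemma~\ref{lemma:recDyck0} \emph{inside each $T_i$ separately} to collapse the inner sum over $j$ to $1$, and finally telescopes over $i$ to get $[\sum_{k\le p}|T_k|+p]/[2|T|]$. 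The two pieces add to $1$.

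Your route and the paper's are equivalent in content: the computation you would need for $(\star)$ is precisely the paper's $\mathrm{tree}_3$-computation together with the telescoping $\sum_{i\le p}q^{\mathrm{deg}_2(T,r_i)}[|T_i|+1]=[\sum_{k\le p}|T_k|+p]$ from Lemma~\ref{lemma:tree2}. What your organization buys is that you never have to evaluate the $\mathrm{deg}_5$-sum; what the paper's buys is that it never invokes the $C_i$-proposition and keeps everything as a single ``divide by $T$ and check the ratios sum to $1$'' argument. One refinement to your sketch of $(\star)$: rather than trying to identify $\mathrm{tree}_2(T,r_i)$ with a root-edge deletion on a single Dyck tree $T^\ast$ joining $T_1,\dots,T_p$, it is simpler (and closer to the paper) to apply Lemma~\ref{lemma:recDyck0} to each $T_i$ individually, since the ratio $\mathrm{tree}_3(T,e_{i,j})/T$ factors through $P^{\mathrm{Dyck}}(T_i\setminus e_{i,j})/P^{\mathrm{Dyck}}(T_i)$ and not through the larger forest.
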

\begin{proof}
From Lemma \ref{lemma:tree2}, we have an expression $\mathrm{tree}_{2}(T,r_{i})/T$ and 
\begin{eqnarray*}
\frac{\mathrm{tree}_{2}(T,r_{p+1})}{T}
=\frac{\left[2(|T_{p+1}|+1)\right]}{\left[2(\sum_{j=1}^{p+1}|T_{j}|+p+1)\right]}.
\end{eqnarray*}
By a straightforward calculation, we have 
\begin{eqnarray}
\label{eq:deg5tree2}
\sum_{1\le i\le p+1}q^{\mathrm{deg}_{5}(T,r_i)}\frac{\mathrm{tree}_{2}(T,r_i)}{T}
=q^{\mathrm{deg}_{5}(T,r_{p+1})}
\frac{\left[\sum_{k=1}^{p}|T_{k}|+p+2(|T_{p+1}|+1)\right]}
{\left[2(\sum_{k=1}^{p+1}|T_{k}|+p+1)\right]}.
\end{eqnarray}
We denote by $e_{i,j}$ the $j$-th edge connected to a leaf 
in a partial tree $T_{i}$.
We have 
\begin{eqnarray*}
\frac{\mathrm{tree}_{3}(T,e_{i,j})}{T}
=\frac{[|T_{i}|+1]}{[2(\sum_{k=1}^{p+1}|T_{k}|+p+1)]}
\cdot\frac{P^{\mathrm{Dyck}}(T_{i}\setminus e_{i,j})}{P^{\mathrm{Dyck}}(T_{i})}, 
\quad \text{for $1\le i\le p$},
\end{eqnarray*}
where $P^{\mathrm{Dyck}}(T)$ is the generating function $P^{\mathrm{Dyck}}_{\lambda,*}$
for $T:=A(\lambda)$.

\begin{eqnarray}
\nonumber
\sum_{j}q^{\mathrm{deg}_{6}(T,s_{j})}\frac{\mathrm{tree}_{3}(T,s_{j})}{T}
&=&\sum_{1\le j\le p}\sum_{l}q^{\mathrm{deg}_{6}(T,e_{j,l})}
\frac{\mathrm{tree}_{3}(T,e_{j,l})}{T}  \\
&=&\sum_{1\le j\le p}\sum_{l}
\frac{q^{\mathrm{deg}_{6}(T,e_{j,l})}[|T_{j}|+1]}{[2(\sum_{k=1}^{p+1}|T_{k}|+p+1)]}
\cdot\frac{P^{\mathrm{Dyck}}(T_{j}\setminus e_{j,l})}{P^{\mathrm{Dyck}}(T_{j})} \\
\nonumber
&=&\sum_{1\le k\le p}q^{\sum_{1\le j\le k-1}(|T_{j}|+1)}
\frac{[|T_{k}|+1]}{[2(\sum_{k=1}^{p+1}|T_{k}|+p+1)]} \\
\label{eqn:recT3}
&=&\frac{[\sum_{k=1}^{p}|T_{k}|+p]}{[2(\sum_{k=1}^{p+1}|T_{k}|+p+1)]},
\end{eqnarray}
where we have used Lemma \ref{lemma:recDyck0} in the third equality.
By adding Eqn.(\ref{eq:deg5tree2}) to Eqn.(\ref{eqn:recT3}), we obtain
the desired expression. 
\end{proof}

\subsection{Expressions for a general tree}
\label{sec:BI-gtree}
Let $L_{\mathrm{ref}}$ be a labelling of a tree $T$ such that the $\mathrm{post}'(L)=id$ and 
$n^{\mathrm{ref}}_e$ be an integer on an edge $e$ in $L_{\mathrm{ref}}$.
Given $L_{\mathrm{ref}}$, we define 
\begin{eqnarray*}
S(e):=|T|+1-n^{\mathrm{ref}}_{e}.
\end{eqnarray*}
We denote by $\mathcal{E}$ and $\mathcal{E}_{\bullet}$ the set of edges 
in $T$ and the set of edges with $\bullet$. 
Let $A(T)\subseteq\mathcal{E}\setminus\mathcal{E}_{\bullet}$ be the set 
of edges such that an edge $e$ does not have parents with incoming arrows.
We define $B(T):=\mathcal{E}\setminus(\mathcal{E}_{\bullet}\cup A(T))$.

Fix a natural labelling $L$ in $T$ and let $n_e$ be an integer 
on an edge $e$.
We put a circle on an edge in $T$ by the following rule:
\begin{enumerate}
\item[(R1)] There is no circle on an edge in $\mathcal{E}_{\bullet}$.
\item[(R2)] We put a circle on an edge without an incoming arrow.
\item[(R3)] Suppose we have a sequence of edges with arrows 
$e(1)\leftarrow \ldots\leftarrow e(p)\leftarrow e$.
If $n_{e(i)}>n_{e}$ for $1\le i\le p$, then 
we put circles on the edges $e(1)$ to $e(p)$.
\end{enumerate}
Let $\mathcal{E}_{\circ}(L)$ be the set of edges with a circle.
Note that $\mathcal{E}_{\circ}(L)$ depends on a natural labelling $L$.
From (R2), we have $A(T)\subseteq\mathcal{E}_{\circ}$ and 
the equality holds when $T$ does not have arrows.
We define
\begin{eqnarray*}
\mathrm{deg}_{7}(L):=\sum_{e\in\mathcal{E}}
\#\{e' \ |\  n_{e'}<n_{e}, \ \text{$e'$ is strictly right to $e$}\}.
\end{eqnarray*}
\begin{theorem}
\label{thrm:treewt1}
We have 
\begin{eqnarray*}
T=\left(\sum_{L}q^{\mathrm{deg}_{7}(L)}\prod_{e\in B(T)\cap\mathcal{E}_{\circ}(L)}(1+q^{S(e)-1})
\right)
\cdot\prod_{e\in A(T)}(1+q^{S(e)}).
\end{eqnarray*}
\end{theorem}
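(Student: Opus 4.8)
\textbf{Proof strategy for Theorem \ref{thrm:treewt1}.}
The plan is to prove the identity by induction on the number of edges $|T|$, using the recursive expansions of $P^{I}_{\lambda,\ast}$ established in Sections \ref{sec:ext1} and \ref{sec:ext23} together with the factorization in Lemma \ref{lemma:fac-tree}. The key observation is that the right-hand side is a product of two pieces: a ``circled'' piece $\prod_{e\in A(T)}(1+q^{S(e)})$ that keeps track of the edges without an incoming arrow (and hence forced to carry a Dyck-tile factor of the form $1+q^{i}$ by Theorem \ref{thrm:BIII-fac}), and a sum over natural labellings $L$ of a monomial $q^{\mathrm{deg}_{7}(L)}$ weighted by a product over $B(T)\cap\mathcal{E}_{\circ}(L)$. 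The statistic $\mathrm{deg}_{7}(L)$ is precisely the inversion-type statistic $\mathrm{inv}$ of the inverse pre-order word $\sigma$ associated to $L$ introduced in Section \ref{sec:ppt} (restricted to the ``no-arrow'' part), so one should first verify that $\mathrm{deg}_{7}$ matches $\mathrm{inv}(\sigma)$ via Theorem \ref{thrm:BTSinpo}; this reduces the claim to a statement about summing $q^{\mathrm{inv}(\sigma)}$ over all inverse pre-order words, corrected by the circled factors that record the forced ballot/Dyck tiles attached to arrowless edges.

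First I would set up the induction. For the base case $|T|=0$ or $|T|=1$ both sides equal $1$ (or $1+q$), which is immediate. For the inductive step, I would split into cases according to the structure of the edges incident to the root, exactly as in Section \ref{sec:ext1}: whether the root edges carry arrows, and whether the statement (S2) holds. In each case one applies the corresponding $\mathrm{tree}_{1}$-expansion $T=\mathrm{tree}_{1}(T,e_1)+\sum_{i,j}q^{\mathrm{deg}_{1}(T,r_{i,j})}\mathrm{tree}_{1}(T,r_{i,j})+\cdots$, applies the inductive hypothesis to each smaller tree $\mathrm{tree}_{1}(T,\ast)$ on the right, and then recognizes the resulting sum as the claimed expression for $T$. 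The combinatorial heart is a bijection: deleting the edge $e_1$ and re-attaching it at a chosen ramification point $r_{i,j}$ corresponds, on the level of natural labellings, to the standard operation that builds all natural labellings of $T$ from natural labellings of the $\mathrm{tree}_{1}$-children together with the position of the smallest label; the exponent $\mathrm{deg}_{1}(T,r_{i,j})=|T'_{i,j}|$ is exactly the contribution to $\mathrm{deg}_{7}$ coming from the newly inserted edge. Here one must carefully track how an edge moves between $A(T)$, $B(T)$, $\mathcal{E}_{\bullet}$, and $\mathcal{E}_{\circ}(L)$ under these operations, and check that the factors $1+q^{S(e)}$ for $e\in A(T)$ and $1+q^{S(e)-1}$ for $e\in B(T)\cap\mathcal{E}_{\circ}(L)$ are produced correctly; the shift by $1$ in $S(e)-1$ versus $S(e)$ comes from Eqn.(\ref{singletree}), i.e.\ the identity $\genfrac{[}{]}{0pt}{}{N+c}{c}-\genfrac{[}{]}{0pt}{}{N+c-1}{c-1}=q^{c}\genfrac{[}{]}{0pt}{}{N+c-1}{c}$, which distinguishes an edge equal to its capacity from an edge merely bounded by it.

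An alternative, and probably cleaner, route is to go through Lemma \ref{lemma:fac-tree} directly: peel off the leftmost root edge $e_1$ with its subtree $T_1$, write $T=P^{\mathrm{Dyck}}_{T_1}\cdot P(2|T_2|,|T_1|+1)\cdot T_2$ (or the arrowed variant), expand $P^{\mathrm{Dyck}}_{T_1}$ by Lemma \ref{lemma:recDyck0} and $P(M,N)$ by Proposition \ref{prop:BalP}, and compare with the corresponding factorization of the right-hand side obtained by splitting the sum over labellings $L$ according to the shape of the restriction of $L$ to $T_1$ and the position of that subtree's labels among $\{1,\dots,|T|\}$. Under this splitting $\mathrm{deg}_{7}(L)$ decomposes additively into a $T_1$-part, a $T_2$-part, and a cross term $|T_1|\cdot(\text{number of }T_1\text{-labels below each }T_2\text{-edge})$ whose sum over interleavings reproduces exactly the $q$-binomial weight hidden in $P(2|T_2|,|T_1|+1)$; meanwhile the product $\prod_{e\in A(T)}(1+q^{S(e)})$ factors as the $A(T_1)$-part times the $A(T_2)$-part times possibly one extra factor for $e_1$ itself, which must be matched against the $\prod_{i=1}^{N}(1+q^i)$ appearing in the second and third tree operations of Section \ref{sec:Fac-trees}.

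\textbf{Main obstacle.} The delicate point — the step I expect to consume most of the work — is the bookkeeping of the circled set $\mathcal{E}_{\circ}(L)$ and its interaction with rule (R3): when edges are re-attached or subtrees are peeled off, a chain of arrowed edges $e(1)\leftarrow\cdots\leftarrow e(p)\leftarrow e$ can gain or lose circles depending on whether the labels $n_{e(i)}$ exceed $n_e$, and one must show that after summing over all natural labellings consistent with a given interleaving, the generating function of the circled-product $\prod_{e\in B(T)\cap\mathcal{E}_{\circ}(L)}(1+q^{S(e)-1})$ telescopes correctly into the factor $[2M+N]/[2(M+N)]$ that appears in the third tree operation of Section \ref{sec:Fac-trees}. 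Establishing this telescoping identity — essentially a refined $q$-enumeration of arrow-chains weighted by descents — is the crux; once it is in hand, the rest of the induction is routine manipulation of $q$-binomials via the identity $\genfrac{[}{]}{0pt}{}{n+1}{k}=q^{k}\genfrac{[}{]}{0pt}{}{n}{k}+\genfrac{[}{]}{0pt}{}{n}{k-1}$ already used repeatedly above.
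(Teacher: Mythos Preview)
Your induction on $|T|$ is the right framework, but you have chosen the harder recursion. The paper does \emph{not} use the $\mathrm{tree}_{1}$-expansion of Section~\ref{sec:ext1}; it partitions the set of natural labellings $L$ according to which root edge $r_{i}$ carries the label~$1$. Since exactly one root-incident edge has label~$1$ in any natural labelling, this is the canonical decomposition. Fixing $n_{r_i}=1$ and deleting $r_i$ gives a natural labelling of $\mathrm{tree}_{2}(T,r_i)$, and the induction hypothesis applied to that smaller tree yields the partial sum $\mathrm{tree}_{4}(L(i))$ as an explicit multiple of $\mathrm{tree}_{2}(T,r_i)$: for $1\le i\le p$ one picks up the extra factor $(1+q^{|T|})$ because the new top edge enters $A(T)$, while for $p+1\le i\le p+s$ one must \emph{divide} by $(1+q^{\sum_{j>p}|T_j|})$ because the removed edge had an incoming arrow and its circle (rule (R3)) was conditional on $L$. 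Summing over $i$ is then a one-line application of Lemma~\ref{lemma:tree2}. The telescoping over arrow-chains that you flag as the main obstacle simply does not arise: (R3) only affects root edges $r_{p+1},\dots,r_{p+s-1}$ and the case split $i\le p$ versus $i>p$ handles it completely.

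Two concrete problems with your plan as written. First, your identification of $\mathrm{deg}_{7}(L)$ with $\mathrm{inv}(\sigma)$ via Theorem~\ref{thrm:BTSinpo} is incorrect: $\mathrm{inv}(\sigma)$ doubles the contribution of pairs involving an underlined (dotted) position, whereas $\mathrm{deg}_{7}$ does not distinguish dotted from undotted edges; the two statistics coincide only when $\mathcal{E}_{\bullet}=\emptyset$. The right way to view $\mathrm{deg}_{7}$ is as the ordinary tree-inversion statistic $\mathrm{deg}^{D}$ of Lemma~\ref{lemma:LSDyck}, which connects to $P^{\mathrm{Dyck}}$, not to the BTS weight. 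Second, your proposed bijection for the $\mathrm{tree}_{1}$-recursion (``deleting $e_{1}$ and re-attaching at $r_{i,j}$ corresponds to the position of the smallest label'') does not work: $e_{1}$ is the leftmost \emph{lowest} edge, and for $(i,j)\neq e_{1}$ the trees $\mathrm{tree}_{1}(T,r_{i,j})$ have the same number of edges as $T$, so the terms of that recursion do not index subfamilies of natural labellings of $T$ in any evident way. Your alternative via Lemma~\ref{lemma:fac-tree} could be pushed through, but it effectively re-derives Lemma~\ref{lemma:tree2} and then re-sums, which is precisely what the paper's direct $\mathrm{tree}_{2}$ argument bypasses.
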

\begin{proof}
We prove Theorem by induction.
When the number of edges in a tree is two, we have six trees.
In terms of ballot paths, the six trees are 
$UDUD$, $UUDD$, $UUD$, $UUU$, $UDUU$ and $UDU$.
It is easy to show that Theorem is true for these six trees. 

Suppose that $T$ can be expressed as a concatenation of trees $T_{1},\ldots,T_{p+s}$ where
$T_{i}$, $1\le i\le p+s$, can not be decomposed into a concatenation of trees.
There exists a unique edge in $T_{i}$ which is connected to the root. We call the $i$-th 
edge from left connected to the root $r_{i}$. 
Further, we assume that the edges $r_{i}$ ,$1\le i\le p$, do not have incoming arrows, 
the edges $r_{i}$, $p+1\le i\le p+s-1$, have incoming arrows and the edge $r_{p+s}$
has a $\bullet$.
We consider a set $L(i)$ of natural labellings such that the edge $r_{i}$ has the integer one.
Let $\mathrm{tree}_{4}(L(i))$ be the the following sum: 
\begin{eqnarray*}
\mathrm{tree}_{4}(L(i)):=
\left(\sum_{L\in L(i)}q^{\mathrm{deg}_{7}(L)}\prod_{e\in B(T)\cap\mathcal{E}_{\circ}(L)}(1+q^{S(e)-1})
\right)
\cdot\prod_{e\in A(T)}(1+q^{S(e)}).
\end{eqnarray*}
Then, from the induction assumption, we have 
\begin{eqnarray*}
\mathrm{tree}_{4}(L(i))
&=&q^{\sum_{j=1}^{i-1}|T_{j}|}\mathrm{tree}_{2}(T,r_{i})(1+q^{|T|}),\quad \text{for $1\le i\le p$}, \\
\mathrm{tree}_{4}(L(i))
&=&q^{\sum_{j=1}^{i-1}|T_{j}|}\mathrm{tree}_{2}(T,r_{i})
\frac{(1+q^{|T|})}{(1+q^{\sum_{j=p+1}^{p+s}|T_{j}|})}, \quad \text{for $p+1\le i\le p+s$}.
\end{eqnarray*}
The term $q^{\sum_{j=1}^{i-1}|T_{j}|}$ comes from the fact that there are 
$\sum_{j=1}^{i-1}|T_{j}|$ integers bigger than one in $L(i)$. 
Note that the number of edges in $\mathrm{tree}_{2}(T,r_{i})$ is one less than that
of $T$. 
From Lemma \ref{lemma:tree2}, we have 
\begin{eqnarray*}
\sum_{i=1}^{p+s}\mathrm{tree}_{4}(L(i))=T,
\end{eqnarray*}
which implies that Lemma holds true.
\end{proof}

\begin{example}
Let $T$ be the tree as depicted below:
\begin{eqnarray*}
T:=
\scalebox{0.7}{
\tikzpic{-0.6}{
\coordinate
	child{coordinate (c0)
		child {coordinate(c3)}
		child[missing]
		child[missing]}
	child{coordinate (c1)}
	child{coordinate (c2)};
\node at ($(0,0)!.5!(c2)$){\scalebox{1.5}{$\bullet$}};
\draw[latex-,dashed](0.1,-1)--(0.9,-1);
\node at (c0){\scalebox{1.25}{$-$}};
}
}
\end{eqnarray*}
We compute the generating function by Theorem \ref{thrm:treewt1}. 
We have twelve natural labellings in $T$ and their weights are give 
as follows:
\begin{eqnarray*}
&&\tikzpic{-0.25}{
\node[draw,circle,inner sep=1pt] at (0,0) {$1$};
\node at (0.4,0){$3$};
\node at (0.8,0){$4$};
\node[draw,circle,inner sep=1pt] at (0,-0.4) {$2$};
\node at (0.4,-0.9){$1$};
}\quad
\tikzpic{-0.25}{
\node[draw,circle,inner sep=1pt] at (0,0) {$1$};
\node[draw,circle,inner sep=1pt] at (0.4,0){$4$};
\node at (0.8,0){$3$};
\node[draw,circle,inner sep=1pt] at (0,-0.4) {$2$};
\node at (0.4,-0.9){$q$};
}\quad
\tikzpic{-0.25}{
\node[draw,circle,inner sep=1pt] at (0,0) {$1$};
\node at (0.4,0){$2$};
\node at (0.8,0){$4$};
\node[draw,circle,inner sep=1pt] at (0,-0.4) {$3$};
\node at (0.4,-0.9){$q$};
}\quad
\tikzpic{-0.25}{
\node[draw,circle,inner sep=1pt] at (0,0) {$1$};
\node[draw,circle,inner sep=1pt] at (0.4,0){$4$};
\node at (0.8,0){$2$};
\node[draw,circle,inner sep=1pt] at (0,-0.4) {$3$};
\node at (0.4,-0.9){$q^2$};
}\quad
\tikzpic{-0.25}{
\node[draw,circle,inner sep=1pt] at (0,0) {$1$};
\node at (0.4,0){$2$};
\node at (0.8,0){$3$};
\node[draw,circle,inner sep=1pt] at (0,-0.4) {$4$};
\node at (0.4,-0.9){$q^2$};
}\quad
\tikzpic{-0.25}{
\node[draw,circle,inner sep=1pt] at (0,0) {$1$};
\node[draw,circle,inner sep=1pt] at (0.4,0){$3$};
\node at (0.8,0){$2$};
\node[draw,circle,inner sep=1pt] at (0,-0.4) {$4$};
\node at (0.4,-0.9){$q^3$};
} \\[12pt]
%%%%%%%%%%%%%%%%%%%%%%%%%%%%%%%%%
&&\tikzpic{-0.25}{
\node[draw,circle,inner sep=1pt] at (0,0) {$2$};
\node at (0.4,0){$1$};
\node at (0.8,0){$4$};
\node[draw,circle,inner sep=1pt] at (0,-0.4) {$3$};
\node at (0.4,-0.9){$q^2$};
}\quad
\tikzpic{-0.25}{
\node[draw,circle,inner sep=1pt] at (0,0) {$2$};
\node at (0.4,0){$1$};
\node at (0.8,0){$3$};
\node[draw,circle,inner sep=1pt] at (0,-0.4) {$4$};
\node at (0.4,-0.9){$q^3$};
}\quad
\tikzpic{-0.25}{
\node[draw,circle,inner sep=1pt] at (0,0) {$3$};
\node at (0.4,0){$1$};
\node at (0.8,0){$2$};
\node[draw,circle,inner sep=1pt] at (0,-0.4) {$4$};
\node at (0.4,-0.9){$q^4$};
}\quad
\tikzpic{-0.25}{
\node[draw,circle,inner sep=1pt] at (0,0) {$2$};
\node[draw,circle,inner sep=1pt] at (0.4,0){$4$};
\node at (0.8,0){$1$};
\node[draw,circle,inner sep=1pt] at (0,-0.4) {$3$};
\node at (0.4,-0.9){$q^3$};
}\quad
\tikzpic{-0.25}{
\node[draw,circle,inner sep=1pt] at (0,0) {$2$};
\node[draw,circle,inner sep=1pt] at (0.4,0){$3$};
\node at (0.8,0){$1$};
\node[draw,circle,inner sep=1pt] at (0,-0.4) {$4$};
\node at (0.4,-0.9){$q^4$};
}\quad
\tikzpic{-0.25}{
\node[draw,circle,inner sep=1pt] at (0,0) {$3$};
\node[draw,circle,inner sep=1pt] at (0.4,0){$2$};
\node at (0.8,0){$1$};
\node[draw,circle,inner sep=1pt] at (0,-0.4) {$4$};
\node at (0.4,-0.9){$q^5$};
}
\end{eqnarray*} 
The circles give the factor 
\begin{eqnarray*}
\tikzpic{-0.25}{
\node[draw,circle,inner sep=4pt] at (0,0) {};
\node[draw,circle,inner sep=4pt] at (0.4,0){};
\node at (0.8,0){$\ast$};
\node[draw,circle,inner sep=4pt] at (0,-0.4) {};
}\mapsto (1+q)(1+q^3)(1+q^4), \qquad
%%%%%%%%%%%%%%%%%%%%%
\tikzpic{-0.25}{
\node[draw,circle,inner sep=4pt] at (0,0) {};
\node at (0.4,0){$\ast$};
\node at (0.8,0){$\ast$};
\node[draw,circle,inner sep=4pt] at (0,-0.4) {};
}\mapsto (1+q^3)(1+q^4).
\end{eqnarray*}
The generating function is given by the sum of the contributions, namely we have 
\begin{eqnarray*}
T&=&(1+q)(1+q^3)(1+q^4)(q+q^2+q^3+q^3+q^4+q^5)\\
&&+(1+q^3)(1+q^4)(1+q+q^2+q^2+q^3+q^4)  \\
&=&[3][6][8]/[2].
\end{eqnarray*}
\end{example}

When a tree $T$ is written as $T=A(\lambda)$ for a Dyck path $\lambda$,
we abbreviate $P^{\mathrm{Dyck}}_{\lambda}$ as $P^{\mathrm{Dyck}}(T)$.

\begin{lemma}
\label{lemma:concatenation}
Suppose that a tree $T$ is written as a concatenation of three trees 
$T:=T_{1}\circ T_{2}\circ T_{3}$ and $T_{1}$ and $T_{2}$ do not have 
incoming arrows. Then we have the generating function satisfies
\begin{eqnarray}
\label{eqn:concaP}
P^{I}(T)=P(2|T_3|,|T_1|+|T_2|)\cdot P^{\mathrm{Dyck}}(T_1\circ T_2)\cdot P^{I}(T_{3}).
\end{eqnarray}
where $P^{I}(T):=P^{I}_{\lambda,\ast}$ for $T=A(\lambda)$.
\end{lemma}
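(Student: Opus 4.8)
The plan is to reduce the claim to repeated applications of the one-bump factorisation of Lemma~\ref{lemma:fac-tree}, recombined by the bump-merging identity of Lemma~\ref{lemma:PP} and the Dyck concatenation formula of Lemma~\ref{lemma:treefac}. First I would decompose $T_{1}\circ T_{2}$ into its branches at the root, writing $T_{1}\circ T_{2}=B_{1}\circ B_{2}\circ\cdots\circ B_{k}$, where each $B_{i}$ is a single edge together with the subtree hanging below it. Because $T_{1}$ and $T_{2}$ carry no incoming arrows, the root edge of every $B_{i}$ has no incoming arrow. Put $d_{i}:=|B_{i}|$, $d:=\sum_{i}d_{i}=|T_{1}|+|T_{2}|$ and $n:=|T_{3}|$; if $k=0$ the identity is immediate since $P(2n,0)=1$ by Proposition~\ref{prop:BalP}, so assume $k\ge 1$.

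Next I would peel the branches $B_{1},\ldots,B_{k}$ off the root one at a time. Applied to the current tree $B_{i}\circ\cdots\circ B_{k}\circ T_{3}$, the first case of Lemma~\ref{lemma:fac-tree} (available because $B_{i}$'s root edge has no incoming arrow) splits off $B_{i}$: the subtree of its root edge has $d_{i}-1$ edges, so by Lemma~\ref{lemma:treedel} its Dyck generating function equals $P^{\mathrm{Dyck}}(B_{i})$, and the remaining tree has $n+\sum_{j>i}d_{j}$ edges. Iterating $k$ times and stopping at $T_{3}$ yields
\[
P^{I}(T)=\Bigl(\prod_{i=1}^{k}P^{\mathrm{Dyck}}(B_{i})\Bigr)\cdot\Bigl(\prod_{i=1}^{k}P\bigl(2(n+{\textstyle\sum_{j>i}}d_{j}),\,d_{i}\bigr)\Bigr)\cdot P^{I}(T_{3}).
\]

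It then remains to recombine the two products into $P^{\mathrm{Dyck}}(T_{1}\circ T_{2})\cdot P(2n,d)$. For the Dyck factors, iterating Lemma~\ref{lemma:treefac} (together with Lemma~\ref{lemma:treedel}) gives $P^{\mathrm{Dyck}}(T_{1}\circ T_{2})=\genfrac{[}{]}{0pt}{}{d}{d_{1},\ldots,d_{k}}\prod_{i}P^{\mathrm{Dyck}}(B_{i})$. For the $P$-factors I would prove
\[
\prod_{i=1}^{k}P\bigl(2(n+{\textstyle\sum_{j>i}}d_{j}),\,d_{i}\bigr)=\genfrac{[}{]}{0pt}{}{d}{d_{1},\ldots,d_{k}}\,P(2n,d)
\]
by induction on $k$. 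The case $k=2$ reads $P(2(n+d_{2}),d_{1})\,P(2n,d_{2})=\genfrac{[}{]}{0pt}{}{d_{1}+d_{2}}{d_{1}}P(2n,d_{1}+d_{2})$, which is exactly Lemma~\ref{lemma:PP} once the nonzero factors $P^{\mathrm{Dyck}}_{\lambda_{1}}P^{\mathrm{Dyck}}_{\lambda_{2}}$ are cancelled using Lemmas~\ref{lemma:treedel} and~\ref{lemma:treefac}; the inductive step merges the two rightmost $P$-factors — which by construction have precisely the shape $P(2(N+b),a)\cdot P(2N,b)$ required — by the $k=2$ case, and then invokes the $q$-multinomial identity $\genfrac{[}{]}{0pt}{}{d_{k-1}+d_{k}}{d_{k-1}}\genfrac{[}{]}{0pt}{}{d}{d_{1},\ldots,d_{k-2},d_{k-1}+d_{k}}=\genfrac{[}{]}{0pt}{}{d}{d_{1},\ldots,d_{k}}$. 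Substituting both identities back, the two multinomials cancel and one is left with $P^{I}(T)=P^{\mathrm{Dyck}}(T_{1}\circ T_{2})\cdot P(2|T_{3}|,|T_{1}|+|T_{2}|)\cdot P^{I}(T_{3})$.

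The step I expect to be most delicate is the $P$-identity: one must keep careful track of the fact that, at every merge, the larger argument of the $P$-factor equals twice the total number of edges lying strictly to its right, so that it has the exact form $P(2(N+b),a)$ demanded by Lemma~\ref{lemma:PP} (the constraints $a,b\ge 1$ there being automatic here since each $d_{i}\ge 1$). The peeling itself, the appeals to Lemma~\ref{lemma:treedel}, and the telescoping of $q$-multinomials are all routine.
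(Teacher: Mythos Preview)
Your proof is correct and follows essentially the same route as the paper: peel off the Dyck part of the tree via the factorisation (Theorem~\ref{thrm:BI-fac}/Lemma~\ref{lemma:fac-tree}), then recombine the resulting $P$-factors and Dyck factors. The paper carries this out more tersely, stripping $T_1$ and then $T_2$ as wholes and invoking Proposition~\ref{prop:BalP} together with Lemma~\ref{lemma:treefac} for the combination, whereas you work branch by branch and appeal to Lemma~\ref{lemma:PP}; the two combinations are equivalent.
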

\begin{proof}
Form the factorization (Theorem \ref{thrm:BI-fac}), we have 
\begin{eqnarray*}
P^{I}(T)=P(2|T_2|+2|T_3|,|T_1|)P(2|T_3|,|T_2|)P^{\mathrm{Dyck}}(T_1)P^{\mathrm{Dyck}}(T_2)
P^{I}(T_3).
\end{eqnarray*}
Substituting Lemma \ref{lemma:treefac} and Proposition \ref{prop:BalP} into 
the above equation, we obtain Eqn.(\ref{eqn:concaP}).
\end{proof}

Let $T^{\mathrm{Dyck}}$ be a tree for a Dyck path and $L^{\mathrm{D}}$ be 
a natural labelling of $T^{\mathrm{Dyck}}$.
Let $n_e$ be an integer on the edge $e$ in the labelling $L^{\mathrm{D}}$.
We define 
\begin{eqnarray*}
\mathrm{deg}^{D}(L):=\#\{e' \ |\ n_{e'}<n_{e}, \text{$e'$ is strictly right to $e$}\}.
\end{eqnarray*}
\begin{lemma}
\label{lemma:LSDyck}
We have
\begin{eqnarray*}
\sum_{L}q^{\mathrm{deg}^{D}(L)}=P^{\mathrm{Dyck}}(T^{\mathrm{Dyck}}).
\end{eqnarray*}
\end{lemma}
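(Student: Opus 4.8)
\textbf{Proof proposal for Lemma \ref{lemma:LSDyck}.}

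The plan is to show that both sides of the claimed identity satisfy the same recurrence with respect to the tree structure, and then invoke the earlier characterization of $P^{\mathrm{Dyck}}$ as a product over chords. The right-hand side is already known: by Theorem \ref{thrm-A1} (and its tree incarnation through Theorem \ref{thrm:DyckKL}, Lemma \ref{lemma:treedel} and Lemma \ref{lemma:treefac}), $P^{\mathrm{Dyck}}(T^{\mathrm{Dyck}})$ factorizes as a product of $q$-multinomial coefficients over the ramification points of $T^{\mathrm{Dyck}}$. So the work is entirely on the left-hand side: I must prove that the generating function $\sum_{L}q^{\mathrm{deg}^{D}(L)}$ over natural labellings $L$ of a fixed Dyck tree, weighted by the statistic $\mathrm{deg}^{D}$ counting pairs $(e,e')$ with $e'$ strictly right of $e$ and $n_{e'}<n_{e}$, obeys the same multiplicative structure.

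First I would isolate the leaf-edges of $T^{\mathrm{Dyck}}$ connected to leaves, just as in Lemma \ref{lemma:recDyck0}, and set up a deletion recurrence: classify natural labellings $L$ according to which leaf-edge $e_i$ carries the largest label $|T^{\mathrm{Dyck}}|$. Deleting that edge gives a natural labelling of $\mathrm{tree}_{3}(T^{\mathrm{Dyck}},e_i)$, and the contribution of the deleted edge to $\mathrm{deg}^{D}$ is precisely the number of edges strictly to its right, which is $\mathrm{deg}_{4}(T^{\mathrm{Dyck}},e_i)$ in the notation of Lemma \ref{lemma:recDyck0}. This yields
\[
\sum_{L}q^{\mathrm{deg}^{D}(L)}
=\sum_{i}q^{\mathrm{deg}_{4}(T^{\mathrm{Dyck}},e_i)}
\Bigl(\sum_{L'}q^{\mathrm{deg}^{D}(L')}\Bigr),
\]
where $L'$ ranges over natural labellings of $\mathrm{tree}_{3}(T^{\mathrm{Dyck}},e_i)$. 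By induction on the number of edges, the inner sum equals $P^{\mathrm{Dyck}}(\mathrm{tree}_{3}(T^{\mathrm{Dyck}},e_i))$, and then Lemma \ref{lemma:recDyck0} itself — read as an identity among the polynomials $P^{\mathrm{Dyck}}$ — closes the recursion. The base case (a tree with one or two edges) is a direct check.

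The step I expect to be the main obstacle is verifying that, upon removing the maximal-labelled leaf-edge $e_i$, the statistic $\mathrm{deg}^{D}$ decomposes cleanly: one needs that $\mathrm{deg}^{D}(L)=\mathrm{deg}_{4}(T^{\mathrm{Dyck}},e_i)+\mathrm{deg}^{D}(L')$ with \emph{no} cross terms. Since $e_i$ carries the global maximum, every pair $(e',e_i)$ with $e'$ strictly right of $e_i$ contributes exactly when $e'$ is strictly right of $e_i$ — so the count is $\mathrm{deg}_{4}$ — and every pair $(e,e')$ not involving $e_i$ is unaffected by the deletion because "strictly right of" is inherited by the subtree $\mathrm{tree}_{3}(T^{\mathrm{Dyck}},e_i)$ obtained by deleting a leaf-edge. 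One must also check that "strictly left/right" in the subtree agrees with that in $T^{\mathrm{Dyck}}$, i.e. that deleting a leaf-edge does not alter the ancestor/sibling relations among the remaining edges, which is immediate from the definition of $\mathrm{tree}_{3}$. Once this bookkeeping is confirmed, the identity follows by the induction just described.
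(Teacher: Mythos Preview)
Your inductive strategy is sound and does yield a proof, but there is a left/right mix-up that needs fixing. When the leaf-edge $e_i$ carries the maximum label $N$, its contribution to $\mathrm{deg}^{D}(L)$ is the number of edges \emph{strictly right} of $e_i$. However, $\mathrm{deg}_{4}(T^{\mathrm{Dyck}},e_i)$ is defined in the paper as the number of edges \emph{strictly left} of $e_i$, so your recurrence does not match Lemma~\ref{lemma:recDyck0} as stated. The repair is immediate: apply Lemma~\ref{lemma:recDyck0} to the mirror tree $\widetilde{T}$ and use $P^{\mathrm{Dyck}}(T)=P^{\mathrm{Dyck}}(\widetilde{T})$ (which follows from the product formula in Theorem~\ref{thrm:Dyck-fac}, or from the evident reflection symmetry of Dyck tilings, or from Lemma~\ref{lemma:LSrev}). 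With that adjustment your induction closes.

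The paper's own proof takes a different, bijective route: it sends a natural labelling $L$ directly to an LS-type labelling $L^{\mathrm{LS}}$ by setting $n^{\mathrm{LS}}_{e}=\#\{e':n_{e'}<n_{e},\ e'\text{ strictly right of }e\}$, observes that the resulting capacities are those of the mirror tree $\widetilde{T}$, and then invokes Theorem~\ref{thrm:KL} together with Theorem~\ref{thrm:DyckKL} and the mirror symmetry $P^{\mathrm{Dyck}}(\widetilde{T})=P^{\mathrm{Dyck}}(T)$. So both arguments ultimately rely on mirror symmetry, but the paper's is a one-shot bijection to LS labellings whereas yours is a leaf-deletion recursion matched against (the reflected form of) Lemma~\ref{lemma:recDyck0}. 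Your approach is slightly more elementary in that it avoids the LS machinery and Theorem~\ref{thrm:DyckKL}; the paper's approach has the advantage of exhibiting an explicit weight-preserving bijection rather than an inductive identity.
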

\begin{proof}
We consider a tree with capacities such that a capacity of a leaf 
is the number of edges strictly right to the leaf. 
Let $L^{\mathrm{LS}}$ be a labelling of type LS. 
Let $n^{\mathrm{LS}}_{e}$ be an integer on an edge $e$ in $L^{\mathrm{LS}}$.
We have a bijection from $L$ to $L^{\mathrm{LS}}$ such that 
$n^{\mathrm{LS}}_{e}=\#\{e'\ |\  n_{e'}<n_{e}, \text{$e'$ is strictly right to $e$}\}$.
Let $\widetilde{T}$ be a mirror image of $T^{\mathrm{Dyck}}$. 
Then, 
\begin{eqnarray*}
\sum_{L}q^{\mathrm{deg}^{D}(L)}&=&\sum_{L^{\mathrm{LS}}}q^{\sum_{e} n^{\mathrm{LS}}_e} \\
&=&P^{\mathrm{Dyck}}(\widetilde{T}) \\
&=&P^{\mathrm{Dyck}}(T^{\mathrm{Dyck}}),
\end{eqnarray*}
where we have used Theorem \ref{thrm:KL} and Theorem \ref{thrm:DyckKL} in the second
equality.
\end{proof}

We denote by $T^{\mathrm{rev}}$ a tree with capacities such that a capacity of a leaf $l$
is the number of edges strictly right to $l$. 
Let $L^{\mathrm{rev}}$ be a labelling of $T^{\mathrm{rev}}$ satisfying (LS1) and (LS2).
We denote by $\mathcal{L}^{\mathrm{rev}}(\lambda)$ the set of such labellings.
\begin{lemma}
\label{lemma:LSrev}
Let $T$ be a tree for a Dyck path $\lambda$. Then, we have 
\begin{eqnarray}
\label{eqn:LSrev}
\sum_{L\in\mathcal{L}^{\mathrm{LS}}(\lambda)}q^{\mathrm{deg}_{D}(L)}
=\sum_{L\in\mathcal{L}^{\mathrm{rev}}(\lambda)}q^{\mathrm{deg}_{D}(L)}
\end{eqnarray}
where $\mathrm{deg}_{D}(L)$ is the sum of the labels of a labelling $L$.
\end{lemma}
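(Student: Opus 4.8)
The plan is to identify both sides of Eqn.~(\ref{eqn:LSrev}) with the Dyck tiling generating function $P^{\mathrm{Dyck}}_{\lambda}$ and then observe that this quantity is invariant under left--right reflection of the path. First I would note that $\mathcal{L}^{\mathrm{LS}}(\lambda)$ is by definition the set of labellings of LS type of $A(\lambda/\lambda_0)$ for $\lambda_0$ the all-$U$ path, so $\sum_{L\in\mathcal{L}^{\mathrm{LS}}(\lambda)}q^{\mathrm{deg}_{D}(L)}$ is exactly the polynomial $P_{\lambda,\lambda_0}$ of Section~\ref{sec:tree}; by Theorem~\ref{thrm:DyckKL} (and the definition of $P^{\mathrm{Dyck}}_{\lambda}$ as $\sum_{D\in\mathcal{D}(\lambda/\ast)}q^{\mathrm{art}(D)}$) this equals $P^{\mathrm{Dyck}}_{\lambda}$. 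This disposes of the left-hand side.

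For the right-hand side I would recall from the proof of Lemma~\ref{lemma:LSDyck} that the tree $T^{\mathrm{rev}}$, whose leaf capacities are the numbers of edges strictly to the right, is precisely the standard capacitated tree of the reverse-complement path $\widetilde\lambda$ (the path read backwards with $U$ and $D$ exchanged), after the planar left--right reflection of trees. The verification here is the one bookkeeping point: the standard capacity of a leaf of $A(\mu)$ --- namely $n(u)$, the number of down steps of $\mu$ to the left of the corresponding up step $u$ --- equals the number of tree edges strictly to the left of that leaf, via the correspondence sending a down step $d$ left of $u$ to the chord whose down step is $d$; reflecting the tree turns ``strictly left'' into ``strictly right''. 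Since this reflection is a bijection $\mathcal{L}^{\mathrm{rev}}(\lambda)\to\mathcal{L}^{\mathrm{LS}}(\widetilde\lambda)$ preserving label sums, we get $\sum_{L\in\mathcal{L}^{\mathrm{rev}}(\lambda)}q^{\mathrm{deg}_{D}(L)}=P_{\widetilde\lambda,\widetilde\lambda_0}=P^{\mathrm{Dyck}}_{\widetilde\lambda}$, again by Theorem~\ref{thrm:DyckKL}; equivalently one may simply quote the chain $\sum_{L^{\mathrm{LS}}}q^{\sum_e n^{\mathrm{LS}}_e}=P^{\mathrm{Dyck}}(\widetilde T)=P^{\mathrm{Dyck}}(T^{\mathrm{Dyck}})$ already appearing in the proof of Lemma~\ref{lemma:LSDyck}.

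It then remains to record that $P^{\mathrm{Dyck}}_{\lambda}=P^{\mathrm{Dyck}}_{\widetilde\lambda}$, which is immediate from Theorem~\ref{thrm-A1}: the reflection $\mu\mapsto\widetilde\mu$ permutes $\mathrm{Chord}(\mu)$ while fixing the length $l(c)$ of every chord, so $[n]!/\prod_{c}[l(c)]$ is unchanged. (One could instead note that reflecting a skew shape about a vertical line is an $\mathrm{art}$-preserving bijection $\mathcal{D}(\lambda/\ast)\to\mathcal{D}(\widetilde\lambda/\ast)$.) Chaining these steps shows both sides of Eqn.~(\ref{eqn:LSrev}) equal $P^{\mathrm{Dyck}}_{\lambda}$. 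I do not expect a real obstacle; the only delicate part is the capacity bookkeeping of the second paragraph, where one must be careful not to introduce a mirror-orientation error when matching the ``strictly right'' capacities of $T^{\mathrm{rev}}$ to the standard ones of $A(\widetilde\lambda)$.
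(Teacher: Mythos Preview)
Your proposal is correct and follows essentially the same route as the paper: identify the left side with $P^{\mathrm{Dyck}}_{\lambda}$ via Theorem~\ref{thrm:DyckKL}, identify the right side with $P^{\mathrm{Dyck}}_{\overline{\lambda}}$ for the reverse-complement path, and conclude by the mirror symmetry $P^{\mathrm{Dyck}}_{\lambda}=P^{\mathrm{Dyck}}_{\overline{\lambda}}$. The only cosmetic difference is that the paper justifies this last equality directly by the $\mathrm{art}$-preserving reflection of Dyck tilings (your parenthetical alternative), whereas you lead with the chord-length formula from Theorem~\ref{thrm-A1}; both arguments are fine.
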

\begin{proof}
From Theorem \ref{thrm:KL} and Theorem \ref{thrm:DyckKL}, 
the left hand side of Eqn.(\ref{eqn:LSrev}) is equal to 
$P^{\mathrm{Dyck}}_{\lambda}$.
A path $\lambda$ is written as $\lambda=\lambda_1\lambda_2\ldots \lambda_{2N}$ 
with $\lambda_{i}=U$ or $D$.
Then we define $\overline{\lambda}:=\overline{\lambda}_{1}\ldots \overline{\lambda}_{2N}$ by 
$\overline{\lambda}_{i}=U$ for $\lambda_{2N+1-i}=D$ and 
$\overline{\lambda}_{i}=D$ for $\lambda_{2N+1-i}=U$.
Then, the right hand side of Eqn.(\ref{eqn:LSrev}) is equal to 
$P^{\mathrm{Dyck}}_{\overline{\lambda}}$.   
By definition, a path $\overline{\lambda}$ is also a mirror image of $\lambda$.
A mirror image of a Dyck tiling $D$ above $\lambda$ is also a Dyck 
tiling $\overline{D}$ above $\overline{\lambda}$. 
We also have $\mathrm{art}(D)=\mathrm{art}(\overline{D})$.
Thus we have $P^{\mathrm{Dyck}}_{\lambda}=P^{\mathrm{Dyck}}_{\overline{\lambda}}$, 
which implies Eqn.(\ref{eqn:LSrev}). 
\end{proof}

Proposition \ref{lemma:recT2} can be translated into the following theorem.
We keep the notation in Theorem \ref{thrm:treewt1}.
Let $L$ be a natural labelling of $T$, $n_{e}$ be an integer on the edge 
in $L$ and $\mathcal{L}(T)$ be the set of natural labellings of $T$.
We divide the edges into two blocks: the first block is the set of edges 
which is strictly left to the edge $r_{p+1}$ and the second block is 
the set of edges strictly right to the edge $r_{p}$.
We define 
\begin{eqnarray*}
m_{e}&:=&2\#\{e'\ |\ n_{e'}<n_{e}, \text{$e'$ is strictly right to $e$ and in a different block}\}\\
 &&\quad+\#\{e'\ |\  n_{e'}<n_{e}, \text{$e'$ is strictly right to $e$ and in the same block}\}, \\
\mathrm{deg}_{8}(L)&:=&\sum_{e}m_{e},
\end{eqnarray*}
and $N:=\sum_{i=1}^{p}|T_{i}|$.
\begin{theorem}
\label{thrm:treewt2}
We have 
\begin{eqnarray*}
T=\left(\sum_{L\in\mathcal{L}(T)}q^{\mathrm{deg}_{8}(L)}
\prod_{e\in B(T)\cap\mathcal{E}_{\circ}(L)}(1+q^{S(e)-1})
\right)\prod_{i=1}^{N}(1+q^{i}).
\end{eqnarray*}
\end{theorem}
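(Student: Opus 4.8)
The plan is to prove Theorem \ref{thrm:treewt2} by induction on the number of edges of $T$, following the same strategy as the proof of Theorem \ref{thrm:treewt1}, but replacing the recursion of Lemma \ref{lemma:fac-tree} / Theorem \ref{thrm:treewt1} by the refined recursion of Proposition \ref{lemma:recT2}. First I would dispose of the base case: when $T$ has two edges there are the same six trees ($UDUD$, $UUDD$, $UUD$, $UUU$, $UDUU$, $UDU$) as in Theorem \ref{thrm:treewt1}, and a direct check shows both sides agree. For the inductive step, write $T$ as a concatenation $T_{1}\circ\cdots\circ T_{p+s}$ of indecomposable trees, with $r_{i}$ the edge of $T_{i}$ attached to the root, where $r_{1},\dots,r_{p}$ carry no incoming arrow, $r_{p+1},\dots,r_{p+s-1}$ carry incoming arrows, and $r_{p+s}$ has a $\bullet$ (as in the proof of Theorem \ref{thrm:treewt1}). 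Here $N=\sum_{i=1}^{p}|T_{i}|$ is exactly the total number of edges in the first block.

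Next I would partition the set $\mathcal{L}(T)$ of natural labellings according to which root-edge carries the label $1$: let $L(i)\subseteq\mathcal{L}(T)$ be those labellings in which $r_{i}$ has integer $1$. Deleting that edge and shifting gives a natural labelling of $\mathrm{tree}_{2}(T,r_{i})$, which has one fewer edge, so the induction hypothesis applies to $\mathrm{tree}_{2}(T,r_{i})$. The key computation is to show that the partial sum of the $\mathrm{deg}_{8}$-weighted generating function over $L(i)$, with the product over $B(T)\cap\mathcal{E}_{\circ}(L)$ of $(1+q^{S(e)-1})$ and the external product $\prod_{j=1}^{N}(1+q^{j})$, equals $q^{\mathrm{deg}_{2}(T,r_{i})}\mathrm{tree}_{2}(T,r_{i})$ times the appropriate correction, so that summing over $i$ and invoking Proposition \ref{lemma:recT2} reconstructs $T$. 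I expect the weight bookkeeping to split cleanly: when $r_{i}$ is the edge labelled $1$, every other edge lies strictly to its right, so the factor $m_{r_{i}}$ contributes nothing, while the $N$ labels greater than $1$ sitting in the first block to the right of $r_{i}$ (for $1\le i\le p$) or the mixed block count (for $p+1\le i\le p+s$) produce precisely $q^{\mathrm{deg}_{2}(T,r_{i})}$ after accounting for the doubling of cross-block inversions; the extra factor $(1+q^{\mathrm{deg}_{2}(T,r_{p+1})})$ in Proposition \ref{lemma:recT2} should match the discrepancy between the $A(T)$-product $\prod_{e\in A(T)}(1+q^{S(e)})$ implicit in Theorem \ref{thrm:treewt1} and the uniform product $\prod_{j=1}^{N}(1+q^{j})$ appearing here, via the identity relating $S(e)$-values of first-block edges to $1,2,\dots,N$.

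The subtle point, and the one I would treat most carefully, is the relation between $\prod_{j=1}^{N}(1+q^{j})$ and the sets $A(T)$, $B(T)$, $\mathcal{E}_{\circ}(L)$. In Theorem \ref{thrm:treewt1} the product ran over $A(T)$ with exponents $S(e)$; here it runs over $j=1,\dots,N$ where $N$ is the number of edges in the first block. One must check that, after the two-block splitting of $\mathrm{deg}_{8}$, the edges of $A(T)$ lying in the first block contribute factors $(1+q^{S(e)})$ which, together with the $(1+q^{S(e)-1})$ factors coming from $B(T)\cap\mathcal{E}_{\circ}(L)$ in the first block, reassemble into $\prod_{j=1}^{N}(1+q^{j})$; this is essentially a reindexing of $\{S(e): e \text{ in first block}\}$ as a set, which holds because $S$ is a bijection from edges to $\{1,\dots,|T|\}$ and the first block edges have $S$-values forming a down-set appropriately. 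I would verify this reindexing by induction in parallel with the main claim, or by directly comparing with Theorem \ref{thrm:treewt1} applied to the first block treated as a standalone tree.

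The main obstacle I anticipate is exactly this compatibility of the circled-edge products with the uniform factor $\prod_{j=1}^{N}(1+q^{j})$: one must ensure that passing from $\mathrm{tree}_{2}(T,r_{i})$ back to $T$ does not double-count or drop a factor, especially at the boundary edge $r_{p+1}$ where the $(1+q^{\mathrm{deg}_{2}(T,r_{p+1})})$ prefactor in Proposition \ref{lemma:recT2} interacts with whether $r_{p+1}\in\mathcal{E}_{\circ}(L)$ or not under rule (R2). Once the weight identity for each $L(i)$ is established, the theorem follows by summing over $i$ from $1$ to $p+s$ and applying Proposition \ref{lemma:recT2}, exactly as in the last two lines of the proof of Theorem \ref{thrm:treewt1}.
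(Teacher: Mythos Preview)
Your overall architecture---induction on $|T|$, partitioning $\mathcal{L}(T)$ by which root edge $r_i$ carries the label $1$, and closing with Proposition~\ref{lemma:recT2}---is exactly the paper's, and you correctly isolate the two regimes $1\le i\le p$ and $p+1\le i\le p+s$.

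For $1\le i\le p$ the source of the extra factor $(1+q^{\mathrm{deg}_2(T,r_{p+1})})$ is simpler than you suggest: deleting a first-block root edge reduces $N$ by one, so the external product drops from $\prod_{j=1}^{N}(1+q^{j})$ to $\prod_{j=1}^{N-1}(1+q^{j})$, and the quotient is precisely $1+q^{N}=1+q^{\mathrm{deg}_2(T,r_{p+1})}$. No appeal to $A(T)$ or to Theorem~\ref{thrm:treewt1} is needed; the paper simply records this as obvious.

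The gap is in the second-block case $p+1\le i\le p+s$. Your proposed reindexing of $\{S(e):e\text{ in the first block}\}$ onto $\{1,\dots,N\}$ does not hold: since $S(e)=|T|+1-n_e^{\mathrm{ref}}$ and the modified post-order visits the first block first, those edges carry $S$-values $\{|T|-N+1,\dots,|T|\}$, not $\{1,\dots,N\}$. More fundamentally, the $(1+q^{S(e)-1})$ factors you want to merge live \emph{inside} the sum (they depend on $L$ through $\mathcal{E}_\circ(L)$), while $\prod_{j=1}^{N}(1+q^{j})$ is outside, so they cannot be reassembled termwise. The paper handles this case by a different mechanism: it invokes Lemma~\ref{lemma:concatenation} and Lemma~\ref{lemma:LSDyck} to replace each indecomposable piece by a linear (ramification-free) branch with the same number of edges, thereby reducing to the concrete tree $T'=T_1\circ T_2\circ T_3$ with three linear branches, and then checks $T'(i)=q^{\mathrm{deg}_2(T',r_i)}\,\mathrm{tree}_2(T',r_i)$ for $i=2,3$ by a direct $q$-binomial computation. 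That reduction step is the idea missing from your plan.
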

\begin{proof}
We prove Theorem by induction.
We assume that the theorem holds true for a tree with at most $|T|-1$ edges.

Let $\mathcal{L}(i)$ be the set of natural labellings such that 
the edge $r_{i}$ has the integer one.
Given a tree $T$, let $T(i)$ be the sum 
\begin{eqnarray*}
T(i):=\left(\sum_{L\in\mathcal{L}(i)}q^{\mathrm{deg}_{8}(L)}
\prod_{e\in B(T)\cap\mathcal{E}_{\circ}(L)}(1+q^{S(e)-1})\right)
\cdot \prod_{i=1}^{N}(1+q^{i}).
\end{eqnarray*}
It is obvious that we have 
$T(i)=(1+q^{\mathrm{deg}_{2}(T,r_{p+1})})q^{\mathrm{deg}_{2}(T,r_i)}
\mathrm{tree}_{2}(T,r_{i})$ for $1\le i\le p$.

We will show that $T(i)=q^{\mathrm{deg}_{2}(T,r_{i})}\mathrm{tree}_{2}(T,r_{i})$ 
for $p+1\le i\le p+s$.
From Lemma \ref{lemma:concatenation} and Lemma \ref{lemma:LSDyck},
it is enough to consider the following $T'$. A tree $T'$ is a concatenation 
of three trees, {\it i.e.}, $T'=T_{1}\circ T_{2}\circ T_{3}$. 
Further, a tree $T_{i}$, $1\le i\le3$, does not have a ramification point, 
the tree $T_{3}$ consists of only edges with a $\bullet$, and there is an 
arrow from the top edge in $T_{3}$ to the top edge in $T_{2}$.
Let $M_i$, $1\le i\le 3$, be the number of edges in $T_{i}$.
We have 
\begin{eqnarray*}
\mathrm{tree}_{2}(T',r_{2})
&=&\genfrac{[}{]}{0pt}{}{M_{1}+M_{2}-1}{M_{1}}
\genfrac{[}{]}{0pt}{}{M_1+M_2+M_3-1}{M_3}_{q^{2}}
\prod_{j=1}^{M_1+M_2-1}\frac{[2j]}{[j]}, \\
T'(2)&=&
q^{M_1}\genfrac{[}{]}{0pt}{}{M_2+M_3-1}{M_3}\genfrac{[}{]}{0pt}{}{M_1+M_2+M_3-1}{M_1}_{q^{2}}
\prod_{j=1}^{M_1}\frac{[2j]}{[j]}\cdot \prod_{k=M_3+1}^{M_2+M_3-1}\frac{[2k]}{[k]}.
\end{eqnarray*}
By a straightforward calculation, we have $T'(2)=q^{M_1}\mathrm{tree}_{2}(T,r_2)$.
By a similar calculation, we have $T'(3)=q^{M_1+M_2}\mathrm{tree}_{2}(T,r_3)$.
The sum $\sum_{i}T'(i)$ is equal to $T'$. 
Therefore, the sum $\sum_{i}T(i)$ is equal to $T$ by the recurrence relation 
in Proposition \ref{lemma:recT2}, which implies Theorem holds true for $T$.
\end{proof}

Let $T$ be a tree with capacities $A(\lambda/\lambda_{0})$ where 
$\lambda_0$ consists of only $U$'s.
Let $L^{\mathrm{LS}}$ be a labelling of LS type in $T$ and $n_{e}$ be an integer 
on an edge $e$. 
The labels $n_{e}$ satisfy (LS1) and (LS2).
We put an circle on an edge in the following rules:
\begin{enumerate}
\item[(R4)]There is no circle on an edge with $\bullet$.
\item[(R5)]We put a circle on an edge without an incoming arrow.
\item[(R6)]Suppose we have a sequence of edges with arrows: 
$e(0)\leftarrow e(1)\leftarrow\ldots\leftarrow e(p)\leftarrow e$.
If $n_{e(i)}\ge n_{e}$ for $1\le i\le p$ and $n_{e(0)}<n_{e}$,
we put circles on edges $e(1)$ to $e(p)$.
\end{enumerate}
Let $\mathcal{E}_{\circ}(L^{\mathrm{LS}})$ be the set of edges with a circle for 
a label $L$.
We consider the post-order from right on $T$ and let $S'(e)$ be an
integer on an edge $e$.

We denote by $\mathcal{E}$ the set of edges in $A(\lambda)$.
Given a labelling $L^{\mathrm{LS}}$ of type LS, we define 
\begin{eqnarray*}
\mathrm{deg}_{9}(L^{\mathrm{LS}}):=\sum_{e\in\mathcal{E}}n_{e}.
\end{eqnarray*}

\begin{theorem}
\label{thrm:GFLS}
We have 
\begin{eqnarray}
\label{eqn:GFLS2}
T=\sum_{L^{\mathrm{LS}}}q^{\mathrm{deg}_{9}(L^{\mathrm{LS}})}
\prod_{e\in\mathcal{E}_{\circ}(L^{\mathrm{LS}})}(1+q^{S'(e)}).
\end{eqnarray}
\end{theorem}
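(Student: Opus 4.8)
The plan is to derive Theorem~\ref{thrm:GFLS} from Theorem~\ref{thrm:treewt1} by a weight-preserving bijection between the labellings of Lascoux--Sch\"utzenberger type that index the right-hand side of Eqn.(\ref{eqn:GFLS2}) and the natural labellings that index the right-hand side of Theorem~\ref{thrm:treewt1}. The bijection is the ``inversion code'' map already used in Lemma~\ref{lemma:LSDyck}, now extended to trees with dots and arrows: given a natural labelling $L$ of $A(\lambda)$ with value $n_{e}$ on an edge $e$, one sets $n^{\mathrm{LS}}_{e}:=\#\{e'\ |\ n_{e'}<n_{e},\ e'\text{ is strictly right to }e\}$, and conversely a labelling of LS type with the prescribed capacities is read back into a natural labelling by comparing it with the reference natural labelling. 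First I would check that this map is well defined, i.e.\ that the integers $n^{\mathrm{LS}}_{e}$ are non-increasing from leaves to the root and that the integer on an edge meeting a leaf is at most the number of edges strictly right to that leaf, which is exactly the capacity of $A(\lambda/\lambda_{0})$ assigned in the statement; both facts follow from the monotonicity of ``strictly right to'' under passing from a child edge to its parent, together with $n_{\text{parent}}<n_{\text{child}}$ for a natural labelling. The bijectivity is the same argument as in the Dyck case of Lemma~\ref{lemma:LSDyck} and Lemma~\ref{lemma:LSrev}.

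Next I would track the three ingredients of the weight. The exponent $\mathrm{deg}_{9}(L^{\mathrm{LS}})=\sum_{e}n^{\mathrm{LS}}_{e}$ is, by construction, the number of pairs $(e,e')$ with $e'$ strictly right to $e$ and $n_{e'}<n_{e}$, which is precisely $\mathrm{deg}_{7}(L)$ from Theorem~\ref{thrm:treewt1}. For the circle sets I would show that, under the bijection, the rules (R4)--(R6) defining $\mathcal{E}_{\circ}(L^{\mathrm{LS}})$ produce exactly $A(T)\sqcup(B(T)\cap\mathcal{E}_{\circ}(L))$: rule (R5), like (R2), circles every edge without an incoming arrow, which accounts for $A(T)$ and for the arrow-free edges lying below an arrow, while (R6) and (R3) circle the same ``large-label'' chains of arrowed edges, the switch from the strict inequality $n_{e(i)}>n_{e}$ to the pair $n_{e(i)}\ge n_{e}$, $n_{e(0)}<n_{e}$ being exactly the passage from natural labels to the LS code. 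Finally I would compare the post-order-from-right value $S'(e)$ with $S(e)=|T|+1-n^{\mathrm{ref}}_{e}$: on edges of $A(T)$ the reversed modified post-order and the post-order from right coincide, so $S'(e)=S(e)$ and the factor $(1+q^{S'(e)})$ reproduces the outside product $\prod_{e\in A(T)}(1+q^{S(e)})$; on the circled edges of $B(T)$ the reordering of a dotted edge past its left subtree shifts the count by one, giving $S'(e)=S(e)-1$ and reproducing $\prod_{e\in B(T)\cap\mathcal{E}_{\circ}(L)}(1+q^{S(e)-1})$. Combining these turns the right-hand side of Eqn.(\ref{eqn:GFLS2}) term by term into the right-hand side of Theorem~\ref{thrm:treewt1}, which equals $T$.

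The part I expect to be the main obstacle is this last comparison of the two edge-orderings, and in particular making the off-by-one bookkeeping uniform: one must verify that the shift $S'(e)=S(e)-1$ on the circled $B(T)$ edges, the $\ge$ versus $>$ in (R6) versus (R3), and the fact that the $A(T)$-factor sits inside the sum in Eqn.(\ref{eqn:GFLS2}) but outside it in Theorem~\ref{thrm:treewt1}, all cancel consistently across every dot-and-arrow configuration (isolated $\bullet$-edges, $\bullet$-edges carrying an outgoing arrow, and chains of incoming arrows). Everything else is the standard inversion-code bijection and is routine. A fully self-contained alternative would be to prove Eqn.(\ref{eqn:GFLS2}) directly by induction on $|T|$ in the style of Theorems~\ref{thrm:treewt1} and \ref{thrm:treewt2}, checking the six two-edge trees by hand and splitting the LS labellings along the recurrence of Proposition~\ref{lemma:recT2}; but the bijective route reuses more of the machinery already in place.
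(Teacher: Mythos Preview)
Your bijective route is an appealing idea, but there is a concrete error at the very first step that prevents it from reaching Theorem~\ref{thrm:GFLS}. The capacities in $A(\lambda/\lambda_{0})$ with $\lambda_{0}=U^{N}$ are \emph{not} ``the number of edges strictly right to the leaf'': they count the $D$'s to the left of the corresponding position in $\lambda$, which in tree language is the number of edges strictly to the \emph{left} of the leaf. (Check the example immediately after Theorem~\ref{thrm:GFLS}: the three leaves carry capacities $0,1,2$ from left to right, while the numbers of edges strictly to their right are $2,1,0$.) So your inversion-code map $L\mapsto L^{\mathrm{LS}}$ lands in the labellings of $T^{\mathrm{rev}}$, not in the LS labellings that index the sum in Eqn.~(\ref{eqn:GFLS2}). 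For arrow-free trees the $(1+q^{S'(e)})$ factor is independent of $L^{\mathrm{LS}}$ and one can invoke Lemma~\ref{lemma:LSrev} to swap the two families of labellings, but that lemma is only stated and proved for Dyck trees; extending it to the dotted/arrowed setting is essentially the content you are trying to establish. A second, independent problem: the claim $S'(e)=S(e)$ on $A(T)$ fails already for the tree of $UUDD$ (two stacked edges), where $S$ and $S'$ assign $1$ and $2$ to the two edges in opposite order. Only the \emph{multisets} $\{S(e):e\in A(T)\}$ and $\{S'(e):e\in A(T)\}$ coincide there, and even that depends on a structural fact about where the dotted edges sit in the modified post-order versus the post-order from right; once arrows enter and $\mathcal{E}_{\circ}(L^{\mathrm{LS}})$ genuinely varies with $L^{\mathrm{LS}}$, the multiset matching is no longer automatic and your off-by-one bookkeeping does not close.

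The paper does something entirely different and sidesteps all of this. It isolates the two universal local configurations $T_{1}$ and $T_{2}$ (an $N$-edge left branch meeting an $M$-edge dotted right branch, with and without the arrow), writes the right-hand side of Eqn.~(\ref{eqn:GFLS2}) for each as an explicit sum over the $g(i,c,N)$'s, and collapses that sum with the elementary identity of Lemma~\ref{lemma:summult}. The result matches the tree-factorization rules of Section~\ref{sec:Fac-trees} on the nose; since those rules together with Lemma~\ref{lemma:tritree} determine the generating function of every tree, Eqn.~(\ref{eqn:GFLS2}) follows. If you want to pursue a bijective strategy, the honest target is Theorem~\ref{thrm:rev3} (which \emph{is} stated with the reversed capacities and is deduced from Theorem~\ref{thrm:GFLS} afterwards), not Theorem~\ref{thrm:GFLS} itself.
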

%%%%%%%%%%%
Before proceeding with the proof of Theorem \ref{thrm:GFLS}, we introduce 
a lemma used later.

\begin{lemma}
\label{lemma:summult}
We have 
\begin{eqnarray}
\label{eqn:sum2}
\sum_{i=I}^{c}q^{i(M+N)}
\genfrac{[}{]}{0pt}{}{N+M+c-i-1}{M,N-1,c-i}
=q^{I(M+N)}\genfrac{[}{]}{0pt}{}{N+M-1}{M}\genfrac{[}{]}{0pt}{}{N+M+c-I}{c-I}.
\end{eqnarray}
\end{lemma}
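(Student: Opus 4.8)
\textbf{Plan of proof for Lemma \ref{lemma:summult}.}

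The statement is a pure $q$-binomial identity, so the plan is to prove it directly by induction on the parameter $c$, keeping $I$, $M$, $N$ fixed. The base case $c=I$ is immediate: the left-hand side reduces to the single term $q^{I(M+N)}\genfrac{[}{]}{0pt}{}{N+M-1}{M,N-1,0}=q^{I(M+N)}\genfrac{[}{]}{0pt}{}{N+M-1}{M}$, which agrees with the right-hand side since $\genfrac{[}{]}{0pt}{}{N+M}{0}=1$. For the inductive step I would isolate the $i=c$ term of the sum, apply the induction hypothesis to the remaining sum $\sum_{i=I}^{c-1}$ (noting that lowering the upper limit to $c-1$ replaces the ``$c$'' appearing inside the multinomial by $c-1$, which is exactly the shape of the hypothesis), and then show that the resulting two-term expression collapses to the claimed product. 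The collapse is governed by the Pascal-type recurrence $\genfrac{[}{]}{0pt}{}{n+1}{k}=q^{k}\genfrac{[}{]}{0pt}{}{n}{k}+\genfrac{[}{]}{0pt}{}{n}{k-1}$ applied to $\genfrac{[}{]}{0pt}{}{N+M+c-I}{c-I}$, after factoring out the common $q^{I(M+N)}\genfrac{[}{]}{0pt}{}{N+M-1}{M}$.

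An alternative, and perhaps cleaner, route is to recognize $\genfrac{[}{]}{0pt}{}{N+M+c-i-1}{M,N-1,c-i}=\genfrac{[}{]}{0pt}{}{N+M-1}{M}\genfrac{[}{]}{0pt}{}{N+M+c-i-1}{N+M-1}$ by splitting the multinomial coefficient, so that the common factor $\genfrac{[}{]}{0pt}{}{N+M-1}{M}$ pulls out of the sum immediately. Then Eqn.\eqref{eqn:sum2} reduces to the single-variable identity
\begin{eqnarray*}
\sum_{i=I}^{c}q^{i(M+N)}\genfrac{[}{]}{0pt}{}{N+M+c-i-1}{N+M-1}
=q^{I(M+N)}\genfrac{[}{]}{0pt}{}{N+M+c-I}{c-I},
\end{eqnarray*}
which after the substitution $j=i-I$ and writing $x=N+M$ becomes $\sum_{j=0}^{c-I}q^{jx}\genfrac{[}{]}{0pt}{}{x+(c-I)-j-1}{x-1}=\genfrac{[}{]}{0pt}{}{x+(c-I)}{c-I}$. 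This is the standard $q$-analogue of the hockey-stick / negative-binomial identity $\sum_{k\ge0}q^{kx}\genfrac{[}{]}{0pt}{}{x-1+k}{k}\big|_{\text{truncated}}$, and it follows by a one-line induction on $c-I$ using the same Pascal recurrence, or by comparing coefficients in the generating function $\prod(1-q^{a}z)^{-1}$.

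I expect no serious obstacle here; the only point requiring a little care is bookkeeping of the exponents of $q$ when lowering the summation range in the inductive step, and making sure the multinomial-splitting identity $\genfrac{[}{]}{0pt}{}{n}{M,N-1,k}=\genfrac{[}{]}{0pt}{}{n}{M}\genfrac{[}{]}{0pt}{}{n-M}{k}$ is invoked with the correct arguments (so that $n-M=N-1+k$ indeed holds, i.e. $n=M+N-1+k$ with $k=c-i$, which matches $n=N+M+c-i-1$). Once the identity is reduced to the truncated $q$-hockey-stick form, the verification is entirely routine, so I would present it as a short induction and leave the elementary algebra to the reader.
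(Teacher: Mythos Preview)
Your approach is essentially the paper's: the paper proves the lemma by induction with base case $I=c$ and uses the $q$-Pascal recurrence to collapse the two remaining terms, just as you outline. Your alternative route---factoring $\genfrac{[}{]}{0pt}{}{N+M+c-i-1}{M,N-1,c-i}=\genfrac{[}{]}{0pt}{}{N+M-1}{M}\genfrac{[}{]}{0pt}{}{N+M+c-i-1}{N+M-1}$ and reducing to the $q$-hockey-stick identity---is correct and arguably cleaner than what the paper writes, since it makes the common factor explicit from the start rather than only implicitly in the final Pascal step.

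One slip in your first route: isolating the $i=c$ term does \emph{not} leave you with the induction hypothesis. The remaining sum $\sum_{i=I}^{c-1}q^{i(M+N)}\genfrac{[}{]}{0pt}{}{N+M+c-i-1}{M,N-1,c-i}$ still carries $c$ inside the multinomial, whereas the hypothesis for $c-1$ has $c-1$ there; dropping the last index alone does not shift it. The correct move is to peel off the $i=I$ term instead---equivalently, to run the induction downward on $I$, which is precisely what the paper does. After applying the hypothesis to $\sum_{i=I}^{c}$ and adding the new $i=I-1$ term, the factor $\genfrac{[}{]}{0pt}{}{N+M-1}{M}$ comes out and Pascal in the form $\genfrac{[}{]}{0pt}{}{n+1}{k}=q^{n+1-k}\genfrac{[}{]}{0pt}{}{n}{k-1}+\genfrac{[}{]}{0pt}{}{n}{k}$ (with $n=N+M+c-I$, $k=c-I+1$) finishes the step.
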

\begin{proof}
We prove Lemma by induction.
When $I=c$, the both sides of Eqn.(\ref{eqn:sum2}) are equal to 
$q^{c(M+N)}\genfrac{[}{]}{0pt}{}{N+M-1}{M}$.
We assume that Lemma holds true for $I$.
Then, we have 
\begin{align*}
&\sum_{i=I-1}^{c}q^{i(M+N)}
\genfrac{[}{]}{0pt}{}{N+M+c-i-1}{M,N-1,c-i} \\
&\qquad=q^{I(M+N)}\genfrac{[}{]}{0pt}{}{N+M-1}{M}\genfrac{[}{]}{0pt}{}{N+M+c-I}{c-I}
+q^{(I-1)(M+N)}\genfrac{[}{]}{0pt}{}{N+M+c-I}{M,N-1,c-I+1} \\
&\qquad=q^{(I-1)(M+N)}\genfrac{[}{]}{0pt}{}{N+M-1}{M}\genfrac{[}{]}{0pt}{}{N+M+c-I+1}{c-I+1},
\end{align*}
which implies Lemma holds true.
\end{proof}

\begin{proof}[Proof of Theorem \ref{thrm:GFLS}]
To show Theorem is true, it is enough to compute the following two partial trees $T_1$ and $T_2$:
\begin{eqnarray*}
T_{1}:=
\tikzpic{-0.5}{
\draw(0,0)--(-0.45,-0.45)(-0.7,-0.7)--(-1,-1)node[draw,circle,anchor=north east,inner sep=1pt]{$c_{1}$};
\draw(-0.4,-0.4)node{\rotatebox{-45}{$-$}}(-0.8,-0.8)node{\rotatebox{-45}{$-$}};
\draw[dashed](-0.3,-0.3)--(-0.7,-0.7);
\draw[decoration={brace,mirror,raise=5pt},decorate]
  (0,0) --(-1,-1);
\draw(-0.2,-0.2)node[left=9pt]{$N$};
%%%%%%%%%%%% 
\draw(0,0)--(0.45,-0.45)(0.65,-0.65)--(1,-1)node[draw,circle,anchor=north west,inner sep=1pt]{$c_{2}$};
\draw[dashed](0.45,-0.45)--(0.65,-0.65);
\draw(0.4,-0.4)node{\rotatebox{45}{$-$}}(0.7,-0.7)node{\rotatebox{45}{$-$}};
\draw(0.15,-0.15)node{$\bullet$}(0.85,-0.85)node{$\bullet$};
\draw[decoration={brace,mirror,raise=5pt},decorate]
  (1,-1)--(0,0);
\draw(0.2,-0.2)node[right=9pt]{$M$};
%%%%%%
\draw[latex-,dashed](-0.3,-0.3)--(0.3,-0.3);
},\qquad
T_{2}:=
\tikzpic{-0.5}{
\draw(0,0)--(-0.45,-0.45)(-0.7,-0.7)--(-1,-1)node[draw,circle,anchor=north east,inner sep=1pt]{$c_{1}$};
\draw(-0.4,-0.4)node{\rotatebox{-45}{$-$}}(-0.8,-0.8)node{\rotatebox{-45}{$-$}};
\draw[dashed](-0.3,-0.3)--(-0.7,-0.7);
\draw[decoration={brace,mirror,raise=5pt},decorate]
  (0,0) --(-1,-1);
\draw(-0.2,-0.2)node[left=9pt]{$N$};
%%%%%%%%%%%% 
\draw(0,0)--(0.45,-0.45)(0.65,-0.65)--(1,-1)node[draw,circle,anchor=north west,inner sep=1pt]{$c_{2}$};
\draw[dashed](0.45,-0.45)--(0.65,-0.65);
\draw(0.4,-0.4)node{\rotatebox{45}{$-$}}(0.7,-0.7)node{\rotatebox{45}{$-$}};
\draw(0.15,-0.15)node{$\bullet$}(0.85,-0.85)node{$\bullet$};
\draw[decoration={brace,mirror,raise=5pt},decorate]
  (1,-1)--(0,0);
\draw(0.2,-0.2)node[right=9pt]{$M$};
%%%%%%
}
\end{eqnarray*}
where $c_2:=c_1+N$.

Let $g(i,c,N)$ be the generating function for a tree $T$ such that 
$T$ does not have a ramification point, the top edge has the integer
$i$, the capacity is $c$ and the number of edges in $T$ is N, {\it i.e.},  
\begin{eqnarray*}
g(i,c,N):=q^{iN}\genfrac{[}{]}{0pt}{}{N+c-1-i}{N-1}.
\end{eqnarray*}
We define $g_{\le}(i,c,N)$ (resp. $g_{>}(i,c,N)$) in a similar way as 
$g(i,c,N)$ except that the integer on the top edge is equal to or bigger than 
(resp. smaller than) $i$, {\it i.e.},
\begin{eqnarray*}
g_{\le}(i,c,N)&:=&\sum_{j=0}^{i}q^{jN}\genfrac{[}{]}{0pt}{}{N+c-i-1}{N-1}, \\
g_{>}(i,c,N)&:=&\sum_{j=i+1}^{c}q^{jN}\genfrac{[}{]}{0pt}{}{N+c-i-1}{N-1}.
\end{eqnarray*}

Let $e_{1}$ (resp. $e_2$) be the edge with an incoming (outgoing) arrow 
in the partial tree $T_{1}$.
Since we have an arrow in $T_{1}$, the generating function for $T_{1}$ 
is the sum of two contributions: the first case is that the integer on 
$e_{1}$ is smaller or equal to the integer on $e_2$, and the second case 
is the integer on $e_1$ is bigger than the integer $e_{2}$.
We have 
\begin{eqnarray*}
T_{1}=\prod_{i=M+1}^{M+N-1}(1+q^{i})\left\{
\sum_{i=0}^{c_1}(1+q^{M+N})g(i,c_1,N)g_{\le}(i,c_2,M)
+\sum_{i=0}^{c_1}g(i,c_1,N)g_{>}(i,c_2,M)
\right\}.
\end{eqnarray*}
Since we have $g_{\le}(i,c_2,M)+g_{>}(i,c_2,M)=\genfrac{[}{]}{0pt}{}{M+c_2}{M}$,
we have 
\begin{eqnarray*}
T_{1}&=&\prod_{i=M+1}^{M+N-1}(1+q^{i})\left\{
(1+q^{M+N})\genfrac{[}{]}{0pt}{}{N+M+c_1}{M,N,c_1}
-\sum_{i=0}^{c_1}q^{(i+1)(N+M)+M}\genfrac{[}{]}{0pt}{}{M+N+c_1-i-1}{M,N-1,c_1-i}
\right\} \\
&=&\prod_{i=M+1}^{M+N-1}(1+q^{i})
\left\{
(1+q^{M+N})\genfrac{[}{]}{0pt}{}{M+N+c_1}{M,N,c_1}-q^{2M+N}\genfrac{[}{]}{0pt}{}{N+M-1}{M}
\genfrac{[}{]}{0pt}{}{N+M+c_1}{c_1}
\right\} \\
&=&\genfrac{[}{]}{0pt}{}{N+M}{N}_{q^{2}}
\frac{[N+2M]}{[2(N+M)]}\prod_{i=1}^{N}(1+q^{i})
\cdot
\tikzpic{-0.4}{
\draw(0,0)--(0,-0.35)(0,-0.65)--(0,-1);
\draw(0,-0.3)node{$-$}(0,-0.7)node{$-$};
\draw[dashed](0,-0.4)--(0,-0.6);
\draw(0,-0.15)node{$\bullet$}(0,-0.85)node{$\bullet$};
\draw[decoration={brace,mirror,raise=5pt},decorate]
    (0,-1)-- (0,0);
\draw(0,-0.5)node[right=9pt]{$M+N$};
\draw(0,-1)node[draw,circle,anchor=north,inner sep=1pt]{$c_{1}$};
},
\end{eqnarray*}
where we have used Lemma \ref{lemma:summult} in the second equality.

By a similar calculation, we have 
\begin{eqnarray*}
T_{2}&=&\genfrac{[}{]}{0pt}{}{N+M+c_1}{N,M,c_1}\prod_{j=M+1}^{N+M}(1+q^{i}) \\
&=&\genfrac{[}{]}{0pt}{}{N+M}{N}_{q^{2}}\prod_{j=1}^{N}(1+q^{j})
\cdot \tikzpic{-0.4}{
\draw(0,0)--(0,-0.35)(0,-0.65)--(0,-1);
\draw(0,-0.3)node{$-$}(0,-0.7)node{$-$};
\draw[dashed](0,-0.4)--(0,-0.6);
\draw(0,-0.15)node{$\bullet$}(0,-0.85)node{$\bullet$};
\draw[decoration={brace,mirror,raise=5pt},decorate]
    (0,-1)-- (0,0);
\draw(0,-0.5)node[right=9pt]{$M+N$};
\draw(0,-1)node[draw,circle,anchor=north,inner sep=1pt]{$c_{1}$};
}.
\end{eqnarray*}
From Lemma \ref{lemma:tritree} and Section \ref{sec:Fac-trees}, 
it is clear that Eqn.(\ref{eqn:GFLS2}) gives an expression of the 
generating function for $T$.
\end{proof}

\begin{example}
Let $T$ be the following tree with capacities:
\begin{eqnarray*}
T:=
\scalebox{0.8}{
\tikzpic{-0.5}{
\coordinate
	child{coordinate (c0)}
	child{coordinate (c1)}
	child{coordinate (c2)};
\node at ($(0,0)!.5!(c2)$){\scalebox{1.5}{$\bullet$}};

\draw[latex-,dashed](-0.9,-1)--(-0.1,-1);
\draw[latex-,dashed](0.1,-1)--(0.9,-1);
\node[draw,circle,anchor=north east,inner sep=1pt] at (c0){\scalebox{1.25}{$0$}};
\node[draw,circle,anchor=north,inner sep=1pt] at (c1){\scalebox{1.25}{$1$}};
\node[draw,circle,anchor=north west,inner sep=1pt] at (c2){\scalebox{1.25}{$2$}};
}
}
\end{eqnarray*}
We compute the generating function by Theorem \ref{thrm:GFLS}.
We have six configurations with weights:
\begin{eqnarray*}
\tikzpic{-0.25}{
\node[draw,circle,inner sep=1pt] at (0,0) {$0$};
\node[draw,circle,inner sep=1pt] at (0.4,0){$0$};
\node at (0.8,0){$0$};
\node at (0.4,-0.6){$1$};
}\quad
\tikzpic{-0.25}{
\node[draw,circle,inner sep=1pt] at (0,0) {$0$};
\node at (0.4,0){$0$};
\node at (0.8,0){$1$};
\node at (0.4,-0.6){$q$};
}
\quad
\tikzpic{-0.25}{
\node[draw,circle,inner sep=1pt] at (0,0) {$0$};
\node at (0.4,0){$0$};
\node at (0.8,0){$2$};
\node at (0.4,-0.6){$q^{2}$};
}\quad
\tikzpic{-0.25}{
\node[draw,circle,inner sep=1pt] at (0,0) {$0$};
\node[draw,circle,inner sep=1pt] at (0.4,0){$1$};
\node at (0.8,0){$0$};
\node at (0.4,-0.6){$q$};
}\quad
\tikzpic{-0.25}{
\node at (0,0) {$0$};
\node[draw,circle,inner sep=1pt] at (0.4,0){$1$};
\node at (0.8,0){$1$};
\node at (0.4,-0.6){$q^2$};
}\quad
\tikzpic{-0.25}{
\node at (0,0) {$0$};
\node at (0.4,0){$1$};
\node at (0.8,0){$2$};
\node at (0.4,-0.6){$q^3$};
}
\end{eqnarray*}
The circles give the factor 
\begin{eqnarray*}
\tikzpic{-0.25}{
\node[draw,circle,inner sep=4pt] at (0,0) {\ };
\node[draw,circle,inner sep=4pt] at (0.4,0){\ };
\node at (0.8,0){$\ast$};
}\mapsto (1+q^2)(1+q^3),\quad
\tikzpic{-0.25}{
\node[draw,circle,inner sep=4pt] at (0,0) {\ };
\node at (0.4,0){$\ast$};
\node at (0.8,0){$\ast$};
}\mapsto (1+q^3),\quad
\tikzpic{-0.25}{
\node at (0,0) {$\ast$};
\node[draw,circle,inner sep=4pt] at (0.4,0){\ };
\node at (0.8,0){$\ast$};
}\mapsto (1+q^2)
\end{eqnarray*}
Then, the generating function $T$ is given by 
\begin{eqnarray*}
T&=&(1+q)(1+q^2)(1+q^3)+(q+q^2)(1+q^3)+q^2(1+q^2)+q^3 \\
&=&[3][5]. 
\end{eqnarray*}
\end{example}

\subsection{Tree without arrows}
\label{sec:twoarrow}
In this subsection, we consider trees without arrow.
Given two permutations $u$ and $v$ in the set of signed permutations, 
we define the weak left (Bruhat) order $\ge_{L}$ as follows.
We have $v\ge_{L}u$ if some final subword of some reduced word 
for $v$ is a reduced word for $u$.
\begin{theorem}
\label{thrm:BI-post}
Let $L_0$ be the natural labelling of $T$ such that $\mathrm{pre}(L_{0})=id$ 
and a modified post-order word $\sigma_{0}=\mathrm{post}(L_{0})$.
We have 
\begin{eqnarray*}
P_{\lambda,\ast}^{I}
=\sum_{\sigma\ge_{L}\sigma_{0}}q^{\mathrm{Inv}(\sigma)-\mathrm{Inv}(\sigma_{0})}
\end{eqnarray*}
where $\ge_{L}$ is the weak left (Bruhat) order for signed permutations. 
\end{theorem}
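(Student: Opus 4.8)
The plan is to deduce Theorem~\ref{thrm:BI-post} from the factorized/tree expressions established earlier, rather than by constructing a direct bijection with reduced words from scratch. The starting point is the observation that $\mathrm{Inv}(\sigma)-\mathrm{Inv}(\sigma_0)$ is a nonnegative statistic on the interval $\{\sigma:\sigma\ge_L\sigma_0\}$ of the weak left order of $\mathfrak S_{E}^{C}$, where $E=|T|$ is the number of edges of $T=A(\lambda)$. First I would recall that $\sigma_0=\mathrm{post}(L_0)$ is obtained from the reference natural labelling whose pre-order word is the identity, and that the modified post-order word of a natural labelling $L$ encodes $L$ bijectively; hence $\sigma\mapsto L$ identifies the interval above $\sigma_0$ in the weak order with the set of natural labellings of $T$ whose associated signed permutation dominates $\sigma_0$. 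The key combinatorial claim is then that $\mathrm{Inv}(\mathrm{post}(L))-\mathrm{Inv}(\sigma_0)$ equals $\mathrm{inv}(\mathrm{pre}(L)^{-1})$, i.e. the inversion statistic of the inverse pre-order word, which via Theorem~\ref{thrm:BTSinpo} is $\mathrm{art}(\mathrm{BTS}(\lambda,\sigma))$ — so that the right-hand sum is exactly $\sum_{D\in\mathcal B^I(\lambda/\ast)}q^{\mathrm{art}(D)}=P_{\lambda,\ast}^I$ by the definition of BTS and the fact that BTS is a bijection onto cover-inclusive ballot tilings.

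More concretely, I would proceed by induction on the number of edges, mirroring the recursive structure used in Theorems~\ref{thrm:treewt1} and \ref{thrm:treewt2}. Write $T$ as a concatenation $T_1\circ\cdots\circ T_{p+s}$ of indecomposable trees attached at the root, with $r_i$ the $i$-th root edge. For a natural labelling $L$ with pre-order word the identity, the modified post-order word $\sigma_0$ has a block structure determined by this decomposition, and $\sigma\ge_L\sigma_0$ translates into a statement about which root edge carries the smallest label, exactly as in the proof of Theorem~\ref{thrm:treewt1}: summing over the choice of which $r_i$ is labelled $1$ gives the recurrence
\[
P_{\lambda,\ast}^I=\sum_{1\le i\le p+s}q^{\mathrm{deg}(i)}\,\mathrm{tree}(T,r_i),
\]
which matches the tree recurrence of Proposition~\ref{lemma:recT2} (Eqn.~(\ref{eq:recT2})). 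The degree shift $q^{\mathrm{deg}(i)}$ is precisely the change in $\mathrm{Inv}(\sigma)-\mathrm{Inv}(\sigma_0)$ caused by moving past all the labels in $T_1,\dots,T_{i-1}$, and the factor $(1+q^{\mathrm{deg}_2(T,r_{p+1})})$ appearing for $i\le p$ accounts for the sign/inversion contribution $\#\{i:\sigma(i)<0\}+2\#\{\ldots\}$ of the dotted-edge block in the type-$C$ inversion count. I would verify the base case ($E=1,2$: the six two-edge trees listed after Theorem~\ref{thrm:treewt1}) by hand, and then check that the recursion for $\sum_{\sigma\ge_L\sigma_0}q^{\mathrm{Inv}(\sigma)-\mathrm{Inv}(\sigma_0)}$ coincides term-by-term with (\ref{eq:recT2}).

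The main obstacle I anticipate is the bookkeeping that identifies the type-$C$ inversion difference $\mathrm{Inv}(\mathrm{post}(L))-\mathrm{Inv}(\sigma_0)$ with the tree statistic $\mathrm{deg}_8(L)$ of Theorem~\ref{thrm:treewt2} (equivalently $\mathrm{art}(\mathrm{BTS})$ via Theorem~\ref{thrm:BTSinpo}), keeping careful track of the underbars, which encode negative entries, and of the two-block weighting (a factor $2$ for inversions straddling the dotted block). In particular one must check that the weak-left-order condition $\sigma\ge_L\sigma_0$ is equivalent to $\mathrm{post}(L)$ arising from a \emph{natural} labelling (monotone along root-to-leaf paths) — this is where the structure of the modified post-order and the placement of dots is used — and that the inversion generating function over this interval is multiplicative over the root decomposition in exactly the way (\ref{eq:recT2}) dictates. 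Once that dictionary is in place, the induction closes immediately, and combining with Theorem~\ref{thrm:BTSinpo} (or alternatively with the factorization Theorem~\ref{thrm:BI-fac} and Proposition~\ref{prop:BalP}) gives $P_{\lambda,\ast}^I=\sum_{\sigma\ge_L\sigma_0}q^{\mathrm{Inv}(\sigma)-\mathrm{Inv}(\sigma_0)}$, as claimed.
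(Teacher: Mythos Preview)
Your proposal has a genuine gap at its core: the claimed bijection between the interval $\{\sigma:\sigma\ge_L\sigma_0\}$ and natural labellings of $T$ does not exist. In the paper's own example $\lambda=UUDDUU$ the tree has three edges and only three natural labellings, yet there are twelve signed permutations above $\sigma_0=21\underline{3}$; indeed $\underline{123}$ is in the interval but has three underlined entries, whereas any $\mathrm{post}(L)$ carries exactly one underline (on the unique dotted edge). So the step ``$\sigma\mapsto L$ identifies the interval above $\sigma_0$ with natural labellings'' fails outright, and with it the reduction to $\mathrm{inv}(\mathrm{pre}(L)^{-1})$ and to BTS. (Even the BTS route via Corollary~\ref{cor:BTS} only applies to the restricted class of trees treated in Section~\ref{sec:twoa-BTS}, not to all arrow-free trees.)

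Relatedly, you are matching against the wrong recurrence. The paper does argue by induction on $|T|$, but it partitions the interval by the \emph{first letter} $\sigma_1$ of $\sigma$, not by which root edge carries the label~$1$ in a natural labelling. The key dichotomy is whether $\sigma_1$ is underlined: if $\sigma_1=\sigma_{0,i}$ is not underlined, then (because $\sigma_0$ is a modified post-order word) that value sits on a \emph{leaf} edge $s_j$, the cost of moving it to the front is $i-1=\mathrm{deg}_6(T,s_j)$, and the remaining word lies above the $\sigma_0$ for $\mathrm{tree}_3(T,s_j)$; if $\sigma_1=\underline{\sigma_{0,i}}$ with $\sigma_{0,i}$ originally not underlined, that value must sit on a \emph{root} edge $r_j$, the cost of sending it to the far right, flipping its sign, and bringing it back is exactly $\mathrm{deg}_5(T,r_j)$, and the remainder lies above the $\sigma_0$ for $\mathrm{tree}_2(T,r_j)$. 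This reproduces Proposition~\ref{prop:fac-leftorder}, not Proposition~\ref{lemma:recT2}; the mix of $\mathrm{tree}_2$ and $\mathrm{tree}_3$ is essential because the first letter can be either a leaf-type or a root-type value, and the two cases contribute $\mathrm{deg}_6$ and $\mathrm{deg}_5$ respectively. Your attempt to absorb the extra factor $(1+q^{\mathrm{deg}_2(T,r_{p+1})})$ into ``the sign contribution of the dotted block'' does not correspond to a partition of the weak-order interval.
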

\begin{proof}
We prove Theorem by induction. 
Theorem holds true when a tree $T$ has one edge. 
We assume that Theorem is true at most $N-1$ edges.
We have a recurrence formula from Proposition \ref{prop:fac-leftorder}.
We compare $\sigma:=\sigma_{1}\ldots \sigma_{N}$ to the modified post-order word 
$\sigma_0:=\sigma_{0,1}\ldots\sigma_{0,N}$ of $L_{0}$.
Recall that some $\sigma_{0,i}$'s are underlined and correspond to the edges 
with $\bullet$.
We consider a word $\sigma\ge_{L}\sigma_{0}$ 
such that $\sigma_{1}=\sigma_{0,i}$ and $\sigma_{0,i}$ is not underlined. 
Since $\sigma_{0}$ is a modified post-order word, such $\sigma_{0,i}$ corresponds 
to the edge $s_{j}$ in $T$.
Let $\sigma'_{0}$ be the post-order word of a labelling $L_{0}$ on
the tree $\mathrm{tree}_{3}(T,s_{j})$.
Then, it is obvious that $i-1=\mathrm{deg}_{6}(T,s_{j})$ and 
a subword $\sigma':=\sigma_{2}\ldots\sigma_{N}$ satisfies
$\sigma'\ge_{L}\sigma'_{0}$ by induction assumption.
Similarly, we consider the case where $\sigma_{0,i}$ is not underlined 
and $\sigma_{1}=\underline{\sigma_{0,i}}$, {\it i.e.}, $\sigma_{1}$ is 
underlined.
A possible such $\sigma_{0,i}$ corresponds to the edge $r_{j}$.
In signed permutations, we need $\mathrm{deg}_{5}(T,r_{j})$ simple 
transpositions to move $\sigma_{0,i}$ to the $N$-th element, put 
an underline on it, and move forward to the first element in $\sigma$.
Let $\sigma'_{0}$ be the post-order word of a labelling $L_{0}$ on
the tree $\mathrm{tree}_{2}(T,r_{j})$.
The subword $\sigma':=\sigma_2\ldots\sigma_{N}$ satisfies 
$\sigma'\ge_{L}\sigma'_{0}$ by induction assumption.
The sum of these two contributions implies that Theorem holds true.
\end{proof}

\begin{example}
Let $\lambda=UUDDUU$ and $T=A(\lambda)$.
Then, $L_{0}$ is the natural labelling
\begin{eqnarray*}
L_{0}:=
\scalebox{0.8}{
\tikzpic{-0.5}{
\coordinate
	child{coordinate (c0)
		child{coordinate (c1)}
		child[missing]
	     }
	child{coordinate (c2)};
\draw (c0)node{$-$};
\node[anchor=south east] at($(0,0)!.6!(c0)$){\scalebox{1.25}{$1$}};
\node[anchor=south east] at($(c0)!.6!(c1)$){\scalebox{1.25}{$2$}};
\node[anchor=south west] at($(0,0)!.6!(c2)$){\scalebox{1.25}{$3$}};
\node at ($(0,0)!.5!(c2)$){\scalebox{1.25}{$\bullet$}}; 
}}
\end{eqnarray*}
The modified post-order word is $\sigma_{0}=\mathrm{post}(L_{0})=21\underline{3}$.
We have twelve signed permutations $\sigma$'s satisfying $\sigma\ge_{L}\sigma_{0}$.
They are 
\begin{eqnarray*}
\begin{array}{c|cccccccccccc}
\sigma & 21\underline{3} & 2\underline{3}1 & \underline{3}21 & 2\underline{31} 
& \underline{3}2\underline{1} & 2\underline{13} & \underline{1}2\underline{3} 
& \underline{31}2 & \underline{13}2 & \underline{312} & \underline{132} 
& \underline{123} \\ \hline
\mathrm{Inv}(\sigma) & 2 & 3 & 4 & 4 & 5 & 5 & 6 & 6 & 7 & 7 & 8 & 9 
\end{array}
\end{eqnarray*}
The generating functions is 
\begin{eqnarray*}
P_{\lambda,\ast}^{I}=(1+q+2q^2+2q^3+2q^4+2q^5+q^6+q^7)=[4][6]/[2].
\end{eqnarray*}
\end{example}

The following Theorem is an analogue of Theorem \ref{thrm-A1}.
Let $N$ be the number of edges in $T$.
In $T$, we say an edge $e'$ is a child of an edge $e$ if there exists 
a unique sequence of edges 
$e=e_{0}\rightarrow e_{1}\rightarrow\cdots\rightarrow e_{n}=e'$ 
such that $e_{i+1}$ is below $e_{i}$ in $T$.
Given an edge $e$ in $T$, we define 
\begin{eqnarray*}
l(e):=
\begin{cases}
\#\{e' \ |\  \text{$e'$ is a child of $e$}\}+1, & \text{$e$ does not have a $\bullet$}, \\
2\#\{e' \ |\  \text{$e'$ is a child of $e$}\}+2, & \text{$e$ has a $\bullet$}, \\
\end{cases}
\end{eqnarray*}

\begin{theorem}
\label{thrm:BI-tree-fac}
We have 
\begin{eqnarray}
\label{P-fac2}
P^{I}_{\lambda,\ast}=
\frac{[2N]!!}{\prod_{e\in T}[l(e)]}.
\end{eqnarray}
\end{theorem}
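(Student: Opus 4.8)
The plan is to prove Theorem \ref{thrm:BI-tree-fac} by induction on the number $N$ of edges of $T$, stripping off the leftmost subtree at the root by means of the factorization in Lemma \ref{lemma:fac-tree} (the tree form of Theorem \ref{thrm:BI-fac}). The base case $N=0$ is the empty tree, where both sides equal $1$. For the inductive step, let $e_{1}$ be the leftmost edge at the root of $T$, let $T_{1}$ be the subtree hanging below $e_{1}$, and set $T_{2}=T\setminus(T_{1}\cup e_{1})$. Since $T$ carries no arrows, only the first (no incoming arrow) clause of Lemma \ref{lemma:fac-tree} can occur, giving $T=P^{\mathrm{Dyck}}_{T_{1}}\cdot P(2|T_{2}|,\,|T_{1}|+1)\cdot T_{2}$. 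I would then substitute the closed form of each factor: $P^{\mathrm{Dyck}}_{T_{1}}=[|T_{1}|]!\big/\prod_{e\in T_{1}}[l(e)]$ from Theorem \ref{thrm-A1} (reading the length of the chord attached to an edge as its number of descendant edges plus $1$, and using Lemma \ref{lemma:treedel} to pass between $T_{1}$ and its closed-up path); $P(2|T_{2}|,\,|T_{1}|+1)=\prod_{i=1}^{|T_{1}|+1}(1+q^{i})\cdot\prod_{j=1}^{2|T_{2}|}a_{(j,\,|T_{1}|+1)}$ from Proposition \ref{prop:BalP}; and $T_{2}=[2|T_{2}|]!!\big/\prod_{e\in T_{2}}[l(e)]$ by the inductive hypothesis, which applies because $T_{2}$ is again an arrow-free tree with strictly fewer edges.

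The collapse of this product into $[2N]!!\big/\prod_{e\in T}[l(e)]$ would rest on three elementary facts. First, $l(e)$ is intrinsic to the edge $e$: it depends only on the number of descendants of $e$ and on whether $e$ carries a dot, neither of which changes under the decomposition $T=(T_{1}\cup e_{1})\sqcup T_{2}$; hence $\prod_{e\in T}[l(e)]=[l(e_{1})]\cdot\prod_{e\in T_{1}}[l(e)]\cdot\prod_{e\in T_{2}}[l(e)]$, and since $e_{1}$ is a plain edge with $|T_{1}|$ descendants, $[l(e_{1})]=[|T_{1}|+1]$ absorbs the leading factor of $[|T_{1}|+1]!=[|T_{1}|]!\,[|T_{1}|+1]$. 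Second, the identity $[2i]=[i](1+q^{i})$ yields $[|T_{1}|]!\prod_{i=1}^{|T_{1}|+1}(1+q^{i})=[2(|T_{1}|+1)]!!\big/[|T_{1}|+1]$. Third, since $a_{(2k-1,d)}a_{(2k,d)}=[2d+2k]/[2k]$, the product $\prod_{j=1}^{2|T_{2}|}a_{(j,\,|T_{1}|+1)}$ telescopes to $[2N]!!\big/\big([2(|T_{1}|+1)]!!\,[2|T_{2}|]!!\big)=\genfrac{[}{]}{0pt}{}{N}{|T_{1}|+1}_{q^{2}}$, where $N=|T_{1}|+1+|T_{2}|$. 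Multiplying the four rewritten factors and cancelling then lands precisely on $[2N]!!\big/\prod_{e\in T}[l(e)]$.

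The step I expect to be the main obstacle is making the tree surgery of Lemma \ref{lemma:fac-tree} mesh correctly with the $l(e)$-accounting at the root. Concretely, one must confirm that the leftmost root edge $e_{1}$ is always plain---equivalently, that $\lambda$ admits a leading $UD$-chord so that $e_{1}\cup T_{1}$ really is a Dyck subtree---since the cancellation $[l(e_{1})]=[|T_{1}|+1]$ used above fails for a dotted edge (which contributes $[\,2\,\#\{\text{descendants}\}+2\,]$), and the even/double-factorial combinatorics governing the $P(\cdot,\cdot)$ and $\prod(1+q^{i})$ factors must be reconciled with the ordinary factorials governing the Dyck factor. Related degenerate configurations needing direct verification are the case $T_{2}=\emptyset$, where the factorization reads $T=P^{\mathrm{Dyck}}_{T_{1}}\cdot P(0,|T_{1}|+1)$ with $P(0,M)=\prod_{i=1}^{M}(1+q^{i})=[2M]!!/[M]!$, and trees consisting of a chain of dotted edges above a Dyck subtree. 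As an independent check on the final formula, one can match it against Theorem \ref{thrm:BI-post}: the weak-order generating function on signed permutations telescopes into the same product, mirroring the equivalence of the two expressions in Theorem \ref{thrm-A1}.
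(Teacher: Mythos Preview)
Your algebraic manipulation---combining Theorem~\ref{thrm-A1} for $P^{\mathrm{Dyck}}_{T_1}$, Proposition~\ref{prop:BalP} for $P(2|T_2|,|T_1|+1)$, and the inductive hypothesis for $T_2$---is correct, and the telescoping you describe does land on $[2N]!!/\prod_{e}[l(e)]$ whenever $e_1$ is a plain edge. However, the obstacle you flag is genuine and is not a mere formality: the leftmost root edge is \emph{not} always plain. For instance, $\lambda=UUUDD$ has a single unpaired $U$ (the first one), which is terminal; by $(\Diamond5)$ the tree $A(\lambda)$ is a three-edge chain whose unique root edge is dotted, and this tree carries no arrows. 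Lemma~\ref{lemma:fac-tree} is the tree translation of Theorem~\ref{thrm:BI-fac}, which factors $\lambda$ at its minimum valley; when $\lambda$ has no $DU$ (as here) there is no valley, the factorization is vacuous, and applying your recipe with the dotted $e_1$ gives $P^{\mathrm{Dyck}}_{T_1}\cdot P(0,3)=(1+q)(1+q^2)(1+q^3)$, not the correct value $[4]$. So this case genuinely needs a separate argument, not merely a confirmation.

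The paper sidesteps this by taking a different and more uniform route. Instead of Lemma~\ref{lemma:fac-tree}, it invokes Lemma~\ref{lemma:tree2}, which (specialized to the arrow-free setting) yields
\[
\frac{\mathrm{tree}_2(T,r_i)}{T}=\frac{[\,|T_i|+1\,]}{[2N]}\quad(r_i\text{ plain}),\qquad
\frac{\mathrm{tree}_2(T,r_{p+1})}{T}=\frac{[\,2(|T_{p+1}|+1)\,]}{[2N]}\quad(r_{p+1}\text{ dotted}),
\]
i.e.\ $\mathrm{tree}_2(T,r)/T=[l(r)]/[2N]$ for \emph{every} root edge $r$. Since $\mathrm{tree}_2$ deletes one root edge without disturbing the descendant count (hence the $l$-value) of any other edge, successive application peels off the factors $[2N],[2N-2],\ldots$ in the numerator and $[l(r)]$ in the denominator one at a time, giving the hook product directly. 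Your approach peels off an entire Dyck subtree at once, which is heavier on bookkeeping and forces the plain/dotted case split; the paper's edge-at-a-time ratio handles both cases in one stroke. If you wish to complete your argument, the missing piece is precisely the dotted case of Lemma~\ref{lemma:tree2}: when the only root edge is dotted one has $l(e_1)=2N$ and $T=\mathrm{tree}_2(T,e_1)$, after which your induction goes through.
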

\begin{proof}
We use the same notation as Lemma \ref{lemma:tree2}.
The root of $T$ has $p$ edges without $\bullet$ and at most 
one edge with $\bullet$. We call the edge with $\bullet$ $r_{p+1}$.
From Lemma \ref{lemma:tree2}, we have 
\begin{eqnarray*}
\frac{\mathrm{tree}_2(T,r_i)}{T}&=&\frac{[|T_{i}|+1]}{[2N]}, \\
\frac{\mathrm{tree}_2(T,r_{p+1})}{T}&=&
\frac{2(|T_{p+1}|+1)}{[2N]}. 
\end{eqnarray*}
We apply $\mathrm{tree}_2$ successively to $T$ and obtain 
Eqn.(\ref{P-fac2}).
\end{proof}

Let $T^{\mathrm{rev}}$ be a tree with following capacities.
A capacity on a leaf $l$ is the number of edges strictly right to 
the edge just above the leaf $l$.
Since we can put integers on $T$ satisfying (LS1) and (LS2), 
we denote by $L^{\mathrm{rev}}$ a labelling of $T$. 
We denote by $\mathcal{L}^{\mathrm{rev}}(\lambda)$ the set of possible
labellings $L^{\mathrm{rev}}$'s.
For a edge $e$ in $T$, we denote by $T_{e}$ a partial tree connected 
to the edge $e$ (the root of $T_{e}$ is connected to $e$ from bottom).
Let $c_{\mathrm{min}}$ be the smallest capacity appeared in $T_{e}$.
We define 
\begin{eqnarray*}
S^{\mathrm{rev}}(e):=c^{\mathrm{min}}_{e}+|T_{e}|+1.
\end{eqnarray*}
where $c^{\mathrm{min}}_{e}$ is the smallest capacity in a partial tree $T_{e}$.

\begin{theorem}
\label{thrm:rev3}
We have 
\begin{eqnarray}
\label{eqn:rev3}
T=\left(\sum_{L\in\mathcal{L}^{\mathrm{rev}}(\lambda)}q^{\mathrm{deg}_{9}(L)}\right)
\cdot
\prod_{e\in \mathcal{E}\setminus\mathcal{E}_{\bullet}}(1+q^{S^{\mathrm{rev}}(e)}).
\end{eqnarray}
\end{theorem}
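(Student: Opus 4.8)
The plan is to prove Theorem \ref{thrm:rev3} by induction on the number of edges of $T$, following the same strategy used for Theorem \ref{thrm:treewt1}, Theorem \ref{thrm:treewt2} and Theorem \ref{thrm:GFLS}. First I would verify the base case: when $T$ has a single edge, $\mathcal{L}^{\mathrm{rev}}(\lambda)$ consists of labellings $0,1,\ldots,c$ on that edge where $c$ is its capacity, so the sum over $L$ is $[c+1]$, while the product over $\mathcal{E}\setminus\mathcal{E}_{\bullet}$ contributes $(1+q^{S^{\mathrm{rev}}(e)})$ or $1$ depending on whether the edge carries a $\bullet$; a direct comparison with the small trees $UDUD,UUDD,UUD,UUU,UDUU,UDU$ settles the base case.

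For the inductive step, I would decompose $T$ as a concatenation $T_1\circ\cdots\circ T_{p+1}$ where each $T_i$ is indecomposable, $r_i$ is the $i$-th edge connected to the root, the edges $r_1,\ldots,r_p$ carry no $\bullet$ and $r_{p+1}$ (if present) carries a $\bullet$ — exactly the setup of Proposition \ref{prop:fac-leftorder} and Lemma \ref{lemma:tree2}. As in the proof of Theorem \ref{thrm:treewt2}, I would partition $\mathcal{L}^{\mathrm{rev}}(\lambda)$ according to which root-edge $r_i$ carries the largest label $c_i$ (its capacity), since the capacities in $T^{\mathrm{rev}}$ are arranged so that a maximal label forced on a root edge $r_i$ is precisely $c_i=S^{\mathrm{rev}}(r_i)$-type data; removing the maximal contribution on $r_i$ and absorbing the factor $(1+q^{S^{\mathrm{rev}}(r_i)})$ should reproduce the generating function of $\mathrm{tree}_2(T,r_i)$ with the appropriate power of $q$. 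The combinatorial identity to check is then that $\sum_i$ (partial sum corresponding to edge $r_i$) equals $T$, which I would reduce to the recurrence of Proposition \ref{lemma:recT2} (or equivalently Proposition \ref{prop:fac-leftorder}) after matching the degree shifts; the key computational lemma will be Lemma \ref{lemma:summult}, used exactly as in the proof of Theorem \ref{thrm:GFLS} to collapse the $q$-multinomial sums, together with Lemma \ref{lemma:tree2} to evaluate $\mathrm{tree}_2(T,r_i)/T$.

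The main obstacle I anticipate is bookkeeping of the factor $\prod_{e\in\mathcal{E}\setminus\mathcal{E}_{\bullet}}(1+q^{S^{\mathrm{rev}}(e)})$ under the decomposition: one must show that the product over the sub-tree factors, times the single extra factor $(1+q^{S^{\mathrm{rev}}(r_i)})$ coming from the maximal label on $r_i$, reassembles correctly, and that $S^{\mathrm{rev}}(e)=c^{\mathrm{min}}_e+|T_e|+1$ transforms predictably when an edge is deleted or when the capacities of $T^{\mathrm{rev}}$ are recomputed for a smaller tree. Concretely, the capacity shift hidden in $T^{\mathrm{rev}}$ (a leaf's capacity counts edges strictly right of its parent edge) interacts with the concatenation, so I would first isolate a lemma analogous to Lemma \ref{lemma:concatenation} describing how $\sum_{L\in\mathcal{L}^{\mathrm{rev}}}q^{\mathrm{deg}_9(L)}$ behaves under $T=T_1\circ T_2\circ T_3$, and only then run the induction. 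Once that factorization lemma is in place, the remainder is the same $q$-binomial manipulation already carried out in the proofs of Theorems \ref{thrm:GFLS} and \ref{thrm:treewt2}, so I would not expect further conceptual difficulty.
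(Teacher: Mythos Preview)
Your inductive plan could in principle be made to work, but it is far more laborious than what the paper actually does, and one of your organizing ideas is off: in an LS-type labelling the labels are non-increasing from leaves to the root, so a root edge $r_i$ carries a \emph{small} label, not ``the largest label $c_i$ (its capacity)''; the capacities sit on the leaves, not on the root edges. So the partition you describe would have to be reformulated before the induction could run.

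More importantly, the paper does not redo any induction here. Theorem~\ref{thrm:rev3} lives in the subsection on trees \emph{without arrows}, and in that setting Theorem~\ref{thrm:GFLS} collapses drastically: by rule (R5) every non-bullet edge automatically gets a circle, (R6) is vacuous, and the set $\mathcal{E}_{\circ}(L^{\mathrm{LS}})$ is independent of $L^{\mathrm{LS}}$ and equals $\mathcal{E}\setminus\mathcal{E}_{\bullet}$. Hence Theorem~\ref{thrm:GFLS} already gives
\[
T=\Bigl(\sum_{L^{\mathrm{LS}}}q^{\mathrm{deg}_{9}(L^{\mathrm{LS}})}\Bigr)\cdot
\prod_{e\in\mathcal{E}\setminus\mathcal{E}_{\bullet}}(1+q^{S'(e)}).
\]
Two short observations then finish the proof. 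First, comparing the post-order-from-right index $S'(e)$ with the definition $S^{\mathrm{rev}}(e)=c^{\mathrm{min}}_e+|T_e|+1$ (where $c^{\mathrm{min}}_e$ counts edges strictly right of the leftmost leaf of $T_e$) shows $S'(e)=S^{\mathrm{rev}}(e)$ for every $e$. Second, Lemma~\ref{lemma:LSrev} says the LS sum over $A(\lambda/\lambda_0)$ equals the LS sum over $T^{\mathrm{rev}}$, i.e.\ $\sum_{L^{\mathrm{LS}}}q^{\mathrm{deg}_9(L^{\mathrm{LS}})}=\sum_{L\in\mathcal{L}^{\mathrm{rev}}(\lambda)}q^{\mathrm{deg}_9(L)}$. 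Substituting these two equalities into the display above yields Eqn.~(\ref{eqn:rev3}) immediately. So the whole proof is three lines; no new induction, no Lemma~\ref{lemma:summult}, no analogue of Lemma~\ref{lemma:concatenation}, and none of the $\mathrm{tree}_2/\mathrm{tree}_3$ machinery is needed.
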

\begin{proof}
Since $T$ does not have arrows, Theorem \ref{thrm:GFLS} can be reduced to 
\begin{eqnarray*}
T=\left(\sum_{L^{\mathrm{LS}}}q^{\mathrm{deg}_{9}(L^{\mathrm{LS}})}\right)
\cdot \prod_{e\in\mathcal{E}\setminus\mathcal{E}_{\bullet}}(1+q^{S'(e)}).
\end{eqnarray*}
By comparing the post-order from right in a tree $T$ with a capacity of $T^{\mathrm{rev}}$, 
we have $S'(e)=S^{\mathrm{rev}}(e)$.
Together with Lemma \ref{lemma:LSrev}, we obtain Eqn.(\ref{eqn:rev3}).
\end{proof}

By a straightforward calculation, we have the following lemma.
\begin{lemma}
\label{lemma:ram2}
Let $M=\sum_{i=1}^{r}M_i$.
We have 
\begin{eqnarray*}
\genfrac{[}{]}{0pt}{}{M+c}{M_1,M_2,\ldots,M_r,c}
\prod_{i=c+1}^{M+c}(1+q^{i})
=\genfrac{[}{]}{0pt}{}{M+c}{M_1,M_2,\ldots,M_r,c}_{q^2}
\prod_{i=1}^{r}\prod_{j=1}^{M_i}(1+q^{i}).
\end{eqnarray*}
\end{lemma}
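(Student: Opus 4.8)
The plan is to collapse both sides to one and the same closed form built only from double factorials $[2m]!!$ and ordinary $q$-factorials $[m]!$. The single tool needed is the elementary identity
\[
1+q^{i}=\frac{1-q^{2i}}{1-q^{i}}=\frac{[2i]}{[i]}\qquad(i\ge 1),
\]
together with its telescoping consequences $\prod_{i=1}^{n}(1+q^{i})=[2n]!!/[n]!$ and, more generally, $\prod_{i=a+1}^{b}(1+q^{i})=\bigl([2b]!!\,[a]!\bigr)\big/\bigl([2a]!!\,[b]!\bigr)$, both immediate from the definitions of $[m]!$ and $[2m]!!$ in the Notation paragraph.

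First I would rewrite the left-hand side. By definition the multinomial is $\genfrac{[}{]}{0pt}{}{M+c}{M_1,\ldots,M_r,c}=[M+c]!\big/\bigl(\prod_{i=1}^{r}[M_i]!\cdot[c]!\bigr)$, and the telescoping formula with $a=c$, $b=M+c$ gives $\prod_{i=c+1}^{M+c}(1+q^{i})=[2(M+c)]!!\,[c]!\big/\bigl([2c]!!\,[M+c]!\bigr)$. Multiplying these, the factors $[M+c]!$ and $[c]!$ cancel, so the left-hand side equals $[2(M+c)]!!\big/\bigl(\prod_{i=1}^{r}[M_i]!\cdot[2c]!!\bigr)$.

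Next I would do the same to the right-hand side, reading the inner product as $\prod_{j=1}^{M_i}(1+q^{j})=[2M_i]!!/[M_i]!$ (this is the reading that makes the lemma correct; literally the exponent in the displayed identity should be $q^{j}$, and I would flag this as a typo). Then the product accompanying the $q^{2}$-multinomial is $\prod_{i=1}^{r}[2M_i]!!/[M_i]!$, and since $\genfrac{[}{]}{0pt}{}{M+c}{M_1,\ldots,M_r,c}_{q^2}=[2(M+c)]!!\big/\bigl(\prod_{i=1}^{r}[2M_i]!!\cdot[2c]!!\bigr)$, the $[2M_i]!!$ factors cancel pairwise and the right-hand side also reduces to $[2(M+c)]!!\big/\bigl(\prod_{i=1}^{r}[M_i]!\cdot[2c]!!\bigr)$. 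The two reduced forms agree, which is the lemma.

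There is no genuine obstacle here: the whole argument is bookkeeping with factorials and double factorials. The only points needing a little care are getting the telescoping of $\prod_{i=c+1}^{M+c}(1+q^{i})$ exactly right, and the degenerate cases $c=0$ or some $M_i=0$, in which the relevant empty products are read as $1$ and the identity still holds by convention. Once these are checked the proof is complete.
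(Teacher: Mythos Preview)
Your argument is correct and is precisely the ``straightforward calculation'' the paper alludes to without writing out; the paper gives no detailed proof of this lemma. Your observation that the inner product should read $\prod_{j=1}^{M_i}(1+q^{j})$ rather than $(1+q^{i})$ is also right and worth flagging as a typo.
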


Let $\lambda$ be a ballot path and $\lambda_0$ be a ballot path consisting 
of only $U$'s. 
Let $L$ be a labelling of type LS on the tree $A(\lambda/\lambda_{0})$ and 
$\mathcal{L}(\lambda)$ be the set of labellings of type LS.
Given a labelling $L$, we denote by $n_{e}$ an integer on an edge $e$.
We define $N:=|\mathcal{E}|-|\mathcal{E}_{\bullet}|$.

For an edge $e$ in a tree $A(\lambda)$, we define $S(e)$ as one plus
the number of children of $e$. In other words, $S(e)$ is one plus 
the sum of distances from $e$ to leaves.
\begin{theorem}
\label{thrm:2deg}
We have 
\begin{eqnarray*}
T=\left(\sum_{L\in\mathcal{L}(\lambda)}q^{2\mathrm{deg}_{9}(L)}\right)
\cdot \prod_{e\in\mathcal{E}\setminus\mathcal{E}_{\bullet}}(1+q^{S(e)}).
\end{eqnarray*}
\end{theorem}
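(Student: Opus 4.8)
The plan is to prove Theorem \ref{thrm:2deg} by reducing it to the previously established expression in Theorem \ref{thrm:GFLS} via a ``doubling'' or ``$q\mapsto q^2$ plus correction'' manipulation. Recall that for a tree $T$ without arrows, Theorem \ref{thrm:GFLS} already gives
\begin{eqnarray*}
T=\sum_{L^{\mathrm{LS}}}q^{\mathrm{deg}_{9}(L^{\mathrm{LS}})}
\prod_{e\in\mathcal{E}_{\circ}(L^{\mathrm{LS}})}(1+q^{S'(e)}),
\end{eqnarray*}
where in the arrow-free case $\mathcal{E}_{\circ}(L^{\mathrm{LS}})=\mathcal{E}\setminus\mathcal{E}_{\bullet}$ for every labelling $L^{\mathrm{LS}}$ (by rules (R4) and (R5), since there are no arrows every non-dotted edge carries a circle and this no longer depends on $L$). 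So the product $\prod_{e\in\mathcal{E}\setminus\mathcal{E}_{\bullet}}(1+q^{S'(e)})$ factors out of the sum, and Theorem \ref{thrm:GFLS} reads
\begin{eqnarray*}
T=\left(\sum_{L\in\mathcal{L}(\lambda)}q^{\mathrm{deg}_{9}(L)}\right)\prod_{e\in\mathcal{E}\setminus\mathcal{E}_{\bullet}}(1+q^{S'(e)}).
\end{eqnarray*}
Thus the theorem amounts to the identity
\begin{eqnarray*}
\left(\sum_{L}q^{\mathrm{deg}_{9}(L)}\right)\prod_{e}(1+q^{S'(e)})
=\left(\sum_{L}q^{2\mathrm{deg}_{9}(L)}\right)\prod_{e}(1+q^{S(e)}),
\end{eqnarray*}
where now $S(e)$ counts $1$ plus the number of children of $e$ (distance-weighted from $e$ to leaves), a quantity internal to the tree shape, whereas $S'(e)$ comes from the post-order-from-right labelling.

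First I would set up an induction on the number of edges $N$ of $T$, using the concatenation/factorization structure. By Lemma \ref{lemma:tree2} and the decomposition of $T$ at the root into irreducible subtrees $T_1,\ldots,T_p$ (non-dotted) and possibly one dotted subtree $T_{p+1}$, the generating function $T$ factors; correspondingly $\sum_L q^{\mathrm{deg}_9(L)}$ decomposes as a $q$-multinomial times a product of the analogous sums for the $T_i$ (this is essentially the content of Lemma \ref{lemma:tritree}, Lemma \ref{lemma:treefac} and their ballot analogues, which I may invoke). The key algebraic engine is Lemma \ref{lemma:ram2}:
\begin{eqnarray*}
\genfrac{[}{]}{0pt}{}{M+c}{M_1,\ldots,M_r,c}\prod_{i=c+1}^{M+c}(1+q^{i})
=\genfrac{[}{]}{0pt}{}{M+c}{M_1,\ldots,M_r,c}_{q^2}\prod_{i=1}^{r}\prod_{j=1}^{M_i}(1+q^{j}),
\end{eqnarray*}
which converts an ordinary $q$-multinomial dressed by one product of $(1+q^{i})$'s at a ramification point into a $q^2$-multinomial dressed by the products of $(1+q^{j})$'s distributed over the branches. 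Applying this at every ramification point, and using the fact that $S(e)$ at a ramification edge $e$ equals the ``leaf-distance'' count while the products $\prod_{j=1}^{M_i}(1+q^{j})$ accumulated from the leaves reassemble precisely into $\prod_{e\in\mathcal{E}\setminus\mathcal{E}_{\bullet}}(1+q^{S(e)})$, will produce the right-hand side with $q^2$ replacing $q$ in the labelling sum.

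Concretely, the induction step goes: write $T$ via Theorem \ref{thrm:GFLS} as (labelling sum in $q$) times $\prod_e(1+q^{S'(e)})$; split off the root ramification, apply Lemma \ref{lemma:ram2} once there to trade the factor $\prod_{i}(1+q^i)$ attached to the root edges for the $q^2$-multinomial and branch-local $(1+q^j)$ products; then apply the induction hypothesis to each irreducible subtree $T_i$ to rewrite its contribution in the $q^2$-form; finally, reassemble, checking that the exponent bookkeeping matches, i.e.\ that the total power of $q$ tracked by $\mathrm{deg}_9$ doubles consistently with the multinomial's transition $q\to q^2$ and that the leftover $(1+q^{S(e)})$ factors are exactly indexed by $\mathcal{E}\setminus\mathcal{E}_{\bullet}$. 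The base case $N=1$ is immediate: a single non-dotted edge gives $T=[2]=1+q$ with $\mathrm{deg}_9\equiv 0$ and $S(e)=1$, and a single dotted edge gives $T=1$.

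The main obstacle will be the careful matching of the two statistics $S'(e)$ and $S(e)$ together with the exponent doubling. In Theorem \ref{thrm:GFLS}, $S'(e)$ is defined by the post-order-from-right enumeration of all edges, whereas here $S(e)$ is a purely shape-theoretic quantity (one plus the total number of descendant edges, counted with multiplicity along paths to leaves). I will need to verify that, after the repeated application of Lemma \ref{lemma:ram2}, the $(1+q^{\bullet})$ factors produced on the branches indeed have exponents equal to $S(e)$ and not $S'(e)$ — in other words, that the doubling trick ``redistributes'' the single large product into the correct per-edge products. This is the delicate point; once it is pinned down, the rest is a routine, if lengthy, induction that I would only sketch rather than fully expand. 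A clean way to organize it is to prove the two-partial-tree lemma analogous to the $T_1,T_2$ computation at the end of the proof of Theorem \ref{thrm:GFLS}, but now tracking the $q^2$-form directly, and then bootstrap by concatenation exactly as in that proof.
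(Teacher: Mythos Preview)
Your approach is essentially the paper's: reduce from the arrow-free case of Theorem~\ref{thrm:GFLS} and use Lemma~\ref{lemma:ram2} at each ramification point to pass from the $q$-multinomial form to the $q^2$-multinomial form, redistributing the $(1+q^i)$ factors accordingly. The paper does exactly this, but with one organizational difference that dissolves your ``delicate point'': rather than starting directly from Theorem~\ref{thrm:GFLS} with the post-order statistic $S'(e)$, the paper first passes through Theorem~\ref{thrm:rev3}, where $S'(e)$ has been rewritten as $S^{\mathrm{rev}}(e)=c^{\mathrm{min}}_e+|T_e|+1$ using the reversed capacities. In that formulation the exponents attached to the edges of a subtree form a consecutive block $c+1,\ldots,M+c$, which is precisely the left-hand side of Lemma~\ref{lemma:ram2}; after applying the lemma the resulting products $\prod_{j=1}^{M_i}(1+q^j)$ reassemble into $\prod_e(1+q^{S(e)})$ because $S(e)$ is simply the subtree size. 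Finally, since Lemma~\ref{lemma:ram2} turns the rev-labelling sum into its $q^2$-version, one invokes Lemma~\ref{lemma:LSrev} (applied with $q\mapsto q^2$) to convert $\sum_{L\in\mathcal{L}^{\mathrm{rev}}}q^{2\mathrm{deg}_9(L)}$ back into $\sum_{L\in\mathcal{L}(\lambda)}q^{2\mathrm{deg}_9(L)}$.

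So your plan is right, and your worry about matching $S'(e)$ to $S(e)$ is resolved not by a separate induction but by recognizing $S'(e)=S^{\mathrm{rev}}(e)$ and applying Lemma~\ref{lemma:ram2} in the rev-capacity picture; the step you did not mention, Lemma~\ref{lemma:LSrev}, is what lets you land on the standard LS-labelling sum at the end.
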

\begin{proof}
Starting with the expression Eqn.(\ref{eqn:rev3}) in Theorem \ref{thrm:rev3}, 
we apply Lemma \ref{lemma:ram2} to ramification points in the tree $T$.
Then we apply Lemma \ref{lemma:LSrev} to the obtained expression, which implies 
Theorem holds true.
\end{proof}

Given a tree $T$, $T$ is a concatenation of trees $T_{1},\ldots,T_{M}$ at their
roots.
We call $T_{i}$, $1\le i\le M$, the $i$-th block.
Note that only the $M$-th block contains edges with a dot.
Fix a natural labelling $L$ of a tree $T$ and let 
$n_e$ be an integer on an edge $e$.
We define $d_{e}$ and $\mathrm{deg}_{10}(L)$ as 
\begin{eqnarray*}
d_{e}&:=&2\#\{e'\ |\  n_{e'}<n_{e}, \text{$e'$ is strictly right to $e$ and in a different block}\} \\
&&\quad+\#\{e' \ |\ n_{e'}<n_{e}, \text{$e'$ is strictly right to $e$ and in the same block as $e$} \},
\end{eqnarray*}
and 
\begin{eqnarray*}
\mathrm{deg}_{10}(L):=\sum_{e}d_{e}.
\end{eqnarray*}
Let $N_{i}$, $1\le i\le M-1$, be the number of edges in the $i$-th block.

\begin{theorem}
\label{thrm:BI-hybrid}
We have 
\begin{eqnarray*}
T=\left(\sum_{L}q^{\mathrm{deg}_{10}(L)}\right)\prod_{i=1}^{M-1}\prod_{j=1}^{N_{i}}(1+q^{j})
\prod_{e\in T_{M}\cap(\mathcal{E}\setminus\mathcal{E}_{\bullet})}(1+q^{S^{\mathrm{rev}}(e)}).
\end{eqnarray*}
\end{theorem}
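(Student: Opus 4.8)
The plan is to prove Theorem~\ref{thrm:BI-hybrid} by induction on the number $N$ of edges in $T$, mimicking the structure of the proofs of Theorems~\ref{thrm:treewt1} and \ref{thrm:treewt2}. The statement interpolates between Theorem~\ref{thrm:treewt1} (full recursion on the root, all $M$ blocks treated with the ``mixed-block'' rule for $\mathrm{deg}$, and the dotted-block factor supplied by the circle formalism) and Theorem~\ref{thrm:rev3}/\ref{thrm:2deg} (the blocks $T_1,\ldots,T_{M-1}$ are ``resolved'' into the products $\prod_j(1+q^j)$ while the dotted block $T_M$ keeps the $S^{\mathrm{rev}}$-factor). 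So $\mathrm{deg}_{10}$ weights cross-block inversions by $2$ and same-block inversions by $1$, exactly as in $\mathrm{deg}_{8}$ of Theorem~\ref{thrm:treewt2}, but now we only produce the $(1+q^{S^{\mathrm{rev}}(e)})$ factors for the edges of the \emph{last} block, since the contributions of the earlier blocks have already been factored out as $\prod_{i=1}^{M-1}\prod_{j=1}^{N_i}(1+q^j)$.

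\textbf{Key steps.} First I would record the base case: when $N=1$ the tree is a single edge (either with or without a dot), there is a unique natural labelling, $\mathrm{deg}_{10}(L)=0$, and both sides reduce to the value computed directly from the operations in Section~\ref{sec:Fac-trees} (namely $1+q$ in the no-dot case, $1$ in the dotted case). Second, for the inductive step I would split the sum over natural labellings $L$ according to which root-edge $r_i$ carries the integer $1$, writing $L\in\mathcal{L}(i)$ and setting $T(i)$ to be the corresponding partial sum times the two product factors. As in the proof of Theorem~\ref{thrm:treewt2}, removing the edge $r_i$ and sliding its subtree up yields the tree $\mathrm{tree}_2(T,r_i)$, which has $N-1$ edges, and the $N-1$ labels that exceed $1$ in $L$ are precisely a natural labelling of $\mathrm{tree}_2(T,r_i)$; the shift in $\mathrm{deg}_{10}$ produced by deleting the edge labelled $1$ contributes exactly $q^{\mathrm{deg}_2(T,r_i)}$ (with the factor-of-two weighting for blocks $p+1\le i\le p+s$ accounted for in $\mathrm{deg}_2$). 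Applying the induction hypothesis to $\mathrm{tree}_2(T,r_i)$ and using Lemma~\ref{lemma:concatenation} and Lemma~\ref{lemma:LSDyck} to match the product factors, I would obtain
\[
T(i)=
\begin{cases}
(1+q^{\mathrm{deg}_2(T,r_{p+1})})\,q^{\mathrm{deg}_2(T,r_i)}\,\mathrm{tree}_2(T,r_i), & 1\le i\le p,\\[4pt]
q^{\mathrm{deg}_2(T,r_i)}\,\mathrm{tree}_2(T,r_i), & p+1\le i\le p+s,
\end{cases}
\]
exactly the summands in Proposition~\ref{lemma:recT2}. Summing over $i$ and invoking that recurrence gives $\sum_i T(i)=T$, which is what we want.

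\textbf{Main obstacle.} The delicate point will be the bookkeeping for the dotted block: in $\mathrm{tree}_2(T,r_i)$ the block structure changes (blocks merge at the root when $r_i$ is deleted), so the classification of an inversion as ``same block'' versus ``different block'' is not literally preserved, and one must check that the $\mathrm{deg}_{10}$-weight of the $N-1$ surviving edges, computed \emph{in the new tree}, differs from their weight in $T$ by exactly the constant $\mathrm{deg}_2(T,r_i)$ absorbed into the prefactor $q^{\mathrm{deg}_2(T,r_i)}$. Equivalently, one has to verify that the circle-factor for edges of $T_M$ (the $(1+q^{S^{\mathrm{rev}}(e)})$'s), which in Theorems~\ref{thrm:treewt1} and \ref{thrm:GFLS} is handled by the subtle rules (R1)--(R6), matches the $S^{\mathrm{rev}}$ bookkeeping here after the block $T_i$ with $i\le M-1$ is extracted as a $\prod(1+q^j)$; this is the hybrid analogue of the $T'(2)$, $T'(3)$ computation in the proof of Theorem~\ref{thrm:treewt2} and is where I expect to spend most of the effort, reducing as there to the three-block model tree $T'=T_1\circ T_2\circ T_3$ with $T_3$ all-dotted and using Lemma~\ref{lemma:ram2} to convert between $q$-multinomials and $q^2$-multinomials. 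Once those constants are pinned down the rest is the same straightforward summation against Proposition~\ref{lemma:recT2}.
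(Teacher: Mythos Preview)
Your inductive strategy via Proposition~\ref{lemma:recT2} is a genuinely different route from the paper's. The paper does \emph{not} redo the induction of Theorems~\ref{thrm:treewt1} and~\ref{thrm:treewt2}; instead it starts from the already-established Theorem~\ref{thrm:rev3}, which expresses $T$ as an LS-labelling sum times $\prod_{e\in\mathcal{E}\setminus\mathcal{E}_\bullet}(1+q^{S^{\mathrm{rev}}(e)})$, and then massages that expression block by block. Concretely, for each non-dotted block $T_i$ ($1\le i\le M-1$) it invokes Lemma~\ref{lemma:LSDyck} to trade the LS-labelling sum on $T_i$ for a natural-labelling sum (weight~$1$ for same-block inversions), and then applies Lemma~\ref{lemma:ram2} to convert that block's $(1+q^{S^{\mathrm{rev}}(e)})$ factors into $\prod_{j=1}^{N_i}(1+q^j)$ at the price of replacing the relevant $q$-multinomial by a $q^2$-multinomial, which is exactly the weight-$2$ rule for cross-block inversions in $\mathrm{deg}_{10}$. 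The dotted block $T_M$ is left untouched, so its $(1+q^{S^{\mathrm{rev}}(e)})$ factors survive. This is a three-line algebraic transformation with no induction at all.

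Your approach can be made to work, but as you correctly flag, the obstacle is real: deleting $r_i$ dissolves the block $T_i$ into several new root-blocks, so the same-block/different-block classification in $\mathrm{deg}_{10}$ is not preserved under $\mathrm{tree}_2$, and matching the product factors $\prod_{j=1}^{N_i}(1+q^j)$ across the induction requires extra work. The paper's route sidesteps this entirely by never touching the natural-labelling description inductively; it simply rewrites the closed form of Theorem~\ref{thrm:rev3}. The trade-off is that your argument would be more parallel to the other theorems in the section, whereas the paper's is much shorter but depends on Theorem~\ref{thrm:rev3} having already been proved.
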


\begin{proof}
We start with the expression in Theorem \ref{thrm:rev3}.
Let $c_i$ be the smallest capacity in the partial tree $T_{i}$ for 
$1\le i\le M-1$.
From the definition of $d_e$, if $n_{e'}$ and $n_e$ are in the same 
block and $e'$ is strictly right to $e$, 
the weight for $n_{e'}<n_e$ is one.
We can resolve ramification points by using Lemma \ref{lemma:LSDyck}. 
By this operation, $T_{i}$ is a product of $P^{\mathrm{Dyck}}(T_{i})$
and a tree $T'_{i}$ with $|T_{i}|$ edges and without a ramification point.
We apply Lemma \ref{lemma:ram2} to each $T'_{i}$. The right hand side 
of Lemma \ref{lemma:ram2} implies that we have a weight two if 
$n_{e'}<n_e$ and $e'$ is strictly right to $e$ and in a different block.
This completes the proof.
\end{proof}

\subsection{Relation to BTS}
\label{sec:twoa-BTS}
In this subsection, we consider a tree $T$ without arrows.
We put an additional condition on $T$: an edge $e$ without a dot 
does not have a parent edge with a dot.

We denote by $\mathcal{E}$ the set of edges in $A(\lambda)$ and by 
$\mathcal{E}_{\bullet}$ 
the set of edges with a $\bullet$.
We define:
\begin{eqnarray*}
\mathrm{deg}_{11}(L)
&:=&\sum_{e\in(\mathcal{E}\setminus\mathcal{E}_{\bullet})}n_e
+2\sum_{e\in\mathcal{E}_{\bullet}}n_e.
\end{eqnarray*}
Recall that $\mathcal{L}(\lambda)$ is the set of labellings of 
type LS.
\begin{theorem}
\label{thrm:BallotBTS}
We have 
\begin{eqnarray*}
T=\left(\sum_{L\in\mathcal{L}(\lambda)}q^{\mathrm{deg}_{11}(L)}\right)
\cdot \prod_{i=1}^{N}(1+q^{i}).
\end{eqnarray*}
where $N$ is the number of edges without a dot in $T$.
\end{theorem}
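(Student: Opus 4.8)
\textbf{Proof plan for Theorem \ref{thrm:BallotBTS}.}
The plan is to prove this by induction on the number of edges of $T$, following the same template as the proofs of Theorems \ref{thrm:treewt1}, \ref{thrm:treewt2} and \ref{thrm:GFLS}. Write $T$ as a concatenation of indecomposable trees $T_1,\ldots,T_{M}$ at their common root, where, by the extra hypothesis on $T$ (an edge without a dot has no parent edge with a dot), the dotted edges occur only inside the last block $T_{M}$, and in fact $T_{M}$ is itself a ``dotted stem'' above which the remaining structure hangs; more precisely, every edge of $T_{M}$ on the path from the root toward the subtree is dotted. First I would isolate the edge $r_{i}$ connecting $T_{i}$ to the root, and let $\mathcal{L}(i)$ be the set of natural (equivalently, type LS, after the reference relabelling) labellings for which $r_{i}$ carries the minimal label. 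Summing the right-hand side of the claimed identity over $L\in\mathcal{L}(i)$ should, by the induction hypothesis applied to the tree $\mathrm{tree}_{2}(T,r_{i})$ with one fewer edge, reproduce $q^{\mathrm{deg}_{2}(T,r_{i})}\,\mathrm{tree}_{2}(T,r_{i})$ (with the appropriate $(1+q^{\cdots})$-type correction when $r_i$ is the dotted edge), so that summing over $i$ and invoking the recurrence in Proposition \ref{lemma:recT2} closes the induction.

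The key computational inputs I would assemble before running the induction are: Lemma \ref{lemma:tree2} (the ratios $\mathrm{tree}_{2}(T,r_{i})/T$), Lemma \ref{lemma:ram2} (the $q$- versus $q^{2}$-multinomial exchange, which is exactly what converts the plain $\prod(1+q^{i})$ bookkeeping into the block-sensitive weight $\mathrm{deg}_{11}$), Lemma \ref{lemma:LSDyck} and Lemma \ref{lemma:LSrev} (to resolve ramification points inside a non-dotted block into a Dyck factor times an arrowless chain), and Lemma \ref{lemma:concatenation} together with Proposition \ref{prop:BalP} for the explicit $P(M,N)$ factors. The base case is a tree with one edge, where both sides are $1+q$ or $1$ according to whether the edge is dotted, and this is immediate. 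The only genuinely delicate bookkeeping is the exponent: an edge $e\in\mathcal{E}_{\bullet}$ contributes $2n_{e}$ rather than $n_{e}$ to $\mathrm{deg}_{11}(L)$, and I must check that this doubling is precisely what the right-hand side of Lemma \ref{lemma:ram2} produces when a dotted block is built up from a chain — i.e.\ that passing a label across a dotted edge incurs the same factor of $2$ that passing across blocks does in $\mathrm{deg}_{10}$ of Theorem \ref{thrm:BI-hybrid}.

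Concretely, I would reduce the general case to the following model situation (which is essentially the computation already done in the proof of Theorem \ref{thrm:GFLS} for the partial tree $T_{2}$ there): $T = T_{1}\circ\cdots\circ T_{M-1}\circ T_{M}$ where each non-dotted block $T_{i}$, after applying Lemma \ref{lemma:LSDyck}, becomes $P^{\mathrm{Dyck}}(T_i)$ times an arrowless chain, and $T_{M}$ is a dotted chain carrying above it the subtree. For such a $T$, Lemma \ref{lemma:ram2} applied blockwise shows that $\prod_{e\in\mathcal{E}\setminus\mathcal{E}_{\bullet}}(1+q^{S^{\mathrm{rev}}(e)})$ in Theorem \ref{thrm:rev3} can be traded for $\prod_{i=1}^{N}(1+q^{i})$ at the cost of replacing each label $n_e$ on a non-dotted edge by $n_e$ and on a dotted edge by $2n_e$ inside the generating sum — because the dotted chain behaves, for the purpose of the multinomial identity, like a ``$q^{2}$-block''. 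Then Lemma \ref{lemma:LSrev} converts the reversed-capacity LS sum back to the ordinary LS sum $\sum_{L\in\mathcal{L}(\lambda)}q^{\mathrm{deg}_{11}(L)}$.

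\textbf{The main obstacle.} The hard part will not be any single algebraic identity — each of those is a variant of Lemma \ref{lemma:ram2} or the $q$-binomial recurrence — but rather verifying that the extra structural hypothesis on $T$ (no non-dotted edge has a dotted parent) is exactly strong enough to make the doubling rule $n_e\mapsto 2n_e$ on $\mathcal{E}_{\bullet}$ globally consistent: one must rule out configurations where a dotted edge sits below a non-dotted one, for which the weight bookkeeping in $\mathrm{deg}_{11}$ would not match the output of the blockwise application of Lemma \ref{lemma:ram2}. I expect the cleanest way to handle this is to observe that the hypothesis forces $\mathcal{E}_{\bullet}$ to be a union of ``upward-closed chains'' within blocks, so that $S'(e)=S^{\mathrm{rev}}(e)$ for non-dotted $e$ splits cleanly into a per-block contribution, and the $q^{2}$ weight attached to each dotted edge by Lemma \ref{lemma:ram2} is never shared with a non-dotted descendant. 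Once this structural point is pinned down, the induction step is a routine concatenation-and-recurrence argument as in the proof of Theorem \ref{thrm:treewt2}.
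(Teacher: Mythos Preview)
Your induction-via-Proposition~\ref{lemma:recT2} strategy would likely go through, but it is considerably more elaborate than what the paper does, because you have not fully exploited the extra hypothesis. The condition ``an edge without a dot has no dotted parent'' is very restrictive: combined with the construction of $A(\lambda)$ in Section~\ref{sec:ppt} (where each dotted edge, via $(\Diamond5)$ or $(\Diamond6)$, is placed \emph{above} the tree for the remainder of the path), it forces every $z_i$ in the UU-pair decomposition to be empty. Consequently the dotted edges form a single unramified chain $T_2$ attached at the root, and the remaining edges form a dot-free subtree $T_1$, so that $T=T_1\circ T_2$ globally. You partly see this (``dotted stem''), but you describe $\mathcal{E}_\bullet$ as \emph{upward}-closed, whereas the hypothesis says it is \emph{downward}-closed; and you still allow ramification or non-dotted descendants inside $T_M$, which cannot occur.

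Once $T=T_1\circ T_2$ with $T_2$ a dotted chain, the paper's argument is a three-line direct computation, with no induction: the tree-factorization rules of Section~\ref{sec:Fac-trees} give
\[
T \;=\; P^{\mathrm{Dyck}}(T_1)\cdot
\genfrac{[}{]}{0pt}{}{|T_1|+|T_2|}{|T_1|}_{q^2}\cdot
\prod_{i=1}^{|T_1|}(1+q^{i}),
\]
Lemma~\ref{lemma:LSDyck} rewrites $P^{\mathrm{Dyck}}(T_1)$ as $\sum_{L_1}q^{\mathrm{deg}_9(L_1)}$, and the $q^2$-binomial is tautologically $\sum_{L_2}q^{2\,\mathrm{deg}_9(L_2)}$ over LS labellings of the chain $T_2$. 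Multiplying the two sums gives exactly $\sum_{L\in\mathcal{L}(\lambda)}q^{\mathrm{deg}_{11}(L)}$, since an LS labelling of $T=T_1\circ T_2$ is precisely a pair $(L_1,L_2)$ and $\mathrm{deg}_{11}$ weights $T_1$-edges by $n_e$ and $T_2$-edges by $2n_e$. No appeal to Lemma~\ref{lemma:ram2}, Lemma~\ref{lemma:LSrev}, or Proposition~\ref{lemma:recT2} is needed. Your plan is not wrong, but the ``main obstacle'' you anticipate simply does not arise once you observe that $T_2$ has no ramification and no non-dotted edges at all.
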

\begin{proof}
We assume that $T$ is written as a concatenation of two trees $T_{1}$ and 
$T_{2}$ such that $T_{2}$ consists of edges with a dot and without a ramification.
Let $\mathcal{L}_{1}$ (resp. $\mathcal{L}_{2}$) be the set of labellings satisfying 
(LS1) and (LS2) on $T_{1}$ (resp. $T_{2}$).
We denote by $\mathrm{deg}_{11}(L_i)$, $i=1,2$ the sum of labels 
in $L_i$.
By applying Section \ref{sec:Fac-trees} to the generating function $T$,
we have 
\begin{eqnarray*}
T&=&P^{\mathrm{Dyck}}(T_{1})\cdot\genfrac{[}{]}{0pt}{}{|T_{1}|+|T_{2}|}{|T_{1}|}_{q^{2}}
\cdot\prod_{i=1}^{|T_{1}|}(1+q^{i}) \\
&=&\left(\sum_{L\in\mathcal{L}_1}q^{\mathrm{deg}_{9}(L_1)}\right)
\left(\sum_{L\in\mathcal{L}_2}q^{2\mathrm{deg}_{9}(L_2)}\right)
\cdot\prod_{i=1}^{|T_{1}|}(1+q^{i}) \\
&=&\left(\sum_{L\in\mathcal{L}(\lambda)}q^{\mathrm{deg}_{11}(L)}\right)
\cdot \prod_{i=1}^{N}(1+q^{i})
\end{eqnarray*}
where we have used Lemma \ref{lemma:LSDyck} in the second equality.
This completes the proof.
\end{proof}

Let $L$ be a natural labelling of a tree $A(\lambda)$ and  
$n_e$ be a label of an edge $e$.
$L^{\mathrm{max}}$ be the unique maximum labelling of LS type 
and $n^{\mathrm{max}}_{e}$ be a label of an edge $e$.
We construct a labelling $L'$ by replacing $n_e$ with 
$n'_{e}$ where 
\begin{eqnarray*}
n'_{e}:=\#\{e'\ |\  n_{e'}>n_e, \text{$e'$ is strictly left to $e$}\}.
\end{eqnarray*}
We obtain a labelling of LS type by $L^{\mathrm{LS}}:=L^{\mathrm{max}}-L'$, 
{\it i.e.} $n^{\mathrm{LS}}_{e}:=n^{\mathrm{max}}_{e}-n'_{e}$.
We denote by $\psi$ the map $\psi:L\rightarrow L^{\mathrm{LS}}$.
The following lemma is obvious from the definition of $\psi$. 
Recall the definition of the inversion for an inverse pre-order word 
introduced in Section \ref{sec:ppt}.
\begin{lemma}
\label{lemma:invL}
Let $L$ be a natural labelling of a tree $T$ and $\sigma$ be the inverse 
pre-order word for $L$.
We have 
\begin{eqnarray*}
\mathrm{inv}(\sigma)=\mathrm{deg}_{11}(L^{\mathrm{max}})-\mathrm{deg}_{11}(\psi(L)).
\end{eqnarray*}
\end{lemma}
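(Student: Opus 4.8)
\textbf{Proof plan for Lemma \ref{lemma:invL}.}

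The plan is to unwind the two definitions involved and check that the two counting expressions match term by term. Recall that $\psi$ sends a natural labelling $L$ to the labelling of LS type $L^{\mathrm{LS}}=L^{\mathrm{max}}-L'$, where $n'_{e}=\#\{e'\mid n_{e'}>n_{e},\ e'\text{ strictly left to }e\}$; hence $\mathrm{deg}_{11}(\psi(L))=\mathrm{deg}_{11}(L^{\mathrm{max}})-\mathrm{deg}_{11}(L')$ provided $\mathrm{deg}_{11}$ is additive, which it is, being a weighted sum of the edge-labels (with weight $1$ on edges without a dot and weight $2$ on edges with a dot). Therefore $\mathrm{deg}_{11}(L^{\mathrm{max}})-\mathrm{deg}_{11}(\psi(L))=\mathrm{deg}_{11}(L')$, and the statement reduces to showing $\mathrm{inv}(\sigma)=\mathrm{deg}_{11}(L')$, where $\sigma$ is the inverse pre-order word of $L$.

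First I would spell out both sides as sums over edges. By definition
\begin{eqnarray*}
\mathrm{deg}_{11}(L')=\sum_{e\in\mathcal{E}\setminus\mathcal{E}_{\bullet}}n'_{e}
+2\sum_{e\in\mathcal{E}_{\bullet}}n'_{e},
\end{eqnarray*}
so the weight contributed by a pair $(e',e)$ with $e'$ strictly left of $e$ and $n_{e'}>n_{e}$ is $1$ if $e\notin\mathcal{E}_{\bullet}$ and $2$ if $e\in\mathcal{E}_{\bullet}$. On the other side, the inversion of an inverse pre-order word $\sigma$ was defined in Section \ref{sec:ppt} as
\begin{eqnarray*}
\mathrm{inv}(\sigma)
=\#\{(i,j)\mid i<j,\ \sigma(i)<\sigma(j),\ i,j\text{ not underlined}\}
+2\#\{(i,j)\mid i<j,\ \sigma(i)>\sigma(j),\ i\text{ or }j\text{ underlined}\}.
\end{eqnarray*}
The key translation step is the dictionary between positions in the inverse pre-order word and edges of $T$: since $\sigma$ is the \emph{inverse} of the pre-order word of $L$, the position occupied by the label $k$ records where the edge carrying $k$ sits in the left-to-right depth-first traversal, and an edge is underlined exactly when it lies in $\mathcal{E}_{\bullet}$. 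Under this dictionary, a pair of edges $e',e$ with $e'$ strictly to the left of $e$ corresponds to a pair of positions, and ``$n_{e'}>n_{e}$'' corresponds to a descent or ascent of $\sigma$ depending on which of the two entries of $\sigma$ is being compared; I would carefully match the ``not underlined'' case (weight $1$, comparing $\sigma(i)<\sigma(j)$) with $e\notin\mathcal{E}_{\bullet}$, and the ``$i$ or $j$ underlined'' case (weight $2$) with $e\in\mathcal{E}_{\bullet}$, using the additional hypothesis of Section \ref{sec:twoa-BTS} that an edge without a dot has no parent edge with a dot, which controls the left-of/right-of relation between dotted and non-dotted edges so that the two inversion-type counts line up.

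The main obstacle I expect is precisely this bookkeeping: verifying that ``strictly left to $e$'' in the tree matches the correct position inequality in the inverse pre-order word, and that the underlining convention (unbarred integers read as negative, barred as positive) produces exactly the weight-$1$ versus weight-$2$ split in $\mathrm{inv}(\sigma)$. Once the correspondence of pairs is established, the equality is immediate since both sides are the same sum of the same weights over the same index set. I would close by noting that additivity of $\mathrm{deg}_{11}$ over $L'=L^{\mathrm{max}}-L^{\mathrm{LS}}$ (equivalently $L^{\mathrm{LS}}=\psi(L)$) then gives $\mathrm{inv}(\sigma)=\mathrm{deg}_{11}(L^{\mathrm{max}})-\mathrm{deg}_{11}(\psi(L))$, as claimed.
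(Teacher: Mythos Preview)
Your approach is correct and is exactly what the paper intends: the paper's entire proof reads ``obvious from the definition of $\psi$,'' and your plan---reduce to $\mathrm{inv}(\sigma)=\mathrm{deg}_{11}(L')$ by linearity of $\mathrm{deg}_{11}$, then match pairs $(e',e)$ with $e'$ strictly left of $e$ and $n_{e'}>n_e$ against the pairs counted by $\mathrm{inv}(\sigma)$---is precisely the unwinding that makes this obvious. Your observation that the standing hypothesis of Section~\ref{sec:twoa-BTS} (no non-dotted edge has a dotted parent) is what aligns the weight-$1$/weight-$2$ split with the underlining in $\sigma$ is the only nontrivial point, and you have identified it correctly.
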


\begin{cor}
\label{cor:BTS}
Let $\lambda$ is a ballot path for a tree $T$ and $\sigma$ be an inverse pre-order 
word of the tree $T$.
Then, we have 
\begin{eqnarray*}
T&=&\left(\sum_{\sigma}q^{\mathrm{inv}(\sigma)}\right)\cdot 
\prod_{i=1}^{N}(1+q^{i}) \\
&=&\left(\sum_{\sigma}q^{\mathrm{art}(\mathrm{BTS}(\lambda,\sigma))}\right)\cdot 
\prod_{i=1}^{N}(1+q^{i}),
\end{eqnarray*}
where $N$ is the number of edges without a dot in $T$.
\end{cor}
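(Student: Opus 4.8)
\textbf{Proof proposal for Corollary \ref{cor:BTS}.}

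The plan is to derive the two displayed equalities separately, each by invoking a result already established earlier in the excerpt. For the first equality, I would start from Theorem \ref{thrm:BallotBTS}, which gives the factorization $T = \left(\sum_{L\in\mathcal{L}(\lambda)}q^{\mathrm{deg}_{11}(L)}\right)\cdot \prod_{i=1}^{N}(1+q^{i})$, where the sum runs over labellings of LS type and $N$ counts edges without a dot. The bridge to inverse pre-order words is Lemma \ref{lemma:invL}, which asserts $\mathrm{inv}(\sigma) = \mathrm{deg}_{11}(L^{\mathrm{max}}) - \mathrm{deg}_{11}(\psi(L))$ for the inverse pre-order word $\sigma$ associated to a natural labelling $L$, where $\psi\colon L\mapsto L^{\mathrm{LS}}$ is the bijection from natural labellings to labellings of LS type defined just before that lemma. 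So the key step is to observe that $\psi$ is a bijection between natural labellings and $\mathcal{L}(\lambda)$, hence
\begin{eqnarray*}
\sum_{L\in\mathcal{L}(\lambda)}q^{\mathrm{deg}_{11}(L)}
= \sum_{\sigma}q^{\mathrm{deg}_{11}(\psi^{-1}(\sigma))}
= \sum_{\sigma}q^{\mathrm{deg}_{11}(L^{\mathrm{max}})-\mathrm{inv}(\sigma)},
\end{eqnarray*}
where the sum is now over all inverse pre-order words $\sigma$ of $T$. This is almost what I want, except for the constant shift $\mathrm{deg}_{11}(L^{\mathrm{max}})$ and the sign on $\mathrm{inv}(\sigma)$; I would resolve this by noting that the sum over $\sigma$ is symmetric under the involution sending $\sigma$ to the inverse pre-order word of $L^{\mathrm{max}} - L'$ swapped back, equivalently that $\sum_\sigma q^{\mathrm{inv}(\sigma)}$ and $\sum_\sigma q^{D_{\max}-\mathrm{inv}(\sigma)}$ coincide because the set of values $\{\mathrm{inv}(\sigma)\}$ is symmetric about $\mathrm{deg}_{11}(L^{\mathrm{max}})/2$ — this symmetry is exactly the statement that $\psi$ is a bijection onto the self-dual poset $\mathcal{L}(\lambda)$. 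The cleaner route, which I would actually write, is simply: $\psi$ being a bijection, reindex the sum in Theorem \ref{thrm:BallotBTS} by $\sigma = \psi^{-1}$ composed with the pre-order reading, and apply Lemma \ref{lemma:invL} together with the fact that $\mathrm{deg}_{11}$ evaluated at $L^{\mathrm{max}}$ is the maximal value, so that summing $q^{\mathrm{inv}(\sigma)}$ over $\sigma$ equals summing $q^{\mathrm{deg}_{11}(L)}$ over $L\in\mathcal{L}(\lambda)$.

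For the second equality, I would invoke Theorem \ref{thrm:BTSinpo}, which states $\mathrm{art}(\mathrm{BTS}(\lambda,\sigma)) = \mathrm{inv}(\sigma)$ for a ballot path $\lambda$ and an inverse pre-order word $\sigma$ of the tree $A(\lambda)$. Substituting this termwise into the first equality immediately gives
\begin{eqnarray*}
T = \left(\sum_{\sigma}q^{\mathrm{art}(\mathrm{BTS}(\lambda,\sigma))}\right)\cdot\prod_{i=1}^{N}(1+q^{i}),
\end{eqnarray*}
completing the proof. The only subtlety is making sure the hypothesis of Theorem \ref{thrm:BallotBTS} — that $T$ has no arrows and no edge without a dot has a parent with a dot — is in force; but this is precisely the standing assumption of Section \ref{sec:twoa-BTS}, so it carries over to the corollary, and I would state this explicitly at the start.

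The main obstacle I anticipate is bookkeeping the constant shift in Lemma \ref{lemma:invL}: one must be careful that $\mathrm{deg}_{11}(L^{\mathrm{max}})$ is indeed the maximum of $\mathrm{deg}_{11}$ over $\mathcal{L}(\lambda)$ and that $\psi$ reverses the grading in the expected way, so that the generating function $\sum_L q^{\mathrm{deg}_{11}(L)}$ is palindromic and the substitution $L\leftrightarrow\psi^{-1}(\sigma)$ genuinely produces $\sum_\sigma q^{\mathrm{inv}(\sigma)}$ rather than its reciprocal-degree twin. Everything else is a direct chain of citations: Theorem \ref{thrm:BallotBTS} $\to$ Lemma \ref{lemma:invL} $\to$ Theorem \ref{thrm:BTSinpo}, with no new computation required.
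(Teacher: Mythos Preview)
Your approach is essentially the same as the paper's: the paper's proof says only that the first equality follows from Theorem \ref{thrm:BallotBTS} and Lemma \ref{lemma:invL}, and the second from Theorem \ref{thrm:BTSinpo}. You have identified exactly this chain of citations, and in fact you are more careful than the paper in flagging the palindromicity issue arising from the shift in Lemma \ref{lemma:invL} --- the paper's terse proof does not address why $\sum_{L}q^{\mathrm{deg}_{11}(L)}$ equals $\sum_{\sigma}q^{\mathrm{inv}(\sigma)}$ rather than its reciprocal, but your observation that the LS sum factors (via the proof of Theorem \ref{thrm:BallotBTS}) as $P^{\mathrm{Dyck}}(T_1)\cdot\genfrac{[}{]}{0pt}{}{|T_1|+|T_2|}{|T_1|}_{q^2}$, a product of palindromic $q$-integers, is the right way to close that gap.
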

\begin{proof}
The first equality in Corollary follows from Theorem \ref{thrm:BallotBTS} and 
Lemma \ref{lemma:invL}.
The second equality follows from Theorem \ref{thrm:BTSinpo}.
\end{proof}

\bibliographystyle{amsplainhyper} 
\bibliography{biblio}

\end{document}